\def\leg{\Lambda} 
\def\bp{{\bar\partial}}
\def\bfs{\boldsymbol}
\def\ms{\medskip}
\def\no{\noindent}
\def\pa{\partial}
\def\sm{\setminus}
\def\ss{\smallskip}
\def\ti{\tilde}
\def\wh{\widehat}
\def\wt{\widetilde}
\def\ve{\varepsilon}
\def\eff{\mathrm{eff}}
\def\hol{\mathrm{hol}}
\def\Aut{\mathrm{Aut}}
\def\OPE{\mathrm{OPE}}
\def\PPS{\mathrm{PPS}}
\def\PS{\mathrm{PS}}
\def\NC{\mathrm{NC}}
\def\id{\mathrm{id}}
\def\SLE{\mathrm{SLE}}
\def\Re{\mathrm{Re}}
\def\Im{\mathrm{Im}}
\def\supp{\mathrm{supp}}
\DeclareMathOperator{\Reg}{Reg}
\DeclareMathOperator{\Res}{Res}
\DeclareMathOperator{\Sing}{Sing}
\def\dd{\mathrm{d}}
\def\ee{\mathrm{e}}
\def\EE{\mathcal{E}}
\def\FF{\mathcal{F}}
\def\LL{\mathcal{L}}
\def\OO{\mathcal{O}}
\def\PP{\mathcal{P}}
\def\VV{\mathcal{V}}
\def\XX{\mathcal{X}}
\def\YY{\mathcal{Y}}
\def\C{\mathbb{C}}
\def\D{\mathbb{D}}
\def\E{\mathbf{E}}
\def\H{\mathbb{H}}
\def\P{\mathbf{P}}
\def\R{\mathbb{R}}
\newcommand\subsubsec[1]{\medskip\noindent\textbullet~\emph{#1.}~}
\newcommand\arXiv[1]{\href{http://arxiv.org/abs/#1}{arXiv:#1}}
\newcommand{\overbar}[1]{\mkern 1.5mu\overline{\mkern-1.5mu#1\mkern-1.5mu}\mkern 1.5mu}
\def\@MRnumber{}
\def\@scanforMR#1 #2\endscan{%
 \def\@MRnumber{#1}%
 }
\def\reviewed#1{\ifx#1\stop \let\next=\relax \else \ifx#1(\advance\count255 by1 \else\let\next=\relax \fi \let\next=\reviewed \fi \next}
\def\MR#1{\relax
 \ifhmode\unskip\spacefactor3000 \space\fi
 \count255=0 \reviewed#1\stop
 \ifnum \count255>0
 {\@scanforMR#1\endscan\href{http://www.ams.org/mathscinet-getitem?mr=\@MRnumber}{MR#1}}
 \else
 {\href{http://www.ams.org/mathscinet-getitem?mr=#1}{MR#1}}
 \fi}
\theoremstyle{plain}
\numberwithin{equation}{section}
\newtheorem{thm}{Theorem}[section]
\newtheorem{lem}[thm]{Lemma}
\newtheorem{cor}[thm]{Corollary}
\newtheorem{prop}[thm]{Proposition}
\newtheorem*{claim*}{Claim}
\theoremstyle{definition}
\newtheorem*{eg*}{Example}\newtheorem*{egs*}{Examples}
\newtheorem*{q*}{Question}
\newtheorem*{def*}{Definition}
\theoremstyle{remark}
\newtheorem*{rmk*}{Remark}
\newtheorem*{rmks*}{Remarks}
\pretocmd{\section}{\addtocontents{toc}{\protect\addvspace{5\p@}}}{}{}
\begin{document}

\title{Conformal field theory on the Riemann sphere and its boundary version for SLE}

\author{Nam-Gyu Kang} 
\address{School of Mathematics, Korea Institute for Advanced Study, \newline Seoul, 02455, Republic of Korea} 
\email{namgyu@kias.re.kr} 
\thanks{Nam-Gyu Kang was partially supported by Samsung Science and Technology Foundation (SSTF-BA1401-51), a KIAS Individual Grant (MG058103) at Korea Institute for Advanced Study, and the National Research Foundation of Korea (NRF-2019R1A5A1028324).} 

\author{Nikolai~G.~Makarov}
\address{Department of Mathematics, California Institute of Technology, \newline Pasadena, CA 91125, USA} 
\email{makarov@caltech.edu}
\thanks{Nikolai~G.~Makarov was supported by NSF grant no. 1500821.}

\keywords{conformal field theory, Gaussian free field, martingale-observables, SLE}
\subjclass[2010]{Primary 60J67, 81T40; Secondary 30C35}

\date{}

\begin{abstract}
From conformal field theory on the Riemann sphere, we implement its boundary version in a simply-connected domain using the Schottky double construction. 
We consider the statistical fields generated by background charge modification of the Gaussian free field with Dirichlet boundary condition under the OPE multiplications. We prove that the correlation functions of such fields with symmetric background charges form a collection of martingale-observables for (forward) chordal/radial SLE with force points and spins.
We also present the connection between conformal field theory with Neumann boundary condition and the theory of backward SLE.
\end{abstract}

\maketitle

\tableofcontents

\section{Introduction and results} \label{sec: intro}
Since Belavin, Polyakov, and Zamolodchikov \cite{BPZ84} introduced an operator algebra formalism to relate some critical models to the representation theory of Virasoro algebra, conformal field theory has been applied in string theory and condensed matter physics. 
In mathematics, conformal field theory inspired the development of algebraic theories, e.g., the theory of vertex algebras. 
Also, it has been used to derive several exact results for the conformally invariant critical clusters in two-dimensional lattice models. 

During the last two decades, some outstanding predictions from statistical physics have been proved using Schramm-Loewner evolution (SLE). 
For instance, the remarkable achievements in this area include the work of Lawler-Schramm-Werner on Mandelbrot conjecture (see \cite{LSW01a} and references therein), Smirnov's work on percolation model and Ising model at criticality (see \cite{Smirnov01, Smirnov10}), and the work of Schramm-Sheffield on contour lines of the two-dimensional discrete Gaussian free field (see \cite{SS09}). 
Under quite general conditions, it requires just one martingale-observable, which determines the law uniquely to prove the scaling limit convergence of interface curves in lattice models.

In this paper we use the method of conformal field theory to study a certain family of chordal/radial SLE martingale-observables.
To be precise, we develop a version of conformal field theory on the Riemann sphere and apply the Schottky double construction to this theory to implement its boundary version in a simply-connected domain $D$ with marked boundary/interior points. 
This theory is based on non-random (background charge) modification of the Gaussian free field $\Phi$ in $D$ with Dirichlet boundary condition. 
The theory with Neumann boundary condition is presented in Section~\ref{sec: Neumann CFT}. 
We state Ward's equations in terms of Lie derivatives for the fields in the OPE family $\FF_{\bfs\beta}$ generated by the modified Gaussian free field $\Phi_{\bfs\beta}$ with a (symmetric) background charge $\bfs\beta $ placed at the marked points. 
Using the BPZ-Cardy equations for the fields in $\FF_{\bfs\beta}$ for a symmetric background charge $\bfs\beta$ (so that the associated partition function $Z_{\bfs\beta}$ is real-valued) 
with a specific charge at the growth point $p$ of SLE, we show that the correlation functions of the fields in $\FF_{\bfs\beta}$ form a collection of martingale-observables for chordal/radial $\SLE[\bfs\beta]$ associated with the partition function $Z_{\bfs\beta}.$ 

The partition function $Z_{\bfs\beta}$ we consider is defined in terms of the Coulomb gas correlation function $C[\bfs\beta]$ or the correlation function of the effective one-leg operator.  
In the physics literature, the partition function in conformal field theory is expressed as the correlation function of the boundary condition changing operator, see (\cite[\S 11.3.1]{DFMS97}). 
The other approaches to define the partition function in the context of SLE (up to a multiplicative constant depending possibly on $D$) can be found in \cite{Dubedat09, Lawler09}.
After introducing the partition functions of free field and the partition functions of SLE, Dub\'edat showed that they coincide and use these identities to prove local/global couplings of SLEs and free fields in \cite{Dubedat09}.
The SLE partition function can be viewed as the total mass of natural measure on SLE curves and used to describe the restriction property, e.g., see \cite{Lawler09}.

It is a well-known statement in the physics literature (e.g., see \cite{Cardy04,CD07}) that under the insertion of the normalized one-leg operator $\leg_p/(\E\,\leg_p)$ rooted at $q\in D\cup \pa D$ all correlation functions of fields in the OPE family $\FF_{\check{\bfs\beta\,}}$ are martingale-observables for $\SLE(\kappa)$ in $D$ from $p\in\pa D$ to $q.$
(Here $\E\,\leg_p$ is the correlation function of $\leg_p$ and a specific background charge $\check{\bfs\beta\,}$ is placed at $q.$)
We construct the one-leg ``operator" $\leg_p$ rooted at $q$ as one of the OPE exponentials of the linear combination of bosonic fields with proper charges at $p,q$ so that it is a primary field with the desired conformal dimensions at $p$ and $q$ producing a level two degenerate vector at $p.$ 
Considering general non-random pre-pre-Schwarzian modifications of the Gaussian free field, we extend this result to the chordal/radial SLE with forces and spins. 

In constructing such OPE exponentials, it is essential to require the so-called neutrality conditions on the charges. 
We clarify and explain two neutrality conditions: one condition for the linear combination of bosonic fields to be a well-defined (Fock space) field; the other condition for the (Coulomb gas) correlation function to be conformally invariant. 
To reconcile these two neutrality conditions, we place the ``background charge(s)" $\bfs\beta$ at the marked point(s). 
We also derive the neutrality condition on the background charges using an extension of the Gauss-Bonnet theorem to the flat metric with conical singularities at the marked points. 
It appears that our approach can be extended to much more general settings like general Riemann surfaces with marked points (e.g., see \cite{BKT} for this approach in a doubly-connected domain) and various patterns of insertion (e.g., $N$-leg operators with the method of screening, see \cite{AKM}).
However, we do not claim that what we develop here is the only relevant conformal field theory to SLE theory. 
For example, as another relevant theory, a twisted conformal field theory can be related to radial SLE. 
Its bosonic observable with central charge $c=1$ corresponds to a discrete observable of the harmonic explorer with $\mathbb{Z}_2$-symmetry.

\subsection{A spherical conformal field theory and its boundary version} \label{ss: Radial CFT} 
The fields we treat in this paper are statistical fields constructed from the Gaussian free field $\Phi$ with Dirichlet boundary condition via differentiation and Wick's formula. 
In the last section we develop a version of conformal field theory constructed from the Gaussian free field $N$ with Neumann boundary condition.
The \emph{Gaussian free field} $\Phi$ in a simply-connected domain $D$ can be viewed as a Fock space field and its 2-point correlation is given by
$$\E\,\Phi(z_1)\Phi(z_2) = 2 G(z_1,z_2),$$
where $G$ is the Dirichlet Green's function for $D.$ 
A \emph{Fock space field} is a linear combination of \emph{basic fields}, which are, by definition, formal expressions written as Wick's product ($\odot$-product) of derivatives of $\Phi$ and Wick's exponentials $\ee^{\odot\alpha\Phi}\,(\alpha\in\C)$ of $\Phi.$ 
If $X_1,\cdots,X_n$ are Fock space fields and $z_1,\cdots,z_n$ are distinct points in $D,$ then
a \emph{correlation function}
$\E[X_1(z_1)\cdots X_n(z_n)]$
can be defined by Wick's calculus.
See Section~\ref{sec: CFT of GFF} or \cite[Lecture~1]{KM13} for more details.
(A more traditional notation for the correlation function is $\langle X_1(z_1)\cdots X_n(z_n)\rangle.$) 
The points $z_1,\cdots,z_n$ are called the nodes of the (tensor) product $\XX = X_1(z_1)\cdots X_n(z_n)$ of $X_j(z_j)$'s and we write $S_\XX$ for $\{z_1,\cdots,z_n\}.$
We extend the collection of Fock space fields by adding finite linear combinations of formal bosonic fields 
$$\Phi[\bfs\tau^+,\bfs\tau^-]:=\sum \tau_j^+\Phi^+(z_j) - \tau_j^-\Phi^-(z_j)$$
with the neutrality condition 
\begin{equation*} 
\sum_j (\tau_j^+ + \tau_j^-) = 0 \tag{$\NC_0$}
\end{equation*}
and its Wick's exponential $\ee^{\odot i\Phi[\bfs\tau^+,\bfs\tau^-]}.$ 
Here $\bfs\tau^+$ and $\bfs\tau^-$ are divisors in $D$ which takes the value $0$ at all points except for finitely many points:
$$\bfs\tau^\pm = \sum \tau_j^\pm\cdot z_j.$$
The centered Gaussian formal fields $\Phi^+,\Phi^-$ have formal correlations 
$$\E[\Phi^+(z_1)\Phi^+(z_2)] = \log\frac1{z_1-z_2},\qquad \E[\Phi^+(z_1)\Phi^-(z_2)] = \log(1-z_1\bar z_2)$$
in $\D$ and the relations: $\Phi^- = \overline{\Phi^+},\Phi = \Phi^++\Phi^-.$

\subsubsec{A spherical conformal field theory} We consider a compact Riemann surface $S$ of genus-zero with (finitely many) marked points $q_k.$ 
We introduce the formal fields $\Psi^\pm$ ($\Psi^- = \overline{\Psi^+}$) on $S$ with formal correlations
$$\E[\Psi^+(z_1)\Psi^+(z_2)] = \log\frac1{z_1-z_2},\qquad \E[\Psi^+(z_1)\Psi^-(z_2)] = 0$$
on $\wh\C.$ 
(A bi-variant field $\Psi^\pm(z,z_0) = \Psi^\pm(z) - \Psi^\pm(z_0)$ can be defined as a multivalued correlation functional valued field.)
We then modify $\Psi^+$ (and $\Psi^-$ in a similar way) by adding a non-random pre-pre-Schwarzian form (PPS form) $\psi^+$ of order $(ib,0)$ with logarithmic singularities at $q_k$'s.
Here $b$ is the modification parameter. 
Such $\psi^+$ resolves a background charge $\bfs\beta = i\pa\bp \psi^+/\pi = \sum \beta_k \delta_k.$ 
We write $\psi^+_{\bfs\beta}=\psi^+,$ $\Psi^+_{\bfs\beta}= \Psi^+ + \psi^+_{\bfs\beta}.$ 
A version of the Gauss-Bonnet theorem leads to the neutrality condition $(\NC_b):$
$$\sum_k \beta_k = 2b.$$ 
For $\bfs\tau = \sum \tau_j\cdot z_j$ satisfying $(\NC_0)$ we define the OPE exponential $\OO_{\bfs\beta}[\bfs\tau]$ (whose meaning is explained in Subsection~\ref{ss: V}) of $\Psi^+[\bfs\tau]$ by 
$$\OO_{\bfs\beta}[\bfs\tau] := \frac{C_{(b)}[\bfs\tau + \bfs\beta]}{C_{(b)}[\bfs\beta]}\,\ee^{\odot i\Psi^+[\bfs\tau]}, \quad \Psi^+[\bfs\tau] := \sum \tau_j^+\Psi^+(z_j),$$
where the (formal) Coulomb gas correlation function $C_{(b)}[\bfs\sigma]$ is a $\bfs\lambda$-differential $(\bfs\lambda = \sum\lambda_j\cdot z_j, \lambda_j = \sigma_j^2/2-\sigma_jb)$ such that 
its evaluation on $\wh\C= \C \cup\{\infty\}$ is given by
$$(C_{(b)}[\bfs\sigma] \,\|\, \id_{\wh\C}) = \prod_{j < k}(z_j-z_k)^{\sigma_j\sigma_k},$$
where the product is taken over finite $z_j$'s and $z_k$'s.
One can view $C_{(b)}[\bfs\sigma]$ as the formal correlations
$$C_{(b)}[\bfs\sigma] = \E\, \ee^{\odot i\sigma_1\Psi^+(z_1)}\cdots \ee^{\odot i\sigma_n\Psi^+(z_n)}$$
by Wick's rule of centered jointly (generalized) Gaussians:
$$\ee^{\odot \alpha_1\xi_1} \cdots \ee^{\odot \alpha_n\xi_n} = \ee^{\sum_{j<k}\alpha_j\alpha_k\E[\xi_j\xi_k]} \ee^{\odot \sum_j \alpha_j\xi_j}.$$

\subsubsec{Schottky double construction}
We consider a simply-connected domain $D$ with marked boundary/interior points $q_k \in \overbar D$ and the Schottky double $S$ of $D$ with the canonical involution $z\mapsto z^*.$
For a divisor $\bfs\sigma = \sum \sigma_j\cdot z_j,$ let us denote 
$\bfs\sigma_* = \sum \sigma_j\cdot z_j^*, \overbar{\bfs\sigma} = \sum \overbar\sigma_j\cdot z_j.$
Suppose that a (double) background charge $(\bfs\beta^+,\bfs\beta^-)$ $(\bfs\beta^\pm = \sum\beta_k^\pm \cdot q_k)$ satisfies the neutrality condition $(\NC_b):$
$$ \sum (\beta_k^+ + \beta_k^-)=2b.$$
For this pair $(\bfs\beta^+,\bfs\beta^-)$ and a double divisor $(\bfs\tau^+,\bfs\tau^-)$ satisfying the neutrality condition $(\NC_0),$ we define the modification $\Phi_{\bfs\beta}$ of $\Phi$ $(\bfs\beta = \bfs\beta^++\bfs\beta^-_*, \bfs\tau = \bfs\tau^++\bfs\tau^-_*)$ by 
$$\Phi_{\bfs\beta}[\bfs\tau] = \frac1{\sqrt2} \Psi_{\sqrt2\,\bfs\beta^+,\sqrt2\,\bfs\beta_*^-}[\bfs\tau, \bfs\tau_*],$$
where 
$\Psi_{\bfs\beta^+,\bfs\beta^-}[\bfs\tau^+,\bfs\tau^-]:= \Psi_{\bfs\beta^+}^+[\bfs\tau^+]+\Psi_{\bfs\beta^-}^-[\bfs\tau^-]$ and $\Psi_{\bfs\beta^-}^-[\bfs\tau^-] = \overline{\Psi^+_{\overbar{\bfs\beta^-}}[\overline{\bfs\tau^-}]}.$
We also define the multi-vertex field $V_{\bfs\beta}[\bfs\beta+\bfs\tau]$ and the OPE exponential $\OO_{\bfs\beta}[\bfs\tau]$ by 
$$V_{\bfs\beta}[\bfs\beta+\bfs\tau] = C_{(b)}[\bfs\beta+\bfs\tau]\,\ee^{\odot i\Phi[\bfs\tau]}, \qquad \OO_{\bfs\beta}[\bfs\tau] = \frac{C_{(b)}[\bfs\tau + \bfs\beta]}{C_{(b)}[\bfs\beta]}\,\ee^{\odot i\Phi[\bfs\tau]}.$$

In \cite{KM13} we study the standard chordal theory with $\bfs\beta^+ = 2b\cdot q,$ $\bfs\beta^- = \bfs0$ ($q\in \pa D$).
In this case, the modification $\Phi_{\bfs\beta}$ reads as 
$$\Phi_{\bfs\beta} = \Phi + \varphi_{\bfs\beta},\qquad \varphi_{\bfs\beta} = -2b\,\arg w',$$
where $w:(D,q)\to(\H,\infty)$ is a conformal map from $D$ onto the upper half-plane $\H.$

For the standard radial theory ($\bfs\beta^+ = \bfs\beta^- = b\cdot q,$ $q\in D$) without further marked points or spins, the modification $\Phi_{\bfs\beta}$ reads as 
$$\Phi_{\bfs\beta} = \Phi + \varphi_{\bfs\beta},\qquad \varphi_{\bfs\beta} = -2b\,\arg \frac{w'}w,$$
where $w:(D,q)\to(\D,0)$ is a conformal map from $D$ onto the unit disc $\D.$
The corresponding conformal field theory is $\Aut(D,q)$-invariant. 
The non-random harmonic function $\varphi_{\bfs\beta}$ on a punctured domain $D^*:=D\setminus\{q\}$ is multivalued.

The \emph{bosonic} fields $\Phi_{\bfs\beta}[\bfs\tau]$ with the neutrality condition generate the \emph{OPE family} $\FF_{\bfs\beta},$ the algebra (over $\C$) spanned by $1$ and the derivatives of $\Phi_{\bfs\beta},$ $\OO_{\bfs\beta}[\bfs\tau]$ ($\supp\,\bfs\beta \cap \supp\,\bfs\tau = \emptyset$) under the OPE multiplication $*.$
(If a single-valued field $X$ is holomorphic, then the OPE product $X*Y$ of two fields $X$ and $Y$ is the zeroth coefficient of the regular part in the operator product expansion $X(\zeta)Y(z)$ as $\zeta\to z.$)
For example, $\FF_{\bfs\beta}$ contains 
$1, J_{\bfs\beta}:= \pa\Phi_{\bfs\beta}, \pa J_{\bfs\beta}*(\Phi_{\bfs\beta}*\Phi_{\bfs\beta}),$ etc.,
and the \emph{Virasoro field} $T_{\bfs\beta},$
$$T_{\bfs\beta}=-\frac12 J_{\bfs\beta}* J_{\bfs\beta}+ib\pa J_{\bfs\beta}.$$
In Section~\ref{sec: F} we extend the OPE family by adding the OPE functionals at the punctures where the background charges are placed. 
For example, for $\check{\bfs\beta\,} = b\cdot q + b\cdot q^*,$ $(q\in D)$ 
$$(J_{\check{\bfs\beta\,}})_q = J(q) + \frac{ib}2N_w(q), \qquad N_w = \frac{w''}{w'}$$
and $\OO_{\bfs\beta}[\bfs\tau]$ ($\supp\,\bfs\beta \cap \supp\,\bfs\tau \ne \emptyset$) belong to the extended OPE family. 

\subsubsec{Insertion formulas} 
We explain how the insertion of Wick's exponential $V^{\odot}[\bfs\tau]:=\ee^{\odot i\Phi[\bfs\tau]}$ of the Gaussian free field gives rise to the change of background charge modifications. 
A particular insertion procedure plays an essential role in establishing the connection between chordal/radial SLE theory with force points (and spins) and conformal field theory in a simply-connected domain with certain symmetric background charges. 
\begin{thm} \label{main: change of beta}
Given two background charges $\check{\bfs\beta\,}, \bfs\beta $ with the neutrality condition $(\NC_b),$
the image of $\FF_{\check{\bfs\beta\,}}$ under the insertion of $V^{\odot}[\bfs\beta-\check{\bfs\beta\,}]$ is $\FF_{\bfs\beta}.$ 
\end{thm}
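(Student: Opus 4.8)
The plan is to unwind the definitions of the OPE family and of the insertion operation, and show that insertion of $V^\odot[\bfs\beta-\check{\bfs\beta\,}]$ acts on correlation functions exactly as a change of background-charge modification, carrying generators to generators. First I would recall that $\FF_{\check{\bfs\beta\,}}$ is generated (under the OPE product $*$) by $1$, the derivatives of $\Phi_{\check{\bfs\beta\,}}$, and the OPE exponentials $\OO_{\check{\bfs\beta\,}}[\bfs\tau]$ with $\supp\,\bfs\tau$ disjoint from $\supp\,\check{\bfs\beta\,}$; it therefore suffices to compute the effect of the insertion on each of these generators and check the algebra structure is preserved. The insertion of a Wick exponential $V^\odot[\bfs\sigma]=\ee^{\odot i\Phi[\bfs\sigma]}$ with $\bfs\sigma$ satisfying $(\NC_0)$ is, by definition, the operation on correlation functions $\E[X\,\cdots] \mapsto \E[X\,\cdots\,V^\odot[\bfs\sigma]]/\E[V^\odot[\bfs\sigma]]$; since $\Phi$ is Gaussian, Wick's rule (as quoted in the excerpt) turns this into multiplication by the deterministic factor $\exp(i\sum \sigma_k\,\E[\Phi(q_k)X])$ together with the shift $\Phi \mapsto \Phi + i\,\E[\Phi(\cdot)\Phi[\bfs\sigma]]$ inside $X$. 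Because $\E[\Phi(z)\Phi(w)]=2G(z,w)$, this shift is precisely by a harmonic-plus-log-singularity function, i.e. by the difference $\varphi_{\bfs\beta}-\varphi_{\check{\bfs\beta\,}}$ of the two PPS modifications when $\bfs\sigma = \bfs\beta-\check{\bfs\beta\,}$.

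Next I would verify this on generators. For the current and its derivatives: $\Phi_{\check{\bfs\beta\,}}=\Phi+\varphi_{\check{\bfs\beta\,}}$, so after the shift $\Phi\mapsto\Phi+(\varphi_{\bfs\beta}-\varphi_{\check{\bfs\beta\,}})$ we get $\Phi+\varphi_{\bfs\beta}=\Phi_{\bfs\beta}$; differentiating, $J_{\check{\bfs\beta\,}}=\pa\Phi_{\check{\bfs\beta\,}}\mapsto \pa\Phi_{\bfs\beta}=J_{\bfs\beta}$, and similarly for all higher derivatives, since the shift is by a fixed non-random field. For the OPE exponentials: $\OO_{\check{\bfs\beta\,}}[\bfs\tau]=\frac{C_{(b)}[\bfs\tau+\check{\bfs\beta\,}]}{C_{(b)}[\check{\bfs\beta\,}]}\,\ee^{\odot i\Phi[\bfs\tau]}$, and inserting $V^\odot[\bfs\beta-\check{\bfs\beta\,}]$ multiplies the two Wick exponentials using the Wick rule $\ee^{\odot\alpha\xi}\ee^{\odot\alpha'\xi'}=\ee^{\alpha\alpha'\E[\xi\xi']}\ee^{\odot(\alpha\xi+\alpha'\xi')}$; the cross-term $\E[\Phi[\bfs\tau]\,\Phi[\bfs\beta-\check{\bfs\beta\,}]]$ produces exactly the ratio of Coulomb gas factors needed to convert $C_{(b)}[\bfs\tau+\check{\bfs\beta\,}]/C_{(b)}[\check{\bfs\beta\,}]$ into $C_{(b)}[\bfs\tau+\bfs\beta]/C_{(b)}[\bfs\beta]$ (here the neutrality $(\NC_b)$ for both $\check{\bfs\beta\,}$ and $\bfs\beta$, hence $(\NC_0)$ for $\bfs\beta-\check{\bfs\beta\,}$, is what makes all the log-singularities at the marked points combine into genuine Coulomb gas powers rather than multivalued junk). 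The normalization by $\E[V^\odot[\bfs\beta-\check{\bfs\beta\,}]]=C_{(b)}[\bfs\beta]/C_{(b)}[\check{\bfs\beta\,}]$ (again by Wick) is what supplies the $C_{(b)}[\check{\bfs\beta\,}]/C_{(b)}[\bfs\beta]$ factor. One then checks the remaining generator $\OO_{\bfs\beta}[\bfs\tau]$ with $\supp\,\bfs\tau$ meeting $\supp\,\bfs\beta$ lands in the extended family the same way.

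Finally I would argue that the insertion respects the OPE product $*$: since $*$ is defined via the regular part of the operator product expansion as $\zeta\to z$, and the insertion changes correlation functions by a factor and a shift that are both smooth (non-singular) in the nodes near $z$, the operation commutes with taking the regular part; hence it is an algebra homomorphism $\FF_{\check{\bfs\beta\,}}\to\FF_{\bfs\beta}$, and by the symmetric argument (inserting $V^\odot[\check{\bfs\beta\,}-\bfs\beta]$) it has an inverse, so it is an isomorphism of OPE families mapping generators onto generators, which gives the claimed equality of images. I expect the main obstacle to be bookkeeping the multivaluedness: $\varphi_{\bfs\beta}-\varphi_{\check{\bfs\beta\,}}$ is harmonic on $D$ minus the marked points with nontrivial periods, so one must be careful that the shifted field $\Phi+(\varphi_{\bfs\beta}-\varphi_{\check{\bfs\beta\,}})$ and the resulting exponentials are interpreted correctly as (possibly multivalued) correlation-functional-valued fields, matching exactly the conventions used to define $\FF_{\bfs\beta}$ — this is where the neutrality conditions must be invoked with care, and where the Schottky double picture ($\bfs\beta=\bfs\beta^++\bfs\beta^-_*$ with the $z\mapsto z^*$ symmetry) is needed to see that the boundary behavior is preserved.
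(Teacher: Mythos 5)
Your overall strategy---use Wick's calculus to show that inserting $V^\odot[\bfs\beta-\check{\bfs\beta\,}]$ shifts $\Phi_{\check{\bfs\beta\,}}$ to $\Phi_{\bfs\beta}$ within correlations, verify this on the generators, and check compatibility with the OPE product---is essentially the paper's route: Theorem~\ref{change of beta's} carries out exactly this Wick computation for $\Psi_{\bfs\beta_1}\mapsto\Psi_{\bfs\beta_2}$ on the sphere, Theorem~\ref{change of beta} transfers it to $D$ by the Schottky double, and OPE compatibility is handled as you describe. However, your Coulomb-gas bookkeeping for the vertex generators is wrong as written. The Wick exponential is normalized: $\E\,V^\odot[\bfs\beta-\check{\bfs\beta\,}]=1$ (stated explicitly in Subsection~\ref{ss: V}), so the normalization you invoke does not exist; the quantity $C_{(b)}[\bfs\beta]/C_{(b)}[\check{\bfs\beta\,}]$ is $\E\,\OO_{\check{\bfs\beta\,}}[\bfs\beta-\check{\bfs\beta\,}]$, the expectation of the OPE exponential, not of its Wick part. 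Correspondingly, the Wick cross term $\ee^{-\E[\Phi[\bfs\tau]\Phi[\bfs\beta-\check{\bfs\beta\,}]]}$ encodes only the interaction of $\bfs\tau$ with $\bfs\beta-\check{\bfs\beta\,}$ and does not by itself convert $C_{(b)}[\bfs\tau+\check{\bfs\beta\,}]$ into $C_{(b)}[\bfs\tau+\bfs\beta]$ (that ratio also contains the background self-interactions). Your two misstatements happen to cancel, since in fact $\frac{C_{(b)}[\bfs\tau+\check{\bfs\beta\,}]}{C_{(b)}[\check{\bfs\beta\,}]}\,\ee^{-\E[\Phi[\bfs\tau]\Phi[\bfs\beta-\check{\bfs\beta\,}]]}=\frac{C_{(b)}[\bfs\tau+\bfs\beta]}{C_{(b)}[\bfs\beta]}$ with no normalization at all; the cleanest repair is to note that inserting $V^\odot[\bfs\beta-\check{\bfs\beta\,}]$ is the same as inserting $\OO_{\check{\bfs\beta\,}}[\bfs\beta-\check{\bfs\beta\,}]/\E\,\OO_{\check{\bfs\beta\,}}[\bfs\beta-\check{\bfs\beta\,}]$ and to use the multiplication rule $\OO_{\check{\bfs\beta\,}}[\bfs\tau]\,\OO_{\check{\bfs\beta\,}}[\bfs\beta-\check{\bfs\beta\,}]=\OO_{\check{\bfs\beta\,}}[\bfs\tau+\bfs\beta-\check{\bfs\beta\,}]$, as in the example following \eqref{eq: O hat}.

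The second, more substantive omission concerns the extended OPE family, which is what Theorem~\ref{main: change of beta} asserts beyond the easy Wick identity: strings containing functionals rooted at the background-charge points (fields of the form $X*1_q$ obtained by OPE at the punctures) and exponentials $\OO[\bfs\tau]$ with $\supp\,\bfs\tau$ meeting $\supp\,\bfs\beta$. Your closing remark that these ``land in the extended family the same way'' skips the actual difficulty: your smooth-shift argument, and the claim that insertion commutes with taking regular parts, are valid only while the nodes stay away from $\supp(\bfs\beta-\check{\bfs\beta\,})$, since the shift $\varphi_{\bfs\beta}-\varphi_{\check{\bfs\beta\,}}$ and the interaction factor are singular (and multivalued) precisely at those points. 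The paper first proves the statement for fields with nodes in $D^*$ and then passes to the rooted functionals by comparing OPE coefficients at the punctures, i.e.\ the rooting/limiting argument of Proposition~\ref{prop: insertion}; some such step is needed in your argument before you may conclude for all of $\FF_{\check{\bfs\beta\,}}$.
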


\subsubsec{Ward's equations and BPZ-Cardy equations}
We mostly concern ourselves with a symmetric background charge $\bfs\beta = \overbar{\bfs\beta_*}.$ 
One of our main goals is to present the precise relation between conformal field theory $\FF_{\bfs\beta}$ and $\SLE[\bfs\beta]$ for such a symmetric background charge $\bfs\beta,$ see Theorem~\ref{main} below.

Let $A_{\bfs\beta}:=T_{\bfs\beta}-\E\,T_{\bfs\beta}.$
For a meromorphic vector field in $\bar D,$ we define 
$$W_{\bfs\beta}(v)=2\,\Re\,W_{\bfs\beta}^+(v), \qquad W_{\bfs\beta}^+(v)=\frac1{2\pi i}\int_{\pa D} vA_{\bfs\beta}-\frac1{\pi }\int_{D} (\bp v)A_{\bfs\beta},$$
where $\bp v$ in the last integral should be understood in the sense of distributions and the first integral should be taken in the sense of the Cauchy principal value if necessary.
The following theorem enables us to represent the action of the Lie derivatives operators $\LL_v$ by inserting the Ward functionals into correlation functions of fields in the extended OPE family.

\begin{thm}[Ward's identities] \label{main: Ward identities in D}
Let $X_{\bfs\beta}$ be a tensor product of fields in the extended OPE family $\FF_{\bfs\beta}.$ 
If the nodes of $X_{\bfs\beta}$ do not intersect with the poles of $v,$ then we have 
$$\E\,\LL_v X_{\bfs\beta} = \E\,W_{\bfs\beta}(v) X_{\bfs\beta}.$$
\end{thm}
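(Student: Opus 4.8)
The plan is to reduce Ward's identities for the extended OPE family $\FF_{\bfs\beta}$ to the already-understood case of the plain Gaussian free field $\Phi$ (i.e. the background charge $\bfs\beta = \bfs0$, or more precisely the minimal reference charge for which the classical Dirichlet-theory Ward identity is available), and then transport that identity through the insertion operator. First I would establish the statement for the bosonic building blocks: for fields of the form $\partial^{k}\Phi_{\bfs\beta}$ and their Wick products, the Lie derivative $\LL_v$ acts by the usual conformal rule (plus the PPS/PS inhomogeneous terms coming from the modification $\varphi_{\bfs\beta}$), and one checks that the functional $W_{\bfs\beta}(v)$, built from $A_{\bfs\beta}=T_{\bfs\beta}-\E\,T_{\bfs\beta}$, reproduces exactly these terms. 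The key computational input is the operator product expansion $T_{\bfs\beta}(\zeta)X_{\bfs\beta}(z)$, whose singular part encodes the conformal dimension, the PPS order, and (for non-single-valued $X_{\bfs\beta}$) the residue/spin data; contour-integrating this OPE against $v$ over $\pa D$ and against $\bp v$ over $D$ is precisely the content of $W_{\bfs\beta}^{+}(v)$, and the $2\,\Re$ symmetrization is forced by the Schottky-double symmetry $\bfs\beta=\overbar{\bfs\beta_*}$ together with the Dirichlet (reflection) boundary condition on $\Phi$.

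Second, I would handle the vertex/OPE-exponential fields $\OO_{\bfs\beta}[\bfs\tau]$ by invoking Theorem~\ref{main: change of beta}: inserting $V^{\odot}[\bfs\beta-\check{\bfs\beta\,}]$ carries $\FF_{\check{\bfs\beta\,}}$ onto $\FF_{\bfs\beta}$, and it carries Ward functionals to Ward functionals, since $V^{\odot}[\cdot]$ is itself (a limit of) vertex fields whose OPE with $T$ is controlled. Concretely, one shows $\E\big[(\LL_v X_{\bfs\beta})\,\big] = \E\big[(\LL_v X_{\check{\bfs\beta\,}})\,V^{\odot}[\bfs\beta-\check{\bfs\beta\,}]\big] + (\text{terms from }\LL_v\text{ acting on the inserted }V^{\odot})$, and the extra terms reassemble into the difference $W_{\bfs\beta}(v)-W_{\check{\bfs\beta\,}}(v)$ acting inside the correlation; choosing $\check{\bfs\beta\,}$ to be a charge for which Ward's identity is already proved in \cite{KM13} (the standard chordal/radial cases) then closes the induction. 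The OPE functionals at the punctures, such as $(J_{\check{\bfs\beta\,}})_q = J(q)+\tfrac{ib}{2}N_w(q)$, are treated by the same insertion mechanism once one notes that the extra non-random term $N_w$ transforms as a PS form and is exactly matched by the boundary/bulk integrals in $W_{\bfs\beta}(v)$ evaluated near $q$.

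The main obstacle I anticipate is the careful treatment of the contour integrals when the poles of $v$ or the background-charge points $q_k$ approach the nodes of $X_{\bfs\beta}$, and more seriously the interchange of $\LL_v$ with the OPE limits defining the extended family: one must justify that taking the zeroth regular coefficient commutes with the Lie derivative and with the insertion, which requires uniform control of the OPE remainders. A secondary subtlety is bookkeeping the multivaluedness — $\varphi_{\bfs\beta}$ and the bi-variant fields $\Psi^{\pm}(z,z_0)$ are only defined up to periods, so one has to verify that $W_{\bfs\beta}(v)$, the Ward functional, is single-valued even though its constituents are not; this is where the neutrality conditions $(\NC_0)$ and $(\NC_b)$ do the essential work, and I would isolate that as a lemma before assembling the proof. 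Once single-valuedness and the OPE-limit interchange are in hand, the rest is a direct comparison of the two sides node-by-node using Wick's calculus, together with the residue theorem applied to $v\,A_{\bfs\beta}$ on $\bar D$.
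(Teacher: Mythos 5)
Your overall skeleton is close to the paper's: the definition of $W_{\bfs\beta}(v)$ is converted by Stokes'/residue calculus into local contour integrals of $vA_{\bfs\beta}$ at the nodes and at the background-charge points, and the change-of-charge insertion (Theorem~\ref{main: change of beta}/\ref{change of beta}) is the tool for handling the punctures. But the two places where your outline says ``one shows'' or ``I anticipate an obstacle'' are exactly where the paper does its real work, and as written they are gaps, not steps. First, the claim that the terms produced by $\LL_v$ hitting the inserted $V^\odot[\bfs\beta-\check{\bfs\beta\,}]$ ``reassemble into $W_{\bfs\beta}(v)-W_{\check{\bfs\beta\,}}(v)$ inside the correlation'' is precisely the nontrivial assertion: it amounts to proving that, within correlations, $\frac1{2\pi i}\oint_{(q_k)}vA_{\bfs\beta}$ acts as $v(q_k)\pa_{q_k}$ (the paper's Lemma~\ref{Ward@q}). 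That requires the identity $i\beta_k\,\E\,J(q_k)\,\XX_{\bfs\beta}=\pa_{q_k}\E\,\XX_{\bfs\beta}$, which the paper proves by inserting relative to a reference charge $\bfs\beta_0$ with support disjoint from $\supp\,\bfs\beta$ and doing Wick calculus carefully (separating $J(q_k')\odot\ee^{\odot i\Phi[\bfs\beta-\bfs\beta_0]}$ from $J(q_k')\,\ee^{\odot i\Phi[\bfs\beta-\bfs\beta_0]}$, the correction being $j_{\bfs\beta}-j_{\bfs\beta_0}$, then letting $q_k'\to q_k$). Asserting that the insertion ``carries Ward functionals to Ward functionals'' begs this question, since the inserted vertex has its nodes exactly at the points where $A_{\bfs\beta}$ is singular; note also that the identity at the $q_k$ holds only within correlations, not as a Fock-space identity, a distinction your outline does not make. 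A minor related point: you cannot take the reference theory to be $\bfs\beta=\bfs0$ when $b\ne0$, since $(\NC_b)$ forces a nontrivial reference charge; the standard chordal/radial charges of \cite{KM13} work, as you say later.

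Second, the statement must hold for the whole (extended) OPE family, so one needs that the residue form of Ward's identity survives the generalized OPE products $*_{\alpha+n}$ with non-integer $\alpha$ (these generate the OPE exponentials and the puncture functionals such as $(J_{\check{\bfs\beta\,}})_q$). You flag ``interchanging $\LL_v$ with the OPE limits'' as an anticipated obstacle but give no mechanism; this is not a matter of uniform control of remainders but an algebraic identity, which the paper proves in Lemma~\ref{*F} by a three-contour argument establishing $(vT)*_{-1}(X*_{\alpha+n}Y)=[(vT)*_{-1}X]*_{\alpha+n}Y+X*_{\alpha+n}[(vT)*_{-1}Y]$ (the multivalued factor $(\eta-z)^{-\alpha}$ being cancelled by the $\OPE_\alpha$), combined with Leibniz's rule for $\LL_v^+$. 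Until these two lemmas are actually proved, your argument reduces the theorem to statements that are essentially equivalent to it.
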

The proof of Theorem~\ref{main: Ward identities in D} is based on the following residue form of Ward identity
\begin{equation} \label{main: eq: local Ward}
\LL_v^+(z) X_{\bfs\beta} = \frac1{2\pi i}\oint_{(z)} vA_{\bfs\beta}X_{\bfs\beta}
\end{equation}
for a field $X_{\bfs\beta}$ in the OPE family with a proper meaning of equality. 
Here $\LL_v^+$ is the $\C$-linear part of the Lie derivative $\LL_v.$ 
We emphasize that \eqref{main: eq: local Ward} has a different nature at $z$ in $\supp\,\bfs\beta^+ \cup \supp\,\bfs\beta^-$ and at each node $z$ of $X_{\bfs\beta}.$
As Fock space functionals, \eqref{main: eq: local Ward} holds at each node $z$ of $X_{\bfs\beta},$ see Lemma~\ref{*F}. 
On the other hand, \eqref{main: eq: local Ward} holds \emph{within correlations} at $z$ in $\supp\,\bfs\beta^+ \cup \supp\,\bfs\beta^-,$ see Lemma~\ref{Ward@q}.

Ward's equations (see Theorem~\ref{main: Ward's equations} below) describe the insertion of the Virasoro field within correlations of fields in $\FF_{\bfs\beta}$ in terms of the Lie derivative operators and the puncture operators $\PP_{\bfs\beta} := C_{(b)}[\bfs\beta].$
Let $k_z$ (resp. $v_z$) be the chordal (resp. radial) Loewner vector field with pole at $z:$
$$k_z(\zeta) =\frac1{z-\zeta}, \quad \Big(\textrm{resp. } v_z(\zeta) = \zeta\frac{z+\zeta}{z-\zeta}\Big)$$
in the identity chart of $\H$ (resp. in the identity chart of $\D$). 

\begin{thm}[Ward's equations] \label{main: Ward's equations}
Given a symmetric background charge $\bfs\beta $ with the neutrality condition $(\NC_b),$
let $X_1,\cdots, X_n\in \FF_{\bfs\beta}$ and let $X$ be the tensor product of $X_j$'s.
Then
$$\E\, T_{\bfs\beta}(z)\,X=\E\,\PP_{\bfs\beta}^{-1} \LL^+_{k_z} \PP_{\bfs\beta} X+\E\,\PP_{\bfs\beta}^{-1} \LL^-_{k_{\bar z}}\PP_{\bfs\beta} X,$$
where all fields are evaluated in the identity chart of $\H$ 
and 
$$2z^2\E\, T_{\bfs\beta}(z)\,X=\E\,\PP_{\bfs\beta}^{-1} \LL^+_{v_z} \PP_{\bfs\beta} X+\E\,\PP_{\bfs\beta}^{-1} \LL^-_{v_{z^*}}\PP_{\bfs\beta} X,$$
where all fields are evaluated in the identity chart of $\D.$ 
\end{thm}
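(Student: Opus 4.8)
The plan is to represent the left-hand side of each identity as the residue at $z$ of an explicit meromorphic $1$-form on the Schottky double $S$ of $D$ (so $S\cong\wh\C$ in the chordal and in the radial case), and then to redistribute the total residue, which is zero, over the remaining singularities — the nodes of $X$, their mirror images, and the punctures — using the residue form of Ward's identity \eqref{main: eq: local Ward}. Two structural facts set this up. Write $A_{\bfs\beta}=T_{\bfs\beta}-\E\,T_{\bfs\beta}$: since the conformal anomaly of $T_{\bfs\beta}$ is non-random it cancels in $A_{\bfs\beta}$, so $A_{\bfs\beta}$ is a genuine quadratic differential, and because it is built from the symmetric field $\Phi_{\bfs\beta}$ (here one uses $\bfs\beta=\overbar{\bfs\beta_*}$) it extends across $\pa D$ to a meromorphic quadratic differential on $S$ whose restriction to the mirror is the antiholomorphic Virasoro field. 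Hence, for fixed nodes, $\zeta\mapsto\E[A_{\bfs\beta}(\zeta)X]$ is a meromorphic quadratic differential in $\zeta$ on $S$, holomorphic away from $S_X$, its mirror image $S_X^{*}$, and $\supp\bfs\beta$ (an involution-invariant set), with poles there of order at most two.

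First I would multiply by the Loewner field and apply the residue theorem. On $S$ the chordal field $k_z$ (resp.\ the radial field $v_z$) is a meromorphic vector field with a single simple pole at $z$, of residue $-1$ (resp.\ $-2z^2$, since $v_z(\zeta)=-\zeta-2z-2z^2/(\zeta-z)$), holomorphic elsewhere and vanishing to order three at $\infty$ (resp.\ vanishing simply at $0$ and at $\infty$). So $k_z\E[A_{\bfs\beta}X]$ (resp.\ $v_z\E[A_{\bfs\beta}X]$) represents a meromorphic $1$-form on the compact surface $S$; the sum of its residues vanishes, and its residue at $z$ equals $-\E[A_{\bfs\beta}(z)X]$ (resp.\ $-2z^2\E[A_{\bfs\beta}(z)X]$).

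Next I would evaluate the remaining residues. At a node $z_j$ the residue equals $\E\,\LL^+_{k_z}(z_j)X$ by \eqref{main: eq: local Ward}, valid there as an identity of Fock space functionals (Lemma~\ref{*F}); summing over nodes gives $\E\,\LL^+_{k_z}X$, and by the reflection symmetry of $A_{\bfs\beta}$ together with $k_z^{*}=k_{\bar z}$ (resp.\ $v_z^{*}=v_{z^*}$) for the reflected Loewner field and the definition of $\LL^-$ by conjugation of $\LL^+$, the mirror nodes $S_X^{*}$ contribute $\E\,\LL^-_{k_{\bar z}}X$ (resp.\ the $v_{z^*}$-version). At a point $q_k\in\supp\bfs\beta$, \eqref{main: eq: local Ward} holds within correlations (Lemma~\ref{Ward@q}): the background charge at $q_k$ enters $\E[\,\cdot\,X]$ as a primary of weight $\lambda_k=\beta_k^2/2-b\beta_k$ whose correlation is the puncture operator $\PP_{\bfs\beta}=C_{(b)}[\bfs\beta]$, so the residue there is the corresponding local contribution of $\PP_{\bfs\beta}^{-1}\LL^+_{k_z}\PP_{\bfs\beta}$ (resp.\ $\PP_{\bfs\beta}^{-1}\LL^-_{k_{\bar z}}\PP_{\bfs\beta}$). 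Setting the total residue to zero expresses $\E[A_{\bfs\beta}(z)X]$ (resp.\ $2z^2\E[A_{\bfs\beta}(z)X]$) as a sum of bare Lie-derivative terms and puncture contributions. To reach the stated form I would add the $X=1$ case of the assertion — the transformation law $\E\,T_{\bfs\beta}(z)=\PP_{\bfs\beta}^{-1}\LL^+_{k_z}\PP_{\bfs\beta}+\PP_{\bfs\beta}^{-1}\LL^-_{k_{\bar z}}\PP_{\bfs\beta}$ (resp.\ with the factor $2z^2$), which follows from the explicit Schwarzian-plus-charge form of $\E\,T_{\bfs\beta}$ — and regroup, using $\E[A_{\bfs\beta}(z)X]+\E\,T_{\bfs\beta}(z)\,\E\,X=\E[T_{\bfs\beta}(z)X]$, so that the bare Lie derivatives, the puncture contributions and $\E\,T_{\bfs\beta}(z)\,\E\,X$ combine into $\E\,\PP_{\bfs\beta}^{-1}\LL^+_{k_z}\PP_{\bfs\beta}X+\E\,\PP_{\bfs\beta}^{-1}\LL^-_{k_{\bar z}}\PP_{\bfs\beta}X$ (resp.\ with $v_z,v_{z^*}$ and $2z^2$).

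I expect the main obstacle to be the puncture bookkeeping encapsulated in Lemma~\ref{Ward@q}: pinning down the order-two local expansion of $\E[A_{\bfs\beta}(\zeta)X]$ at each $q_k$ — which requires tracking the dependence of the fields of $\FF_{\bfs\beta}$ on the marked points through $\varphi_{\bfs\beta}$ — and matching it, together with the $X=1$ identity, against $\PP_{\bfs\beta}^{-1}\LL^{\pm}\PP_{\bfs\beta}$; this is where the Coulomb-gas correlation $C_{(b)}[\bfs\beta]$ and the neutrality condition $(\NC_b)$ are used essentially. A secondary difficulty is the reflection bookkeeping on $S$: handling boundary punctures with $q_k=q_k^{*}$, the behaviour at $\infty$ (itself a puncture under the standard normalizations), and verifying that the mirror residues recombine into the $\LL^-$-terms consistently with the factor of two implicit in the $2\,\Re$ of the Schottky-double construction.
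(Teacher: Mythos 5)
Your proposal is correct and follows essentially the same route as the paper: the paper likewise reduces the statement to the local Ward identities at the nodes (Lemma~\ref{*F}) and at the punctures within correlations (Lemma~\ref{Ward@q}), assembles them via a contour/Stokes argument into Theorem~\ref{main: Ward identities in D} together with the reflection identity $A_{\bfs\beta}(\xi)=W_{\bfs\beta}^+(k_\xi)+\overline{W_{\bfs\beta}^+(k_{\bar\xi})}$ (Proposition~\ref{represent A}), and then regroups using the explicit $X=1$ computation of Lemma~\ref{ETZ}. Your phrasing of the global step as the residue theorem for $v\,\E[A_{\bfs\beta}X]$ on the Schottky double is just an equivalent packaging of that Stokes-plus-reflection argument, and your closing caveats (boundary charges handled by a limit/principal-value procedure, bookkeeping at $\infty$) match the qualifications the paper itself makes.
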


Let $\bfs\beta$ be a symmetric background charge with the neutrality condition $(\NC_b)$ and a specific charge $a$ at a marked boundary point $p\in\pa D.$ 
We consider a symmetric divisor $\bfs\tau = a\cdot p + \cdots$ satisfying the neutrality condition $(\NC_0)$ 
and let $\check{\bfs\beta\,} = \bfs\beta - \bfs\tau.$
By Theorem~\ref{main: change of beta}, $\FF_{\bfs\beta}$ is the image of $\FF_{\check{\bfs\beta\,}}$ under the insertion of 
$$\frac{\leg_p}{\E\,\leg_p},$$
where $\leg_p := \OO_{\check{\bfs\beta\,}}[\bfs\tau]$ is the one-leg operator and $\E\,\leg_p = C_{(b)}[\bfs\beta]/ C_{(b)}[\check{\bfs\beta}]$ is its correlation.
Let $\XX$ be a string in $\FF_{\check{\bfs\beta\,}}.$
We assume that $\zeta$ is not a node of $\XX.$ 
Denote 
$$\wh\E_\zeta\,\XX : = \frac{\E\,\leg_\zeta \XX}{\E\,\leg_\zeta}.$$
We define the partition function associated with a symmetric background charge $\bfs\beta$ by 
$$Z_{\bfs\beta} := \big|C_{(b)}[\bfs\beta]\big|$$
and let 
$Z_\zeta = Z_{\bfs\beta_\zeta}, \bfs\beta_\zeta= \bfs\beta + a \cdot \zeta- a\cdot p.$ 
The following version of BPZ-Cardy equations play a crucial role in understanding precise relation between conformal field theory and SLE.

\begin{thm}\label{main: BPZ-Cardy}
Given a symmetric background charge $\bfs\beta $ with $\bfs\beta(p)=a$ and the neutrality condition $(\NC_b),$ let $X_1,\cdots, X_n\in \FF_{\bfs\beta}$ and let $X$ be the tensor product of $X_j$'s.
Suppose that the parameters $a$ and $b$ satisfy 
$$2a(a+b) = 1.$$
If $\xi\in\R,$ then in the identity chart of $\H,$ we have
$$\frac1{2a^2}\Big(\pa_\xi^2 + 2\big(\pa_\xi\log Z_\xi \big) \pa_\xi \Big) \wh\E_{\xi}\,X= \check \LL_{k_\xi}\wh\E_{\xi}\,X,$$
where $\pa_\xi = \pa + \bp$ is the operator of differentiation with respect to the real variable $\xi$ and $\check{\LL}_{k_\xi}$ is taken over the finite notes of $X$ and $\supp\,\bfs\beta_\xi\setminus\{\xi\}.$
In the $(\D,0)$-uniformization we have
$$-\frac2{a^2}\Big(\frac12\pa_\theta^2 + \big(\pa_\theta\log Z_\zeta \big) \pa_\theta\Big)\wh\E_\zeta X = \check{\LL}_{v_\zeta}\wh\E_\zeta X, \qquad(\zeta = \ee^{i\theta}, \theta\in\R),$$
where the Lie derivative $\check{\LL}_{v_\zeta}$ is taken over the finite notes of $X$ and $\supp\,\bfs\beta_\zeta\setminus\{\zeta\}.$
\end{thm}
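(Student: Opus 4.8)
The plan is to derive the BPZ-Cardy equations from Ward's equations (Theorem~\ref{main: Ward's equations}) together with the level-two degeneracy of the one-leg operator $\leg_p$. The starting point is the observation that $\leg_p = \OO_{\check{\bfs\beta\,}}[\bfs\tau]$ with $\bfs\tau(p) = a$ is, by construction, a primary field of dimension $h = a^2/2 - ab$ at $p$, and the condition $2a(a+b) = 1$ is exactly the condition that makes $h = \tfrac12(1 - 6a^2/(1))$-type value for which the singular vector at level two appears; concretely, it means that the combination $\big(L_{-2} - \tfrac1{2a^2}L_{-1}^2\big)\leg_p$ (in a suitably normalized sense) produces a field that vanishes within correlations with elements of $\FF_{\check{\bfs\beta\,}}$. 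I would first make this degeneracy precise as an identity: inserting $T_{\check{\bfs\beta\,}}(z)$ near $z = \xi$ against $\leg_\xi X$ and extracting the coefficient of $(z-\xi)^{-2}$ in the Laurent expansion gives $h\,\wh\E_\xi X$ times the partition-function factor, while the coefficient of $(z-\xi)^{-1}$ gives $\pa_\xi$ acting appropriately; the degenerate-vector relation forces the $(z-\xi)^{-2}$ singular behavior of $\E\,T_{\check{\bfs\beta\,}}(z)\leg_\xi X$ to be reproducible as $\tfrac1{2a^2}\pa_\xi^2$ of the lower-order data.

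Next I would feed this into Ward's equation in the $(\H,\infty)$-uniformization. Apply Theorem~\ref{main: Ward's equations} with the string $\leg_\xi X$ (i.e., work inside $\FF_{\check{\bfs\beta\,}}$, treating $\leg_\xi$ as one of the fields): this expresses $\E\,T_{\check{\bfs\beta\,}}(z)\,\leg_\xi X$ as $\E\,\PP^{-1}\LL^+_{k_z}\PP(\leg_\xi X) + \E\,\PP^{-1}\LL^-_{k_{\bar z}}\PP(\leg_\xi X)$ with $\PP = \PP_{\check{\bfs\beta\,}}$. Now let $z \to \xi \in \R$. On the left side, the degenerate-vector identity converts the singular part into $\tfrac1{2a^2}\pa_\xi^2$ applied to $\E\,\leg_\xi X$, after stripping off $\E\,\leg_\xi$; the derivatives of the normalizing factor $\E\,\leg_\xi = C_{(b)}[\bfs\beta_\xi]/C_{(b)}[\check{\bfs\beta\,}]$ are precisely what generate the first-order term $\tfrac1{a^2}(\pa_\xi \log Z_\xi)\pa_\xi$, using $Z_\xi = |C_{(b)}[\bfs\beta_\xi]|$ and the fact that $\bfs\beta_\xi$ is symmetric so $Z_\xi$ is the modulus of a real-analytic expression. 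On the right side, the Lie-derivative terms at $z \to \xi$ split into (i) the contribution of the Lie derivative \emph{at the node} $\xi$ carried by $\leg_\xi$ itself, which recombines with the left side, and (ii) the Lie derivative over the remaining nodes of $X$ and over $\supp\,\bfs\beta_\xi \setminus \{\xi\}$, which is exactly $\check\LL_{k_\xi}$ acting on $\wh\E_\xi X$. Dividing through by $\E\,\leg_\xi$ and rearranging yields the first displayed equation.

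The radial statement follows by the same scheme applied to the second Ward equation in Theorem~\ref{main: Ward's equations}, using the radial Loewner field $v_\zeta$ in the $(\D,0)$-chart. The one subtlety is bookkeeping: on $\pa\D$ one parametrizes $\zeta = \ee^{i\theta}$, so $\pa_\zeta$ and $\pa_\theta$ differ by the factor $i\zeta$, which accounts for the $\zeta^2$ prefactor present in the radial Ward equation and the change from $\pa_\xi^2$ to $\tfrac12\pa_\theta^2$ and the sign. After translating the $z$-derivatives into $\theta$-derivatives and again absorbing the derivatives of $\E\,\leg_\zeta$ into $(\pa_\theta \log Z_\zeta)\pa_\theta$, one reads off the second identity.

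The main obstacle, I expect, is handling the coincidence limit $z \to \xi$ cleanly when $\xi$ is a boundary point lying in $\supp\,\bfs\beta_\xi$: the field $T_{\check{\bfs\beta\,}}(z)$ has a singularity there coming both from the node of $\leg_\xi$ and from the background charge, so one must carefully separate the ``self'' singular part (which is governed by the null-vector equation and produces the differential operator in $\xi$) from the ``cross'' terms with the rest of $X$ and with the other background charges (which assemble into $\check\LL_{k_\xi}$). This requires using Lemma~\ref{*F} for the node and Lemma~\ref{Ward@q} for the background-charge points, and verifying that the contour-integral (Cauchy principal value) prescription in the definition of $W_{\bfs\beta}^\pm$ is compatible with taking $z$ onto the real line. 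The reality/symmetry of $\bfs\beta_\xi$ is what guarantees the two holomorphic and anti-holomorphic halves in Ward's equation combine into the single real operator $\pa_\xi = \pa + \bp$, so this hypothesis must be invoked at exactly that merging step.
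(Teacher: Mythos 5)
Your route is essentially the paper's own: Ward's equations together with the level-two degeneracy of the one-leg operator (valid exactly when $2a(a+b)=1$) give the BPZ equation $\frac1{2a^2}\pa_\xi^2\,\E\,\leg^\eff_\xi X = \E\,\check\LL_{k_\xi}\leg^\eff_\xi X$ in the $\H$-uniformization, and dividing by $\E\,\leg^\eff_\xi$ with Leibniz's rule for Lie derivatives produces the stated equations, which is precisely how Propositions~\ref{prop: BPZ-Cardy4chordalSLE(kappa,rho)} and \ref{Cardy4spin} proceed. The only points you leave implicit are that after normalization the zeroth-order terms $\frac1{2a^2}\pa_\xi^2 Z_\xi/Z_\xi$ and $\check\LL_{k_\xi}Z_\xi/Z_\xi$ cancel because the partition function itself satisfies the null-vector equation (the $X=1$ case of the same BPZ equation), and that in the radial case $C_\zeta=\E\,(\leg^s_\zeta)^\eff$ is complex, so the drift $\pa_\theta\log Z_\zeta$ appears only after its phase is combined with the $ih\pa_\theta$ term of the degenerate operator via the reference charge $\bfs\beta^0_\zeta$, i.e.\ through $\pa_\theta\log(C_\zeta/C^0_\zeta)=\pa_\theta\log Z_\zeta$.
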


\subsection{SLE and martingale-observables} \label{ss: intro MO} 
Since Schramm introduced SLE in \cite{Schramm00} as the only possible candidates for the scaling limits of interface curves in critical 2-D lattice models, SLE has been used with remarkable success to prove some important conjectures in statistical physics. 
For example, see the work of Lawler-Schramm-Werner (\cite{LSW01a,LSW04}), Schramm-Sheffield (\cite{SS09,SS13}), and Smirnov (\cite{Smirnov01,Smirnov10}).

\subsubsec{Standard chordal/radial SLEs} 
For a simply-connected domain $(D,p,q)$ with a marked boundary point $p$ and a marked interior point $q,$ radial Schramm-Loewner evolution (SLE) in $(D,p,q)$ with a positive parameter $\kappa$ is the conformally invariant law on random curves from $p$ to $q$ satisfying the so-called ``domain Markov property" (see \eqref{eq: D Markov} below). 
Let us review the basic definitions of radial $\SLE(\kappa)$ in technical terms. For each $z \in D,$ let $g_t(z)$ be the solution (which exists up to a time $\tau_z\in(0,\infty]$) of the equation
\begin{equation} \label{eq: g radial}
\partial_t g_t(z) = g_t(z)\frac{\zeta_t+g_t(z)}{\zeta_t-g_t(z)}, 
\end{equation}
where $g_0:(D, p,q)\to (\D,1,0)$ is a given conformal map, $\zeta_t = \ee^{i\theta_t},\theta_t=\sqrt\kappa B_t,$ and $B_t$ is a standard Brownian motion with $B_0=0.$
Then for all $t,$ 
$$w_t:(D_t,\gamma_t,q)\to(\D,1,0), \qquad w_t(z):=g_t(z)/\zeta_t= g_t(z)\ee^{-i\sqrt{\kappa}B_t}$$
is a well-defined conformal map from 
$D_t := \{z \in D: \tau_z>t\}$
onto the unit disc $\D.$
It is known that the SLE stopping time $\tau_z$ (defined to be the first time when the solution of Loewner equation \eqref{eq: g radial} does not exist) satisfies
$\lim_{t\uparrow \tau_z} w_t(z) = 1.$
The \emph{radial SLE curve} $\gamma$ is defined by the equation
$$\gamma_t \equiv \gamma(t) := \lim_{z\to1} w_t^{-1}(z)$$
and satisfies the ``domain Markov property," 
\begin{equation} \label{eq: D Markov}
\mathrm{law}\,\left( \gamma[t,\infty)\,|\,\gamma[0, t]\right)\,=\,\mathrm{law}\,\gamma_{D_t,\gamma_t,q}[0,\infty).
\end{equation}
The sets $K_t:= \{z \in \overline{\D}: \tau_z\le t\}$ are called the \emph{hulls} of the SLE. 

For a simply-connected domain $(D,p,q)$ with two marked boundary points $p,q,$ the chordal SLE map $g_t$ with a positive parameter $\kappa$ satisfies the equation 
\begin{equation} \label{eq: g chordal}
\partial_t g_t(z) = \frac{2}{g_t(z) -\xi_t}, 
\end{equation}
where $g_0:(D, p,q)\to (\H, 0,\infty)$ is a given conformal map and $\xi_t = \sqrt{\kappa}B_t.$
Then for all $t,$
$$w_t(z):=g_t(z) -\xi_t$$
is a well-defined conformal map from the domain $D_t := \{z \in D: \tau_z>t\}$
onto $\mathbb{H},$
where the SLE stopping time $\tau_z$ satisfies 
$\lim_{t\uparrow \tau_z} w_t(z) = 0.$
The \emph{chordal SLE curve} $\gamma$ is defined by the equation
$$\gamma_t \equiv \gamma(t) := \lim_{z\to0} w_t^{-1}(z).$$

\subsubsec{Martingale-observables} 
Many results in the SLE theory and its applications depend on the explicit form of certain martingale-observables. 
A non-random \emph{conformal} field $f$ is an assignment of a (smooth) function
$(f\,\|\,\phi): ~\phi U\to\C$
to each local chart $\phi:U\to\phi U.$
For a non-random conformal field $M$ of $n$ variables either in $\H$ (in the chordal case) or in $\D$ (in the radial case), let us define $M$ in any simply-connected domain $(D,p,q)$ with marked points $p\in \pa D, q\in \pa D \cup D$ by 
$$(M_{D,p,q}\,\|\,\id) = (M\,\|\,w^{-1}).$$
Here $w:(D,p,q)\to(\H,0,\infty)$ is a conformal map in the chordal case ($q \in \pa D$) and $w:(D,p,q)\to(\D,1,0)$ is a conformal map in the radial case ($q \in D$). 
We say that $M$ is a \emph{martingale-observable} for chordal/radial $\SLE(\kappa)$ if for any $z_1,\cdots ,z_n\in D,$ the process
$$M_t(z_1,\cdots, z_n)=M_{D_t,\gamma_t,q}(z_1,\cdots, z_n)$$
(stopped when any $z_j$ exits $D_t$) is a local martingale on chordal/radial SLE probability space. 
For instance, we can use the identity chart of $D,$ and then for $[h^+,h^-]$-differentials $M$ with conformal dimensions $[h_q^+, h_q^-]$ at $q,$ we have 
$$M_t(z) = (w_t'(z))^{h^+} (\overline{w_t'(z)})^{h^-} (w_t'(q))^{h_q^+} (\overline{w_t'(q)})^{h_q^-}M(w_t(z)).$$ 
The notion of martingale-observables can be extended naturally to chordal/radial $\SLE[\bfs\beta]$ with force points and spins for a general symmetric background charge $\bfs\beta = (\bfs\beta^+,\bfs\beta^-)$ with the neutrality condition $(\NC_b)$ and a specific charge $a$ at $p,$ see Theorem~\ref{main} below. 
The difference $\lambda_b(\beta_k^+)-\lambda_b(\overline{\beta_k^+})$ is called the (conformal) \emph{spin} of $\bfs\beta$ at $q_k.$ 
Here $\lambda_b(\sigma) := {\sigma^2}/2 - \sigma b$ is the conformal dimension of charge $\sigma.$
 
\subsubsec{SLEs with force points and spins} 
We now generalize the standard chordal/radial SLEs to $\SLE[\bfs\beta]$ with finitely many force points and spins. 
Let 
\begin{equation} \label{eq: ab} 
a = \pm\sqrt{2/\kappa}, \quad b = a(\kappa/4-1)
\end{equation}
so that $a$ and $b$ satisfy $2a(a+b)=1.$ 
By definition, for a given symmetric background charge $\bfs\beta$ on $S = D^{\mathrm{double}}$ satisfying the neutrality condition $(\NC_b)$ with a specific charge $a$ at $p,$
the chordal (resp. radial) $\SLE[\bfs\beta]$ map $g_t$ satisfies the chordal Loewner equation~\eqref{eq: g chordal} (resp. the radial Loewner equation~\eqref{eq: g radial}) driven by the real process $\xi_t:$ 
\begin{equation} \label{eq: driving for chordal SLE[beta]}
\dd\xi_t = \sqrt\kappa\, \dd B_t + \lambda(t)\,\dd t, \quad \lambda(t) = (\lambda\,\|\,g_t^{-1}), \quad \lambda = \kappa \,\pa_\xi \log Z_{\bfs\beta_\xi},
\end{equation}
(resp. by the real process $\theta_t:$
\begin{equation} \label{eq: driving for radial SLE[beta]}
\dd\theta_t = \sqrt\kappa\, \dd B_t + \lambda(t)\,\dd t, \quad \lambda(t) = (\lambda\,\|\,g_t^{-1}), \quad \lambda = \kappa\, \pa_\theta \log Z_{\bfs\beta_\zeta}, \quad \zeta = \ee^{i\theta}, \quad \zeta_t = \ee^{i\theta_t}.)
\end{equation}

\ss The chordal $\SLE(\kappa,\bfs\rho)$ driven by
$$\dd\xi_t = \sqrt{\kappa}\, \dd B_t + \sum_k \frac{\rho_k\,\dd t}{\xi_t - q_k(t)},\qquad 
\dd q_k(t) = \frac2{q_k(t)-\xi_t}\,\dd t$$
has the background charge 
$$\bfs\beta = a\cdot p + \sum_k \beta_k\cdot q_k +(2b-a-\beta)\cdot q,$$
where $\beta_k=\rho_k/\sqrt{2\kappa}$ and $\beta = \sum_k\beta_k.$ 
It is known that the chordal $\SLE(\kappa,\bfs\rho)$ is almost surely a continuous path, see \cite[Theorem~1.3]{MS16}.
The radial $\SLE_\eta(\kappa,\bfs\rho)$ 
driven by
$\zeta_t = \ee^{i\theta_t}:$ 
\begin{equation} \label{eq: d theta}
\dd\theta_t = \sqrt\kappa \,\dd B_t + \eta\, \dd t + i\sum_k \frac{\rho_k}{2} \cot \frac{\theta_t - \vartheta_k(t)}2 \,\dd t, \quad\dd q_k(t) = q_k(t)\frac{\zeta_t + q_k(t)}{\zeta_t - q_k(t)}\,\dd t, \quad q_k = \ee^{i\vartheta_k}
\end{equation}
has the background charge 
$$\bfs\beta = a\cdot p + \sum_k \beta_k\cdot q_k + \Big(b-\frac {a+\beta+i\delta}2\Big)\cdot q+ \Big(b-\frac {a+\beta-i\delta}2\Big)\cdot q^*, \qquad \delta = \eta a.$$
In \cite{MS17}, Miller and Sheffield identified the radial $\SLE_\eta(\kappa,\bfs\rho)$ with the flow lines of a (formal) vector field $\OO_{\bfs\beta}[\sigma\cdot z, - \sigma\cdot z],$
where $\sigma$ is chosen such that the conformal spin $s:=\lambda^+-\lambda^-=-2\sigma b= 1.$

\ss We now briefly explain how the chordal $\SLE[\bfs\beta]$ process can be produced from the standard chordal $\SLE(\kappa)$ or $\SLE[\bfs\beta^0]$ ($\bfs\beta^0 := a\cdot p + (2b-a)\cdot q$) by the density $\E\,\OO_{\bfs\beta^0}[\bfs\beta-\bfs\beta^0].$ 
By Proposition~14.3 in [KM13] or the special case of Theorem~\ref{main} below, the ratio of partition functions
$$M:=\frac{Z_{\bfs\beta}}{Z_{\bfs\beta^0}} = \E\,\OO_{\bfs\beta^0}[\bfs\beta-\bfs\beta^0]$$
is a martingale-observable for $\SLE[\bfs\beta^0]$ with
$$\frac{\dd M_t}{M_t} = \sqrt\kappa\, \pa_\xi |_{\xi =\xi_t} (\log \frac{Z_{\bfs\beta_\xi}}{Z_{\bfs\beta_\xi^0}}\,\|\, g_t^{-1})\,\dd B_t.$$
From the fact that $Z_{\bfs\beta_\xi^0}=1$ in the $(\H,\infty)$-uniformization, it follows that 
the drift term 
\begin{equation}\label{eq: Girsanov}
\frac{\dd\langle \xi, M\rangle_t}{M_t} = \kappa\, \pa_\xi\big|_{\xi=\xi_t} (\log Z_{\bfs\beta_\xi}\,\|\, g_t^{-1}) \,\dd t
\end{equation}
from Girsanov's theorem corresponds to the drift in \eqref{eq: driving for chordal SLE[beta]}, see \cite[Theorem~6]{SW05}.

\ms In Subsection~\ref{ss: proof of main theorem} we present a certain collection of chordal/radial SLE martingale-observables not by It\^o's calculus but conformal field theory. 

\begin{thm} \label{main}
Let $\bfs\beta$ be a symmetric background charge with the neutrality condition $(\NC_b)$ and a specific charge $a$ at $p$ satisfying \eqref{eq: ab}. 
Suppose $\XX_{\bfs\beta}$ is a string in the OPE family $\FF_{\bfs\beta}$ of $\Phi_{\bfs\beta},$ then the non-random field
$$M =\E\,\XX_{\bfs\beta}$$
is a martingale-observable for chordal/radial $\SLE[\bfs\beta].$
\end{thm}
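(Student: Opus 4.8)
The plan is to derive the martingale property from the BPZ--Cardy equations (Theorem~\ref{main: BPZ-Cardy}) together with It\^o's formula, using the specific form of the SLE$[\bfs\beta]$ driving process~\eqref{eq: driving for chordal SLE[beta]}. I work in the chordal case first; the radial case is parallel with $k_\xi$ replaced by $v_\zeta$ and $\pa_\xi$ by $\pa_\theta$. Fix $z_1,\dots,z_n\in D$ and let $M_t = (M_{D_t,\gamma_t,q}\,\|\,\id)(z_1,\dots,z_n)$. Since every field in $\FF_{\bfs\beta}$ is built from (derivatives of) $\Phi_{\bfs\beta}$, its correlation functions transform as differentials with definite conformal dimensions under the conformal map $w_t = g_t - \xi_t : (D_t,\gamma_t,q)\to(\H,0,\infty)$; so $M_t$ is obtained from $M = \E\,\XX_{\bfs\beta}$ (evaluated in the identity chart of $\H$) by the explicit differential transformation rule recalled in Subsection~\ref{ss: intro MO}, with Jacobian factors $w_t'(z_j)$, $\overline{w_t'(z_j)}$ and the $q$-factors $w_t'(q)$, $\overline{w_t'(q)}$ (or the analogous radial expressions).

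Next I would compute $\dd M_t$ by It\^o's formula. There are two sources of $t$-dependence: the flow of the points $w_t(z_j)$ and $w_t(q)$ under the Loewner equation~\eqref{eq: g chordal}, which is a smooth (finite-variation) motion generated precisely by the Loewner vector field $k_{\xi_t}$; and the motion of the driving point $\xi_t$ itself, which carries the Brownian part. The finite-variation part coming from the flow of the $z_j$'s and $q$, applied to a differential of the correct weights, is exactly $-\check\LL_{k_{\xi_t}}(\text{correlation})\,\dd t$ — this is the standard computation identifying the Lie-derivative action with the Loewner flow, where the bookkeeping of the Jacobian factors reproduces the conformal dimensions appearing in $\check\LL$. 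The $\xi_t$-dependence contributes, via $\dd\xi_t = \sqrt\kappa\,\dd B_t + \kappa\,(\pa_\xi\log Z_{\bfs\beta_\xi})\,\dd t$, a first-order term $\pa_\xi(\cdots)\,\dd\xi_t$ and a second-order It\^o term $\tfrac12\kappa\,\pa_\xi^2(\cdots)\,\dd t$. Collecting the $\dd t$ terms, the drift of $M_t$ is proportional to
\begin{equation*}
\tfrac12\kappa\,\pa_\xi^2\wh\E_\xi\XX + \kappa\,(\pa_\xi\log Z_\xi)\,\pa_\xi\wh\E_\xi\XX - \check\LL_{k_\xi}\wh\E_\xi\XX,
\end{equation*}
which vanishes identically by Theorem~\ref{main: BPZ-Cardy} once one uses $2a^2 = \kappa$ (equivalently the relation $2a(a+b)=1$ together with~\eqref{eq: ab}). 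Hence $M_t$ is a local martingale. One subtlety is that the object appearing in Theorem~\ref{main: BPZ-Cardy} is $\wh\E_\xi\XX = \E\,\leg_\xi\XX/\E\,\leg_\xi$, i.e. the correlation after inserting the normalized one-leg operator; so I must first check that $M = \E\,\XX_{\bfs\beta}$ with $\XX_{\bfs\beta}\in\FF_{\bfs\beta}$ equals such a $\wh\E_\xi$-correlation of a string in $\FF_{\check{\bfs\beta\,}}$. This is precisely the content of Theorem~\ref{main: change of beta} (the insertion of $\leg_p/\E\,\leg_p$ carries $\FF_{\check{\bfs\beta\,}}$ onto $\FF_{\bfs\beta}$), so the translation is immediate, and one checks the chart conventions match (both sides evaluated in the identity chart of $\H$, resp. $\D$).

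I expect the main obstacle to be the careful It\^o computation identifying the finite-variation part of $\dd M_t$ with $-\check\LL_{k_{\xi_t}}$ acting on the correlation — in particular making sure the Jacobian derivative factors $w_t'(z_j)$, $w_t'(q)$ (and their conjugates in the radial case the analogue with $w''/w'$ and the $1/z$ terms) are differentiated correctly and that the result matches the normalization of the Lie derivative $\check\LL$ used in Theorem~\ref{main: BPZ-Cardy}, including the treatment of the extra marked points in $\supp\,\bfs\beta_\xi\setminus\{\xi\}$ which move along the flow as force points exactly as in $\SLE(\kappa,\bfs\rho)$. A secondary point requiring care is localization: the process must be stopped when any $z_j$ (or $q$) exits $D_t$, and one should note the drift identity is an identity of smooth functions on the appropriate configuration space so that It\^o's formula applies up to the stopping time. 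Finally, for the radial case one replaces $\xi_t$ by $\theta_t$ with $\zeta_t = \ee^{i\theta_t}$, uses $\dd\theta_t = \sqrt\kappa\,\dd B_t + \kappa\,(\pa_\theta\log Z_{\bfs\beta_\zeta})\,\dd t$, and the drift cancellation is the second (disc) identity of Theorem~\ref{main: BPZ-Cardy}, again using $\kappa = 2a^2$; the factor $2z^2$ appearing in Ward's equations is absorbed into the radial Loewner vector field normalization and does not affect the cancellation.
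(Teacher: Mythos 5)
Your strategy is exactly the paper's: identify $M$ with the insertion correlation $\wh\E_\xi\,\XX$ of a string in $\FF_{\check{\bfs\beta\,}}$ via Theorem~\ref{main: change of beta}, write $M_t=(R_{\xi_t}\,\|\,g_t^{-1})$, apply It\^o's formula, convert the Loewner flow of the nodes and marked points into Lie derivatives, and cancel the drift with the BPZ--Cardy equations (Theorem~\ref{main: BPZ-Cardy}). So there is no difference of route; the issue is that the one computation on which everything hinges is mis-normalized as written, and with your stated constants the drift does \emph{not} cancel.

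Concretely: from \eqref{eq: g chordal} the time-dependent flow satisfies $f_{s,t}=\id-2s\,k_{\xi_t}+o(s)$, so the finite-variation contribution of the flow is $-2\big(\check\LL_{k_{\xi_t}}R_{\xi_t}\,\|\,g_t^{-1}\big)\,\dd t$, not $-\check\LL_{k_{\xi_t}}(\cdot)\,\dd t$; the factor $2$ from the chordal Loewner numerator must survive. Moreover \eqref{eq: ab} gives $a^2=2/\kappa$, hence $\tfrac1{2a^2}=\tfrac\kappa4$ — your relation ``$2a^2=\kappa$'' is not the one in the paper, and under the correct relation your displayed drift
\[
\tfrac\kappa2\,\pa_\xi^2\wh\E_\xi\XX+\kappa\,(\pa_\xi\log Z_\xi)\,\pa_\xi\wh\E_\xi\XX-\check\LL_{k_\xi}\wh\E_\xi\XX
\]
equals $+\check\LL_{k_\xi}\wh\E_\xi\XX$, which is not zero in general. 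The correct bookkeeping is: drift $=\tfrac\kappa2\,\pa_\xi^2 m+\kappa\,(\pa_\xi\log Z_\xi)\,\pa_\xi m-2\,\check\LL_{k_\xi}R_\xi$, which vanishes because the $\H$-form of BPZ--Cardy reads $\tfrac\kappa4\big(\pa_\xi^2+2(\pa_\xi\log Z_\xi)\pa_\xi\big)\wh\E_\xi\XX=\check\LL_{k_\xi}\wh\E_\xi\XX$. In the radial case the analogous point is not a factor $2$ but a sign and coefficient: \eqref{eq: g radial} gives flow velocity $v_{\zeta_t}$ itself, so the Lie-derivative term enters the drift with a $+$ sign, and the cancellation uses the disc BPZ--Cardy equation whose prefactor is $-2/a^2=-\kappa$; saying the $2z^2$ ``is absorbed'' glosses over precisely this. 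None of this changes the architecture of your argument — it is the paper's proof — but as stated the central cancellation fails, so you must redo the It\^o/Lie-derivative normalization with the constants above.
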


In the physics literature, particular forms of this version appeared in \cite{BB03, BB04, Cardy04b, Kytola06, RBGW07}.
In \cite{Dubedat09} Dub\'edat coupled $\SLE[\bfs\beta]$ with $\Phi_{\bfs\beta}.$

\subsection{Examples of SLE martingale-observables} 

In Subsection~\ref{ss: EgsMO} we present several examples of radial SLE martingale-observables.

\subsubsec{Radial $\SLE(0)$ observables} The case $\kappa = 0$ reveals some aspects of ``field Markov property." 
Indeed, $\SLE(0)$ curves are hyperbolic geodesics, and martingale-observables (of one variable) are non-random fields $F\equiv F_{D,p,q}$ with the property 
$$F\big|_{D_t} = F_{D_t,\gamma_t,q}.$$
In a sense one can think of them as integrals of the motion $t \mapsto \{D_t,\gamma_t,q\}$ in the corresponding Teichm\"uller space.
The reader is invited to check that 
$$\arg\big((1-w)w^{-3/2}w'\big),\qquad S_w + \frac38 \Big(\frac{w'}w\Big)^2\Big(1-\frac{4w}{(1-w)^2}\Big)$$
are $\SLE(0)$ observables. 
Here 
$S_w = N_w' -\frac12{N_w^2}, \, N_w = (\log w')'$
are Schwarzian and pre-Schwarzian derivatives of $w,$ respectively. 
(For the first $\SLE(0)$ observable, consider the radial version of Schramm-Sheffield martingale-observables, 
$$a\Big(\arg\frac{(1-w)^2}{w} - 2\big(\frac\kappa4-1\big)\arg\frac{w'}w\Big), \qquad (a = \sqrt{2/\kappa}),$$
(see the first example in Subsection~\ref{ss: EgsMO}) and normalize them so that the limit exists as $\kappa\to0.$
For the second $\SLE(0)$ observable, consider the 1-point functions of the Virasoro fields,
$$\frac{c}{12} S_w + h_{1,2} \frac{w'^2}{w(1-w)^2} + h_{0,1/2}\frac{w'^2}{w^2},$$
where the central charge $c$ and the conformal dimensions $h_{1,2}, h_{0,1/2}$ are given by 
$$c = \frac{(3\kappa-8)(6-\kappa)}{2\kappa},\quad h_{1,2} = \frac{6-\kappa}{2\kappa},\quad h_{0,1/2} = \frac{(6-\kappa)(\kappa-2)}{16\kappa}.$$
See \eqref{eq: T hat}.) 
Recently, in \cite{AKM20} multiple $\SLE(0)$ observables are used to show Peltola and Wang's theorem (\cite{PW20}): 
multiple $\SLE(0)$ curves are contained in the real locus of real rational functions with prescribed real critical points.

\subsubsec{Radial $\SLE(\kappa)$ observables} 
For $\kappa >0, $ the usual way to find martingale-observables of a given conformal type is using It\^o's calculus. 
A couple of well-known important examples are referred to below. 

\begin{eg*}[$\kappa = 2$] The scalar (i.e., $[0,0]$-differential)
$$M =\frac{1-|w|^2}{|1-w|^2} = \frac{P_\D(1,w)}{P_\D(1,0)} = \frac{P_D(p,z)}{P_D(p,q)}$$
played an important role in the theory of loop erased random walk (LERW), see \cite{LSW04}. 
Here $P_D$ is the Poisson kernel of a domain $D.$ 
\end{eg*}

\begin{eg*}[$\kappa = 6$] The martingale-observable
$$M_t(z) = \ee^{t/4}\frac{\sqrt[3]{1-w_t(z)}}{\sqrt[6]{w_t(z)}}$$
is a scalar with respect to $z$ and a $[\frac18,\frac18]$-differential with respect to $q.$
Lawler, Schramm, and Werner applied the optional stopping theorem to the martingale $M_t(\ee^{i\theta})$ and estimated the probability that the point $\ee^{i\theta}$ is not swallowed by the $\SLE_6$ hull $K_t$ at time $t$ to be 
$$\P[\, \ee^{i\theta} \notin K_t\,] \asymp \ee^{-2t h_q} \sqrt[3]{\sin \frac\theta2},\qquad h_q = \frac18,\qquad (0\le\theta<2\pi).$$
The exponent $2h_q = 1/4$ is one of many exponents in \cite{LSW01c}.
See the example (derivative exponents on the boundary) in Subsection~\ref{ss: 1pt O} with $\kappa=6$ and $h=0.$ 
\end{eg*}

\subsubsec{Lawler-Schramm-Werner's derivative exponents} 
Examples of 1-point rooted vertex observables include Lawler-Schramm-Werner's derivative exponents (\cite{LSW01c}) of radial SLEs on the boundary: given $h$
$$\E[|w'_t(\ee^{i\theta})|^h \mathbf{1}_{\{\tau_{\ee^{i\theta}} > t\}}] \asymp \ee^{-2h_q t} \Big(\sin^2\frac{\theta}2\Big)^{\frac12a\sigma},$$
where $\sigma$ and $h_q$ are given by 
$$\sigma = \frac a 4\big(\kappa-4 + \sqrt{(\kappa-4)^2+16\kappa h } \big), \qquad h_q =\frac{\sigma^2}{8} + \frac{a\sigma}4.$$ 

\subsubsec{Friedrich-Werner's formula}
In the chordal case with $\kappa = 8/3$ $(c=0),$ it is well known (\cite{FW03}) that the $n$-point function $\wh\E\,[T(x_1)\cdots T(x_n)\,\|\,\id_\H\,]$ of Virasoro field $T \equiv T_{\bfs\beta}, \bfs\beta = 2b\cdot q$ coincides with Friedrich-Werner's function $B(x_1,\cdots,x_n):$
$$B(x_1,\cdots,x_n) = \lim_{\ve\to0}\frac{\P(\SLE(8/3)\textrm{ hits all }[x_j,x_j+i\ve\sqrt2])} {\ve^{2n}}.$$
We derive the radial version of this formula. 
See Theorem~\ref{radial FW}.

\subsubsec{Restriction property} 
We also use the one-leg operator $\leg$ to present a field theoretic proof of the restriction property of radial $\SLE(8/3)$: for all hull $K,$
\begin{equation}\label{eq: avoid_radial}
\P(\SLE(8/3) \textrm{ avoids }K) = |\Psi_K'(1)|^\lambda (\Psi_K'(0))^\mu,\qquad \Big(\lambda=\frac58,\,\mu = \frac5{48}\Big),
\end{equation}
where $\Psi_K$ is the conformal map $(\D\sm K,0)\to (\D,0)$ satisfying $\Psi_K'(0)>0.$
In particular, we explain the restriction exponents $\lambda$ and $\mu$ in terms of conformal dimensions of the one-leg operators and their effective versions: 
$$\lambda = h(\leg) := \frac{a^2}2-ab = \frac{6-\kappa}{2\kappa},\qquad \mu=H_q(\leg^\eff):=\frac{a^2}4-b^2 = \frac{(\kappa-2)(6-\kappa)}{8\kappa},$$
where $H_q = h_q^++h_q^-$ and 
$$\leg^\eff = \PP_q \,\leg=V\big[a\cdot z +\big(b- \frac{a}2\big)\cdot q+\big(b- \frac{a}2\big)\cdot q^*\big], \qquad (\PP_q= C_{(b)}[b\cdot q+b\cdot q^*]).$$
See Subsection~\ref{ss: Example: one-leg operators}.

\subsection{Neumann boundary condition and backward SLEs}
In Section~\ref{sec: Neumann CFT} we sketch an implementation of a version of conformal field theory with Neumann boundary condition. 
The Gaussian free field $N(z,z_0)$ in $D$ with Neumann boundary condition can be constructed from the Gaussian free field $\Psi(z)$ on $S = D^{\mathrm{double}}:$
$$N(z,z_0) = \frac1{\sqrt2} \big(\Psi(z,z_0)+\Psi(z^*,z_0^*)\big).$$ 
For a background charge $\bfs\beta $ on $S$ with the neutrality condition $(\NC_b),$ we introduce the background modification $N_{\bfs\beta}$ of $N.$ 
We present the connection between the OPE family $\FF_{\bfs\beta}^N$ of $N_{\bfs\beta}$ and the backward chordal/radial $\SLE[\bfs\beta].$

Suppose that the parameters $a$ and $b$ are related to the SLE parameter $\kappa$ as 
$$a =\pm\sqrt{2/\kappa}, \qquad b = -a (\kappa/4+1).$$
Let $\bfs\beta$ be a symmetric background charge with the neutrality condition $(\NC_b)$ and a specific charge $a$ at a marked boundary point $p\in\pa D.$
The backward chordal $\SLE[\bfs\beta]$ map $f_t$ from $\H$ satisfies the equation 
\begin{equation} \label{eq: f chordal}
\partial_t f_t(z) = -\frac{2}{f_t(z) -\xi_t} 
\end{equation}
driven by the real process $\xi_t:$ 
$$\dd\xi_t = \sqrt\kappa\, \dd B_t + \lambda(t)\,\dd t, \quad \lambda(t) = (\lambda\,\|\,f_t^{-1}), \quad \lambda = \kappa \,\pa_\xi \log Z_{\bfs\beta_\xi},$$
where the partition function $Z_{\bfs\beta_\xi}$ is given by $Z_{\bfs\beta_\xi} = C_{(-ib)}[-i\bfs\beta_\xi]$ and $\bfs\beta_\xi = \bfs\beta + a\cdot\xi-a\cdot p.$
Its radial counterpart can be defined in a similar way, see Subsection~\ref{ss: Neumann and backward}

We say a non-random conformal field $M$ is a martingale-observable for backward chordal/radial $\SLE[\bfs\beta]$
if for any $z_1,\cdots,z_n\in\H,$ the process
$$M_t(z_1,\cdots,z_n) = (M\,\|\,f_t^{-1})(z_1,\cdots,z_n)$$
is a local martingale on backward chordal/radial SLE probability space. 

\begin{thm} \label{main: Neumann SLE}
The correlations of fields in $\FF_{\bfs\beta}^N$ form a collection of martingale-observables for backward chordal/radial $\SLE[\bfs\beta].$ 
\end{thm}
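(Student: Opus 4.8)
The plan is to mirror, in the Neumann/backward setting, the strategy that yields Theorem~\ref{main} in the Dirichlet/forward setting: reduce the martingale property to an algebraic identity between an It\^o drift and a Lie-derivative action, and then match the two via a Neumann analogue of the Ward and BPZ--Cardy equations. Concretely, fix a string $\XX_{\bfs\beta}$ in $\FF_{\bfs\beta}^N$ and set $M = \E\,\XX_{\bfs\beta}$, viewed as a non-random conformal field pulled back by $f_t^{-1}$. Writing $M_t(z_1,\dots,z_n) = (M\,\|\,f_t^{-1})$ and using that $M$ is a product of differentials with prescribed conformal dimensions at the nodes, at $p\mapsto\xi_t$, and at the other points of $\supp\,\bfs\beta$, one expands $\dd M_t$ by It\^o's formula. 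The backward chordal Loewner flow \eqref{eq: f chordal} contributes, through the $t$-derivatives of $f_t$ and of $\log f_t'$ at the various points, precisely a $\check{\LL}_{k_{\xi_t}}$-type term (the vector field $-k_{\xi_t}$ governing \eqref{eq: f chordal}), while the $\dd\xi_t$ and $\dd\langle\xi\rangle_t$ terms produce a first- and second-order differential operator in $\xi$ acting on $M_t$. The drift vanishes exactly when these match.

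First I would record the Neumann counterparts of the structural results used in the Dirichlet case: the Neumann Ward identities for $\FF_{\bfs\beta}^N$ (the analogue of Theorem~\ref{main: Ward identities in D} and its residue form \eqref{main: eq: local Ward}, now with $\Psi^- = \overline{\Psi^+}$ replaced by the sign convention of $N$ and the background charge playing the role of the puncture operator $\PP_{\bfs\beta}$), and the Neumann Ward's equations expressing $\E\,T_{\bfs\beta}(z)X$ through $\PP_{\bfs\beta}^{-1}\LL_{k_z}^{\pm}\PP_{\bfs\beta}X$. These are obtained by the same Fock-space computation as in Sections on Ward's identities, the only change being the two-point structure of $N$ versus $\Phi$ and the fact that the relevant partition function is $Z_{\bfs\beta_\xi} = C_{(-ib)}[-i\bfs\beta_\xi]$, which is real-valued under the symmetry $\bfs\beta = \overbar{\bfs\beta_*}$. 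From these one extracts, exactly as in the proof of Theorem~\ref{main: BPZ-Cardy}, the Neumann BPZ--Cardy equation: with $2a(a+b)=1$ now realized through $b=-a(\kappa/4+1)$, the one-leg insertion produces a level-two degenerate vector at $p$, and the degeneracy equation becomes
$$\frac1{2a^2}\Big(\pa_\xi^2 + 2(\pa_\xi\log Z_\xi)\pa_\xi\Big)\wh\E_\xi X = \check{\LL}_{k_\xi}\wh\E_\xi X,$$
with the sign in $\check{\LL}_{k_\xi}$ now consistent with the backward flow, and similarly in the $(\D,0)$-uniformization with $v_\zeta$.

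The remaining step is the Girsanov/It\^o bookkeeping. Using the driving equation $\dd\xi_t = \sqrt\kappa\,\dd B_t + \lambda(t)\,\dd t$ with $\lambda = \kappa\,\pa_\xi\log Z_{\bfs\beta_\xi}$, It\^o's formula applied to $M_t$ gives a $\dd B_t$-term plus a drift equal to
$$\Big(\tfrac{\kappa}2\pa_\xi^2 + \kappa(\pa_\xi\log Z_\xi)\pa_\xi + \check{\LL}_{-k_\xi}\Big)\wh\E_\xi X\,\dd t,$$
after normalizing by the one-leg correlation (i.e.\ passing from $\E$ to $\wh\E_\xi$, which absorbs the $Z$-dependence of the measure change). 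Since $\kappa = 2/a^2$, this drift is $\kappa$ times the left side minus the right side of the Neumann BPZ--Cardy equation, hence vanishes; the radial case is identical with $\pa_\theta$, $v_\zeta$, and the factor $-2/a^2$. I expect the main obstacle to be the careful sign and normalization tracking in the backward/Neumann dictionary: the backward Loewner vector field is $-k_{\xi_t}$ rather than $k_{\xi_t}$, the Neumann Green's function and the $C_{(-ib)}[-i\,\cdot\,]$ form shift several factors of $i$, and one must verify that the conformal-dimension exponents appearing in $(M\,\|\,f_t^{-1})$ are exactly those dictated by $\FF_{\bfs\beta}^N$ so that the Lie-derivative term assembles correctly. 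Once the Neumann Ward and BPZ--Cardy equations are in place with the correct signs, the martingale property follows formally as above; establishing those two equations (which I would do by the Fock-space residue calculus of the earlier sections, adapted to $N_{\bfs\beta}$) is where the real work lies, and it is essentially a transcription of the Dirichlet arguments with the involution-symmetric background charge and the $b=-a(\kappa/4+1)$ relation.
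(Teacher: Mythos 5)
Your overall architecture is the same as the paper's: prove Neumann analogues of the Ward identities, the level-two degeneracy of the one-leg operator, and the BPZ--Cardy equations (the paper does exactly this in Proposition~\ref{level2Neumann} and Theorems~\ref{Ward for Neumann}--\ref{BPZ Neumann}), and then run It\^o's formula for $(R_\xi\,\|\,f_t^{-1})$ along the backward Loewner flow and cancel the drift. However, the identities you display are sign-inconsistent at precisely the point where the Neumann/backward theory differs from the Dirichlet/forward one, and with your formulas the drift does not cancel. With $a=\pm\sqrt{2/\kappa}$ and $b=-a(\kappa/4+1)$ one has $2a(a+b)=-1$, not $+1$: consequently the degeneracy equation reads $T_{\check{\bfs\beta\,}}*\leg=-\tfrac1{2a^2}\pa^2\leg$ and the Neumann BPZ--Cardy equation carries the prefactor $-\tfrac1{2a^2}$ in $\H$ (and $+\kappa$, not $-2/a^2$, in the radial $(\D,0)$ form). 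This overall minus sign relative to the Dirichlet case is not a bookkeeping nuisance; it is the mechanism by which the Neumann family couples to the \emph{backward} flow rather than the forward one, and asserting ``$2a(a+b)=1$ realized through $b=-a(\kappa/4+1)$'' misses it.

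The second error is in the flow contribution to the drift. Since $\pa_t f_t=-2/(f_t-\xi_t)=2k_{\xi_t}(f_t)$ with $k_\xi(z)=1/(\xi-z)$, the increment maps are $\id+2s\,k_{\xi_t}+o(s)$, so
$$\frac{\dd}{\dd s}\Big|_{s=0}\big(R_{\xi_t}\,\|\,f_{t+s}^{-1}\big)=+2\big(\check{\LL}_{k_{\xi_t}}R_{\xi_t}\,\|\,f_t^{-1}\big),$$
i.e.\ opposite in sign to the forward case and with the factor $2$ from the Loewner equation, not $\check{\LL}_{-k_\xi}$. The full drift is $\tfrac\kappa2\big(\pa_\xi^2+2(\pa_\xi\log Z_\xi)\pa_\xi\big)\wh\E_\xi X+2\check{\LL}_{k_\xi}\wh\E_\xi X$, which vanishes because $\tfrac\kappa2=\tfrac1{a^2}$ makes it exactly twice the discrepancy in the BPZ--Cardy equation with the $-\tfrac1{2a^2}$ prefactor; with your displayed $+\tfrac1{2a^2}$ equation and $\check{\LL}_{-k_\xi}$ term the claimed cancellation (``$\kappa$ times the left side minus the right side'') does not hold. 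Once these signs and factors are corrected, your outline becomes the paper's proof; as written, the chain of identities that constitutes the actual content of the theorem is broken.
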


In this theory, the central charge $c$ is given by 
$c = 1 + 12b^2 = 13 + 12(\kappa/8+2/\kappa) \ge 25.$

\section{Coulomb gas correlations}
In this section we introduce the Coulomb gas correlations as the (holomorphic) differentials 
with conformal dimensions
$\lambda_j = {\sigma_j^2}/2 - \sigma_j b$
at $z_j$'s (including infinity) and with values 
$$\prod_{\substack{j<k\\z_j,z_k\ne\infty}}(z_j-z_k)^{\sigma_j\sigma_k},\qquad (z_j\in\wh\C)$$
in the identity chart of $\C$ and the chart $z\mapsto -1/z$ at infinity. 
After explaining this definition, we prove that under the neutrality condition $\sum \sigma_j = 2b$ the Coulomb gas correlation functions are conformally invariant with respect to the M\"obius group $\Aut(\wh\C).$

\subsection{Coulomb gas correlations on the Riemann sphere}
Let
$$\bfs\sigma= \sum \sigma_j\cdot z_j,$$ 
where $\{z_j\}_{j=1}^N$ is a finite set of (distinct) points on $\wh\C$ and $\sigma_j$'s are real
numbers, (``charges" at $z_j$'s),
$\sigma_j =\sigma_{z_j} =\bfs\sigma(z_j).$
We can think of $\bfs\sigma$ as a divisor (a function $\bfs\sigma:\wh\C \to \R$ which takes the value $0$ at all points except for finitely many points) or as an atomic measure:
$\bfs\sigma= \sum \sigma_j\cdot \delta_{z_j}.$ 
So
$$\int \bfs\sigma= \sum \sigma_j.$$
(Some of $\sigma_j$'s can be zero, and in any case $\sigma_z = 0$ if $z$ is not one of $z_j$'s.
Sometimes we allow $\sigma_j$'s to be complex. 
We need this case e.g., for the one-leg operators with spin $s\in\C,$ see Subsection~\ref{ss: 1-leg radial}.)
For a divisor $\bfs\sigma$ in $\C,$ i.e., $\sigma_\infty = 0,$ we define the correlation function $C_\C[\bfs\sigma]$ by 
\begin{equation} \label{eq: CG in C}
C_\C[\bfs\sigma]\equiv C_\C^{\bfs\sigma}(\bfs z) =\prod_{j<k}(z_j - z_k)^{\sigma_j\sigma_k}.
\end{equation}
This is a holomorphic function of $\bfs z$ in $\C^N_{\textrm{distinct}} = \{\bfs z = (z_1,\cdots,z_N)\in\C^N\,|\, z_j \ne z_k \textrm{ if } j\ne k\}.$

Typically it is multivalued except in special cases that all $\sigma$'s are integers. 
(If they are all even, then the order in the product \eqref{eq: CG in C} does not matter.)
If all $\sigma_j = 1,$ then the correlation function is the Vandermonde determinant.

In general, we need to interpret formulas in terms of single-valued branches.
(Sometimes, single-valuedness is referred to as ``physically.")
If all $z_j$'s are in the real line $\R,$ then it is physical. 
If $\bfs\sigma$ is symmetric or anti-symmetric with respect to $\R$ in $\C,$ then the correlation function is single-valued. 
See Examples (b) and (c) in Subsection~\ref{ss: C}.

We can extend this definition to divisors on $\wh\C$ by simply ignoring the
charge at infinity: if $\bfs\sigma:\wh\C\to\R,$ then
$C_{\wh\C}[\bfs\sigma]:=C_\C[\bfs\sigma|_\C].$

\subsection{Conformal weights and neutrality condition} 
Fix a (real) parameter $b.$ 
We say that a divisor $\bfs\sigma:\wh\C\to\R$ satisfies the neutrality condition ($\mathrm{NC}_b$) if
\begin{equation} 
\int \bfs\sigma= 2b.
\end{equation}
Note that there is a 1-to-1 correspondence between divisors on $\wh\C$ satisfying ($\mathrm{NC}_b$) and arbitrary divisors in $\C:$
$\bfs\sigma \mapsto \bfs\sigma|_\C.$
Let 
$$\lambda_b(\sigma) = \frac{\sigma^2}2 - \sigma b \quad (\sigma\in\R).$$ 
Using this function $\lambda:\R\to\R,$ we define the ``weights" or ``dimensions" $\lambda_j$ at $z_j$ by 
$$\lambda_j = \lambda_b(\sigma_j) \equiv \frac{\sigma_j^2}2 - \sigma_j b.$$
It is obvious that $\lambda_b(\sigma) = \lambda_b(2b-\sigma).$

\subsection{Coulomb gas correlation functions as differentials} \label{ss: C as differentials}
Let us recall the definitions of conformal fields and certain transformation laws such as differentials. 
A local coordinate chart on a Riemann surface $M$ is a conformal map $\phi:U\to\phi(U) \subset \C$ on an open subset $U$ of $M.$
By definition, a non-random \emph{conformal} field $f$ is an assignment of a (smooth) function
$(f\,\|\,\phi): ~\phi(U)\to\C$
to each local chart $\phi:U\to\phi(U).$
A non-random conformal field $f$ is a \emph{differential} of weights or (conformal) dimensions $[\lambda,\lambda_*]$ if for any two overlapping charts $\phi$ and $\wt\phi,$ we have
$$f = (h')^\lambda(\overline{h'})^{\lambda_*}\wt{f}\circ h,$$
where $h=\wt\phi\circ\phi^{-1}:~ \phi(U\cap\wt U)\to \wt\phi(U\cap\wt U)$ is the transition map, and $f$ (resp. $\wt f$) is the notation for $(f\,\|\,\phi),$ (resp. $(f\,\|\,\wt\phi)$). 
\emph{Pre-pre-Schwarzian forms} of order $(\mu,\nu)$ (or $\PPS(\mu,\nu)$ forms), \emph{pre-Schwarzian forms} of order $\mu$ (or $\PS(\mu)$ forms), and \emph{Schwarzian forms} of order $\mu$ are fields with transformation laws 
$$f=\wt{f}\circ h +\mu \log h' + \nu\,\log\overline{h'} ,\qquad f=h'\wt{f}\circ h +\mu N_h, \qquad f=(h')^2\wt{f}\circ h + \mu S_h,$$
respectively, where 
$$N_h =(\log h')',\qquad S_h = N_h' -\frac12{N_h^2}$$
are pre-Schwarzian and Schwarzian derivatives of $h.$ 
The transformation laws can be extended to the random field: e.g., a field $X$ is called a $[\lambda,\lambda_*]$-differential if the non-random field $z\mapsto \E[X(z)\YY]$ is a $[\lambda,\lambda_*]$-differential in $z$ for each $\YY.$

Let $S$ be a compact Riemann surface of genus zero, and let
$\bfs\sigma : S\to\R$
be a divisor satisfying ($\mathrm{NC}_b$). 
We define the Coulomb gas correlation function
$$C[\bfs\sigma] (\equiv C_S[\bfs\sigma]) \equiv C^{\bfs\sigma}(\bfs z)$$
as a (multivalued) holomorphic differential in $z_j$'s such that

\renewcommand{\theenumi}{\alph{enumi}}
\begin{enumerate}
\ss\item the conformal dimensions at $z_j$'s are the numbers $\lambda_j$'s;
\ss\item \label{item: C} if $S = \wh\C,$ then $C_\C^{\bfs\sigma}(\bfs z)$ is the value of $C^{\bfs\sigma}(z)$ in the identity charts at finite $z_j$'s and the chart $z\mapsto -1/z$ at infinity (in the case $\sigma_\infty\ne0)$.
\end{enumerate}

Alternatively, we can restate \eqref{item: C} as follows: if $\phi: S \to \wh\C$ is a uniformizing conformal map, then
$$\{C[\bfs\sigma]\,\|\,\phi\} = C_{\C}[\bfs\sigma\circ\phi^{-1}].$$
(By convention, $\|\phi$ refers to the chart $-1/\phi$ at the pre-image $\phi^{-1}(\infty)$ of infinity.)

\begin{egs*} (a) If $\bfs \sigma = 2b\cdot z,$ then
$C[\bfs\sigma] $ is a (scalar) function of $z$ namely, $C[\bfs\sigma] \equiv 1.$
 
\ss \no (b) If $\bfs\sigma = \sigma\cdot z_1 + (2b-\sigma) z_2,$ then
$$C[\bfs\sigma] = (z_1-z_2)^{-2\lambda}, \qquad \lambda = \lambda_b(\sigma) = \frac12\sigma(\sigma-2b)$$
for finite $z_1,z_2$ and $C[\bfs\sigma] \equiv 1$ for $z_2 = \infty.$
We note that $C[\bfs\sigma]$ is a $\lambda$-differential in $z_1,z_2,$ and a more traditional notation would be 
$$\bigg(\frac{dz_1\,dz_2}{(z_1-z_2)^2}\!\bigg)^\lambda.$$
M\"obius invariance of the bi-differential 
$$\frac{dz_1\,dz_2}{(z_1-z_2)^2}$$
is of course well known, and could be used as an alternate way to derive our theorem.
\end{egs*}

\subsection{M\"obius invariance} 
To justify the above definition of Coulomb gas correlation function, we need to verify that the differentials $C[\bfs\sigma]$ are M\"obius invariant on $\wh\C.$ 
Since translation invariance of $C[\bfs\sigma]$ is obvious, we need to verify the invariance of $C[\bfs\sigma]$ under the dilation-rotations $\tau(z) = az$ and the inversion $\tau(z)=-1/z.$ 
First we assume $\sigma_\infty = 0$ and denote
$$C(\bfs z) = \prod_{j<k}(z_j - z_k)^{\sigma_j\sigma_k}, \quad (\bfs z = (z_1,\cdots,z_N))$$
(so that $C = C_\C[\bfs\sigma]$).
For a M\"obius map $\tau,$ we also denote $\tau \bfs z = (\tau z_1,\cdots,\tau z_N).$ 

\begin{lem}
If $\bfs\sigma:\C\to\R$ satisfies the neutrality condition $(\mathrm{NC}_b)$, and if $\tau$ is a M\"obius map such that $\tau(z_j)\ne\infty,$ then
\begin{equation} \label{eq: Moebius}
C(\bfs z) = C(\tau \bfs z) \prod_j \big(\tau'(z_j)\big)^{\lambda_j}.
\end{equation}
\end{lem}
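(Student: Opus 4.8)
The plan is to verify \eqref{eq: Moebius} separately for the two generating non-translation M\"obius maps, the dilation-rotations $\tau(z)=az$ and the inversion $\tau(z)=-1/z$, since together with translations these generate $\Aut(\wh\C)$ and the cocycle identity implied by \eqref{eq: Moebius} (both sides transform consistently under composition) lets us bootstrap to an arbitrary $\tau$ with $\tau(z_j)\ne\infty$. For $\tau(z)=az$ the computation is immediate: $\tau z_j - \tau z_k = a(z_j-z_k)$, so $C(\tau\bfs z) = a^{\sum_{j<k}\sigma_j\sigma_k}\,C(\bfs z)$, while $\prod_j(\tau'(z_j))^{\lambda_j} = a^{\sum_j \lambda_j}$. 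The exponents match because $\sum_j \lambda_j = \sum_j\big(\tfrac12\sigma_j^2 - \sigma_j b\big) = \tfrac12\big(\sum_j\sigma_j\big)^2 - \sum_{j<k}\sigma_j\sigma_k - b\sum_j\sigma_j$, and $(\mathrm{NC}_b)$ gives $\sum_j\sigma_j = 2b$, so $\tfrac12(2b)^2 - b(2b) = 0$ and $\sum_j\lambda_j = -\sum_{j<k}\sigma_j\sigma_k$. Here is exactly where the neutrality condition enters; without it the two exponents of $a$ would differ.

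For the inversion $\tau(z)=-1/z$ I would write $\tau z_j - \tau z_k = \dfrac{z_j - z_k}{z_j z_k}$, so that
\begin{equation*}
C(\tau\bfs z) = \prod_{j<k}(z_j-z_k)^{\sigma_j\sigma_k}\prod_{j<k}(z_j z_k)^{-\sigma_j\sigma_k} = C(\bfs z)\prod_j z_j^{-\sigma_j(\,(\sum_k \sigma_k) - \sigma_j\,)},
\end{equation*}
since each $z_j$ appears in the double product paired with every $z_k$, $k\ne j$, contributing total exponent $-\sigma_j\sum_{k\ne j}\sigma_k = -\sigma_j\big((\sum_k\sigma_k)-\sigma_j\big)$. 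Using $(\mathrm{NC}_b)$, $\sum_k\sigma_k = 2b$, this exponent is $-\sigma_j(2b-\sigma_j) = \sigma_j^2 - 2\sigma_j b = 2\lambda_j$, so $C(\tau\bfs z) = C(\bfs z)\prod_j z_j^{-2\lambda_j}$. On the other hand $\tau'(z) = 1/z^2$, so $\prod_j(\tau'(z_j))^{\lambda_j} = \prod_j z_j^{-2\lambda_j}$, and \eqref{eq: Moebius} follows. One should be slightly careful about branches of the multivalued powers: since we are proving an identity of holomorphic functions on the connected set $\C^N_{\mathrm{distinct}}$ (intersected with the locus where $\tau z_j$ are finite), it suffices to check the identity of the relevant analytic continuations, or equivalently to fix compatible branches and note both sides satisfy the same first-order logarithmic-derivative ODE in each $z_j$ plus agree at one point.

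Finally, to reduce a general $\tau$ to these cases, I would factor $\tau = \tau_1\circ\tau_2\circ\cdots$ into translations, dilation-rotations, and inversions, and observe that the relation ``$C(\bfs z) = C(\tau\bfs z)\prod_j(\tau'(z_j))^{\lambda_j}$'' is closed under composition: applying it to $\tau_2$ and then to $\tau_1$ at the points $\tau_2\bfs z$, and using the chain rule $(\tau_1\circ\tau_2)'(z_j) = \tau_1'(\tau_2 z_j)\,\tau_2'(z_j)$ together with $\lambda_j$ being a fixed exponent, multiplies the two factors correctly — provided no intermediate point lands at $\infty$, which can be arranged by perturbing the decomposition or by analytic continuation. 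The translation case is the hypothesis ``translation invariance of $C[\bfs\sigma]$ is obvious'' already granted in the text. The main obstacle is bookkeeping the multivaluedness consistently across the composition (choosing branches so the cocycle identity holds on the nose rather than up to a root of unity); this is handled by working with analytic continuation along paths in $\C^N_{\mathrm{distinct}}$ rather than with pointwise-defined branches, which is also why the statement is phrased for $\tau$ avoiding $\infty$ at the $z_j$. The extension to divisors with $\sigma_\infty\ne 0$ and to the general genus-zero surface $S$ then follows from the definition of $C[\bfs\sigma]$ via the uniformizing chart together with the convention about the chart $-1/\phi$ at $\phi^{-1}(\infty)$, exactly matching the $\lambda_j$ transformation law for the dimension at infinity.
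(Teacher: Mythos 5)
Your proposal is correct and follows essentially the same route as the paper: verify \eqref{eq: Moebius} for dilation-rotations and for the inversion, using the neutrality condition $(\mathrm{NC}_b)$ to match the exponents, with translations and the reduction of a general M\"obius map to these generators treated as immediate (the paper leaves the composition/cocycle step and the branch bookkeeping implicit, which you spell out). One transcription slip: in the inversion step your own exponent computation gives $C(\tau\bfs z)=C(\bfs z)\prod_j z_j^{2\lambda_j}$ (exponent $+2\lambda_j$, not $-2\lambda_j$), and it is this factor that cancels against $\prod_j\big(\tau'(z_j)\big)^{\lambda_j}=\prod_j z_j^{-2\lambda_j}$ to yield \eqref{eq: Moebius}.
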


\begin{proof}
For a dilation-rotation $\tau(z) = az\,(a\in\C\sm\{0\}),$ we have
$$C(\tau \bfs z) = C(\bfs z) a^{\sum_{j<k} \sigma_j\sigma_k} \textrm{ and } 
\prod_j \big(\tau'(z_j)\big)^{\lambda_j} = a^{\sum_j \lambda_j},$$
so conformal invariance of $C[\bfs\sigma]$ under the dilation-rotations means that 
$$\sum_{j<k}\sigma_j\sigma_k +\frac12 \sum_j \sigma_j^2 = \sum_j \sigma_j b.$$
This identity holds by the neutrality condition $(\mathrm{NC}_b)$ -- both sides are equal to $\frac12(\sum_j\sigma_j)^2.$

For the inversion $\tau(z) = -1/z,$ we have
$$C(\tau \bfs z) = C(\bfs z)\prod_{j<k} (z_jz_k)^{-\sigma_j\sigma_k} \textrm{ and } 
\prod_j \big(\tau'(z_j)\big)^{\lambda_j} = \prod_j z_j^{2\sigma_jb - \sigma_j^2}.$$
Thus \eqref{eq: Moebius} reduces to
$$\prod_{j<k}(z_jz_k)^{\sigma_j\sigma_k}\prod_j z_j^{\sigma_j^2}=\prod_j z_j^{2\sigma_jb}.$$
This holds by $(\mathrm{NC}_b)$ because the exponents of $z_j$ on both sides coincide:
$$\sigma_j^2 + \sum_{k\ne j} \sigma_j\sigma_k = \sigma_j^2 + \sigma_j(2b-\sigma_j) = 2\sigma_jb.$$
\end{proof}

\begin{thm} \label{thm: conformal invariance}
Under the neutrality condition $(\mathrm{NC}_b)$, the differentials $C[\bfs\sigma]$ are M\"obius invariant on $\wh\C.$ 
\end{thm}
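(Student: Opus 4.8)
\section*{Proof proposal}

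The plan is to extend the preceding Lemma -- which establishes \eqref{eq: Moebius} for divisors supported in $\C$ whose marked points all remain finite -- to arbitrary divisors on $\wh\C$ and arbitrary maps in $\Aut(\wh\C)$. Since $\Aut(\wh\C)$ is generated by translations, the dilation-rotations $\tau(z)=az$, and the inversion $\tau(z)=-1/z$, and since the automorphy factors $\prod_j(\tau'(z_j))^{\lambda_j}$ multiply correctly under composition (the chain rule, which is exactly what makes ``being a differential'' a coordinate-free notion), it suffices to check the transformation law for each of these generators while allowing a charge to sit at, or be moved to, the point at infinity. Translation invariance is immediate from \eqref{eq: CG in C}, and a translation fixes $\infty$ with derivative $1$ in the chart $z\mapsto-1/z$, so nothing new occurs there. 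Thus the only cases with genuine content are the dilation-rotation and the inversion acting on a divisor $\bfs\sigma=\bfs\sigma_0+\sigma_\infty\cdot\infty$ with $\bfs\sigma_0$ supported on finite points and possibly $\sigma_\infty\ne0$.

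For $\tau(z)=az$ I would compute the transition map in the chart $\iota\colon z\mapsto-1/z$ near $\infty$; it is $w\mapsto w/a$, so the local derivative there is $1/a$, contributing the factor $a^{-\lambda_\infty}$ to $\prod_j(\tau'(z_j))^{\lambda_j}$. Since the Coulomb gas value ignores the charge at $\infty$, matching powers of $a$ in \eqref{eq: Moebius} reduces, just as in the Lemma, to the scalar identity
$$\sum_{\substack{j<k\\ z_j,z_k\ \mathrm{finite}}}\!\!\sigma_j\sigma_k\;+\!\!\sum_{z_j\ \mathrm{finite}}\!\!\lambda_j\;=\;\lambda_\infty ,$$
whose left side, by $(\mathrm{NC}_b)$, equals $\tfrac12(2b-\sigma_\infty)^2-b(2b-\sigma_\infty)=\lambda_b(2b-\sigma_\infty)=\lambda_b(\sigma_\infty)$, using $\lambda_b(\sigma)=\lambda_b(2b-\sigma)$. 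For the inversion $\tau(z)=-1/z$, which swaps $0$ and $\infty$, the key observation is that at each of $0$ and $\infty$ the relevant transition map (identity chart at the finite end, $\iota$ at the infinite end) is simply $z\mapsto z$, so those two marked points contribute trivial automorphy factors; writing out \eqref{eq: Moebius} with the remaining finite marked points $z_j\mapsto-1/z_j$ and cancelling the common factor $\prod_{j<k}(z_j-z_k)^{\sigma_j\sigma_k}$ leaves an identity of exponents of each $z_j$ which -- after inserting $\lambda_j=\sigma_j^2/2-\sigma_j b$ and $\sum_k\sigma_k=2b$ -- collapses to the same value on both sides, exactly the arithmetic of the last display in the Lemma's proof. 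Assembling the three generators gives \eqref{eq: Moebius} for every M\"obius map, so $C[\bfs\sigma]$ is a well-defined differential on $\wh\C$, which also makes the definition of $C_S[\bfs\sigma]$ independent of the uniformizing map $\phi\colon S\to\wh\C$. (Alternatively one bypasses the case-checking: under $(\mathrm{NC}_b)$ one has $C[\bfs\sigma]=\prod_{j<k}\big(\tfrac{dz_j\,dz_k}{(z_j-z_k)^2}\big)^{-\sigma_j\sigma_k/2}$ as differentials of the correct weights, and the invariance is inherited from the classical M\"obius invariance of the bi-differential $dz_1\,dz_2/(z_1-z_2)^2$ noted after Example~(b).)

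The step I expect to be the real obstacle is not any individual computation but the chart bookkeeping at infinity: one must apply the differential transformation law in the chart $z\mapsto-1/z$ at every marked point lying over $\infty$, in both the source and the target configuration, and confirm that the dimension attached to such a point is $\lambda_b(\sigma_\infty)$ under the bijection between divisors on $\wh\C$ satisfying $(\mathrm{NC}_b)$ and divisors in $\C$. Once those conventions are pinned down, every verification is a short manipulation that reduces to the quadratic identity $\tfrac12\big(\sum_j\sigma_j\big)^2=\sum_{j<k}\sigma_j\sigma_k+\tfrac12\sum_j\sigma_j^2$ together with $\sum_j\sigma_j=2b$ -- the arithmetic core already used for the Lemma -- the only residual care being the tracking of multivalued branches, handled as elsewhere in the paper by restricting to single-valued branches.
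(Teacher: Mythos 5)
Your proposal is correct and follows essentially the paper's own route: the paper's proof likewise reduces to the generators and does exactly your chart bookkeeping at infinity, namely the transition map $w\mapsto w/a$ for dilation-rotations with the identity $\sum_{j<k}\sigma_j\sigma_k+\sum_j\lambda_j=\lambda_\infty=\lambda_b(2b-\sigma_\infty)$, and the trivial transition at the points exchanged by the inversion followed by the per-$z_j$ exponent identity coming from $(\NC_b)$. Your parenthetical shortcut via powers of the bi-differential $dz_1\,dz_2/(z_1-z_2)^2$ is precisely the ``alternate way'' the paper itself flags after Example (b).
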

\begin{proof}
Due to the previous lemma, it remains to check the M\"obius invariance of $C[\bfs\sigma]$ on the Riemann sphere in the case that a charge $\sigma_\infty$ at infinity is non-zero. 
Let $$\bfs \sigma = \sigma_\infty\cdot\infty + \sum \sigma_j\cdot z_j.$$
We may assume that $\bfs\sigma(0) = 0$ by translation. 
For the inversion $\tau(z) = -1/z,$ denote $\wt{\bfs\sigma}:=\bfs\sigma\circ\tau^{-1}$ and $\wt C(\bfs z):=C_\C[\wt{\bfs\sigma}].$
Then 
$\wt{\bfs\sigma} = \sigma_\infty\cdot 0 +\sum \sigma_j\cdot(-1/z_j)$
and
\begin{equation} \label{eq: Moebius1}
\wt C(\bfs z) = C(\bfs z) \prod_{j<k} (z_jz_k)^{-\sigma_j\sigma_k} \prod_j z_j^{-\sigma_\infty\sigma_j}.
\end{equation}
It follows from 
$C(\bfs z) = \{C[\bfs\sigma]\,\|\,\id\}$ and $\wt C(\bfs z) = \{C[\bfs\sigma]\,\|\,\tau\}$
that the conformal invariance of $C[\bfs\sigma]$ under $\tau$ means 
\begin{equation} \label{eq: Moebius2}
C(\bfs z) =\wt C(\bfs z) \big(h'(0)\big)^{\lambda_\infty} \prod_j \big(\tau'(z_j)\big)^{\lambda_j} = \wt C(\bfs z) \prod_j z_j^{-2\lambda_j},
\end{equation}
where $h = \id$ is the transition map between the two charts at infinity.
The equations \eqref{eq: Moebius1}~--~\eqref{eq: Moebius2} reduce to the identity
$$\prod_{j<k}(z_j z_k)^{\sigma_j\sigma_k}\prod_j z_j^{\sigma_\infty\sigma_j+2\lambda_j}=1.$$
This holds by the neutrality condition $(\mathrm{NC}_b)$ because the exponent of $z_j$ on the left-hand side is
$$\sigma_j^2-2\sigma_j b + \sigma_j \Big(\sigma_\infty + \sum_{k\ne j}\sigma_k\Big)= 0.$$
Finally, for a dilation-rotation $\tau(z) = az\,(a\ne 0),$ we have 
$\wt{\bfs\sigma}:=\bfs\sigma\circ\tau^{-1}= \sigma_\infty\cdot\infty + \sum \sigma_j\cdot (az_j)$
and 
$$\wt C(\bfs z) = C(\bfs z) a^{\sum_{j<k}\sigma_j\sigma_k}.$$
We need to check that 
$$C(\bfs z) =\wt C(\bfs z) \big(h'(0)\big)^{\lambda_\infty} \prod_j \big(\tau'(z_j)\big)^{\lambda_j} = \wt C(z) a^{-\lambda_\infty+\sum_j\lambda_j},$$
where $h(w) = w/a$ is the transition map between the two charts at infinity. 
(Indeed, the first chart is $z\mapsto-1/z$ and the second one is $z\mapsto-1/\tau(z).$)
Thus the condition for conformal invariance is
$$\sum_j\lambda_j + \sum_{j<k}\sigma_j\sigma_k=\lambda_\infty.$$
The left-hand side simplifies to $\lambda_b(\sum_j\sigma_j)$. 
It follows from the neutrality condition ($\mathrm{NC}_b$) that
$$\lambda_b(\sum_j\sigma_j)=\lambda_b(2b-\sigma_\infty) = \lambda_b(\sigma_\infty) = \lambda_\infty.$$
\end{proof}

\subsection{Schottky double construction} \label{ss: C}
In this subsection we introduce Coulomb gas correlations for simply-connected domain $\subsetneq \C.$
They are constructed from those of the Schottky double.
We compute them in the $\H$- and $\D$-uniformizations. 

Suppose $D$ is a simply-connected domain ($D\subsetneq \C$). Let $\pa D$ be its Carath\'eodory ``boundary" (prime ends).
Consider the Schottky double
$S = D^{\mathrm{double}},$
which equips with the canonical involution 
$\iota\equiv\iota_D:S\to S,\,z\mapsto z^*.$
For example, we identify $\wh\C$ with the Schottky double of $\H$ or that of $\D.$
Then the corresponding involution $\iota$ is
$\iota: z\mapsto z^* = \bar z$ for $D=\H$ and $\iota: z\mapsto z^* = 1/\bar z$ for $D=\D.$

For two divisors $\bfs\sigma^+ = \sum \sigma_j^+\cdot z_j,$ and $\bfs\sigma^- = \sum \sigma_j^-\cdot z_j$ $ (\sigma_j^+, \sigma_j^-\in\C)$ in $D \cup \pa D,$ we define 
$$\bfs\sigma = \bfs\sigma^+ + \bfs\sigma_*^-,\qquad \bfs\sigma_*^- := \sum \sigma_j^-\cdot z_j^*.$$
Then $\bfs\sigma$ is a divisor in $S.$ 
(We may assume that $\bfs\sigma^-$ is a divisor in $D$, i.e., $\sigma_j^- = 0$ if $z_j\in\pa D.$) 
We call $(\bfs\sigma^+,\bfs\sigma^-)$ a \emph{double divisor} in $D \cup \pa D.$
By definition, 
$$C_D^{(\bfs\sigma^+,\bfs\sigma^-)}(\bfs z) (\equiv C_D[\bfs\sigma^+,\bfs\sigma^-]) := C_S[\bfs\sigma].$$
(We often omit the subscripts $D,S$ when there is no danger of confusion.) 
More precisely, the above definition means 
\begin{equation} \label{eq: def C on D}
\{C[\bfs\sigma^+,\bfs\sigma^-]\,\|\,\phi\} = \{C[\bfs\sigma]\,\|\,(\phi,i\circ\phi\circ\iota)\},
\end{equation}
where $i$ is the complex conjugation.

Let $\lambda_j^\pm = \lambda_b(\sigma_j^\pm).$ 
The following theorem is immediate from Theorem~\ref{thm: conformal invariance}.
\begin{thm} \label{C under NCb}
If $\bfs\sigma^+ + \bfs\sigma^-_*$ satisfies the neutrality condition $(\mathrm{NC}_b),$ then $C[\bfs\sigma^+,\bfs\sigma^-_*]$ is a well-defined differential with conformal dimensions $[\lambda_j^+,\lambda_j^-]$ at $z_j.$ 
\end{thm}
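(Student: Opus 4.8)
The plan is to deduce Theorem~\ref{C under NCb} directly from Theorem~\ref{thm: conformal invariance} together with the definition \eqref{eq: def C on D} of the Coulomb gas correlation on a simply-connected domain $D$ via its Schottky double $S = D^{\mathrm{double}}.$ The key observation is that the hypothesis ``$\bfs\sigma^+ + \bfs\sigma^-_*$ satisfies $(\mathrm{NC}_b)$'' is exactly the neutrality condition $(\mathrm{NC}_b)$ for the divisor $\bfs\sigma = \bfs\sigma^+ + \bfs\sigma^-_*$ on the genus-zero surface $S,$ so by Theorem~\ref{thm: conformal invariance} the object $C_S[\bfs\sigma]$ is a well-defined M\"obius-invariant holomorphic differential on $\wh\C$ with conformal dimensions $\lambda_b(\sigma_j^+)$ at $z_j$ and $\lambda_b(\sigma_j^-)$ at $z_j^*.$ One then has to transfer this statement through the involution $\iota$ to obtain a $[\lambda_j^+,\lambda_j^-]$-differential on $D$ itself.

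First I would recall that a chart $\phi: U \to \phi(U)$ on $D$ induces, via \eqref{eq: def C on D}, the pair of charts $(\phi, i\circ\phi\circ\iota)$ covering $U \cup \iota(U)$ on $S$; here $i$ denotes complex conjugation, which is needed because $\iota$ is anticonformal but $i \circ \phi \circ \iota$ is holomorphic. I would check that this is a legitimate holomorphic atlas on $S$ and that \eqref{eq: def C on D} therefore assigns to $C[\bfs\sigma^+,\bfs\sigma^-]$ a well-defined value in each chart on $D.$ Next, for two overlapping charts $\phi, \wt\phi$ on $D$ with transition map $h = \wt\phi\circ\phi^{-1},$ I would compute the corresponding transition map on $S$ between $(\phi, i\phi\iota)$ and $(\wt\phi, i\wt\phi\iota)$: on the ``first copy'' it is just $h,$ while on the ``second copy'' it is $i \circ h \circ i = \bar h(\bar{\,\cdot\,}),$ whose complex derivative is $\overline{h'}.$ Applying the known differential transformation law for $C_S[\bfs\sigma]$ from Theorem~\ref{thm: conformal invariance} (dimension $\lambda_b(\sigma_j^+)$ at $z_j \in D$, dimension $\lambda_b(\sigma_j^-)$ at $z_j^* \in \iota(D)$) then yields precisely
$$
\{C[\bfs\sigma^+,\bfs\sigma^-]\,\|\,\phi\} = (h')^{\lambda_j^+}(\overline{h'})^{\lambda_j^-}\cdots\ \{C[\bfs\sigma^+,\bfs\sigma^-]\,\|\,\wt\phi\}\circ h
$$
at each node, which is the defining transformation law of a $[\lambda_j^+,\lambda_j^-]$-differential at $z_j.$ I should also note that $\iota$ being an involution means the second copy of the atlas is consistent with the first, so no contradiction arises at $\pa D$ (where $z_j = z_j^*$), the relevant dimension there being $\lambda_b(\sigma_j^+)+\lambda_b(\sigma_j^-)$ when $\sigma_j^- \ne 0$, consistent with boundary fields; in the stated setup one takes $\sigma_j^- = 0$ for $z_j\in\pa D$ so the boundary dimension is simply $\lambda_j^+.$

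The main obstacle, and the only place requiring genuine care, is the bookkeeping of the anticonformal involution: tracking that $i\circ\phi\circ\iota$ is holomorphic, that the induced transition map on the doubled copy has derivative $\overline{h'}$ rather than $h',$ and hence that the $\lambda^-$-weights attach to the \emph{anti}holomorphic factor $\overline{h'}.$ Once this is set up correctly, the conclusion is essentially a formal consequence of Theorem~\ref{thm: conformal invariance}, since ``well-defined differential on $\wh\C$ with prescribed dimensions'' is literally what that theorem provides for the genus-zero surface $S,$ and \eqref{eq: def C on D} is just a change of variables. I would therefore present the argument compactly: verify the atlas, compute the two transition derivatives $h'$ and $\overline{h'},$ invoke Theorem~\ref{thm: conformal invariance}, and read off the $[\lambda_j^+,\lambda_j^-]$-differential property, remarking that multivaluedness of $C_S[\bfs\sigma]$ (hence of $C_D[\bfs\sigma^+,\bfs\sigma^-]$) is inherited from the $\wh\C$ case and is immaterial to the transformation law.
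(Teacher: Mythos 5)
Your proposal is correct and follows the same route as the paper: the paper simply declares Theorem~\ref{C under NCb} immediate from Theorem~\ref{thm: conformal invariance} via the definition \eqref{eq: def C on D} on the Schottky double, and your argument just spells out the chart bookkeeping (that $i\circ\phi\circ\iota$ is holomorphic, that the transition on the reflected copy has derivative $\overline{h'}$, hence the $\lambda_j^-$-weight becomes the antiholomorphic dimension at $z_j$) which the paper leaves implicit.
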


\subsubsection*{$\H$-uniformization}
Let us consider $D=\H.$ Then its prime end is $\pa D = \wh\R.$
We may assume that $\sigma_j^- = 0$ if $z_j\in\wh\R.$
We now define 
$$C_\H[\bfs\sigma^+,\bfs\sigma^-]=\prod_{j<k}(z_j-z_k)^{\sigma_j^+\sigma_k^+}(\bar z_j-\bar z_k)^{\sigma_j^-\sigma_k^-}\prod_{j,k} (z_j-\bar z_k)^{\sigma_j^+\sigma_k^-},$$
where the product is taken over finite $z_j$'s and $z_k$'s, and as always we use the convention $0^0 := 1.$

\begin{thm} \label{thm: C in H} 
Under the neutrality condition $(\mathrm{NC}_b),$
$C_\H[\bfs\sigma^+,\bfs\sigma^-]$ is the value of the differential $C[\bfs\sigma^+,\bfs\sigma^-]$ in the identity chart of $\H$ (and the chart $z\mapsto -1/z$ at infinity). 
\end{thm}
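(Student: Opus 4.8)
The plan is to reduce the statement to the already-established conformal invariance on $\wh\C$ (Theorem~\ref{thm: conformal invariance}) via the definition \eqref{eq: def C on D} of $C_D[\bfs\sigma^+,\bfs\sigma^-]$ through the Schottky double. Recall that for $D = \H$ the canonical involution is $\iota:z\mapsto\bar z$, and a uniformizing conformal map $\phi:S\to\wh\C$ for $S = \H^{\mathrm{double}} \cong \wh\C$ can be taken to be the identity on the copy of $\H$, with the second sheet glued in by $\iota$. So the pair of charts appearing in \eqref{eq: def C on D} is $(\id_\H, i\circ\id\circ\iota) = (\id_\H,\ \overline{(\cdot)})$, i.e. a point $z_j$ on the ``$+$'' sheet gets coordinate $z_j$ and the same point viewed on the ``$-$'' sheet gets coordinate $\bar z_j$. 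With this identification, the divisor $\bfs\sigma = \bfs\sigma^+ + \bfs\sigma^-_*$ on $S$ pushes forward to the divisor on $\wh\C$ with charges $\sigma_j^+$ at $z_j$ and $\sigma_j^-$ at $\bar z_j$.

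First I would write out $C_\C$ of this pushed-forward divisor directly from \eqref{eq: CG in C}: the product $\prod_{a<b}(p_a - p_b)^{\rho_a\rho_b}$ over all points $p_a \in \{z_j\}\cup\{\bar z_k\}$ with their charges splits into three groups of factors — the pairs $(z_j,z_k)$ with $j<k$, contributing $(z_j-z_k)^{\sigma_j^+\sigma_k^+}$; the pairs $(\bar z_j,\bar z_k)$ with $j<k$, contributing $(\bar z_j - \bar z_k)^{\sigma_j^-\sigma_k^-}$; and the mixed pairs $(z_j,\bar z_k)$, contributing $(z_j - \bar z_k)^{\sigma_j^+\sigma_k^-}$ — which is exactly the formula defining $C_\H[\bfs\sigma^+,\bfs\sigma^-]$. (Here one must be slightly careful about ordering conventions in the product, but since the exponents are symmetric in the roles of the points the three-group decomposition is unambiguous up to the usual branch issues, which the standing single-valuedness convention handles.) Combined with the definition $C_S[\bfs\sigma] := \{C[\bfs\sigma^+,\bfs\sigma^-]\,\|\,\phi\}$ from \eqref{eq: def C on D} and the restatement of item \eqref{item: C} ($\{C[\bfs\sigma]\,\|\,\phi\} = C_\C[\bfs\sigma\circ\phi^{-1}]$ with the $-1/\phi$ convention at $\phi^{-1}(\infty)$), this identifies $C_\H[\bfs\sigma^+,\bfs\sigma^-]$ with the value of the differential $C[\bfs\sigma^+,\bfs\sigma^-]$ in the identity chart of $\H$, with the chart $z\mapsto -1/z$ at $\infty$. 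The hypothesis $(\mathrm{NC}_b)$ on $\bfs\sigma^+ + \bfs\sigma^-_*$ is precisely what Theorem~\ref{C under NCb} (equivalently Theorem~\ref{thm: conformal invariance}) needs in order for $C[\bfs\sigma^+,\bfs\sigma^-]$ to be a well-defined differential in the first place, so that invoking it is legitimate.

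The only genuine subtlety — and the step I would treat most carefully — is the bookkeeping at infinity and the consistency of the two descriptions of the chart there. In the Schottky-double picture, $\phi^{-1}(\infty)$ corresponds to the boundary point $\infty \in \wh\R$, which lies on both sheets; one should check that the convention ``$\|\phi$ means the chart $-1/\phi$ at $\phi^{-1}(\infty)$'' used in item~\eqref{item: C} matches the parenthetical ``the chart $z\mapsto -1/z$ at infinity'' in the statement of Theorem~\ref{thm: C in H}, and that when $\infty$ carries charge the factor $\lambda_\infty = \lambda_b(\sigma_\infty)$ is accounted for consistently with the convention that $C_\H$ omits the charge at $\infty$ (mirroring $C_\C[\bfs\sigma] := C_\C[\bfs\sigma|_\C]$). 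This is exactly the computation already carried out in the proof of Theorem~\ref{thm: conformal invariance} for the inversion and dilation-rotation in the $\sigma_\infty \neq 0$ case, so once the divisor-pushforward identification above is in place, no new work is needed: one simply transports that verification through $\phi$. The remaining points — that $C_\H$ is holomorphic in the $z_j$ on $\H^N_{\mathrm{distinct}}$, and that it is single-valued under the stated symmetry/reality conventions — are immediate from the explicit product formula.
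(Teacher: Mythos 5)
Your argument is correct, and it is essentially the intended one: the paper leaves this theorem as a ``trivial exercise'' and instead proves the $\D$-uniformization analogue (Theorem~\ref{thm: C in D}) by the same Schottky-double route, which your proof mirrors in the simpler situation where the chart $i\circ\phi\circ\iota$ at the reflected nodes is just the identity chart of $\C$, so the pushed-forward divisor carries charges $\sigma_j^+$ at $z_j$ and $\sigma_j^-$ at $\bar z_j$ and $C_\C$ of it is literally the three-factor product defining $C_\H$, with no transition-map factors or exponent identities to check. Your reading of the role of $(\mathrm{NC}_b)$ --- needed only for $C[\bfs\sigma^+,\bfs\sigma^-]$ to be a well-defined differential (Theorem~\ref{C under NCb}) and for the bookkeeping of the charge at $\infty$ in the chart $z\mapsto -1/z$ --- matches the remark following the theorem in the paper.
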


\begin{rmk*} The expression $C_\H[\bfs\sigma^+,\bfs\sigma^-]$ of course makes sense without any neutrality condition but we should always reconstruct neutrality by adjusting the charge at infinity.
(Recall that there is a 1-to-1 correspondence between divisors on $\wh\C$ satisfying ($\mathrm{NC}_b$) and arbitrary divisors in $\C:$
$\bfs\sigma \mapsto \bfs\sigma|_\C.$)
\end{rmk*}

We leave the proof of the above theorem as a trivial exercise. In the next subsection we state the version of this theorem in the $\D$-uniformization and provide its proof. 
For a divisor $\bfs\sigma^+ = \sum \sigma_j^+\cdot z_j$ in $D\cup\pa D,$ we define $\overline{\bfs\sigma^+}= \sum \overline{\sigma_j^+}\cdot z_j.$ 

\begin{egs*} We have
 
\ms\no (a) if $\bfs\sigma^- = \bfs 0,$ then 
$$C_\H[\bfs\sigma^+,\bfs0] = \prod_{j<k}(z_j-z_k)^{\sigma_j^+\sigma_k^+};$$

\no (b) if $\bfs\sigma^- = \overline{\bfs\sigma^+},$ then (up to a phase)
$$C_\H[\bfs\sigma^+,\overline{\bfs\sigma^+}] = \prod_{j<k}\Big|(z_j-z_k)^{\sigma_j^+\sigma_k^+}(z_j-\bar z_k)^{\sigma_j^+\overline{\sigma_k^+}}\Big|^ 2 \prod_{\Im\,z_j>0} (2\,\Im\,z_j)^{|\sigma_j^+|^2};$$
\no (c) if $\bfs\sigma^- = -\,\overline{\bfs\sigma^+},$ then (up to a phase)
$$C_\H[\bfs\sigma^+,-\,\overline{\bfs\sigma^+}] = \prod_{j<k}\Big|(z_j-z_k)^{\sigma_j^+\sigma_k^+}(z_j-\bar z_k)^{-\sigma_j^+\overline{\sigma_k^+}}\Big|^ 2 \prod_{\Im\,z_j>0} (2\,\Im\,z_j)^{-|\sigma_j^+|^2}.$$
(The products are taken over finite $z_j$'s and $z_k$'s.)
\end{egs*}

\subsubsection*{$\D$-uniformization}
In the unit disc $\D,$ we define 
$$C_\D[\bfs\sigma^+,\bfs\sigma^-]=\prod_{j<k}(z_j-z_k)^{\sigma_j^+\sigma_k^+}(\bar z_j-\bar z_k)^{\sigma_j^-\sigma_k^-}\prod_{j,k} (1-z_j\bar z_k)^{\sigma_j^+\sigma_k^-}.$$

\begin{thm} \label{thm: C in D}
Under the neutrality condition $(\mathrm{NC}_b),$
$C_\D[\bfs\sigma^+,\bfs\sigma^-]$ is the value of the differential $C[\bfs\sigma^+,\bfs\sigma^-]$ in the identity chart of $\D.$ 
\end{thm}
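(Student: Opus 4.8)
The plan is to reduce Theorem~\ref{thm: C in D} to the already-established Theorem~\ref{thm: conformal invariance} by choosing an explicit uniformizing map from the Schottky double $S = \D^{\mathrm{double}}$ to $\wh\C$ and tracking how the divisor and the differential transform. Recall that for $D = \D$ the canonical involution is $\iota: z\mapsto z^* = 1/\bar z$, and the Schottky double of $\D$ is naturally identified with $\wh\C$ via the identity on $\D$ together with $\iota$; concretely, the defining relation \eqref{eq: def C on D} says that the value of $C[\bfs\sigma^+,\bfs\sigma^-]$ in the identity chart of $\D$ equals the value of $C_S[\bfs\sigma]$, where $\bfs\sigma = \bfs\sigma^+ + \bfs\sigma^-_*$, in the chart $(\phi, i\circ\phi\circ\iota)$ with $\phi = \id_\D$. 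The point $z_j^*$ of the divisor $\bfs\sigma^-_*$ sits on the ``other sheet,'' and in the chosen chart its coordinate is $i(\phi(\iota(z_j^*))) = \overline{1/\bar z_j} = 1/z_j$... more carefully, since $z_j^* = 1/\bar z_j$, we get $\iota(z_j^*) = z_j$ and the second component of the chart reads off the coordinate $\overline{\phi(z_j)} = \bar z_j$. So in the global $\wh\C$-picture the charge $\sigma_j^+$ sits at $z_j$ and the charge $\sigma_j^-$ sits at $\bar z_j$ (both finite unless $z_j = 0$, in which case $z_j^* = \infty$).

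First I would write out $C_{\wh\C}[\bfs\sigma]$ for this configuration using \eqref{eq: CG in C}: the pairs $(z_j, z_k)$ contribute $(z_j - z_k)^{\sigma_j^+\sigma_k^+}$, the pairs $(\bar z_j, \bar z_k)$ contribute $(\bar z_j - \bar z_k)^{\sigma_j^-\sigma_k^-}$, and the mixed pairs $(z_j, \bar z_k)$ contribute $(z_j - \bar z_k)^{\sigma_j^+\sigma_k^-}$. This is exactly $C_\H[\bfs\sigma^+,\bfs\sigma^-]$, which is why the $\H$ case (Theorem~\ref{thm: C in H}) is a ``trivial exercise'': the identity chart of $\H$ and the global identity chart of $\wh\C$ agree. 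For $\D$, however, the identity chart of $\D$ is \emph{not} the restriction of $\id_{\wh\C}$; rather, a boundary point $z_j\in\pa\D = \T$ has $z_j^* = z_j$, and more importantly the involution $z\mapsto 1/\bar z$ is not the ambient complex conjugation, so one must pick a conformal identification $S\cong\wh\C$ and then apply Theorem~\ref{thm: conformal invariance} to pass between charts. The cleanest route: take the uniformizing map to be $\id$ on the $\D$-sheet, giving coordinates $z_j$ for the $\bfs\sigma^+$ points and coordinates $1/\bar z_j$... no — the second sheet maps by $z\mapsto 1/\bar z$ composed appropriately, landing the $\sigma_j^-$ charge at the point $1/z_j$ on $\wh\C$? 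Here I need to be careful, and this choice is where the two candidate formulas ($\H$ vs. $\D$) diverge: the correct statement is that the $\sigma_j^-$ charge lands at $1/\bar{z}_j$ only if we conjugate, so effectively at a point whose reciprocal-conjugate relationship produces the factor $(1 - z_j\bar z_k)$ instead of $(z_j - \bar z_k)$. Concretely $z_j - \overline{(1/\bar z_k)}\cdot(\text{something})$... I would verify that $(1 - z_j\bar z_k) = -\bar z_k(z_j - 1/\bar z_k)$ up to a unit, so the difference between $C_\H$ and $C_\D$ is a product of monomials $\prod z_j^{\,m_j}$, and that this monomial factor is precisely absorbed by the differential transformation rule $\prod(\tau'(z_j))^{\lambda_j}$ from Theorem~\ref{thm: conformal invariance} for the Möbius map relating the two uniformizations. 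Since $2b - \bfs\sigma$-type symmetry ($\lambda_b(\sigma)=\lambda_b(2b-\sigma)$) and the neutrality condition control exactly such monomial exponents, the bookkeeping closes.

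Thus the key steps are: (1) fix the uniformization $S = \D^{\mathrm{double}}\to\wh\C$ and record where each charge $\sigma_j^\pm$ lands, noting the case $z_j = 0\mapsto\infty$ separately; (2) apply the definition \eqref{eq: def C on D} together with Theorem~\ref{thm: conformal invariance} / Theorem~\ref{C under NCb} to express $\{C[\bfs\sigma^+,\bfs\sigma^-]\,\|\,\id_\D\}$ as $C_{\wh\C}[\bfs\sigma]$ transformed by the relevant Möbius change of chart; (3) carry out the elementary algebra showing that the resulting product equals the stated formula $\prod_{j<k}(z_j-z_k)^{\sigma_j^+\sigma_k^+}(\bar z_j - \bar z_k)^{\sigma_j^-\sigma_k^-}\prod_{j,k}(1-z_j\bar z_k)^{\sigma_j^+\sigma_k^-}$, using $(\mathrm{NC}_b)$ to see that all spurious monomial factors cancel; (4) handle the boundary points $z_j\in\T$ and the puncture-at-$0$ case by continuity / the convention $0^0 := 1$. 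The main obstacle is step (3) combined with getting step (1) exactly right: one must not conflate the ambient conjugation on $\wh\C$ with the Schottky involution $z\mapsto 1/\bar z$, and the factor $(1 - z_j\bar z_k)$ versus $(z_j - \bar z_k)$ hinges on which reciprocal/conjugate appears — so I expect the bulk of the care to go into verifying that the monomial discrepancy is exactly the Jacobian factor $\prod_j (\tau'(z_j))^{\lambda_j}$ predicted by the differential transformation law, which in turn is a direct consequence of the neutrality condition just as in the proof of Theorem~\ref{thm: conformal invariance}.
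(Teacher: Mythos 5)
Your plan is essentially the paper's proof: the paper likewise evaluates $C[\bfs\sigma^+,\bfs\sigma^-]$ in the chart $(\id_\D,\bar\iota)$ with $\bar\iota(z)=1/z$ at the reflected nodes $z_j^*=1/\bar z_j$ (so the $\sigma_j^-$ charge sits at $1/\bar z_j$ on $\wh\C$ with chart coordinate $\bar z_j$, the reading you eventually settle on), relates this value to $C_\C[\bfs\sigma]$ through the transition factors $\prod_{j>0}(-\bar z_j)^{2\lambda_j^-}$, and rewrites $(z_j-1/\bar z_k)$ as $-(1-z_j\bar z_k)/\bar z_k$ so that the leftover monomials in each $\bar z_j$ have total exponent $2\lambda_j^-+\sigma_j^-(2b-\sigma_j^-)=0$ by $(\mathrm{NC}_b)$, exactly your steps (1)--(3). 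Just drop the initial assertion that the $\sigma_j^-$ charges land at $\bar z_j$ on $\wh\C$ (that is the $\H$ double, where $\iota$ is the ambient conjugation); for $\D$ the correct placement at $1/\bar z_j$ is precisely what your key identity $(1-z_j\bar z_k)=-\bar z_k(z_j-1/\bar z_k)$ presupposes.
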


\begin{proof}
We identify $\wh\C$ with the Schottky double of $\D.$
Then the corresponding involution $\iota$ is
$$\iota:z\mapsto z^* = \frac1{\bar z}.$$
A double divisor $(\bfs\sigma^+,\bfs\sigma^-)$ corresponds to a divisor $\bfs\sigma$ on $\wh\C;$
$\bfs\sigma =\bfs\sigma^+ + \bfs\sigma^-_*= \sum (\sigma_j^+\cdot z_j+\sigma_j^-\cdot z_j^*).$
If one of the $z_j$'s is 0, we call it $z_0$ (so that $z_0 = 0$). 
Denote
$$\wt C =\{C[\bfs\sigma^+,\bfs\sigma^-]\,\|\,\id_\D\}.$$
We need to show that $\wt C = C_\D.$
By definition~\eqref{eq: def C on D}, 
$$\wt C =\{C[\bfs\sigma]\,\|\,(\id,\bar\iota)\},$$
where we use the chart $\bar\iota: z \mapsto 1/z$ at the nodes $z_j^*$ including $z_0^* = \infty.$
It follows that
$$C:= C_\C[\bfs\sigma] = \wt C \prod_{j>0} (-\bar z_j)^{2\lambda_j^-}.$$
Indeed, $C$ is the value of $C[\bfs\sigma]$ in the identity chart of $\C$ and the chart $\bar\iota$ at infinity, so
$$C = \wt C\prod_{j>0} \big(h'(z_j^*)\big)^{\lambda_j^-},$$
where $h = \bar\iota$ is the transition map. 
Recall the expression for $C_\C:$
$$C = \prod_{j<k}(z_j-z_k)^{\sigma_j^+\sigma_k^+} \prod_{0<j<k}(1/\bar z_j-1/\bar z_k)^{\sigma_j^-\sigma_k^-}\prod_{j,k;k\ne0} (z_j-1/\bar z_k)^{\sigma_j^+\sigma_k^-}.$$
Let us rewrite this as a fraction $C = N/D$ with 
$$N=\prod_{j<k}(z_j-z_k)^{\sigma_j^+\sigma_k^+}\prod_{0<j<k}(\bar z_j-\bar z_k)^{\sigma_j^-\sigma_k^-}\prod_{j,k} (1-z_j\bar z_k)^{\sigma_j^+\sigma_k^-},$$
and 
$$D = \prod_{0<j<k}(-\bar z_j\bar z_k)^{\sigma_j^-\sigma_k^-}\prod_{j,k;k\ne0}(-\bar z_k)^{\sigma_j^+\sigma_k^-}.$$
Comparing $C_\D$ to $N,$ we have 
\begin{align*}
C_\D &= N \prod_{j=0,k>0}(\bar z_j-\bar z_k)^{\sigma_j^-\sigma_k^-} = N\prod_{k>0}(-\bar z_k)^{\sigma_0^-\sigma_k^-}\\
&= CD\prod_{k>0}(-\bar z_k)^{\sigma_0^-\sigma_k^-} = \wt C D \prod_{j>0} (-\bar z_j)^{2\lambda_j^-} \prod_{k>0}(-\bar z_k)^{\sigma_0^-\sigma_k^-}.
\end{align*}
To verify $C_\D = \wt C ,$ it remains to show that
$$\prod_{j>0} (-\bar z_j)^{2\lambda_j^-} \prod_{j>0}(-\bar z_j)^{\sigma_0^-\sigma_j^-} \prod_{0<j<k}(-\bar z_j\bar z_k)^{\sigma_j^-\sigma_k^-}\prod_{j,k;k\ne0}(-\bar z_k)^{\sigma_j^+\sigma_k^-}= 1.$$
The exponent of $\bar z_j\,(j\ne0)$ on the left-hand side is 
\begin{align*}
2\lambda_j^- + \sigma_0^-\sigma_j^- &+ \sum_{k\ne j,0}\sigma_j^-\sigma_k^- + \sum_k \sigma_k^+\sigma_j^- = 2\lambda_j^- + \sigma_j^-\Big(\sum_{k\ne j}\sigma_k^- + \sum_k \sigma_k^+\Big)\\
&=2\lambda_ j^- + \sigma_j^-(2b-\sigma_j^-)=2\lambda_j^--2\lambda_b(\sigma_j^-)=0.
\end{align*}
\end{proof}
 
For the exterior $\Delta$ of the unit disc, we leave it to the readers to check that the Coulomb gas correlation functions in the $\Delta$-uniformization have the same expression as in the $\D$-uniformization but one must disregard the node at infinity (as in the $\H$-uniformization).

\section{Conformal field theory of Gaussian free field} \label{sec: CFT of GFF}

In this section we review a version of conformal field theory with central charge $c = 1$ (i.e., $b=0$) both in a simply-connected domain and on the Riemann sphere implemented in \cite{KM13,KM17} and present their connection in the context of Schottky double construction.
Coulomb gas correlations in the case $b = 0$ are represented as correlations of (formal) multi-vertex fields constructed from the Gaussian free field through Wick's calculus. 
In Sections~\ref{sec: O}~--~\ref{sec: GFF_beta}, we extend multi-vertex fields to the case $b\ne 0$ and interpret them as the OPE exponentials of background charge modifications of bosonic fields. 

\subsection{Bosonic field and its Wick's exponentials}
The chiral bosonic fields are described as the holomorphic part or anti-holomorphic part of the Gaussian free field in the physics literature.

\subsubsec{Gaussian free field}
The Gaussian free field $\Phi$ in a planar domain $D$ with Dirichlet boundary condition is an isometry
$\Phi:\EE(D) \to L^2(\Omega,\P)$
from the Dirichlet energy space $\EE(D)$ such that the image consists of centered Gaussian random variables.
Here $(\Omega,\P)$ is a probability space and $\EE(D)$ is the completion of smooth functions with compact supports in $D$ with respect to the norm
$$\|f\|^2_\EE=\iint 2G(\zeta,z)\,f(\zeta)\,\overline{f(z)}~\dd A(\zeta)\,\dd A(z),$$
where $A$ is the (normalized) area measure and $G\equiv G_D$ is the Dirichlet Green's function for $D.$
In the upper half-plane, we have 
$$G_\H(\zeta,z) = \log\Big|\frac{\zeta-\bar z}{\zeta-z}\Big|,$$
where $G(\zeta,\infty) = 0.$
In the unit disc, we have 
$$G_\D(\zeta,z) = \log\Big|\frac{1-\zeta\bar z}{\zeta-z}\Big|.$$
The same formula holds in the exterior $\Delta$ of the unit disc, but $G_\Delta(\zeta,\infty) = \log|\zeta|.$

The Gaussian free field can be viewed as a Fock space field with the $n$-point correlation function 
$$\E[\Phi(z_1) \cdots\Phi(z_n)] = \sum \prod_k 2G(z_{i_k},z_{j_k}),$$
where the sum is over all partitions of the set $\{1,\cdots,n\}$ into disjoint pairs $\{i_k,j_k\}.$
This correlation function is a unique continuous function on $D^n_{\mathrm{distinct}}$ such that
$$\E[\Phi(f_1)\cdots \Phi(f_n)]=\int f_1(z_1)\cdots f_n(z_n)~\E[\Phi(z_1)\cdots \Phi(z_n)]\,\dd A(z_1)\cdots \dd A(z_n)$$
for all test functions $f_j$ with disjoint supports. 

\subsubsec{Chiral bosonic fields}
We write $J := \pa \Phi$ for the current field. 
The chiral boson
$$\Phi^+(z,z_0) = \int_{z_0}^z J(\zeta)\,\dd\zeta$$
is a well-defined, ``multivalued", path dependent, ``generalized" centered Gaussian.
More precisely, 
$$\Phi^+(z,z_0)=\Big\{\Phi^+(\gamma)=\int_\gamma J(\zeta)\, \dd\zeta\Big\},$$
where $\gamma$ is a curve from $z_0$ to $z.$ 
Then the values of $\Phi^+$ are multivalued correlation functionals in the complement of the curve. 
For example, we have
\begin{equation} \label{eq: G^+}
\E[\Phi^+(z, z_0)\Phi(z_1)]=2(G^+(z,z_1)-G^+(z_0,z_1)),
\end{equation}
where $G^+$ is the \emph{complex} Dirichlet Green's function,
$2G^+(z,z_1)=G(z,z_1)+i\wt G(z,z_1).$
Here $\wt G$ is the harmonic conjugate of $G.$
The multivalued function $z\mapsto G^+(z,z_1)$ is defined up to constants.
Sometimes we work with a uniformization $w:(D,q)\to (\D,0).$ 
In such a case, it is convenient to choose the constant so that 
$$G^+(z,z_1)=\frac12\log\frac{1-w(z)\overline{w(z_1)}}{w(z)-w(z_1)}.$$
We write
$$\Phi^-(z,z_0)=\overline{\Phi^+(z,z_0)}.$$
Then we have 
$$\Phi(z) -\Phi(z_0)=\Phi^+(z,z_0)+\Phi^-(z,z_0).$$

\subsubsec{Wick's calculus}
For centered jointly Gaussian random variables $\xi_{jk},$ $(1\le j\le l, 1\le k \le m_j),$ let $X_j =\xi_{j1}\odot \cdots \odot \xi_{jm_j}$ be Wick's product of $\xi_{jk}, (1\le k \le m_j).$
Then by Wick's formula we have 
\begin{equation}\label{eq: tensor product} 
X_1\cdots X_l = \sum_\gamma\prod_{\{v,v'\}}\E[\xi_{v}\xi_{v'}]~\underset{v''}{\textstyle\bigodot} \xi_{v''},
\end{equation}
where the sum is taken over all Feynman diagrams labeled by $\xi_{jk}$'s without edges joining any $\xi_{jk_1}$ and $\xi_{jk_2}.$
Recall that the Feynman diagram labeled by $\xi_1,\cdots,\xi_n$ is a graph with vertices $1,\cdots,n$ such that edges $\{v,v'\}$ have no common endpoints.
(Such edges $\{v,v'\}$ are called ``Wick's contractions.")
We denote the unpaired vertices by $v''.$
For generalized Gaussians or Fock space fields like $\xi_{jk} = \Phi(z_{jk})$ or $\Phi^+(\zeta_{jk},z_{jk}),$ we set the tensor product $X_1\cdots X_l$ with disjoint sets $S(X_j)$ to be \eqref{eq: tensor product}.  
For example,
$$\Phi^{\odot2}(\zeta)\Phi^{\odot2}(z) = \Phi^{\odot2}(\zeta)\odot \Phi^{\odot2}(z) + 4\,\E[\Phi(\zeta)\Phi(z)]  \Phi(\zeta)\odot \Phi(z) + 2(\E[\Phi(\zeta)\Phi(z)])^2,$$
where each of terms on the right hand-side comes from $0, 1,$ or 2 contractions, respectively.  
Wick's exponentials of (generalized) Gaussian $\xi$ is defined by 
$$\ee^{\odot\xi} := \sum_{n=0}^\infty \frac{\xi^{\odot n}}{n!}.$$
For centered jointly (generalized) Gaussians $\xi_j$'s we have 
$$\ee^{\odot \alpha_1\xi_1} \cdots \ee^{\odot \alpha_n\xi_n} = \ee^{\sum_{j<k}\alpha_j\alpha_k\E[\xi_j\xi_k]} \ee^{\odot \sum_j \alpha_j\xi_j}.$$

\subsection{Formal bosonic fields} \label{ss: formal bosonic} 
To make the computation easy, it is convenient to consider a representation
$$\Phi^+(z,z_0) = \Phi^+(z)-\Phi^+(z_0)$$
such that 
$$\Phi=\Phi^++\Phi^-, \qquad \Phi^-=\overline{\Phi^+}.$$
It is not possible to define such $\Phi^\pm$ in a conformally invariant way, but linear combinations
$$\Phi[\bfs\sigma^+,\bfs\sigma^-]:=\sum\sigma_j^+\Phi^+(z_j)-\sigma_j^-\Phi^-(z_j)$$
satisfying the neutrality condition are well-defined as Fock space fields.

\begin{lem}
If a double divisor $(\bfs\sigma^+,\bfs\sigma^-)$ satisfies $(\mathrm{NC}_0)$, i.e.,
$$\int\bfs\sigma^++\bfs\sigma^- =0,$$
then the formal bosonic field $\Phi[\bfs\sigma^+,\bfs\sigma^-]$ can be represented as a linear combination of well-defined Fock space fields.
\end{lem}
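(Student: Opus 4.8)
The plan is to exhibit $\Phi[\bfs\sigma^+,\bfs\sigma^-]$ as a concrete linear combination of bi-variant fields of the form $\Phi^+(z_j,z_0)$ and $\Phi^-(z_j,z_0)$, which are already known to be well-defined Fock space fields (they are the multivalued correlation functionals $\int_{z_0}^{z_j}J$ and its conjugate, as recalled just before the lemma). The point is that the auxiliary base point $z_0$ drops out precisely when the neutrality condition $(\mathrm{NC}_0)$ holds.

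The key computation is the following. Write, for a fixed auxiliary point $z_0\in D$ distinct from all the $z_j$,
$$\Phi[\bfs\sigma^+,\bfs\sigma^-] = \sum_j \sigma_j^+\Phi^+(z_j,z_0) - \sum_j\sigma_j^-\Phi^-(z_j,z_0) + \Big(\sum_j\sigma_j^+\Big)\Phi^+(z_0) - \Big(\sum_j\sigma_j^-\Big)\Phi^-(z_0).$$
The first two sums are honest linear combinations of the bi-variant fields $\Phi^\pm(z_j,z_0)$, hence well-defined Fock space fields by the construction of $\Phi^\pm(z,z_0)$ recalled in the ``chiral bosonic fields'' paragraph. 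The remaining terms are
$$\Big(\sum_j\sigma_j^+\Big)\Phi^+(z_0) - \Big(\sum_j\sigma_j^-\Big)\Phi^-(z_0);$$
under $(\mathrm{NC}_0)$, which gives $\sum_j\sigma_j^+ + \sum_j\sigma_j^- = 0$, this is not automatically zero, but we also rewrite it symmetrically: adding and subtracting $\big(\sum_j\sigma_j^-\big)\Phi^+(z_0)$, and using $\Phi = \Phi^+ + \Phi^-$, it equals $\big(\sum_j\sigma_j^+\big)\Phi^+(z_0) + \big(\sum_j\sigma_j^+\big)\Phi^-(z_0) - \big(\sum_j\sigma_j^+ + \sum_j\sigma_j^-\big)\Phi^-(z_0) = \big(\sum_j\sigma_j^+\big)\Phi(z_0)$ when $(\mathrm{NC}_0)$ holds. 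So under neutrality
$$\Phi[\bfs\sigma^+,\bfs\sigma^-] = \sum_j \sigma_j^+\Phi^+(z_j,z_0) - \sum_j\sigma_j^-\Phi^-(z_j,z_0) + \Big(\sum_j\sigma_j^+\Big)\Phi(z_0),$$
and every term on the right is a well-defined Fock space field: the first two are linear combinations of $\Phi^\pm$-increments, and $\Phi(z_0)$ itself is a basic field.

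To finish I would check that the right-hand side does not depend on the choice of $z_0$: moving $z_0$ to $z_0'$ changes $\Phi^\pm(z_j,z_0)$ by $\mp\Phi^\pm(z_0,z_0')$ and $\Phi(z_0)$ by $\Phi(z_0) - \Phi(z_0') + \big(\Phi(z_0')\ \text{correction}\big)$; collecting the coefficients, the total change is $\big(-\sum_j\sigma_j^+ - \sum_j\sigma_j^-\big)\Phi^+(z_0,z_0') + \big(\text{conjugate term}\big)$, which vanishes by $(\mathrm{NC}_0)$. (Alternatively, one simply notes that the multivalued ambiguity of each $\Phi^+(z_j)$ is a period of $J$, and the neutrality condition makes the total ambiguity cancel, which is the substance of the statement ``$\Phi[\bfs\sigma^+,\bfs\sigma^-]$ satisfying the neutrality condition are well-defined''.) The main obstacle, such as it is, is purely bookkeeping: one must be careful that $\Phi^+$ and $\Phi^-$ are complex conjugates of each other, so the coefficients $\sigma_j^+$ and $\sigma_j^-$ are treated asymmetrically (note the minus sign in the definition $\Phi[\bfs\sigma^+,\bfs\sigma^-] = \sum\sigma_j^+\Phi^+(z_j) - \sigma_j^-\Phi^-(z_j)$), and to track which sums must vanish — it is $\sum_j(\sigma_j^+ + \sigma_j^-)$, exactly $(\mathrm{NC}_0)$, that kills the $z_0$-dependence. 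No genuine analytic difficulty arises; the content is that $(\mathrm{NC}_0)$ is the precise algebraic condition making the formal expression base-point independent, hence a legitimate Fock space field.
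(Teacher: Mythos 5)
Your proof is correct and is essentially the paper's own argument: split off a base point $z_0$, writing $\Phi^\pm(z_j)=\Phi^\pm(z_j,z_0)+\Phi^\pm(z_0)$, and observe that under $(\mathrm{NC}_0)$ the leftover terms combine into $\big(\int\bfs\sigma^+\big)\Phi(z_0)$, so the whole expression is a linear combination of the well-defined fields $\Phi^\pm(z_j,z_0)$ and $\Phi(z_0)$. Your closing discussion of independence of $z_0$ is not needed for the lemma and corresponds to the paper's remark that the representation is ``unique'' only in the sense of multivalued fields.
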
 

\begin{proof}
Let us choose any point $z_0\in D$. (It can be one of $z_j$'s.) 
Then
$$
 \Phi[\bfs\sigma^+,\bfs\sigma^-]=\Phi^+(z_0)\int\bfs\sigma^+-\Phi^-(z_0)\int\bfs\sigma^-+ \sum\sigma_j^+\Phi^+(z_j,z_0)-\sigma_j^-\Phi^-(z_j,z_0).
$$
Under the neutrality condition, the first two terms on the right-hand side become the Fock space correlation functional: 
$$\Phi^+(z_0)\int\bfs\sigma^+-\Phi^-(z_0)\int\bfs\sigma^- = \Phi(z_0)\int\bfs\sigma^+.$$ 
\end{proof}

\begin{rmk*}
The representation in the lemma is not unique, of course, but it is ``unique" in the sense of ``multivalued" fields.
For example,
$\Phi[1\cdot z_1-1\cdot z_2,\bfs0]=\Phi^+(z_1,z_0)-\Phi^+(z_2,z_0)$
is ``independent" of $z_0.$ 
If we specify the curves in the definition of bi-vertex fields with two different choices of $z_0,$ then the difference is an integral over a loop.
If we choose $z_0 =z_2,$ then $\Phi[1\cdot z_1-1\cdot z_2,\bfs0]=\Phi^+(z_1,z_2)-\Phi^+(z_2,z_2),$ where $z\mapsto\Phi^+(z,z)$ is a ``monodromy field:"
$$\Phi^+(z,z) = \Big\{\int_\gamma J(\zeta)\,\dd\zeta\,:\, \gamma \textrm{ is a loop rooted at } z\Big\}.$$ 
\end{rmk*}

\subsection{Correlations of formal fields in the unit disc and the upper half-plane}
We define $\Phi^\pm \equiv \Phi^\pm_\D$ $(\Phi^- = \overline{\Phi^+})$ in $\D$ as centered Gaussian (formal) fields with ``formal" correlations
\begin{equation}\label{eq: formal E}
\E[\Phi^+(z_1)\Phi^+(z_2)] = \log\frac1{z_1-z_2},\qquad \E[\Phi^+(z_1)\Phi^-(z_2)] = \log(1-z_1\bar z_2).
\end{equation}
Note that these formal correlations have no $\Aut(\D)$-invariance and depend on the order of particles.
It is easy to verify that
$$\E[\Phi(z_1)\Phi(z_2)] = \E [\Phi^+(z_1)+\Phi^-(z_1)][\Phi^+(z_2)+\Phi^-(z_2)],$$
where $\E$ on the left-hand side has the usual meaning, but we use ``formal" correlations on the right-hand side.
The same interpretation is applied for $\E[\Phi^+(z_1,z_2)\Phi^+(z_3,z_4)]$, etc.
If a double divisor $(\bfs\sigma^+,\bfs\sigma^-)$ in $D$ satisfies the neutrality condition $(\mathrm{NC}_0),$ then we can compute correlations of $\Phi[\bfs\sigma^+,\bfs\sigma^-]$ (or other functionals involving $\Phi[\bfs\sigma^+,\bfs\sigma^-],$ e.g., Wick's exponentials of $\Phi[\bfs\sigma^+,\bfs\sigma^-]$) with various Fock space fields by applying Wick's calculus to our formal fields in $\D.$
In the case of an arbitrary simply-connected domain $D,$ we fix a conformal map $w:D \to \D$ and define $\Phi^\pm(z) =\Phi^\pm_\D(w(z))$ so the correlations are $$\E[\Phi^+(z_1)\Phi^+(z_2)] = \log\frac1{w(z_1)-w(z_2)},\quad \E[\Phi^+(z_1)\Phi^-(z_2)] = \log(1-w(z_1)\overline{w(z_2)}).$$
The ``formal" correlations depend on the choice of the conformal map but this dependence disappears under the neutrality condition.
In particular, we can use this method to introduce formal bosonic fields in $\H$ but it is more convenient to define $\Phi^\pm \equiv \Phi^\pm_\H$ in $\H$ as follows:
$$\E[\Phi^+(z_1)\Phi^+(z_2)] = \log\frac1{z_1-z_2},\qquad \E[\Phi^+(z_1)\Phi^-(z_2)] = \log(z_1-\bar z_2)$$
for finite $z_j$'s and set $\Phi^+(\infty) = 0.$

\subsection{OPE exponentials and multi-vertex fields} \label{ss: V} 
The OPE product $X*Y$ of two fields $X$ and $Y$ is a generic notation for a coefficient in the operator product expansion $X(\zeta)Y(z)$ as $\zeta\to z,$ i.e., expansion with respect to a chart independent asymptotic scale.
Typically (but not always) we use $*$ for the coefficient of the first non-diverging term. 
So, the OPE product $X*Y$ of non-chiral fields is obtained by subtracting all divergent terms in the operator product expansion $X(\zeta)Y(z)$ and then taking the limit as $\zeta\to z.$
This is the case in notation $\Phi^{*2}:=\Phi*\Phi, \Phi^{*n} = \Phi*\Phi^{*n-1}$ in the definition of OPE powers 
of $\Phi.$
Let $c(z), z\in D$ denote the logarithm of conformal radius of $D,$ i.e.,
$$c(z) = u(z,z), \quad u(\zeta,z) := G_D(\zeta,z) + \log|\zeta-z|.$$ 
Then by Wick's calculus,
$$\Phi(\zeta)\Phi(z) = \log\frac1{|\zeta-z|^2} + \Phi^{*2}(z) + o(1), \quad \Phi^{*2} = 2c + \Phi^{\odot2}$$
as $\zeta\to z,\zeta\ne z.$
In general, unlike Wick's multiplication, the OPE multiplication is neither commutative nor associative. 
For example, if $f$ is a non-random holomorphic function and if $X,Y$ are holomorphic fields, then 
$f*X = X*f = fX,$ and $(fX)*Y \ne X*(fY).$

We define OPE exponentials of the Gaussian free field by
$$\VV^{(\sigma)}:=\ee^{*i\sigma\Phi} = \sum_{n=0}^\infty \frac{(i\sigma)^n}{n!}\,\Phi^{*n}.$$
Then we have 
$$\VV^{(\sigma)}(z) = C(z)^{-\sigma^2} \ee^{\odot i\sigma\Phi(z)}, \qquad C(z) = \ee^{c(z)},$$
see \cite[Proposition~3.3]{KM13}.
More generally, OPE exponentials (non-chiral multi-vertex fields) $\VV[\bfs\sigma]$ of 
$i\Phi[\bfs\sigma,-\bfs\sigma] = \sum i\sigma_j \Phi(z_j)$ can be defined in a similar way 
$$\VV[\bfs\sigma] := \ee^{*i\Phi[\bfs\sigma,-\bfs\sigma]}$$
and be computed as 
\begin{equation*} \label{eq: VV}
\VV[\bfs\sigma] = \mathcal{C}[\bfs\sigma]\ee^{\odot i\Phi[\bfs\sigma,-\bfs\sigma]},
\end{equation*}
where $\mathcal{C}[\bfs\sigma] = C[\bfs\sigma,-\bfs\sigma].$
See \cite[Section~5]{KM17} for more details. 

\subsubsec{Multi-vertex fields} 
We now extend the OPE exponentials or multi-vertex fields to chiral fields by stating the definition of the chiral multi-vertex field in terms of Coulomb gas correlation function and Wick's exponential. 
Suppose a double divisor $(\bfs\sigma^+,\bfs\sigma^-)$ in $D$ satisfies the neutrality condition $(\mathrm{NC}_0).$
Then the Wick's exponential
$$V^\odot[\bfs\sigma^+,\bfs\sigma^-]:=\ee^{\odot i \Phi[\bfs\sigma^+,\bfs\sigma^-]} \equiv \ee^{\odot i \sum\sigma_j^+\Phi^+(z_j)-\sigma_j^-\Phi^-(z_j)}$$
is a well-defined Fock space functional; $V^\odot[\bfs\sigma^+,\bfs\sigma^-]$ is a scalar, i.e., $(0,0)$-differential and
$$\E\,V^\odot[\bfs\sigma^+,\bfs\sigma^-]=1.$$
We define the multi-vertex field $V[\bfs\sigma^+,\bfs\sigma^-]$ by 
$$V[\bfs\sigma^+,\bfs\sigma^-] = C[\bfs\sigma^+,\bfs\sigma^-] V^\odot[\bfs\sigma^+,\bfs\sigma^-].$$
Thus a multi-vertex field consists of two parts -- its expectation given by
the correlation differential $C[\bfs\sigma^+,\bfs\sigma^-],$ and the Wick exponential $V^\odot[\bfs\sigma^+,\bfs\sigma^-].$
The conformal dimensions of $V[\bfs\sigma^+,\bfs\sigma^-]$ at $z_j$'s are 
$$\lambda_j^+ = \frac{(\sigma_j^+)^2}2, \qquad \lambda_j^- = \frac{(\sigma_j^-)^2}2.$$
(Recall that $b = 0$ in this section.)

\subsubsec{Formal representation of multi-vertex fields} 
For a given conformal map $w:D\to\D,$ we formally set
$$V^{\sigma}(z) = (w'(z))^{\sigma^2/2} \ee^{\odot i\sigma\Phi^+(z)}, \qquad \bar V^{\sigma}(z) = (\overline{w'(z)})^{\sigma^2/2} \ee^{\odot -i\sigma\Phi^-(z)}.$$
Then under the neutrality condition $(\mathrm{NC}_0),$
$$\E\,V[\bfs\sigma^+,\bfs\sigma^-] = \E\, V^{\sigma_1^+}(z_1) \bar V^{\sigma_1^-}(z_1) \cdots V^{\sigma_n^+}(z_n) \bar V^{\sigma_n^-}(z_n),$$
where the left-hand side is $C[\bfs\sigma^+,\bfs\sigma^-]$ by the definition above and the right-hand side is computed by Wick's calculus from formal correlations in $\D.$
The result does not depend on the choice of the conformal map $w,$ and is the same if we use formal correlations in $\H$ or $\Delta.$

\subsection{Conformal field theory on the Riemann sphere} \label{ss: CFT on C}
In this subsection we define the Gaussian free field on $\wh\C$ both as a Gaussian field indexed by the energy space and as a bi-variant Fock space field. 
After introducing formal bosonic fields, we define multi-vertex fields on $\wh\C.$ 

\subsubsec{Energy space}
For a compact Riemann surface $S$ of genus zero, we denote 
$\mathcal{W}(S) \equiv \EE^{(0)}(S) = W^{1,2}(S)/\C$
with the scalar product
$$(f,g)_\nabla = \int_S \dd f \wedge \overline{\dd g}.$$
The \emph{Dirichlet energy space}, $\EE(S) = \EE^{(1,1)}(S)$ is the Hilbert space defined as the completion of the smooth $(1,1)$ forms with respect to the scalar product such that the Laplacian operator 
$$\pa\bp:\, \mathcal{W}(S)\to\EE(S)$$
is unitary. 
Note that if $\mu\in\EE(S)$, then by Green's theorem $\mu$ satisfies the neutrality condition $(\NC_0):$
\begin{equation}
\int\mu=0. 
\end{equation}
Furthermore, $\|\mu\|_\EE^2$ is represented as 
$$\|\mu\|_\EE^2 = \iint_{\C\times\C}\log\frac1{|z-w|^2}\,\mu(z)\overline{\mu(w)}$$
for \emph{any} $\wh\C$-uniformization of $S.$ 
The $\Aut(\wh\C)$-invariance of the right-hand side means that 
$$\iint_{\C\times\C}\log\frac1{|\tau z-\tau w|^2}\,\mu(z)\overline{\mu(w)} $$
is independent of $\tau\in\Aut(\wh\C).$ 
Since translation invariance is trivial, it suffices to show this for $\tau(z) = az\, (a\ne 0)$ and $\tau(z) = 1/z.$ 
It follows from the neutrality condition $(\NC_0)$ on $\mu.$

\subsubsec{Gaussian free field}
The Gaussian free field $\Psi$ on $S$ is a (real, centered) Gaussian field indexed by $\EE(S)$, i.e.,
$$\Psi:\,\EE(S)\to L^2(\Omega,\P)$$
is an isometry such that $\Psi(\mu)$ is a (real, centered) Gaussian random variable for each $\mu\in\EE(S).$ 
We introduce the Fock space functionals (``generalized" Gaussians)
$$\Psi(z,z_0) = \Psi(\delta_z-\delta_{z_0}).$$
(Note that $\mu = \delta_z-\delta_{z_0}$ is not in $\EE(S)$ but as a signed/complex measure it can be approximated by $\mu_n$'s in $\EE$ because it satisfies $(\NC_0);$ here $\delta$'s are $\delta$-\emph{functions}, i.e., $(1,1)$ forms.)
Then the Fock space functionals $\Psi(z,z_0)$ have the following properties:
\renewcommand{\theenumi}{\alph{enumi}}
\begin{enumerate}
\ss\item $\Psi(z,z_0) = -\Psi(z_0,z),$ in particular $\Psi(z,z)=0;$
\ss\item $\Psi(z_1,z_3) = \Psi(z_1,z_2)+\Psi(z_2,z_3).$
\end{enumerate}

\ms
For $\mu\in\EE(S),$ somewhat symbolically we have 
$$\Psi(\mu) = \int \Psi(z,z_0) \,\mu(z).$$
The integral is independent of the reference point $z_0$ by (b) and the neutrality condition $(\NC_0).$
Thus we can think of the Gaussian free field as a bi-variant field with 
$$\E\,\Psi(z,z_0)\,\Psi(z',z_0') = 2\log\bigg|\frac{(z-z_0')(z_0-z')}{(z-z')(z_0-z_0')}\bigg|$$
on $\wh\C.$
As a scalar, conformal invariance is manifest in the cross-ratio on the right-hand side.
As in a simply-connected domain, we define the chiral boson $\Psi^+$ as a bi-variant multivalued field by 
$$\Psi^+(z,z_0) = \int_{z_0}^z \pa_\zeta\Psi(\zeta,z_1)\,\dd\zeta.$$
It is obvious that $\Psi^+(z,z_0)$ does not depend on $z_1.$ 
On $\wh\C,$ we have 
$$\E\,\Psi^+(z,z_0)\Psi^+(z',z_0') = \log \frac{(z-z_0')(z_0-z')}{(z-z')(z_0-z_0')}.$$

\subsubsec{Formal 1-point field}
We introduce $\Psi$ on $\wh\C$ as a centered Gaussian formal field with formal correlation
$$\E\,\Psi(z_1)\,\Psi(z_2) = 2\log\frac1{|z_1-z_2|}.$$
For a compact Riemann surface $S$ of genus zero, we fix a uniformization $w:S\to \wh\C$ and define $\Psi(z) = \Psi_{\wh\C}(w(z))$ so that 
$$\E\,\Psi(z_1)\,\Psi(z_2) = 2\log\frac1{|w(z_1)-w(z_2)|}. $$
The dependence of the formal correlations on the choice of $w$ disappears if we apply this formalism only to the linear combinations satisfying the neutrality condition $(\NC_0):$
$$\sum_j\sigma_j\Psi(z_j),\qquad \sum\sigma_j = 0.$$
For example, we have a representation
$\Psi(z,z_0) = \Psi(z) - \Psi(z_0).$
Next we introduce the formal bosonic fields $\Psi^\pm\equiv\Psi^\pm_{\wh\C}$ on $\wh\C$ as centered Gaussian formal fields satisfying $\Psi = \Psi^++\Psi^-,\Psi^- = \overline{\Psi^+}$ and 
$$\E\,\Psi^+(z_1)\,\Psi^+(z_2)= \log\frac1{z_1-z_2},\qquad \E\,\Psi^+(z_1)\,\Psi^-(z_2) =0.$$
For a given uniformizing map $w:S\to\wh\C,$ we define $\Psi^\pm(z) \equiv\Psi^\pm_S(z) := \Psi^\pm_{\wh\C}(w(z)).$
If both $\bfs\sigma^+ = \sum \sigma^+_j\cdot z_j$ and $\bfs\sigma^- = \sum \sigma^-_j\cdot z_j $ satisfy the neutrality condition $(\NC_0),$ then 
the linear combinations 
$$\Psi^+[\bfs\sigma^+]:=\sum \sigma^+_j\Psi^+(z_j),\qquad \Psi^-[\bfs\sigma^-]:=\sum \sigma^-_j\Psi^-(z_j),$$
and 
$$\Psi[\bfs\sigma^+,\bfs\sigma^-]:= \Psi^+[\bfs\sigma^+] - \Psi^-[\bfs\sigma^-] = \sum \sigma^+_j\Psi^+(z_j) - \sigma^-_j\Psi^-(z_j)$$ 
are well-defined Fock space fields.
For a divisor $\bfs\sigma^- = \sum \sigma^-_j\cdot z_j$ satisfying the neutrality condition $(\NC_0),$ we have 
$$ \Psi^-[\bfs\sigma^-] = \overline{\Psi^+[\overline{\bfs\sigma^-}]},$$
where $\overline{\bfs\sigma} = \sum \overline{\sigma_j}\cdot z_j$ for $\bfs\sigma= \sum \sigma_j\cdot z_j.$

\subsubsec{Vertex fields}
Suppose that both $\bfs\sigma^+$ and $\bfs\sigma^-$ satisfy the neutrality condition $(\NC_0).$ 
We define the multi-vertex field $V[\bfs\sigma^+,\bfs\sigma^-]$ by
$$V[\bfs\sigma^+,\bfs\sigma^-] = C[\bfs\sigma^+,\bfs\sigma^-] V^\odot[\bfs\sigma^+,\bfs\sigma^-], \qquad V^\odot[\bfs\sigma^+,\bfs\sigma^-]:=\ee^{\odot i\Psi[\bfs\sigma^+,\bfs\sigma^-]},$$
where the Coulomb gas correlation function $C[\bfs\sigma^+,\bfs\sigma^-]\equiv C_S[\bfs\sigma^+,\bfs\sigma^-]$ is given by 
$$C[\bfs\sigma^+,\bfs\sigma^-]=C[\bfs\sigma^+]\,\overline{C[\overline{\bfs\sigma^-}]}.$$
We remark that $\bfs\sigma^+$ does not interact with $\bfs\sigma^-$ in $C[\bfs\sigma^+,\bfs\sigma^-]$ due to the independence of $\Psi^+$ and $\Psi^-$:
$$ \E\,\Psi^+(z_1)\,\Psi^-(z_2) =0.$$
In the identity chart of $\C$ and the chart $z\mapsto -1/z$ at infinity, the value of the differential $C[\bfs\sigma^+,\bfs\sigma^-]$ is 
$$\prod_{j<k}(z_j-z_k)^{\sigma_j^+\sigma_k^+} (\bar z_j-\bar z_k)^{\sigma_j^-\sigma_k^-},$$
where the product is taken over finite $z_j$'s and $z_k$'s, and as usual $0^0:=1.$

\begin{rmk*} Recall that we may assume that $\sigma_j^- = 0 $ if $z_j \in \pa D.$
There is a 1-1 correspondence between a divisor $\bfs\sigma$ on $S = D^\mathrm{double}$ and a double divisor $(\bfs\sigma^+,\bfs\sigma^-)$ with $\supp\,\bfs\sigma^+\subseteq D\cap \pa D,$ $\supp\,\bfs\sigma^-\subseteq D:$ 
$$ \bfs\sigma=\bfs\sigma^++\bfs\sigma_*^-.$$
Sometimes it is convenient to write $\Phi[\bfs\sigma]$ for $\Phi[\bfs\sigma^+,\bfs\sigma^-],$ 
$C[\bfs\sigma]$ for $C[\bfs\sigma^+,\bfs\sigma^-],$ and $V[\bfs\sigma]$ for $V[\bfs\sigma^+,\bfs\sigma^-],$ etc. 
\end{rmk*}

\section{Modified multi-vertex fields} \label{sec: O}
In this section we extend the concept of multi-vertex fields to the case $b \ne 0.$
Their correlation functions are defined in terms of correlation functions with the neutrality condition $(\NC_b)$ and their Wick's parts are Wick's exponentials of bosonic fields with charges $\bfs\tau$ satisfying the neutrality condition $(\NC_0).$
To reconcile these two neutrality conditions, background charges $\bfs\beta$ with the neutrality condition $(\NC_b)$ are placed.
In the next section we view the modified multi-vertex fields $\OO_{\bfs\beta}[\bfs\tau]$ as the OPE exponentials of background charge modification of $i\Phi[\bfs\tau].$

\subsection{Definition} 
Let us fix the (background charge) parameter $b\in\R.$ 
This parameter $b$ is related to the central charge $c$ in the following way:
$$c = 1 - 12b^2.$$
As we mentioned in the last remark in the previous section, we often use the 1-1 correspondence between a divisor $\bfs\sigma$ on $S = D^\mathrm{double}$ and a double divisor $(\bfs\sigma^+,\bfs\sigma^-)$ with $\supp\,\bfs\sigma^+\subseteq D\cap \pa D,$ $\supp\,\bfs\sigma^-\subseteq D:$ 
$ \bfs\sigma=\bfs\sigma^++\bfs\sigma_*^-.$
Recall that the Coulomb gas correlation functions
$C[\bfs\sigma] \equiv C_{(b)}[\bfs\sigma^+,\bfs\sigma^-]$ 
are well-defined $[\bfs\lambda^+,\bfs\lambda^-]$-differentials under the neutrality condition $(\NC_b):$ 
$$\int \bfs\sigma=2b,$$ 
see Theorem~\ref{C under NCb}, and that Wick's exponentials $V^\odot[\bfs\tau] \equiv V^\odot[\bfs\tau^+,\bfs\tau^-] = \ee^{\odot i\Phi[\bfs\tau^+,\bfs\tau^-]}$ $(\bfs\tau=\bfs\tau^++\bfs\tau_*^-)$ are well-defined Fock space fields under the neutrality condition $(\NC_0):$
$$\int \bfs\tau=0,$$
where $\Phi[\bfs\tau]\equiv \Phi[\bfs\tau^+,\bfs\tau^-]=\sum \tau_j^+\Phi^+(z_j) - \tau_j^-\Phi^-(z_j),$ see Subsection~\ref{ss: V}.

To reconcile these two neutrality conditions $(\NC_b)$ on $C_{(b)}[\bfs\sigma^+,\bfs\sigma^-]$ and $(\NC_0)$ on $V^\odot[\bfs\tau^+,\bfs\tau^-],$
we need at least one marked point in $D \cup \pa D$ to place the ``background charge" there.
Let us consider the case of one marked point and denote by $q$ this marked point. 
It can be one of the nodes of $(\bfs\sigma^+,\bfs\sigma^-).$
The case $q\in D$ is called \emph{radial}, and the case $q\in\pa D$ \emph{chordal}.

\subsubsec{Standard chordal case} For a divisor $\bfs\sigma( = \bfs\sigma^++\bfs\sigma_*^-)$ on $S$ satisfying the neutrality condition $(\mathrm{NC}_b)$ we define $V[\bfs\sigma] \equiv V[\bfs\sigma^+,\bfs\sigma^-]$ by 
$$V[\bfs\sigma] = C_{(b)}[\bfs\sigma]~V^\odot[\bfs\sigma-2b\cdot q].$$
Sometimes we write $V_q$ or $V_{(b,q)}$ to indicate the location of the marked point and/or the background charge. 
Let us emphasize that the expectation of the multi-vertex (the correlation $C_{(b)}[\bfs\sigma]$) does not depend on $q$ but the Wick part does. 
Multi-vertex functionals/fields are differentials with the same conformal dimensions as in $C_{(b)}[\bfs\sigma].$

\subsubsec{Standard radial case}
For a divisor $\bfs\sigma( = \bfs\sigma^++\bfs\sigma_*^-)$ on $S$ satisfying the neutrality condition $(\mathrm{NC}_b)$ we define $V[\bfs\sigma] \equiv V[\bfs\sigma^+,\bfs\sigma^-]$ by 
$$V[\bfs\sigma] = C_{(b)}[\bfs\sigma]~V^\odot[\bfs\sigma-b\cdot q-b\cdot q^*].$$
Again, $V[\bfs\sigma]$ is $\Aut(D,q)$-invariant but its expectation $C_{(b)}[\bfs\sigma]$ is $\Aut(D)$-invariant.
(This is not the only way to satisfy neutrality condition in Wick's part of the functional.)

Both cases can be generalized to several marked points. 
Additional marked points $q_k$ are not necessarily on the boundary $\pa D.$
Suppose we have a background charge $\bfs\beta (= \bfs\beta^++\bfs\beta_*^-)$ on $S$ with the neutrality condition $(\mathrm{NC}_b).$ 
For a divisor $\bfs\sigma( = \bfs\sigma^++\bfs\sigma_*^-)$ on $S$ satisfying the neutrality condition $(\mathrm{NC}_b)$ we define $V_{\bfs\beta}[\bfs\sigma] \equiv V_{\bfs\beta^+,\bfs\beta^-}[\bfs\sigma^+,\bfs\sigma^-]$ by 
$$V_{\bfs\beta}[\bfs\sigma] = C_{(b)}[\bfs\sigma]~V^\odot[\bfs\sigma-\bfs\beta].$$

\subsection{Background charge operators} 
Let us introduce the background charge operators. 
For a background charge $\bfs\beta$ on $S,$ we define the background charge operator $\PP_{\bfs\beta}$ associated with $\bfs\beta$ by
$$\PP_{\bfs\beta} := V_{\bfs\beta}[\bfs\beta] = C_{(b)}[\bfs\beta].$$ 
Note that the Wick part of $V[\bfs\beta]$ is the constant field $1.$

\subsubsec{Standard chordal case} 
We write $\PP_q$ for $\PP_{2b\cdot q}= C_{(b)}[2b\cdot q].$
Then we would have $\PP_q = 1$ in all charts because
$$\lambda_q^+ =\lambda_b(2b) = 0$$
and clearly $\PP_q = 1$ in the $(\H,\infty)$-uniformization.

\subsubsec{Standard radial case} 
We write $\PP_q$ for $\PP_{b\cdot q+b\cdot q^*}= C_{(b)}[b\cdot q+b\cdot q^*].$
This is a non-random differential with dimensions
$$\lambda_q^+ = \lambda_q^- =\lambda_b(b) = -\frac{b^2}2$$
at $q.$ 
Note that $\PP_q= 1$ in the $(\D,0)$-uniformization.

\subsection{OPE exponentials} 
We use the background charge operators to modify the multi-vertex fields so that the OPE calculus of the modified multi-vertex fields has a very simple and natural form.
We call the modified multi-vertex fields the OPE exponentials; the reason for this terminology becomes apparent in the next section.

Suppose $\bfs\tau (= \bfs\tau^++\bfs\tau_*^-)$ is a divisor on $S$ satisfying the neutrality
condition $(\mathrm{NC}_0).$
For this divisor $\bfs\tau$ and a background charge $\bfs\beta(=\bfs\beta^++\bfs\beta_*^-)$ on $S$ (satisfying the neutrality condition $(\NC_b)$), we define the OPE exponentials (the modified multi-vertex fields) $\OO_{\bfs\beta}[\bfs\tau]\equiv \OO_{\bfs\beta^+,\bfs\beta^-}[\bfs\tau^+,\bfs\tau^-] $ by 
$$\OO_{\bfs\beta}[\bfs\tau]:= \PP_{\bfs\beta}^{-1}V_{\bfs\beta}[\bfs\tau + \bfs\beta]
= \frac{C_{(b)}[\bfs\tau + \bfs\beta]}{C_{(b)}[\bfs\beta]}~V^\odot [\bfs\tau].$$
We remark that all charges in the correlation
$$\E\,\OO_{\bfs\beta}[\bfs\tau] = \frac{C_{(b)}[\bfs\tau + \bfs\beta]}{C_{(b)}[\bfs\beta]}$$
interact except that background charges do not interact with each other.

\subsubsec{Standard radial case} 
If $\bfs\beta = b\cdot q + b\cdot q^*,$ then
$$\OO[\bfs\tau]\equiv\OO_{\bfs\beta}[\bfs\tau] = \frac{C_{(b)}[\bfs\tau + b\cdot q + b\cdot q^*]}{C_{(b)}[b\cdot q+b\cdot q^*]}~V^\odot [\bfs\tau].$$
We call $( \bfs\tau^+,\bfs\tau^-)$ the (double) divisor of Wick's exponents and $(\tau_q^++b,\tau_q^-+b)$ the \emph{effective} charges at $q.$ 
Thus the sum of exponents is zero but the sum of (effective) charges is $2b,$ i.e., $\bfs\tau$ satisfies $(\NC_{0})$ and $\bfs\tau+\bfs\beta$ satisfies $(\NC_{b}).$
Sometimes we omit the subscript $\bfs\beta$ in $\OO_{\bfs\beta}[\bfs\tau]$ when $\bfs\beta = b\cdot q+b\cdot q^*.$ 

\begin{prop} \label{prop: hq}
The conformal dimensions of $\,\OO[\bfs\tau]$ at $q$ are
$$h_q^\pm = \frac{(\tau_q^\pm)^2}2.$$
\end{prop}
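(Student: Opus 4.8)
The plan is to separate $\OO[\bfs\tau]$ into its non-random (Coulomb gas) factor and its Wick factor, to observe that the Wick factor carries no conformal weight at $q$, and then to read off the dimensions from Theorem~\ref{C under NCb} by an elementary computation with $\lambda_b$.

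First I would write out the definition in the standard radial case, where $\bfs\beta = b\cdot q + b\cdot q^*$:
$$\OO[\bfs\tau] = \PP_{\bfs\beta}^{-1}\,V_{\bfs\beta}[\bfs\tau+\bfs\beta] = \frac{C_{(b)}[\bfs\tau+\bfs\beta]}{C_{(b)}[\bfs\beta]}\,V^\odot[\bfs\tau].$$
Since $\bfs\tau$ satisfies $(\NC_0)$, the Wick exponential $V^\odot[\bfs\tau] = \ee^{\odot i\Phi[\bfs\tau]}$ is a well-defined $(0,0)$-differential (a scalar), as recalled in Subsection~\ref{ss: V}; in particular it has zero conformal dimension at every one of its nodes, including at $q$ in the case $q\in\supp\,\bfs\tau$. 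Consequently the transformation law of $\OO[\bfs\tau]$ in the variable $q$ is governed entirely by the non-random ratio $C_{(b)}[\bfs\tau+\bfs\beta]/C_{(b)}[\bfs\beta]$, so it suffices to compute the conformal dimensions of this ratio at $q$.

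Next I would invoke Theorem~\ref{C under NCb}: for a divisor satisfying $(\NC_b)$, the Coulomb gas correlation function $C_{(b)}[\bfs\sigma]$ is a differential with conformal dimensions $[\lambda_b(\sigma_j^+),\lambda_b(\sigma_j^-)]$ at each node $z_j$. Both $\bfs\tau+\bfs\beta$ and $\bfs\beta$ satisfy $(\NC_b)$, and at $q$ their (effective) charges are $(\tau_q^++b,\ \tau_q^-+b)$ and $(b,b)$ respectively; hence
$$h_q^\pm = \lambda_b(\tau_q^\pm+b) - \lambda_b(b).$$
Finally, from $\lambda_b(\sigma) = \sigma^2/2 - \sigma b$ one gets the identity $\lambda_b(\sigma+b)-\lambda_b(b) = \sigma^2/2$ (both $\lambda_b(\sigma+b)$ and $\lambda_b(b)$ produce the constant $-b^2/2$ up to the $\sigma^2/2$ term), which yields $h_q^\pm = (\tau_q^\pm)^2/2$, as claimed. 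There is no serious obstacle in this argument; the only step requiring a moment's care is the assertion that $V^\odot[\bfs\tau]$ is genuinely weightless at $q$ even when $\tau_q^\pm\neq 0$, which is exactly the scalarity of Wick exponentials of $(\NC_0)$-bosonic fields.
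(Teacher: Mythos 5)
Your proposal is correct and follows essentially the same route as the paper: factor $\OO[\bfs\tau]$ into the Coulomb gas ratio $C_{(b)}[\bfs\tau+\bfs\beta]/C_{(b)}[\bfs\beta]$ times the scalar Wick exponential $V^\odot[\bfs\tau]$, read off the dimensions $[\lambda_b(\tau_q^\pm+b)]$ and $[\lambda_b(b)]$ at $q$ from Theorem~\ref{C under NCb}, and compute $h_q^\pm=\lambda_b(\tau_q^\pm+b)-\lambda_b(b)=(\tau_q^\pm)^2/2.$ Your explicit remark that the Wick factor is weightless at $q$ even when $\tau_q^\pm\neq0$ is exactly the paper's appeal to the scalarity of $V^\odot[\bfs\tau^+,\bfs\tau^-]$, so there is nothing to add.
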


\begin{proof}
The Coulomb gas correlations $C_{(b)}[\bfs\tau^+ + b\cdot q,\bfs\tau^- + b\cdot q], C_{(b)}[b\cdot q,b\cdot q]$ have conformal dimensions $[\lambda_b(\tau_q^++b),\lambda_b(\tau_q^-+b)],$ $[\lambda_b(b),\lambda_b(b)],$ respectively at $q.$ 
Since Wick's exponentials $V^\odot [\bfs\tau^+,\bfs\tau^-]$ are scalars or $(0,0)$-differentials, we have 
$$h_q^\pm = \lambda_b(\tau_q^\pm+b)- \lambda_b(b) = \frac{(\tau_q^\pm)^2}2.$$
\end{proof} 

We call the numbers $\lambda_b(\tau_q^\pm + b)$ \emph{effective} conformal dimensions at $q$; they are dimensions of $\PP_q\,\OO[\bfs\tau] = V[\bfs\tau+\bfs\beta].$
As we have just seen,
$$\lambda_b(\tau_q^\pm+b) = h_q^\pm - \frac{b^2}2.$$

\begin{prop} \label{EO}
In the $(\D,0)$-uniformization, we have
$$\E\,\OO[\bfs\tau] = \prod_j (z_j)^{\nu_j^+}(\bar z_j)^{\nu_j^-} \prod_{j<k}(z_j-z_k)^{\tau_j^+\tau_k^+}(\bar z_j-\bar z_k)^{\tau_j^-\tau_k^-}\prod_{j,k} (1-z_j\bar z_k)^{\tau_j^+\tau_k^-},$$
where $\nu_j^\pm = \tau_j^\pm(\tau_q^\pm+b)$ and $z_j$'s are non-zero nodes.
\end{prop}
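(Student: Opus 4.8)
The plan is to reduce the statement to Theorem~\ref{thm: C in D}. Since here $\OO[\bfs\tau]=\OO_{\bfs\beta}[\bfs\tau]$ with the standard radial background charge $\bfs\beta=b\cdot q+b\cdot q^*$, the definition of the OPE exponential gives $\E\,\OO[\bfs\tau]=C_{(b)}[\bfs\tau+\bfs\beta]\big/C_{(b)}[\bfs\beta]$. In the $(\D,0)$-uniformization we have $q=0$ and $q^*=\infty$, and $C_{(b)}[\bfs\beta]=\PP_q=1$ by the computation in the standard radial case above. Hence it suffices to evaluate the differential $\{C_{(b)}[\bfs\tau+\bfs\beta]\,\|\,\id_\D\}$.

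I would then write $\bfs\tau+\bfs\beta$ as the double divisor $(\bfs\tau^++b\cdot q,\ \bfs\tau^-+b\cdot q)$; it satisfies $(\NC_b)$ because $\bfs\tau$ satisfies $(\NC_0)$, so Theorem~\ref{thm: C in D} applies and yields
$$\{C_{(b)}[\bfs\tau+\bfs\beta]\,\|\,\id_\D\}=C_\D[\bfs\tau^++b\cdot q,\ \bfs\tau^-+b\cdot q],$$
the right-hand side being the explicit triple product of Theorem~\ref{thm: C in D} in which the charges $\sigma_j^\pm$ equal $\tau_j^\pm$ at the nonzero nodes $z_j$ and equal $\tau_q^\pm+b$ at the node $z_0:=q=0$. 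The remaining work is to separate off the factors carrying the index $0$: the factors $(1-z_0\bar z_k)^{\sigma_0^+\sigma_k^-}$ and $(1-z_j\bar z_0)^{\sigma_j^+\sigma_0^-}$ are all $1$; the factors $(z_0-z_k)^{\sigma_0^+\sigma_k^+}$ and $(\bar z_0-\bar z_k)^{\sigma_0^-\sigma_k^-}$ contribute $z_k^{\nu_k^+}$ and $\bar z_k^{\nu_k^-}$ with $\nu_k^\pm=\tau_k^\pm(\tau_q^\pm+b)$ (using $0-z_k=-z_k$); and the factors whose both indices are nonzero nodes reproduce exactly $\prod_{j<k}(z_j-z_k)^{\tau_j^+\tau_k^+}(\bar z_j-\bar z_k)^{\tau_j^-\tau_k^-}\prod_{j,k}(1-z_j\bar z_k)^{\tau_j^+\tau_k^-}$. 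Collecting these gives the claimed formula.

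The only step requiring any care is the branch bookkeeping: $C_\D$ is a multivalued differential, so the sign arising from $0-z_k=-z_k$ at the node $z_0=0$ has to be reconciled with the convention fixing the branch of $C_\D$; as with the Examples following Theorem~\ref{thm: C in H}, this is the standard ``up to a phase'' issue and involves no new input. There is no genuine obstacle beyond this and the reduction using $\PP_q=1$ in the $(\D,0)$-uniformization, which is why the argument is essentially a direct application of Theorem~\ref{thm: C in D}, parallel to the proof of Proposition~\ref{prop: hq}.
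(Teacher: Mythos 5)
Your argument is correct and is essentially the paper's own proof: reduce to $\E\,\OO[\bfs\tau]=C_{(b)}[\bfs\tau+b\cdot q+b\cdot q^*]$ using $\PP_q=1$ in $(\D,0)$ and $\E\,V^\odot[\bfs\tau]=1$, then evaluate via Theorem~\ref{thm: C in D}; your explicit separation of the factors at the node $q=0$ (and the harmless phase from the ordering convention) is exactly the step the paper leaves implicit.
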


\begin{proof}
Since $\PP_q= 1$ in the $(\D,0)$-uniformization and $\E\,V^\odot [\bfs\tau]=1,$
we have 
$$\E\,\OO[\bfs\tau] = C_{(b)}[\bfs\tau + b\cdot q+ b\cdot q^*].$$
Proposition now follows from Theorem~\ref{thm: C in D}.
We have $\E\,\OO[\bfs\tau] = C_\D[\bfs\tau^+ + b\cdot q,\bfs\tau^- + b\cdot q]$ in the identity chart of $\D.$
\end{proof}

In other words, $\E\,\OO[\bfs\tau]$ in $(\D,0)$ is given by the usual Coulomb gas correlation function for effective charges.

\subsubsec{Standard chordal case} 
If $\bfs\beta = 2b\cdot q,$ then
$$\OO[\bfs\tau]\equiv\OO_{\bfs\beta}[\bfs\tau] := V[\bfs\tau + 2b\cdot q] = C_{(b)}[\bfs\tau + 2b\cdot q ]\,V^\odot[\bfs\tau].$$
Recall that $\PP_q = 1$ in all charts. 
The (effective) charge at $q$ is $\tau_q^++ 2b,$ so the sum of (effective) charges is $2b.$
Sometimes we omit subscript $\bfs\beta$ in $\OO_{\bfs\beta}[\bfs\tau]$ when $\bfs\beta =2b\cdot q.$ 
The conformal dimension of $\OO[\bfs\tau] $ at $q$ is
$$h_q^+ = \lambda_b(\tau_q^++ 2b) = \lambda_b(-\tau_q^+) = \frac{(\tau_q^+)^2}2+\tau_q^+ b.$$

\begin{rmk*}
In the standard chordal case, 
there is a 1-to-1 correspondence between the double divisors $(\bfs\sigma^+,\bfs\sigma^-)$ in $\overbar{D} \setminus\{q\}$ and the double divisors $( \bfs\tau^+,\bfs\tau^-)$ in $\overbar{D}$ satisfying $(\mathrm{NC}_0)$ with $\tau_q^-=0:$
$$\bfs\sigma^+ = \bfs\tau^+ - \tau_q^+\cdot q, \quad \bfs\sigma^-= \bfs\tau^-, \quad \tau_q^+ = -\int (\bfs\sigma^++\bfs\sigma^-).$$ 
In \cite{KM13} we use the notation 
$$\Sigma = \int (\bfs\sigma^++\bfs\sigma^-),$$
so $\Sigma = -\tau_q^+$ (by $(\mathrm{NC}_0)$) and $h_q =\lambda_b(\Sigma).$
Also, in \cite{KM13} we use the notation
$$\OO^{(\sigma_1^+,\sigma_1^-)}(z_1) \star \cdots \star\OO^{(\sigma_n^+,\sigma_n^-)}(z_n)$$
(which can be shortened to $\OO^{(\bfs\sigma^+,\bfs\sigma^-)}$) instead of our present notation $\OO[\bfs\tau^+,\bfs\tau^-]$ for the OPE exponentials.
\end{rmk*}

\subsection{Example: one-leg operators} \label{ss: Example: one-leg operators}
As a special case of Theorem~\ref{main}, under the insertion of Wick's part $\leg_p/\E\,\leg_p$ of the radial one-leg operator $\leg_p$ (rooted at $q$) with $p\in\pa D$ all correlation functions of the fields in the extended OPE family of $\Phi_{\bfs\beta}$ ($\bfs\beta=b\cdot q+b\cdot q^*$) are radial $\SLE(\kappa)$ martingale-observables (see Subsection~\ref{ss: Phib} for the modified Gaussian free $\Phi_{\bfs\beta}$ and Subsection~\ref{ss: extended OPE family} for its extended OPE family) if the parameters $a$ (the charge of $\leg_p$ at $p$) and $b$ are related to the SLE parameter $\kappa$ as 
$$a = \pm\sqrt{2/\kappa},\qquad b = a(\kappa/4-1).$$

\no (a) We define the \emph{radial one-leg operator} $\leg$ (rooted at $q$) by 
$$\leg_z\equiv\leg(z):=\OO\big[a\cdot z - \frac{a}2\cdot q- \frac{a}2\cdot q^*\big].$$
Its conformal dimensions $\lambda_z$ at $z$ and $h_q^\pm$ at $q$ are
\begin{equation} \label{eq: h12}
\lambda_z = h \equiv h_{1,2} := \frac{a^2}2-ab = \frac{6-\kappa}{2\kappa} 
\end{equation}
and 
$$h_q^+ = h_q^- = \frac{a^2}8, \qquad H_q (:= h_q^+ + h_q^-) = \frac{a^2}4.$$
The effective charges at $q$ are $(b-a/2,b-a/2)$ and the effective dimensions at $q$ are
\begin{equation} \label{eq: h01/2}
H_q^\eff(\leg)=2h_{0,1/2},\quad h_{0,1/2}:=\frac{a^2}8-\frac{b^2}2 = \frac{(\kappa-2)(6-\kappa)}{16\kappa}.
\end{equation}
Indeed, $H_q^\eff(\leg)=H_q(\leg^\eff),$ where $\leg^\eff$ is the ``effective" one-leg operator 
$$\leg^\eff=\PP_q\,\leg = V\big[a\cdot z +\big(b- \frac{a}2\big)\cdot q+\big(b- \frac{a}2\big)\cdot q^*\big].$$
In $(\D,0)$ we have 
$\E\,\leg(z) =\E\,\leg^\eff(z) = z^{a(b-a/2)} = z^{-h},$
so in $(D,q)$
$$\E\,\leg(z) = \Big(\frac{w'(z)}{w(z)}\Big)^{h} |w_q'|^{H_q} ,$$ 
where $w: (D,q)\to(\D,0)$ is a conformal map and $w_q' = w'(q).$ 

\ss \no (b) Let $\eta\in\R$ be the parameter in the definition of radial $\SLE_\eta(\kappa,\bfs\rho),$ see \eqref{eq: d theta}. 
We define the \emph{radial one-leg operator} $\leg_{(s)}$ with spin $s = i\eta a^2/2$ by 
$$\leg_{(s)}(z):=\OO\big[a\cdot z - \frac{a+i\delta}2\cdot q - \frac{a-i\delta}2\cdot q^*\big],$$
where $\delta = \eta a.$
Its conformal dimensions $h_q^\pm$ at $q$ are
$$h_q^\pm = \frac{(a\pm i\delta)^2}8,$$ 
so the spin $s:=h_q^+ - h_q^-$ and the conformal dimension $H_q := h_q^+ + h_q^-$ at $q$ are 
$$s = \frac{ia\delta}2,\qquad H_q = \frac{a^2+\delta^2}4.$$
In $(\D,0)$ we have 
$\E\,\leg_{(s)}(z) =\E\,\leg_{(s)}^\eff(z)= z^{a(b-a/2-i\delta/2)} = z^{-h-s},$ and
$$\kappa \pa_\theta \log \frac{\E\,\leg_{(s)}^\eff(\ee^{i\theta})}{\E\,\leg^\eff(\ee^{i\theta})} = \eta.$$
See \eqref{eq: driving for radial SLE[beta]} and \eqref{eq: d theta}. 
Later we define the partition function in terms of the correlation function of the effective one-leg operator.

\ss \no (c) We define the \emph{chordal one-leg operator} $\leg$ by 
$$\leg(z):=\OO\big[a\cdot z - a\cdot q].$$
Its conformal dimension $\lambda_z$ at $z$ and effective dimension $h_q^\eff$ at $q$ are
$$\lambda_z =h, \qquad h_q^\eff = \lambda_b(2b-a) = \lambda_b(a) = h.$$
In $(\H,\infty)$ we have 
$\E\,\leg^\eff(z) = 1,$
so
$\E\,\leg^\eff(z) = (w')^h(w_q')^h$ in $(D,q).$

\subsection{Algebra of multi-vertex functionals and OPE exponentials}
In the general case, given a background charge $\bfs\beta$ we define the multiplication of OPE exponentials by 
$$\OO_{\bfs\beta}[\bfs\tau_1]\OO_{\bfs\beta}[\bfs\tau_2]=\OO_{\bfs\beta}[\bfs\tau_1+\bfs\tau_2].$$
In the next section we relate this operation to OPE/tensor multiplication.
Multiplication of OPE exponentials is commutative and associative (if we ignore the order of particles).
Formally, we have the following representation
$$\OO_{\bfs\beta}[\bfs\tau^+,\bfs\tau^-] = \OO_{\bfs\beta}^{(\tau_1^+)}(z_1)\overline{\OO_{\bfs\beta}^{(\bar\tau_1^-)}(z_1)}\cdots\OO_{\bfs\beta}^{(\tau_n^+)}(z_n)\overline{\OO_{\bfs\beta}^{(\bar\tau_n^-)}(z_n)},$$
where $\bar\tau_j^- = \overline{\tau_j^-}.$

\section{Background charge modifications of Gaussian free field} \label{sec: GFF_beta}
In this section we discuss background charge modifications of the Gaussian free field in a simply-connected domain $D$ with marked boundary/interior points.
A special type of such modifications with a marked boundary point appeared in \cite{RBGW07} and \cite{KM13} to present their connections to chordal SLE theory.
Similar constructions had been well known both in the physics literature and in the algebraic literature. 
For example, see \cite[Chapter~9]{DFMS97} for Coulomb gas formalism, and \cite{KR87} for Fairlie's modifications of Virasoro generators.

The background charge modifications of the radial conformal field theory with a marked interior point $q$ under our consideration equip the Gaussian free field with the additive monodromy around the puncture $q\in D.$
We define the OPE exponentials of modified bosonic fields with nodes in $D^*=D\sm\{q\}$ as differentials in the OPE family.
The OPE exponentials extend to the puncture through operator product expansion with the constant field $1_q$ or the rooting procedure. 
We discuss these extensions in the next section.

\subsection{Modification of Gaussian free field on the Riemann sphere} \label{ss: Psib}

In this subsection we borrow the concepts of the background charge modifications of Gaussian free field on a compact Riemann surface from \cite{KM17}. Later we adapt them in a simply-connected domain employing Schottky double construction.

Recall that a non-random field $\psi$ is called a \emph{pre-pre-Schwarzian form} of order $(\mu,\nu)$ (or $\PPS(\mu,\nu)$) if the transformation law is
$$\psi = \ti\psi\circ h + \mu \log h' + \nu \log \overline{h'},$$ 
where $\psi = (\psi\,\|\,\phi)$ in a chart $\phi,$ $\ti\psi = (\psi\,\|\,\ti\phi)$ in a chart $\ti\phi,$ and $h$ is the transition map between two overlapping charts $\phi,\ti\phi.$
We consider a holomorphic/harmonic PPS form $\psi$ such that $\pa \psi$ is meromorphic and $\bfs\beta\equiv\bfs\beta_\psi:= i/\pi \bp\pa \psi$ is a finite linear combination of $\delta$-measures, $\bfs\beta = \sum_k \beta_k \delta_{q_k}.$ 
Such $\psi$ is called a \emph{simple} PPS form. 
We often write $\bfs\beta_\psi = \sum_k \beta_k \cdot q_k$ as a divisor and call it the \emph{background charge} of $\psi.$
\begin{prop} \label{Gauss-Bonnet}
On a compact Riemann surface $S$ of genus zero, we have the neutrality condition $(\NC_b)$ for the background charge $\bfs\beta_\psi$ of a simple $\PPS(ib,0)$ form $\psi:$ 
\begin{equation}\label{eq: NCb}
\sum_k \beta_k = 2b.
\end{equation}
\end{prop}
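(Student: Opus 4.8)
The plan is to integrate the curvature form associated with $\psi$ over $S$ and apply a Gauss--Bonnet-type argument. Since $\psi$ is a $\PPS(ib,0)$ form, the quantity $\pa\psi$ transforms as $\pa\psi = \widetilde{\pa\psi}\circ h \cdot h' + ib\, N_h$ where $N_h = (\log h')'$; consequently $\bp\pa\psi$, which a priori looks like a $(1,1)$-current, patches together into a globally defined object on $S$ up to the $\bp\pa\log h'$ correction. The first step is to make precise that, away from the points $q_k$, the local representatives of $\frac{i}{\pi}\bp\pa\psi$ agree on overlaps, so that $\bfs\beta_\psi$ is genuinely a well-defined current on $S$; this uses $\bp\pa\log h' = 0$ on the overlap region where $h'$ is holomorphic and nonvanishing. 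Near each $q_k$ the form $\psi$ has a logarithmic singularity of coefficient $\beta_k$, which is precisely what produces the atom $\beta_k\delta_{q_k}$.

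Next I would fix a $\wh\C$-uniformization $\phi: S \to \wh\C$ and compute directly in the identity chart of $\C$. In that chart $\psi$ is a harmonic function with logarithmic poles, so $\pa\psi$ is a meromorphic $1$-form (in the $\C$-chart) with simple poles at the finite $q_k$'s of residue proportional to $\beta_k$, plus whatever happens at $\phi^{-1}(\infty)$. The key computation is to evaluate $\frac{i}{\pi}\int_{\wh\C}\bp\pa\psi$: away from the poles this integrand vanishes (since $\pa\psi$ is holomorphic there), so by Stokes/the residue theorem the integral localizes to a sum of residue contributions at the $q_k$'s, contributing $\sum_k \beta_k$, together with a boundary-at-infinity contribution. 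The transformation law of a $\PPS(ib,0)$ form at $\infty$ — where the transition map to the chart $z\mapsto -1/z$ contributes $ib\, N_h$ — is exactly what forces the value $2b$ for the total; concretely, $\psi$ in the $-1/z$ chart near $\infty$ differs from $\psi$ in the $\C$-chart by $ib\log h' = ib\log(1/z^2) + \text{const}$, giving the asymptotic $\psi \sim 2b\log|z|$ as $z\to\infty$, and the corresponding ``residue at infinity'' contributes $-2b$ relative to the finite poles. Equating the total to zero yields $\sum_k \beta_k = 2b$.

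The cleanest route may be to phrase this as: the curvature of the flat metric $e^{-2\psi/(2b)}|dz|^2$ (or rather, since $\psi$ has order $(ib,0)$, the appropriate real combination $e^{-\Re(\psi)/b}$ type metric) has conical singularities at the $q_k$ with cone angles $2\pi(1-\beta_k/b)$ or similar, and the classical Gauss--Bonnet theorem $\int K\, dA + \sum(\text{angle defects}) = 2\pi\chi(S) = 4\pi$ for genus zero gives the relation after bookkeeping the normalization. I would actually carry out the bare residue calculation of the previous paragraph rather than invoke a conical Gauss--Bonnet theorem as a black box, since the residue version is self-contained and the normalization constants are easy to track.

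The main obstacle I anticipate is purely bookkeeping: getting the normalization constants and signs right — in particular the factor relating $\bfs\beta = \frac{i}{\pi}\bp\pa\psi$ to residues of $\pa\psi$, and correctly accounting for the contribution at $\phi^{-1}(\infty)$ under the chart change $z\mapsto -1/z$ for which the $N_h$ term in the $\PPS$ law is essential. There is no deep difficulty; the subtlety is ensuring that the ``charge at infinity'' is handled by the transformation law of $\psi$ rather than by an extra ad hoc term, so that the neutrality condition emerges with the correct right-hand side $2b$ and not $0$ or $4b$.
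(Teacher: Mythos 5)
Your argument is correct, and it rests on the same underlying fact as the paper's proof --- the vanishing of the sum of residues of a meromorphic $1$-form on a genus-zero surface --- but you organize it differently. The paper does not single out a point of $S$: it compares $\varphi=\psi/(ib)$ with the reference $\PPS(1,1)$ form $\varphi_*=\log\rho$, where $\rho=|\omega_q|^2$ is a conformal metric with a sole double pole and $\int_S\bp\pa\log\rho=-\pi\chi(S)=-2\pi$, and then applies the residue theorem to the globally defined meromorphic differential $\pa(\varphi-\varphi_*)$; the constant $2b$ thus enters through the Euler characteristic, which is exactly what makes the Gauss--Bonnet interpretation (and the generalization to other topologies, where $\chi(S)$ changes) transparent. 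You instead fix a $\wh\C$-uniformization and apply the residue theorem directly to the chart representative of $\pa\psi$, extracting the $2b$ from the term $ib\log h'=-2ib\log z$ produced by the $\PPS(ib,0)$ transformation law under $z\mapsto-1/z$; this is more elementary and needs no auxiliary metric, but it is chart-dependent and the bookkeeping you flag is real: with the paper's normalization one has $\psi\sim-2ib\log z$ at infinity (not $2b\log|z|$) when no charge sits there, the residue of $\pa\psi$ at a finite $q_k$ is $-i\beta_k$, and the residue at infinity is $2ib-i\beta_\infty$ with $\beta_\infty$ the charge (if any) at $\phi^{-1}(\infty)$, so the residue theorem indeed yields $\sum_k\beta_k=2b$ with the charge at infinity handled by the transformation law rather than ad hoc. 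Both routes are valid proofs of the proposition; the paper's buys uniformity and the explicit appearance of $\chi(S)$, yours buys a self-contained residue computation on $\wh\C$.
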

It is a consequence of a version of Gauss-Bonnet theorem, e.g., see \cite[Corollary~6.3]{KM17}.
For the reader's convenience, we present its proof.

\begin{proof}[Proof of Proposition~\ref{Gauss-Bonnet}]
Let us choose a conformal metric $\rho = |\omega_q|^2$ (a positive $(1,1)$-differential) on $S,$ where $\omega_q$ is a meromorphic differential with a sole double pole at $q \in S.$ 
In the identity chart of $\C$ with $q=0,$ one can take $\rho(z) = |z|^{-4},$ so
$\bp \pa\log \rho = -2\pi \delta_0$
and 
$$\int_S \bp\pa \log \rho = -2\pi = -\pi \chi(S),$$
where $\chi(S)$ is the Euler characteristic of $S.$
Let $\varphi = \psi_{\bfs\beta}/(ib).$ 
Then $\varphi$ is a simple $\PPS(1,0)$ form. 
We now consider the harmonic $\PPS(1,1)$ form, $\varphi_* = \log \rho = \log |\omega_0|^2.$ 
Proposition now follows from 
\begin{equation} \label{eq: int PPS}
\int_S\bp\pa(\varphi-\varphi_*) = 0
\end{equation}
since
$$\int\bfs\beta = \frac i\pi \int\bp \pa\psi_{\bfs\beta} = \frac i\pi ib \int\bp \pa\varphi = \frac i\pi ib \int\bp \pa\varphi_* = 2b . $$
By means of Green's theorem, the integral in \eqref{eq: int PPS} is the sum of all residues of the meromorphic differential $\pa(\varphi-\varphi_*)/(2i).$ 
It is well known that such a sum vanishes. 
\end{proof}

Given a background charge $\bfs\beta = \sum_k \beta_k \cdot q_k$ with the neutrality condition $(\NC_b)$, there is a unique (up to an additive constant) simple $\PPS(ib,0)$ form $\psi_{\bfs\beta}^+$ with the background charge $\bfs\beta,$
$$\psi_{\bfs\beta}^+ = i\sum_k\beta_k \psi_{q_k}^+,$$
where 
$\psi_{q}^+ = \frac12 \log \omega_q$ and $\omega_q$ is a meromorphic differential with a sole double pole at $q \in S.$ 
On the Riemann sphere $\wh\C,$ $\omega_\infty = 1$ and $\omega_q = 1/(z-q)^2$ in the identity chart of $\C.$ 
In terms of a uniformization $w:S\to\wh\C,$ 
$$\psi_{q_k}^+(z) = 
\begin{cases} \frac12 \log w'(z) - \log(w(z)-w(q_k)),\qquad &w(q_k) \ne \infty \\ \frac12 \log w'(z), &w(q_k) = \infty.\end{cases}$$

We now define the background charge modifications of the formal bosonic fields. 
Given two background charges 
$\bfs\beta^\pm = \sum_k \beta_k^\pm\cdot q_k$ with the neutrality conditions 
$\sum\beta_k^\pm = 2b^\pm,$ we define 
$$\Psi_{\bfs\beta^\pm}^\pm = \Psi^\pm + \psi_{\bfs\beta^\pm}^\pm, \qquad \psi_{\bfs\beta^-}^- = \overline{\psi_{\overbar{\bfs\beta^-}}^+}, \qquad \Psi_{\bfs\beta^+,\bfs\beta^-} = \Psi_{\bfs\beta^+}^+ +\Psi_{\bfs\beta^-}^-, \qquad \psi_{\bfs\beta^+,\bfs\beta^-} = \psi_{\bfs\beta^+}^+ +\psi_{\bfs\beta^-}^-$$
so that $\Psi_{\bfs\beta^+,\bfs\beta^-} = \Psi+ \psi_{\bfs\beta^+,\bfs\beta^-}.$
Then $\Psi_{\bfs\beta^+,\bfs\beta^-}$ is a $\PPS(ib^+,-ib^-)$ form. 
In the $\C$-uniformization, we have 
\begin{align*}
\psi_{\bfs\beta^+,\bfs\beta^-}(z) &= i\sum_{k}\beta_k^+\log\frac1{z-q_k} - i\sum_{k} \beta_k^-\log\frac1{\bar z -\bar q_k}\\
&= i \sum_{k}(\beta_k^+-\beta_k^-)\log\frac1{|z-q_k|} + \sum_{k}(\beta_k^++\beta_k^-)\arg(z-q_k). 
\end{align*}
Here the summation is taken over all $k$'s but $q_k = \infty.$
In particular, if $\bfs\beta^+ = \overline{\bfs\beta^-} \equiv \bfs\beta,$ then 
\begin{equation} \label{eq: psi beta}
\psi_{\bfs\beta,\overline{\bfs\beta}}(z) = -2 \sum_k \Im\,\beta_k \log\frac1{|z-q_k|} + 2\sum_k \Re\, \beta_k \arg(z-q_k)
\end{equation}
in the $\C$-uniformization.

For two divisors $\bfs\tau^\pm = \sum \tau_j^\pm\cdot z_j$ satisfying the neutrality condition $(\NC_0),$ we define 
$$\Psi_{\bfs\beta^+,\bfs\beta^-}[\bfs\tau^+,\bfs\tau^-]:= \Psi_{\bfs\beta^+}^+[\bfs\tau^+]-\Psi_{\bfs\beta^-}^-[\bfs\tau^-],$$
where 
$$\Psi_{\bfs\beta^+}^+[\bfs\tau^+]:=\sum_j \tau_j^+\Psi_{\bfs\beta^+}^+(z_j),\qquad\Psi_{\bfs\beta^-}^-[\bfs\tau^-]:= \sum_j \tau_j^-\Psi_{\bfs\beta^-}^-(z_j).$$
Then we have 
$$\Psi_{\bfs\beta^+,\bfs\beta^-}[\bfs\tau^+,\bfs\tau^-]=\Psi[\bfs\tau^+,\bfs\tau^-] + i\sum_{j,k}\tau_j^+\beta_k^+\log\frac1{z_j-q_k} + i\sum_{j,k} \tau_j^-\beta_k^-\log\frac1{\bar z_j -\bar q_k}$$
in the $\C$-uniformization.

Sometimes it is convenient to write $\bfs\beta = [\bfs\beta^+,\bfs\beta^-],$ $\bfs\tau = [\bfs\tau^+,\bfs\tau^-],$ and 
$$\Psi_{\bfs\beta}[\bfs\tau]=\Psi_{\bfs\beta^+,\bfs\beta^-} [\bfs\tau^+,\bfs\tau^-].$$

\subsection{Stress tensors and Virasoro fields}

For the reader's convenience, we briefly review the definitions of a stress tensor and the Virasoro field. See \cite[Lectures~4 and 5]{KM13} for more details. Suppose $A^+$($A^-,$ respectively) is a Fock space \emph{holomorphic} (\emph{anti-holomorphic}, respectively) quadratic differential. 
Recall that a Fock space field $X$ is (anti-)holomorphic if the correlation $z\mapsto \E\,X(z)\XX$ is (anti-)holomorphic in the complement of $S_\XX$ for any tensor product $\XX$ of the Gaussian free fields.
Let $v$ be a non-random holomorphic vector field defined in some neighborhood of $\zeta.$
We define the residue operator $A_v^+$($A_v^-$, respectively) as an operator on Fock space fields:
$$
(A_v^+X)(z)= \frac1{2\pi i}\oint_{(z)} vA^+\,X(z), \qquad
(A_v^-X)(z)= -\frac1{2\pi i}\oint_{(z)} \bar vA^-\,X(z)
$$
in a given chart $\phi,\;\phi(\zeta)=z.$
A pair $W=(A^+,A^-)$
is called a \emph{stress tensor} for a Fock space field $X$ if for all non-random local vector fields $v,$ the so-called ``residue form of Ward's identity"
\begin{equation} \label{eq: LvWv}
\LL_vX= A^+_vX+ A^-_vX
\end{equation}
holds in the maximal open set $D_\hol (v)$ where $v$ is holomorphic. 
We recall the definition of Lie derivatives (see \cite[Section~3.4]{KM13}):
$$(\LL_v X\,\|\, \phi) = \frac{\dd}{\dd t}\Big|_{t=0} (X\,\|\, \phi\circ\psi_{-t}),$$
where $\psi_t$ is a local flow of $v,$ and $\phi$ is an arbitrary chart. 
The Lie derivative operator $\LL_v$ depends $\R$-linearly on vector fields $v$ and it is convenient to consider its $\C$-linear part $\LL_v^+$ and anti-linear part $\LL_v^-:$ 
$$2\LL_v^+ = \LL_v-i\LL_{iv},\qquad 2\LL_v^- = \LL_v+i\LL_{iv}.$$
The $\C$-linear part and the anti-linear part are related as $\LL_v^- = \overline{\LL_v^+}$ and the conjugation means that 
$\overline{\LL_v^+}X = \overline{\LL_v^+ \bar X}.$
If $X$ is a $(\lambda,\lambda_*)$-differential, then 
\begin{equation} \label{eq: LvDiff}
\LL_v^+X = (v\pa + \lambda v')X, \qquad \LL_v^-X = (\bar v\bp + \lambda_* \overline{v'})X;
\end{equation}
if $X$ is a $\PPS(\mu^+,\mu^-)$ form, then
\begin{equation} \label{eq: LvPPS}
\LL_v^+X = v\pa X + \mu^+ v', \qquad \LL_v^-X = \bar v\bp X + \mu^- \overline{v'};
\end{equation}
if $X$ is a pre-Schwarzian form of order $\mu,$ then
\begin{equation} \label{eq: Lie4PS-form}
\LL_vX=\left(v\pa+v'\right)X +\mu v'';
\end{equation}
if $X$ is a Schwarzian form of order $\mu,$ then
\begin{equation} \label{eq: Lie4S-form}
\LL_vX=\left(v\pa+2v'\right)X +\mu v'''.
\end{equation}

We denote by $\FF(W)\equiv\FF(A^+,A^-)$ \emph{Ward's family} of $W,$ the linear space of all Fock space fields $X$ with a stress tensor $W$ in common. 
If $\FF(A^+,A^-)$ is closed under complex conjugation, we can choose $A^+ = A, A^-=\bar A.$ 
In this symmetric case, the Lie derivatives operators $\LL^\pm_v$ acts on $\FF(A,\bar A)$ as the residue operators $A^\pm_v.$
In this case $X\in\FF(A,\bar A)$ if and only if the following Ward's OPEs hold in every local chart $\phi:$
\begin{equation}\label{eq: Ward's OPEs}
\Sing_{\zeta\to z}[A(\zeta)X(z)]= (\LL_{k_\zeta}^+X)(z), \quad \Sing_{\zeta\to z}[A(\zeta)\bar X(z)]= (\LL_{k_\zeta}^+\bar X)(z),
\end{equation}
where $\Sing_{\zeta\to z}$ means the singular part of the operator product expansion in chart $\phi$ as $\zeta\to z$ and $k_\zeta$ is the local vector field given by 
$$(k_\zeta\|\phi)(\eta) = \frac1{\zeta-\eta}.$$
See \cite[Proposition~5.3]{KM13}.
We often use the notation $\sim$ for the singular part of the operator product expansion.

For example, in a simply-connected domain, the Gaussian free field $\Phi$ (with Dirichlet boundary condition) has a stress tensor 
$$(A,\bar A),\qquad A = -\frac 12 J\odot J, \qquad J = \pa \Phi.$$
It follows from Ward's OPE
$$A(\zeta)\Phi(z) \sim \frac{\pa\Phi(z)}{\zeta-z},\qquad \zeta\to z.$$
It is well known that Ward's family is closed under the OPE product $*$ and differentiations, see \cite[Proposition~5.8]{KM13}.
This fact implies that $(A,\bar A)$ is also a stress tensor for $J$ and $T:=-\frac12 J*J.$ 

On the Riemann sphere, the current fields $J, \bar J$ are defined as well-defined single-variable Fock space fields,
$$J(z) = \pa_z\Psi(z,z_0), \qquad \bar J(z) = \bp_z\Psi(z,z_0).$$
It is easy to see that this definition does not depend on the choice of $z_0.$
Furthermore, $J$ is holomorphic and $\bar J$ is anti-holomorphic. 
As in a simply-connected domain, the Gaussian free field $\Psi$ has a stress tensor 
$$(A,\bar A),\qquad A = -\frac 12 J\odot J.$$
In this case, Ward's OPE reads as
$$A(\zeta)\Psi(z,z_0) \sim \frac{\pa_z\Psi(z,z_0)}{\zeta-z},\qquad \zeta\to z.$$

A pair $(T^+,T^-)$ of Fock space fields is called the Virasoro pair for the family $\FF(A^+,A^-)$ if 
$T^\pm \in \FF(A^+,A^-)$ and if $T^+-A^+, \overline{T^--A^-}$ are non-random holomorphic (or meromorphic with poles where background charges are placed, see Section~\ref{sec: GFF_beta}) Schwarzian forms.
In the symmetric case, $T$ is called the \emph{Virasoro field} for Ward's family $\FF(A,\bar A).$ 
Both in a simply-connected domain and on the Riemann sphere, the Virasoro field $T$ is given by 
$$T = -\frac12 J*J $$
or by the operator product expansion
$$J(\zeta) J(z) = -\frac1{(\zeta-z)^2} -2T(z) + o(1), \qquad \zeta\to z.$$
It is easy to see that $T$ is a Schwarzian form of order $1/12.$ 
See \cite[Proposition~3.4]{KM13} in a simply-connected domain or \cite[Proposition~4.1]{KM17} on a compact Riemann surface.

\begin{thm} \label{SETinS} 
The bosonic field $\Psi_{\bfs\beta}\equiv \Psi_{\bfs\beta^+,\bfs\beta^-}$ has a stress tensor $W_{\bfs\beta}\equiv(A_{\bfs\beta^+}^+,A_{\bfs\beta^-}^-),$
$$A_{\bfs\beta^+}^+ = A + ib^+ \pa J - \pa\psi_{\bfs\beta^+}^+J, \qquad A_{\bfs\beta^-}^- = \overline{A_{\overbar{\bfs\beta^-}}^+ }(=\bar A - ib^- \bp \bar J - \overline{\pa\psi_{\overbar{\bfs\beta^-}}^+} \bar J).$$
The Virasoro pair $(T_{\bfs\beta^+}^+, T_{\bfs\beta^-}^-)$ for the family $\FF(W_{\bfs\beta})$ is given by 
\begin{align*}
T_{\bfs\beta^+}^+ &= -\frac12 (\pa\Psi_{\bfs\beta^+,\bfs\beta^-}*\pa\Psi_{\bfs\beta^+,\bfs\beta^-}) + ib^+\pa^2\Psi_{\bfs\beta^+,\bfs\beta^-}, \\
T_{\bfs\beta^-}^- &= -\frac12 (\bp\Psi_{\bfs\beta^+,\bfs\beta^-}*\bp\Psi_{\bfs\beta^+,\bfs\beta^-}) - ib^-\bp^2\Psi_{\bfs\beta^+,\bfs\beta^-}.
\end{align*}
\end{thm}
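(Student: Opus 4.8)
The plan is to verify, in turn, the two defining properties — of a stress tensor, and of a Virasoro pair — reducing everything to the holomorphic side. Since $\Psi_{\bfs\beta^-}^- = \overline{\Psi^+_{\overbar{\bfs\beta^-}}}$, $A^-_{\bfs\beta^-} = \overline{A^+_{\overbar{\bfs\beta^-}}}$, and $\E\,\Psi^+(z_1)\Psi^-(z_2) = 0$ (so that the holomorphic and anti-holomorphic background-charge modifications do not interact), once the $+$-statements are established the $-$-statements follow by complex conjugation. Throughout I write $j := \pa\psi^+_{\bfs\beta^+}$; from the formula for $\psi_{\bfs\beta^+,\bfs\beta^-}$ in the $\C$-uniformization one reads off $j = \pa\psi_{\bfs\beta}$, so that $J_{\bfs\beta} := \pa\Psi_{\bfs\beta} = J + j$, $A^+_{\bfs\beta^+} = A + ib^+\pa J - jJ$, and $T^+_{\bfs\beta^+} = -\frac12 J_{\bfs\beta}*J_{\bfs\beta} + ib^+\pa J_{\bfs\beta}$; moreover $\psi^+_{\bfs\beta^+}$ is a $\PPS(ib^+,0)$ form, so $j$ is a $\PS(ib^+)$ form, meromorphic with simple poles at the $q_k$.

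First I would check that $A^+_{\bfs\beta^+}$ is a quadratic differential, meromorphic and holomorphic off $\{q_k\}$. Here $A = -\frac12 J\odot J$ is a genuine quadratic differential, because $\E\,\Psi(z,z_0)\Psi(z',z_0')$ is conformally invariant (a cross-ratio), so $\Psi\odot\Psi$ is a scalar bi-variant field and $A$ is its renormalized mixed derivative. Under a transition map $h$ one has $\pa J = (h')^2\,\widetilde{\pa J}\circ h + N_h\,J$ (using $h'' = h'N_h$) and $jJ = (h')^2\,\widetilde{jJ}\circ h + ib^+ N_h\,J$ (using $j = h'\tilde j\circ h + ib^+ N_h$), so the anomalous $N_h J$-terms in $ib^+\pa J$ and in $jJ$ agree and cancel in $ib^+\pa J - jJ$; the only poles are at the $q_k$, coming from $j$. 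This is the first of two anomaly cancellations.

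Next I would verify the residue form of Ward's identity \eqref{eq: LvWv} for $X = \Psi_{\bfs\beta}$, which by the definition suffices for the assertion that $W_{\bfs\beta}$ is a stress tensor for $\Psi_{\bfs\beta}$. On the left, $\Psi_{\bfs\beta}$ is a $\PPS(ib^+,-ib^-)$ form, so by \eqref{eq: LvPPS} we get $\LL_v\Psi_{\bfs\beta} = v\pa\Psi_{\bfs\beta} + \bar v\bp\Psi_{\bfs\beta} + ib^+ v' - ib^-\overbar{v'}$. On the right, write $\Psi_{\bfs\beta} = \Psi + \psi_{\bfs\beta}$; the non-random summand $\psi_{\bfs\beta}(z)$ does not contract with $A^+_{\bfs\beta^+}(\zeta)$, which is holomorphic in $\zeta$ near $z$, hence contributes nothing to the residue. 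From the Ward OPE $A(\zeta)\Psi(z,z_0)\sim\pa_z\Psi(z,z_0)/(\zeta-z)$ (that is, $(A,\bar A)$ is a stress tensor for $\Psi$) together with $J(\zeta)\Psi(z,z_0)\sim -1/(\zeta-z)$, hence $\pa J(\zeta)\,\Psi(z)\sim 1/(\zeta-z)^2$ and $-j(\zeta)J(\zeta)\,\Psi(z)\sim j(\zeta)/(\zeta-z)$, collecting residues gives $(A^+_{\bfs\beta^+})_v\Psi_{\bfs\beta}(z) = v\pa\Psi(z) + ib^+ v'(z) + v(z)j(z) = v\pa\Psi_{\bfs\beta}(z) + ib^+ v'(z)$. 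The conjugate computation — using the minus signs in $A^-_{\bfs\beta^-}$ and in the definition of $A^-_v$, and the identity $\overline{\pa\psi^+_{\overbar{\bfs\beta^-}}} = \bp\psi_{\bfs\beta}$ — yields $(A^-_{\bfs\beta^-})_v\Psi_{\bfs\beta}(z) = \bar v\bp\Psi_{\bfs\beta}(z) - ib^-\overbar{v'}(z)$. Adding the two reproduces $\LL_v\Psi_{\bfs\beta}$, so $W_{\bfs\beta}$ is a stress tensor for $\Psi_{\bfs\beta}$, and, differentiating Ward's identity, also for $J_{\bfs\beta}$.

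For the Virasoro pair: since Ward's family $\FF(W_{\bfs\beta})$ is closed under the OPE product $*$ and under differentiation (\cite[Proposition~5.8]{KM13}) and contains $J_{\bfs\beta}$, it contains $T^+_{\bfs\beta^+} = -\frac12 J_{\bfs\beta}*J_{\bfs\beta} + ib^+\pa J_{\bfs\beta}$. It remains to see that $T^+_{\bfs\beta^+} - A^+_{\bfs\beta^+}$ is a non-random Schwarzian form, meromorphic with poles only at the $q_k$. Since $j$ is non-random and holomorphic near the nodes, $-\frac12 J_{\bfs\beta}*J_{\bfs\beta} = -\frac12 J*J - jJ - \frac12 j^2$, hence
$$T^+_{\bfs\beta^+} - A^+_{\bfs\beta^+} = \Big(-\frac12 J*J - A\Big) + \Big(ib^+\pa j - \frac12 j^2\Big).$$
The first bracket is $T - A$ for $T = -\frac12 J*J$: on $\wh\C$ the formal correlation gives $\E\,J(\zeta)J(z) = -1/(\zeta-z)^2$ exactly, so $T$ and $A$ coincide in the identity chart of $\wh\C$, while $T$ is a Schwarzian form of order $1/12$ and $A$ a quadratic differential; hence $T-A$ is the non-random Schwarzian form of order $1/12$ vanishing in that chart. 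For the second bracket, insert $j = h'\tilde j\circ h + ib^+ N_h$ into $ib^+\pa j - \frac12 j^2$; a short computation with $h'' = h'N_h$ and $N_h' = S_h + \frac12 N_h^2$ shows that the $N_h j$- and $N_h^2$-terms cancel, leaving $ib^+\pa j - \frac12 j^2$ a non-random (meromorphic, poles at the $q_k$) Schwarzian form of order $-(b^+)^2$. Therefore $T^+_{\bfs\beta^+} - A^+_{\bfs\beta^+}$ is a non-random Schwarzian form of order $\frac1{12} - (b^+)^2$ — consistently with central charge $c^+ = 1 - 12(b^+)^2$ — which is what is required, and the $-$-parts follow by conjugation. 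The main obstacle is not conceptual but organizational: keeping the two anomaly cancellations straight and, above all, making the conjugation reduction to the holomorphic side legitimate, which hinges on $\E\,\Psi^+\Psi^- = 0$ decoupling the two chiralities.
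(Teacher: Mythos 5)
Your proposal is correct and follows essentially the same route as the paper: first the anomaly cancellation showing $A_{\bfs\beta^+}^+$ is a (meromorphic) quadratic differential, then Ward's identity for $\Psi_{\bfs\beta}$ via the contractions $A(\zeta)\Psi(z)\sim J(z)/(\zeta-z)$, $ib^+\pa J(\zeta)\Psi(z)\sim ib^+/(\zeta-z)^2$, $-jJ(\zeta)\Psi(z)\sim j/(\zeta-z)$, and finally the decomposition $T_{\bfs\beta^+}^+-A_{\bfs\beta^+}^+=(T-A)+ib^+\pa j-\tfrac12 j^2$ with the Schwarzian-form computation of order $-(b^+)^2$. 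The only cosmetic differences are that you verify the residue form of Ward's identity directly rather than stating the singular OPE, and you spell out the facts (membership of $T_{\bfs\beta^+}^+$ in the Ward family, and that $T-A$ is a non-random Schwarzian form of order $1/12$) that the paper leaves implicit or cites.
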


\begin{proof}
We first observe that $A_{\bfs\beta^+}^+$ is a holomorphic quadratic differential. 
Indeed, $ib^+ \pa J$ and $\pa\psi_{\bfs\beta^+}^+J$ satisfy
\begin{align*}
ib^+ \pa J &= ib^+ h'' \tilde J\circ h + ib(h')^2 \pa \tilde J\circ h, \\
\pa\psi_{\bfs\beta^+}^+J&=(h'\pa\tilde\psi_{\bfs\beta^+}^+ \circ h + ib^+\frac{h''}{h'})h' \ti J \circ h.
\end{align*}
Similarly, $A_{\bfs\beta^-}^-$ is an anti-holomorphic quadratic differential. 

Next we check Ward's OPE on $\wh\C$
\begin{align*}
A_{\bfs\beta^+}^+(\zeta)\Psi_{\bfs\beta^+,\bfs\beta^-}(z)&= \big( A(\zeta) + ib^+ \pa J(\zeta) - \pa\psi_{\bfs\beta^+}^+(\zeta)J(\zeta)\big)\big(\Psi(z) + \psi_{\bfs\beta^+}^+(z) + \overline{\psi_{{\bfs\beta^-}}^+(z)}\big)\\
&\sim \frac{J(z)}{\zeta-z} + ib^+\frac1{(\zeta-z)^2} + \frac{\pa\psi_{\bfs\beta^+}^+(z)}{\zeta-z} = \frac{\pa \Psi_{\bfs\beta^+,\bfs\beta^-} (z)}{\zeta-z} + ib^+\frac1{(\zeta-z)^2}.
\end{align*}
Similar operator product expansion holds for $A_{\bfs\beta^-}^-(\zeta)\Psi_{\bfs\beta^+,\bfs\beta^-}(z)$ as $\zeta\to z.$

Finally we want to show that $T_{\bfs\beta^+}^+$ is a Schwarzian form of order $c^+/12$ $(c^+ = 1-12(b^+)^2).$ 
We find 
$$T_{\bfs\beta^+}^+ = T +(A_{\bfs\beta^+}^+-A)- \frac12 (\pa\psi_{\bfs\beta^+})^2 + ib^+\pa^2 \psi_{\bfs\beta^+}$$
from the expressions of $A_{\bfs\beta^+}^+$ and $T_{\bfs\beta^+}^+.$ 
All we need to check is that $- \frac12 (\pa\psi_{\bfs\beta^+})^2 + ib\pa^2 \psi_{\bfs\beta^+}$ is a Schwarzian form of order $-(b^+)^2.$
It follows from the transformation laws for $- \frac12 (\pa\psi_{\bfs\beta^+})^2 $ and $ ib^+\pa^2 \psi_{\bfs\beta^+}:$ 
\begin{align*}
- \frac12 (\pa\psi_{\bfs\beta^+})^2&= -ib^+ h'' \pa\ti\psi_{\bfs\beta^+}\circ h -\frac12 (h')^2(\pa\ti\psi_{\bfs\beta^+}\circ h)^2 + \frac12 (b^+)^2 \Big(\frac{h''}{h'}\Big)^2,\\
 ib^+\pa^2 \psi_{\bfs\beta^+}&= \phantom{-} ib^+ h'' \pa\ti\psi_{\bfs\beta^+}\circ h + ib^+ (h')^2 (\pa^2 \ti\psi_{\bfs\beta^+}\,) \circ h - (b^+)^2 \Big(\frac{h''}{h'}\Big)'.
\end{align*}
Similarly, $\overline{T_{\bfs\beta^-}^-}$ is a Schwarzian form of order $\overline{c^-}/12$ $(c^- = 1-12(b^-)^2).$ 
\end{proof}

\begin{eg*} 
In the $\wh\C$-uniformization, we have 
$$\E\,T_{\bfs\beta^+}^+(z) = \sum_k \frac{\lambda_{b^+}(\beta_k^+)}{(z-q_k)^2} + \sum_{j<k} \frac{\beta_j^+\beta_k^+}{(z-q_j)(z-q_k)}.$$
\end{eg*}

Denote by $\FF_{\bfs\beta}(S')$ the \emph{OPE family} of $\Psi_{\bfs\beta}$ on $S'\subseteq S,$ the algebra (over $\C$) spanned by $1$ and the derivatives of $\Psi_{\bfs\beta},$ $\OO_{\bfs\beta}$ with nodes in $S'$ under the OPE multiplication $*.$
Let $S^*:=S\setminus (\supp\,\bfs\beta_1^\pm\cup\,\supp\,\bfs\beta_2^\pm).$

\begin{thm} \label{change of beta's}
Given background charges $\bfs\beta^\pm_1, \bfs\beta^\pm_2$ with the neutrality conditions $(\NC_{b^\pm}),$ 
the image of $\FF_{\bfs\beta_1}(S^*)$
under the insertion of 
$\ee^{\odot i\Psi[\bfs\beta_2-\bfs\beta_1]}$ is 
$\FF_{\bfs\beta_2}(S^*)$ within correlations.
\end{thm}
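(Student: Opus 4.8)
The plan is to realize the insertion of $\ee^{\odot i\Psi[\bfs\beta_2-\bfs\beta_1]}$, on the generators of the OPE family, as the formal substitution $\bfs\beta_1\rightsquigarrow\bfs\beta_2$: it should send the derivatives $\pa^k\Psi_{\bfs\beta_1},\bp^k\Psi_{\bfs\beta_1}$ to $\pa^k\Psi_{\bfs\beta_2},\bp^k\Psi_{\bfs\beta_2}$ and each $\OO_{\bfs\beta_1}[\bfs\sigma]$ to $\OO_{\bfs\beta_2}[\bfs\sigma]$. Write $\bfs\tau=[\bfs\tau^+,\bfs\tau^-]:=[\bfs\beta_2^+-\bfs\beta_1^+,\;\bfs\beta_2^--\bfs\beta_1^-]$; since $\bfs\beta_1^\pm,\bfs\beta_2^\pm$ satisfy $(\NC_{b^\pm})$, the $\bfs\tau^\pm$ satisfy $(\NC_0)$, so $\Psi[\bfs\tau]$ is a genuine Fock space field and $V^\odot[\bfs\tau]=\ee^{\odot i\Psi[\bfs\tau]}$ is a scalar with $\E\,V^\odot[\bfs\tau]=1$ (Subsection~\ref{ss: CFT on C}, Subsection~\ref{ss: Psib}). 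First I would recall, from the formalism of insertions in \cite{KM13}, that inserting $V^\odot[\bfs\tau]$ into a string $X_1(z_1)\cdots X_n(z_n)$ of Fock space fields with nodes in $S^*=S\sm(\supp\,\bfs\beta_1^\pm\cup\supp\,\bfs\beta_2^\pm)$ --- i.e.\ passing to $\E[V^\odot[\bfs\tau]X_1(z_1)\cdots X_n(z_n)]$ --- amounts, on functionals of $\Psi^\pm$, to the deterministic shift $\Psi^\pm\rightsquigarrow\Psi^\pm+\mathfrak s^\pm$ with $\mathfrak s^\pm(z):=i\,\E[\Psi^\pm(z)\Psi[\bfs\tau]]$ computed from the formal correlations; this shift operation fixes non-random fields, commutes with $\pa,\bp$, and --- because $\mathfrak s^\pm$ are holomorphic/anti-holomorphic on $S^*$ --- commutes with the OPE multiplication $*$ at nodes in $S^*$.

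The first real computation is to identify the shift. Using $\E[\Psi^+(z)\Psi^+(q)]=\log\frac1{w(z)-w(q)}$ and $\E[\Psi^+(z)\Psi^-(q)]=0$ in a $\wh\C$-uniformization $w$, together with the explicit form of the $\PPS(ib^\pm,0)$ resolvents $\psi^+_{\bfs\beta^+}=i\sum_k\beta_k^+\psi^+_{q_k}$ and $\psi^-_{\bfs\beta^-}=\overline{\psi^+_{\overbar{\bfs\beta^-}}}$ (Subsection~\ref{ss: Psib}), one gets
$$\mathfrak s^\pm=\psi^\pm_{\bfs\beta_2^\pm}-\psi^\pm_{\bfs\beta_1^\pm},$$
the $w$-dependence dropping out by $(\NC_0)$. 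Since the shift touches only the Gaussian part and leaves the deterministic resolvents $\psi^\pm_{\bfs\beta_1^\pm}$ alone, it carries every functional of $\Psi_{\bfs\beta_1}=\Psi+\psi^+_{\bfs\beta_1^+}+\psi^-_{\bfs\beta_1^-}$ to the corresponding functional of $\Psi_{\bfs\beta_2}$; in particular the images of $\pa^k\Psi_{\bfs\beta_1}$ and $\bp^k\Psi_{\bfs\beta_1}$ are $\pa^k\Psi_{\bfs\beta_2}$ and $\bp^k\Psi_{\bfs\beta_2}$ for all $k\ge1$.

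For the OPE exponentials $\OO_{\bfs\beta_1}[\bfs\sigma]=\frac{C_{(b)}[\bfs\sigma+\bfs\beta_1]}{C_{(b)}[\bfs\beta_1]}V^\odot[\bfs\sigma]$ with $\supp\,\bfs\sigma\subseteq S^*$, the deterministic prefactor is inert under the shift while $V^\odot[\bfs\sigma]$ picks up the scalar $\ee^{-\E[\Psi[\bfs\sigma]\Psi[\bfs\tau]]}$; on the other hand, by multiplicativity of the Coulomb gas correlations together with the fact that background charges do not interact with one another (Theorems~\ref{C under NCb} and~\ref{thm: C in D}, and the genus-zero product formula $C_S[\bfs\mu^+,\bfs\mu^-]=C[\bfs\mu^+]\overline{C[\overbar{\bfs\mu^-}]}$) one checks
$$\frac{C_{(b)}[\bfs\sigma+\bfs\beta_2]}{C_{(b)}[\bfs\beta_2]}=\frac{C_{(b)}[\bfs\sigma+\bfs\beta_1]}{C_{(b)}[\bfs\beta_1]}\;\ee^{-\E[\Psi[\bfs\sigma]\Psi[\bfs\tau]]},$$
whence the image of $\OO_{\bfs\beta_1}[\bfs\sigma]$ is $\OO_{\bfs\beta_2}[\bfs\sigma]$. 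The disjointness $\supp\,\bfs\sigma\cap(\supp\,\bfs\beta_1\cup\supp\,\bfs\beta_2)=\emptyset$ packaged into $S^*$ is exactly what makes both the last identity and the commutation with $*$ legitimate --- no colliding charges and no OPE regularization are needed; verifying this carefully, along with the (otherwise routine) bookkeeping in the displayed identity, is the main obstacle. Finally, since $\FF_{\bfs\beta_1}(S^*)$ is by definition the $\C$-algebra under $*$ generated by $1$, the derivatives of $\Psi_{\bfs\beta_1}$, and the $\OO_{\bfs\beta_1}[\bfs\sigma]$ with nodes in $S^*$, and since the shift operation commutes with $*,\pa,\bp$ and carries this generating set onto the corresponding one for $\FF_{\bfs\beta_2}(S^*)$, it carries $\FF_{\bfs\beta_1}(S^*)$ onto $\FF_{\bfs\beta_2}(S^*)$ within correlations; it is a bijection because its inverse is the insertion of $\ee^{\odot i\Psi[\bfs\beta_1-\bfs\beta_2]}$.
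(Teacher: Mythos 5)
Your proposal is correct and follows essentially the same route as the paper: both realize the insertion of $\ee^{\odot i\Psi[\bfs\beta_2-\bfs\beta_1]}$, within correlations, as the deterministic shift of the Gaussian part by $i\,\E\big[\Psi(\cdot)\,\Psi[\bfs\beta_2-\bfs\beta_1]\big]=\psi_{\bfs\beta_2}-\psi_{\bfs\beta_1}$, i.e.\ as the substitution $\Psi_{\bfs\beta_1}\mapsto\Psi_{\bfs\beta_2}$, and then transport this correspondence to the generators of the OPE family. The only difference is presentational: the paper proves the shift directly by the Wick-series computation on Wick products of derivatives of $\Psi$, whereas you cite the insertion formalism of \cite{KM13} for that step and instead make explicit the Coulomb-gas ratio identity for the OPE exponentials and the commutation with the OPE product $*$ on $S^*$, bookkeeping the paper leaves implicit.
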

\begin{proof}

We first show that the image of $\Psi_{\bfs\beta_1}$ under the insertion of 
$\ee^{\odot i\Psi[\bfs\beta_2-\bfs\beta_1]}$ is $\Psi_{\bfs\beta_2}$ within correlations:
$$\E\, \Psi_{\bfs\beta_1}(\zeta,z) \, \ee^{\odot i\Psi[\bfs\beta_2-\bfs\beta_1]} = \E\, \Psi_{\bfs\beta_2}(\zeta,z).$$
By Wick's calculus, we have
\begin{align} \label{eq: Psi under insertion}
\E\, \Psi_{\bfs\beta_1}(\zeta,z)\, \ee^{\odot i\Psi[\bfs\beta_2-\bfs\beta_1]}
&= \psi_{\bfs\beta_1}(\zeta)- \psi_{\bfs\beta_1}(z) + \E\, \Psi(\zeta,z) \, \ee^{\odot i\Psi[\bfs\beta_2-\bfs\beta_1]} \\
&= \psi_{\bfs\beta_1}(\zeta)- \psi_{\bfs\beta_1}(z) + i\,\E\, \Psi(\zeta,z) \Psi[\bfs\beta_2-\bfs\beta_1] = \psi_{\bfs\beta_2}(\zeta)-\psi_{\bfs\beta_2}(z). \nonumber
\end{align}
We define the correspondence $\XX\mapsto \wh\XX$ by the formula $\Psi_{\bfs\beta_1} \mapsto \Psi_{\bfs\beta_2}$
and the rules
$$\pa\XX\mapsto\pa\wh\XX,\quad \bp\XX\mapsto\bp\wh\XX,\quad \alpha\XX+\beta\YY\mapsto \alpha\wh\XX+\beta\wh\YY,\quad \XX\odot\YY\mapsto\wh\XX\odot\wh\YY.$$
Denote 
$$\wh\E\, \XX = \E\, \XX \ee^{\odot i\Psi[\bfs\beta_2-\bfs\beta_1]}.$$
It suffices to show 
$$\wh\E\, \XX =\E\, \wh\XX$$
for $\XX= X_1(\zeta_1,z_1) \odot \cdots \odot X_n(\zeta_n,z_n),$ $X_j = \pa_{\zeta_j}^{m_j}\bp_{\zeta_j}^{\wt m_j}\pa_{z_j}^{n_j}\bp_{z_j}^{\wt n_j} \Psi(\zeta_j,z_j).$ 
Differentiating \eqref{eq: Psi under insertion},
it follows from Wick's calculus that
$$ \E\, X_1(\zeta_1,z_1) \odot \cdots \odot X_n(\zeta_n,z_n) \Psi^{\odot k}[\bfs\beta_2-\bfs\beta_1] = 
\begin{cases}
k!\prod_{j=1}^n \,\E\,X_j(\zeta_j,z_j)\Psi[\bfs\beta_2-\bfs\beta_1] &\textrm{if } n=k,\\
0 &\textrm{otherwise.}
\end{cases}$$
We now compute 
\begin{align*}
\wh \E\, \XX & = \sum_{k=0}^\infty \frac{i^k}{k!}\, \E\, X_1(\zeta_1,z_1) \odot \cdots \odot X_n(\zeta_n,z_n) \Psi^{\odot k}[\bfs\beta_2-\bfs\beta_1] \\
&= \prod_{j=1}^n i\E\,X_j(\zeta_j,z_j)\Psi[\bfs\beta_2-\bfs\beta_1] = \prod_{j=1}^n \E\,\wh X_j(\zeta_j,z_j) = \E\, \wh\XX,
\end{align*} 
which completes the proof.
\end{proof}

\subsection{Embeddings} 
In this subsection we construct the Gaussian free field in a simply-connected domain $D$ with Dirichlet boundary condition from the Gaussian free field on its Schottky double $S = D^{\mathrm{double}}.$ 
Recall that the Gaussian free field can be defined as a Gaussian field indexed by the energy space. 
See Subsection~\ref{ss: CFT on C} for this definition on $\wh\C.$
The energy space $\EE(D)$ can be embedded isometrically into $\EE(S)$ in a natural way. 
For example, for $\mu$ in the energy space $\EE(\H),$ 
\begin{align*}
\|\mu\|_{\EE(\H)}^2 &= \int_{\H\times\H} \log\frac1{|z-w|^2} \, \mu(z) \overline{\mu(w)} -\int_{\H\times\H} \log\frac1{|z-\bar w|^2} \, \mu(z) \overline{\mu(w)}\\
&= \frac12 \int_{\C\times\C} \log\frac1{|z-w|^2} \, \nu(z) \overline{\nu(w)} = \frac12 \|\nu\|_{\EE(\wh\C)}^2,
\end{align*}
where 
$$\nu := \begin{cases} \phantom{-} \mu\phantom{^*} \textrm{ in } \H, \\
- \mu^* \textrm{ in } \H^*,
\end{cases}$$
and $E^* := \{\bar z\,|\, z\in E\}, \mu^*(E) := \mu(E^*).$
As a Fock space field, the Gaussian free field $\Phi \equiv \Phi_D$ in $D$ with Dirichlet boundary condition can be constructed from the Gaussian free field $\Psi \equiv \Psi_{S}$ on $S:$
$$\Phi_D(z) = \frac1{\sqrt 2}\Psi_{S}(z,z^*),$$
where $\iota: S\to S , z \mapsto z^*$ is the canonical involution in the Schottky double $S$ of $D.$ 
For example,
$$\Phi_\H(z) = \frac1{\sqrt 2}\Psi_{\wh\C}(z,\bar z),$$
and 
$$\Phi_\D(z) = \frac1{\sqrt 2}\Psi_{\wh\C}(z,z^*), \qquad z^* = 1/\bar z.$$
In the chordal case, the current fields and the stress energy tensors are related as 
$$J_\H(z) = \frac1{\sqrt2} \big(J_\C(z) - {\bar J_\C(\bar z)}\big)$$ 
and 
$$A_\H(z) = \frac12\big(A_\C(z)+ {\bar A_\C(\bar z)} + J_\C(z) \odot {\bar J_\C(\bar z)} \big).$$ 
Similar statements hold in the radial case. 
Furthermore, the formal bosonic fields are related as 
$$\Phi^\pm(z) = \frac1{\sqrt 2}\big(\Psi^\pm(z)-\Psi^\mp(z^*)\big).$$
More generally, for a double divisor $(\bfs\sigma^+,\bfs\sigma^-)$ satisfying the neutrality condition $(\NC_0),$ we have 
\begin{equation} \label{eq: Schottky4Phi}
\Phi[\bfs\sigma^+,\bfs\sigma^-] = \frac1{\sqrt2} \Psi[\bfs\sigma^++\bfs\sigma_*^-, \bfs\sigma_*^++\bfs\sigma^-].
\end{equation}

In Subsection~\ref{ss: GFF N} we construct the Gaussian free field in $D$ with Neumann boundary condition from $\Psi_{S}.$

\subsection{Modification of Gaussian free field in a simply-connected domain} \label{ss: Phib}
In Subsection~\ref{ss: Psib} we introduce background charge modifications $\Psi_{\bfs\beta^+,\bfs\beta^-}$ of the Gaussian free field on the Riemann sphere. 
Here the background charges $\bfs\beta^\pm$ satisfy the neutrality condition $(\NC_{b^\pm}) (b^\pm\in\C).$
In this subsection, from $\Psi_{\bfs\beta^+,\bfs\beta^-},$ we construct background charge modifications of the Gaussian free field with Dirichlet boundary condition in a simply-connected domain. 

Let us fix a background charge parameter $b\in\mathbb{R}.$
This parameter $b$ is related to the central charge $c$ as $c = 1 - 12b^2.$
For a double background charge $(\bfs\beta^+,\bfs\beta^-)$ on $\overbar{D}\times D$ satisfying the neutrality condition $(\NC_b):$
$$\int (\bfs\beta^++ \bfs\beta^-) = 2b,$$
we set $\bfs\beta = \bfs\beta^++ \bfs\beta^-_*$ and define 
$$ \Phi_{\bfs\beta}(z) \equiv \Phi _{\bfs\beta^+,\bfs\beta^-}(z) = \frac1{\sqrt2} \Psi_{\sqrt2\,\bfs\beta^+,\sqrt2\,\bfs\beta_*^-}(z,z^*).$$ 
Then $\Phi _{\bfs\beta}$ is a $\PPS(ib,-ib)$ form and 
$$\varphi_{\bfs\beta} (z):=\E\,\Phi_{\bfs\beta}(z) = \sum (\beta_k^++\beta_k^-)\big(\arg(z-q_k)+\arg(z-\bar q_k)\big) + i \sum(\beta_k^+-\beta_k^-)\log \Big|\frac{z-\bar q_k}{z-q_k}\Big|$$
in the $\H$-uniformization. 
 
We also define background charge modifications of formal bosonic fields as 
$$\Phi^\pm_{\bfs\beta}(z) = \frac1{\sqrt 2}\big(\Psi^\pm_{\sqrt 2\bfs\beta^+,\sqrt 2\bfs\beta_*^-}(z) -\Psi^\mp_{\sqrt 2\bfs\beta^+,\sqrt 2\bfs\beta_*^-} (z^*)\big).$$
For a double divisor $( \bfs\tau^+,\bfs\tau^-)$ satisfying the neutrality condition $(\NC_0),$ we set $\bfs\tau = \bfs\tau^++\bfs\tau_*^-.$
Then we have 
\begin{equation} \label{eq: Schottky4Phibeta}
\Phi_{\bfs\beta}[\bfs\tau] = \frac1{\sqrt2} \Psi_{\sqrt2\,\bfs\beta^+,\sqrt2\,\bfs\beta_*^-}[\bfs\tau, \bfs\tau_*].
\end{equation}

\subsubsec{Chordal case without spins}
For a simply-connected domain $D$ with marked boundary points $q, q_k\in \pa D,$ a background charge $\bfs\beta = (2b-\sum\beta_k)\cdot q + \sum \beta_k\cdot q_k, (\beta_k\in\R),$ and a conformal map $w\equiv w_{D,q}:~(D,q)\to(\mathbb{H},\infty)$
we have 
$$\Phi_{\bfs\beta} = \Phi + \varphi_{\bfs\beta},\qquad \varphi_{\bfs\beta}= -2b\arg w' + 2\sum\beta_k\arg(w-w(q_k)).$$
We also define $J_{\bfs\beta} := \pa \Phi_{\bfs\beta}.$ 
Then it is a $\PS(ib)$ form and 
$$ J_{\bfs\beta} = J + j_{\bfs\beta}, \qquad j_{\bfs\beta} = ib\frac{w''}{w'} - i\sum \beta_k \frac{w'}{w-w(q_k)}.$$
In particular, if $\bfs\beta= 2b\cdot q,$ then we have 
\begin{equation} \label{eq: GFFbH}
\Phi_{\bfs\beta} =\Phi -2b\arg w', \qquad J_{\bfs\beta}=\pa \Phi_{\bfs\beta}=J+ib\frac{w''}{w'},
\end{equation}
and the 1-point function $\varphi_{\bfs\beta} = \E\,\Phi_{\bfs\beta} = -2b\arg w'$ does not depend on the choice of the conformal map.
Due to this property, these modifications~\eqref{eq: GFFbH} are well-defined. 
As a $\PS(ib)$ form, the current field $J_{\bfs\beta}$ is conformally invariant with respect to $\Aut(D,q).$

\subsubsec{Radial case with a spin only at a marked interior point}
For a simply-connected domain $D$ with a marked interior point $q\in D,$ we consider a conformal map $$w\equiv w_{D,q}:~(D,q)\to(\D,0),$$
from $D$ onto the unit disc $\D = \{z\in\C:|z|<1\}.$
Let $q_k\in\pa D$ be marked boundary points. 
Given a background charge $\bfs\beta = \beta_q\cdot q + \overline{\beta_q}\cdot q^* + \sum\beta_k\cdot q_k (\beta_k\in\R)$ with the neutrality condition $(\NC_b):$
$$2\,\Re\,\beta_q + \sum \beta_k = 2b,$$
the 1-point function $\varphi_{\bfs\beta} = \E\,\Phi_{\bfs\beta}$ is given by 
$$\varphi_{\bfs\beta}(z) = 2\,\Re\, \beta_q\arg z +2\,\Im\, \beta_q \log|z| + 2\sum\beta_k \arg(z-q_k).$$
in the $\D$-uniformization. 
Also we define $J_{\bfs\beta} := \pa \Phi_{\bfs\beta}.$ 
Then it is a $\PS(ib)$ form and 
$$ J_{\bfs\beta} = J +ib\frac{w''}{w'} - i{\beta_q}\frac{w'}w - i\sum \beta_k \frac{w'}{w-w(q_k)}.$$

In particular, if $\bfs\beta = b\cdot q + b\cdot q^*,$ then 
\begin{equation} \label{eq: GFFb}
\Phi_{\bfs\beta} = \Phi + \varphi_{\bfs\beta},\qquad \varphi_{\bfs\beta}=-2b\arg \frac{w'}w.
\end{equation}
Since the multivalued function $\varphi_{\bfs\beta}$ does not depend on the choice of the conformal map, the ``bosonic" field, $\Phi_{\bfs\beta}$ in $D^*:= D\setminus\{q\}$ is invariant with respect to $\Aut(D,q).$ 
Also we have 
\begin{equation} \label{eq: Jb}
J_{\bfs\beta}=J+j_{\bfs\beta},\qquad j_{\bfs\beta}=\pa\varphi_{\bfs\beta}=ib\Big(\frac{w''}{w'}-\frac{w'}{w}\Big).
\end{equation}

\begin{rmks*}
(a) For $b\ne0,$ the 1-point function $\varphi_{\bfs\beta}$ has monodromy $4\pi b$ around $q.$

\ss \no (b) For $b\ne0,$ the 1-point function $j_{\bfs\beta}$ has a simple pole at $q,$ and $J$ is holomorphic in $D^*.$

\ss \no (c) As a $\PS(ib)$ form, the current $J_{\bfs\beta}$ is conformally invariant with respect to $\Aut(D,q).$
\end{rmks*}

For a symmetric background charge $\bfs\beta = \sum \beta_k \cdot q_k + \overbar{\beta_k}\cdot q_k^*$
on $S,$ 
we have 
$$\varphi_{\bfs\beta} (z):=\E\,\Phi_{\bfs\beta}(z) = 2\sum \Re\,\beta_k\big(\arg(z-q_k)+\arg(z-\bar q_k)\big) - 2\sum\Im\,\beta_k\log \Big|\frac{z-\bar q_k}{z-q_k}\Big|$$
and 
\begin{equation} \label{eq: j_beta} 
j_{\bfs\beta} (z):=\E\,J_{\bfs\beta}(z) = -i \sum \Big(\frac{\beta_k}{z-q_k} + \frac{\bar\beta_k}{z-\bar q_k}\Big)\end{equation}
in the $\H$-uniformization if all $q_k$'s are in $D.$
Later, we present the connection between boundary conformal field theory with symmetric background charges and the chordal/radial SLE theory with forces and spins. 
The following theorem is parallel with Theorem~\ref{SETinS}. 

\begin{thm} \label{SETinD} 
For a symmetric background charge $\bfs\beta$ on $S,$ the bosonic field $\Phi_{\bfs\beta}$ in $D$ has a stress tensor $(A_{\bfs\beta},\bar A_{\bfs\beta}):$
$$A_{\bfs\beta} = A + ib\pa J - j_{\bfs\beta}J$$
and its Virasoro field is
$$T_{\bfs\beta}=-\frac12 J_{\bfs\beta}*J_{\bfs\beta}+ib\pa J_{\bfs\beta}.$$
\end{thm}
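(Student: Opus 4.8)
The plan is to follow the proof of Theorem~\ref{SETinS}, working directly in $D$; alternatively one could fold the stress tensor of Theorem~\ref{SETinS} across the involution $\iota$, as in the embedding relations for $A_\H$ and $J_\H$ recalled above, the symmetry $\bfs\beta=\overbar{\bfs\beta_*}$ being what makes the folded object an honest quadratic differential in $D$ with real $\PPS$-parameter $ib$. I describe the direct route. First one checks that $A_{\bfs\beta}=A+ib\,\pa J-j_{\bfs\beta}J$ is a quadratic differential, holomorphic off $\supp\bfs\beta$: since $J=\pa\Phi$ is a $(1,0)$-differential and $j_{\bfs\beta}=\E\,J_{\bfs\beta}$ is a non-random meromorphic $\PS(ib)$ form, under a transition map $h$ one has $\pa J=h''\,\wt J\circ h+(h')^2(\pa\wt J)\circ h$ and $j_{\bfs\beta}J=(h')^2(\wt j_{\bfs\beta}\wt J)\circ h+ib\,h''\,\wt J\circ h$, so the anomalous $h''$-terms in $ib\,\pa J-j_{\bfs\beta}J$ cancel and $A_{\bfs\beta}=(h')^2\wt A_{\bfs\beta}\circ h$, with poles only where $j_{\bfs\beta}$ has them.

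Next I would verify Ward's OPE for $\Phi_{\bfs\beta}$. The non-random summand $\varphi_{\bfs\beta}=\E\,\Phi_{\bfs\beta}$ makes no Wick contraction, so the singular part of $A_{\bfs\beta}(\zeta)\Phi_{\bfs\beta}(z)$ is that of $(A+ib\,\pa J-j_{\bfs\beta}J)(\zeta)\Phi(z)$. Using that $(A,\bar A)$ is a stress tensor for $\Phi$, together with $J(\zeta)\Phi(z)\sim-1/(\zeta-z)$, a short computation gives
\[A_{\bfs\beta}(\zeta)\Phi_{\bfs\beta}(z)\sim\frac{J(z)+j_{\bfs\beta}(z)}{\zeta-z}+\frac{ib}{(\zeta-z)^2}=\frac{\pa\Phi_{\bfs\beta}(z)}{\zeta-z}+\frac{ib}{(\zeta-z)^2},\]
which is precisely Ward's OPE for a $\PPS(ib,-ib)$ form. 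By the symmetry of $\bfs\beta$, the anti-holomorphic Ward OPE follows by conjugation; hence $W_{\bfs\beta}=(A_{\bfs\beta},\bar A_{\bfs\beta})$ is a stress tensor for $\Phi_{\bfs\beta}$, which is the first assertion.

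For the Virasoro field, Ward's family $\FF(W_{\bfs\beta})$ is closed under $*$ and $\pa$ and contains $\Phi_{\bfs\beta}$, hence $J_{\bfs\beta}=\pa\Phi_{\bfs\beta}$, hence $T_{\bfs\beta}=-\frac12 J_{\bfs\beta}*J_{\bfs\beta}+ib\,\pa J_{\bfs\beta}\in\FF(W_{\bfs\beta})$. Writing $J_{\bfs\beta}=J+j_{\bfs\beta}$ with $j_{\bfs\beta}$ non-random holomorphic — so the OPE with $j_{\bfs\beta}$ is ordinary multiplication and $J*J=-2T$ — one finds
\[T_{\bfs\beta}=-\tfrac12 J_{\bfs\beta}*J_{\bfs\beta}+ib\,\pa J_{\bfs\beta}=T-j_{\bfs\beta}J-\tfrac12 j_{\bfs\beta}^2+ib\,\pa J+ib\,\pa j_{\bfs\beta},\]
whence $T_{\bfs\beta}-A_{\bfs\beta}=(T-A)+\bigl(-\tfrac12 j_{\bfs\beta}^2+ib\,\pa j_{\bfs\beta}\bigr)$. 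The first summand is a non-random Schwarzian form of order $1/12$ by \cite[Proposition~3.4]{KM13}; the second, $j_{\bfs\beta}$ being a $\PS(ib)$ form, is a non-random meromorphic Schwarzian form of order $-b^2$ by the same two-line transition-law computation used in the proof of Theorem~\ref{SETinS} for $-\frac12(\pa\psi_{\bfs\beta^+})^2+ib^+\pa^2\psi_{\bfs\beta^+}$. Adding orders gives $1/12-b^2=c/12$ with $c=1-12b^2$, so $T_{\bfs\beta}$ is the Virasoro field for $\FF(W_{\bfs\beta})$.

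The only genuinely delicate point is the transformation-law bookkeeping in the two computations above — confirming that the background-charge correction cancels the anomalous $h''$-terms so that $A_{\bfs\beta}$ is a quadratic differential, and combines the $S_h$-terms so that $T_{\bfs\beta}-A_{\bfs\beta}$ is a Schwarzian form of the correct order $c/12$ — together with the (harmless) fact that the poles $j_{\bfs\beta}$ introduces at the marked points are permitted, since a Virasoro pair is by definition allowed to differ from its stress tensor by a form meromorphic on $\supp\bfs\beta$. The symmetry hypothesis $\bfs\beta=\overbar{\bfs\beta_*}$ enters only to keep the $\PPS$ and central-charge parameters real — equivalently to make $\FF(W_{\bfs\beta})$ conjugation-symmetric — so that the single field $T_{\bfs\beta}$ serves as the Virasoro field.
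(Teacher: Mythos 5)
Your proof is correct and is essentially the argument the paper intends: Theorem~\ref{SETinD} is stated there without a separate proof, being declared ``parallel with Theorem~\ref{SETinS},'' and your direct verification in $D$ --- the cancellation of the $h''$-terms showing $A_{\bfs\beta}$ is a quadratic differential, the Ward OPE $A_{\bfs\beta}(\zeta)\Phi_{\bfs\beta}(z)\sim \pa\Phi_{\bfs\beta}(z)/(\zeta-z)+ib/(\zeta-z)^2$, and the Schwarzian-order computation $1/12-b^2=c/12$ with $j_{\bfs\beta}$ playing the role of $\pa\psi_{\bfs\beta^+}$ --- is precisely that parallel argument.
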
 

\begin{eg*} For a background charge $\bfs\beta = \beta_q\cdot q + \overbar{\beta_q}\cdot q^* + \sum \beta_k\cdot q_k (q\in D, q_k\in \pa D, \beta_k\in\R)$ satisfying the neutrality condition $(\NC_{b})$, 
we have 
$$\E\,T_{\bfs\beta}(z) = \sum_k \frac{\lambda_k}{(z-q_k)^2} + \sum_{j<k} \frac{\beta_j\beta_k}{(z-q_j)(z-q_k)} + \frac{\lambda_q}{z^2} + \sum_{k} \frac{\beta_q\beta_k}{z(z-q_k)}$$
in the $(\D,0)$-uniformization, 
where $\lambda_k = \lambda_b(\beta_k)$ and $\lambda_q = \lambda_b(\beta_q).$
\end{eg*}

\subsection{OPE exponentials in a punctured domain}
For simplicity, we consider the standard radial case $\bfs\beta = b\cdot q + b\cdot q^*.$ 
Our goal in this subsection is to explain the statement that under the neutrality condition $(\mathrm{NC}_0)$ on $\bfs\tau = \bfs\tau^++\bfs\tau^-_*$ the modified multi-vertex field $\OO_{\bfs\beta}[\bfs\tau]\equiv \OO_{\bfs\beta}[\bfs\tau^+,\bfs\tau^-]$ (with $\tau_q^+=\tau_q^-=0$) in $D^*$ can be viewed as the OPE exponential of $i\Phi_{\bfs\beta}[\bfs\tau^+,\bfs\tau^-].$
Non-chiral vertex fields $\VV_{\bfs\beta}^{(\sigma)}$ are defined by 
$$\VV_{\bfs\beta}^{(\sigma)}:=\ee^{*i\sigma\Phi_{\bfs\beta}} = \sum_{n=0}^\infty \frac{(i\sigma)^n}{n!}\,\Phi_{\bfs\beta}^{*n}.$$
We have
$$\VV_{\bfs\beta}^{(\sigma)}
=\ee^{i\sigma\varphi_{\bfs\beta}} \VV^{(\sigma)}=\ee^{i\sigma\varphi_{\bfs\beta}}C^{-\sigma^2}\ee^{\odot i\sigma\Phi},\qquad(\varphi_{\bfs\beta}:=\E\,\Phi_{\bfs\beta}).$$
The non-random field $\varphi_{\bfs\beta}$ is a $\PPS(ib,-ib)$ form and $C$ is the conformal radius. 
In terms of a conformal map $w:(D,q)\to (\D,0),$ we have
$$(C\,\|\,\id_D)(z) = \frac{1-|w(z)|^2}{|w'(z)|},$$ 
where $\id_D$ is the identity chart of $D.$ 
Thus the conformal radius is a $[-1/2,-1/2]$-differential and $\VV^{(\sigma)}$ is a $[\lambda,\lambda_*]$-differential with $\lambda = \lambda_b(\sigma), \lambda_* =\lambda_b(-\sigma).$
Its 1-point function in the $(\D,0)$-uniformization is 
$$\E\,\VV_{\bfs\beta}^{(\sigma)}(z) = (1-|z|^2)^{-\sigma^2}z^{\sigma b}(\bar z)^{-\sigma b}.$$
Since $\VV_{\bfs\beta}^{(\sigma)}(z)$ and $\OO_{\bfs\beta}[\sigma\cdot z,-\sigma\cdot z]$ contain the same Wick's exponential and they are differentials with the same conformal dimensions and the same correlation functions in the $(\D,0)$-uniformization, 
we conclude that 
$$\OO_{\bfs\beta}[\sigma\cdot z,-\sigma\cdot z] = \ee^{*i\sigma\Phi_{\bfs\beta}(z)}.$$

Next, we explain the OPE product of OPE exponentials and why it corresponds to the addition of divisors. 
Here is the typical example:
\begin{equation}\label{eq: V*V}
\VV_{\bfs\beta}^{(\sigma_1)}*\VV_{\bfs\beta}^{(\sigma_2)}=\VV_{\bfs\beta}^{(\sigma_1+\sigma_2)}.
\end{equation}
Here $*$ means the coefficients of the leading term in OPE expansion.
An alternative notation for such OPE multiplication is $\VV_{\bfs\beta}^{(\sigma_1)}\star\VV_{\bfs\beta}^{(\sigma_2)},$ see \cite[Section~15.2]{KM13}.
To see \eqref{eq: V*V}, we need to compute asymptotic behavior of $\exp(-\sigma_1\sigma_2\E[\Phi(\zeta)\Phi(z)]):$
$$\bigg|\frac{w(\zeta)-w(z)}{1-w(\zeta)\overline{w(z)}}\bigg|^{2\sigma_1\sigma_2}\sim \bigg|\frac{w'(z)(\zeta-z)}{1-|w(z)|^2}\bigg|^{2\sigma_1\sigma_2},\qquad \zeta\to z.$$
This implies $\ee^{\odot i\sigma_1\Phi}*\ee^{\odot i\sigma_2\Phi}= C^{-2\sigma_1\sigma_2}\ee^{\odot i(\sigma_1+\sigma_2)\Phi}.$
The same argument shows that if $f_1 $ and $f_2$ are non-random fields, then
$$(f_1\ee^{\odot i\sigma_1\Phi})*(f_2\ee^{\odot i\sigma_2\Phi})= f_1f_2C^{-2\sigma_1\sigma_2}\ee^{\odot i(\sigma_1+\sigma_2)\Phi}.$$
Taking $f_j = \E\,\VV_{\bfs\beta}^{(\sigma_j)} = \ee^{i\sigma_j\varphi_{\bfs\beta}}C^{-\sigma_j^2},$ the above identity shows \eqref{eq: V*V}.

The computation is more transparent if we use formal OPE exponentials.
Let us formally define
$$\OO^{(\sigma)} (``\equiv" \ee^{*i\sigma\Phi_{\bfs\beta}^+}) = (w')^\lambda w^{\sigma b} \ee^{\odot i\sigma\Phi^+}$$
(which depends on the choice of conformal map) so that
$$\OO[\sigma\cdot z-\sigma\cdot z_0] = \OO^{(\sigma)}(z) \OO^{(-\sigma)}(z_0)$$
holds (formally) within formal correlations. 
In addition to the product of 1-point functions (and Wick's exponential), we have the interaction term
$$\ee^{\sigma^2\E\,[\Phi^+(z)\Phi^+(z_0)]}=(w(z)-w(z_0))^{-\sigma^2}$$
in $\OO^{(\sigma)}(z) \OO^{(-\sigma)}(z_0)$.
As in the OPE product of non-chiral vertex fields, we have (again formally)
\begin{equation}\label{eq: O*O}
\OO^{(\sigma_1)}*\OO^{(\sigma_2)}=\OO^{(\sigma_1+\sigma_2)},
\end{equation}
where the OPE product $*$ could be further specified as the $\star$ product (leading
coefficient) or as $*_{\sigma_1\sigma_2}$ (coefficient of $(\zeta-z)^{\sigma_1\sigma_2}$ in the operator product expansion of $\OO^{(\sigma_1)}(\zeta) \OO^{(\sigma_2)}(z)$).
Let us compute \eqref{eq: O*O}:
\begin{align*}
\E\,\OO^{(\sigma_1)}(\zeta)\OO^{(\sigma_2)}(z) &= (w(\zeta)-w(z))^{\sigma_1\sigma_2}
w'(\zeta)^{\lambda_b(\sigma_1)}w(\zeta)^{\sigma_1b}
w'(z)^{\lambda_b(\sigma_2)}w(z)^{\sigma_2b} \\
&\sim (\zeta-z)^{\sigma_1\sigma_2}w'(z)^{\sigma_1\sigma_2+\lambda_b(\sigma_1)+\lambda_b(\sigma_2)}
w(z)^{(\sigma_1+\sigma_2)b}.
\end{align*}
It is trivial to compute Wick's exponential part in the operator product expansion.
The identity $\lambda_b(\sigma_1+\sigma_2) = \lambda_b(\sigma_1)+\lambda_b(\sigma_2)+\sigma_1\sigma_2$ now shows \eqref{eq: O*O}. 

Note that the last OPE has the form
\begin{equation} \label{OPE alpha}
X(\zeta)Y(z)= (\zeta-z)^\alpha\sum {C_n(z)}{(\zeta-z)^n}, \qquad C_n:=X*_{\alpha + n}Y. 
\end{equation}

\subsection{Insertions and changes of background charges}

Denote by $\FF_{\bfs\beta}(D')$ the \emph{OPE family} of $\Phi_{\bfs\beta}$ on $D'\subseteq D,$ the algebra (over $\C$) spanned by $1$ and the derivatives of $\Phi_{\bfs\beta},$ $\OO_{\bfs\beta}$ with nodes in $D'$ under the OPE multiplication $*.$
Let $D^*:=D\setminus (\supp\,\bfs\beta_1^\pm\cup\,\supp\,\bfs\beta_2^\pm).$

\begin{thm} \label{change of beta}
Given background charges $\bfs\beta_1, \bfs\beta_2$ on $S$ with the neutrality conditions $(\NC_b),$ 
the image of $\FF_{\bfs\beta_1}(D^*)$
under the insertion of 
$e^{\odot i\Phi[\bfs\beta_2-\bfs\beta_1]}$ is 
$\FF_{\bfs\beta_2}(D^*)$ within correlations. 
\end{thm}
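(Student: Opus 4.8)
The plan is to run, in the simply-connected setting, the same argument as in the proof of Theorem~\ref{change of beta's}, importing the few analytic inputs it needs from the Schottky double embedding \eqref{eq: Schottky4Phi}--\eqref{eq: Schottky4Phibeta}. Put $\bfs\tau_0 := \bfs\beta_2 - \bfs\beta_1$; since $\bfs\beta_1,\bfs\beta_2$ satisfy $(\NC_b)$, the divisor $\bfs\tau_0$ satisfies $(\NC_0)$, so $\Phi[\bfs\tau_0]$ and $\ee^{\odot i\Phi[\bfs\tau_0]}$ are well-defined Fock space functionals, and $\supp\,\bfs\tau_0\subseteq\supp\,\bfs\beta_1^\pm\cup\supp\,\bfs\beta_2^\pm$, which is disjoint from the nodes of any string in $\FF_{\bfs\beta_1}(D^*)$. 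First I would establish the base case that the insertion turns $\Phi_{\bfs\beta_1}$ into $\Phi_{\bfs\beta_2}$ within correlations,
\[
\E\,\Phi_{\bfs\beta_1}(\zeta,z)\,\ee^{\odot i\Phi[\bfs\tau_0]} = \E\,\Phi_{\bfs\beta_2}(\zeta,z).
\]
By Wick's calculus the left-hand side equals $\varphi_{\bfs\beta_1}(\zeta)-\varphi_{\bfs\beta_1}(z)+i\,\E[\Phi(\zeta)\Phi[\bfs\tau_0]]-i\,\E[\Phi(z)\Phi[\bfs\tau_0]]$, so what must be checked is the identity $\varphi_{\bfs\beta_2}-\varphi_{\bfs\beta_1}=i\,\E[\Phi(\cdot)\Phi[\bfs\tau_0]]$ up to an additive constant (which cancels in the bi-variant form); this follows either directly from the explicit formula for $\varphi_{\bfs\beta}=\E\,\Phi_{\bfs\beta}$ in Subsection~\ref{ss: Phib} together with the two-point function of $\Phi$ with $\Phi^\pm$, or by transporting \eqref{eq: Psi under insertion} through $\Phi_{\bfs\beta}(z)=\tfrac1{\sqrt2}\Psi_{\sqrt2\bfs\beta^+,\sqrt2\bfs\beta^-_*}(z,z^*)$ and \eqref{eq: Schottky4Phi}--\eqref{eq: Schottky4Phibeta}.

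Next I would introduce the correspondence $\XX\mapsto\wh\XX$ on generators by $\Phi_{\bfs\beta_1}\mapsto\Phi_{\bfs\beta_2}$ and $\OO_{\bfs\beta_1}[\bfs\tau]\mapsto\OO_{\bfs\beta_2}[\bfs\tau]$, and the rules $\pa\XX\mapsto\pa\wh\XX$, $\bp\XX\mapsto\bp\wh\XX$, $\alpha\XX+\beta\YY\mapsto\alpha\wh\XX+\beta\wh\YY$, $\XX\odot\YY\mapsto\wh\XX\odot\wh\YY$, and prove
\[
\wh\E\,\XX:=\E\,\XX\,\ee^{\odot i\Phi[\bfs\tau_0]}=\E\,\wh\XX
\]
for every $\odot$-string $\XX$ in derivatives of $\Phi_{\bfs\beta_1}$ and the $\OO_{\bfs\beta_1}[\bfs\tau]$'s with nodes in $D^*$. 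For strings built only from derivatives of $\Phi_{\bfs\beta_1}$ this is exactly the computation of Theorem~\ref{change of beta's}: differentiate the base-case identity and use that $\E[\bigodot_j X_j\cdot\Phi^{\odot k}[\bfs\tau_0]]$ vanishes unless $k$ equals the number of factors, in which case it equals $k!\prod_j\E[X_j\Phi[\bfs\tau_0]]$. For the $\OO_{\bfs\beta_1}[\bfs\tau]$ generators one uses $\OO_{\bfs\beta}[\bfs\tau]=\bigl(C_{(b)}[\bfs\tau+\bfs\beta]/C_{(b)}[\bfs\beta]\bigr)V^\odot[\bfs\tau]$: the Wick part $V^\odot[\bfs\tau]=\ee^{\odot i\Phi[\bfs\tau]}$ is independent of the background charge, so contracting it against $\ee^{\odot i\Phi[\bfs\tau_0]}$ produces the scalar factor $\ee^{-\E[\Phi[\bfs\tau]\Phi[\bfs\tau_0]]}$ times $V^\odot[\bfs\tau]$ (leaving the other nodes untouched), and the Coulomb-gas identity
\[
\frac{C_{(b)}[\bfs\tau+\bfs\beta_1]}{C_{(b)}[\bfs\beta_1]}\,\ee^{-\E[\Phi[\bfs\tau]\Phi[\bfs\tau_0]]}=\frac{C_{(b)}[\bfs\tau+\bfs\beta_2]}{C_{(b)}[\bfs\beta_2]},
\]
a direct consequence of the product formula defining $C_{(b)}$ and of the base-case identity, gives the claim; combining with the factorization of mixed contractions handles arbitrary $\odot$-products of generators.

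Finally, since the OPE product $*$ of fields in $\FF_{\bfs\beta}$ is recovered from correlations by forming operator product expansions and extracting coefficients on a chart-independent asymptotic scale, the relation $\wh\E\,\XX=\E\,\wh\XX$ forces $\wh{\phantom{X}}$ to intertwine $*$ with $*$, so $\wh{\phantom{X}}$ maps $\FF_{\bfs\beta_1}(D^*)$ into $\FF_{\bfs\beta_2}(D^*)$; exchanging the roles of $\bfs\beta_1$ and $\bfs\beta_2$ shows the map is invertible, hence the image is all of $\FF_{\bfs\beta_2}(D^*)$. I expect the main obstacle to be organizational rather than conceptual: keeping straight the $\sqrt2$-normalizations inherited from the Schottky double and the double-divisor notation, and — above all — verifying the transformation of the OPE-exponential generators, where one must check the Coulomb-gas prefactor identity and confirm that the singular parts extracted by the OPE products are unaffected by the insertion precisely because its nodes lie in $\supp\,\bfs\beta_1^\pm\cup\supp\,\bfs\beta_2^\pm$, i.e. outside $D^*$.
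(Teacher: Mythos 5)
Your proposal is correct, but it takes a different route from the paper: the paper's own proof of this theorem is a one-line reduction, deducing it from the Riemann-sphere statement (Theorem~\ref{change of beta's}) via the Schottky double identity \eqref{eq: Schottky4Phibeta}, $\Phi_{\bfs\beta}[\bfs\tau]=\tfrac1{\sqrt2}\Psi_{\sqrt2\,\bfs\beta^+,\sqrt2\,\bfs\beta_*^-}[\bfs\tau,\bfs\tau_*]$, under which the insertion $\ee^{\odot i\Phi[\bfs\beta_2-\bfs\beta_1]}$ becomes exactly the spherical insertion already treated there. You instead re-run the Wick-calculus argument of Theorem~\ref{change of beta's} directly in $D$: the base case $\E\,\Phi_{\bfs\beta_1}(\zeta,z)\,\ee^{\odot i\Phi[\bfs\beta_2-\bfs\beta_1]}=\E\,\Phi_{\bfs\beta_2}(\zeta,z)$ (which indeed checks out against the explicit formula for $\varphi_{\bfs\beta}$ in the $\H$-uniformization), the contraction-counting for $\odot$-strings of derivatives, and then an explicit treatment of the vertex generators via the Coulomb-gas ratio identity $\frac{C_{(b)}[\bfs\tau+\bfs\beta_1]}{C_{(b)}[\bfs\beta_1]}\,\ee^{-\E[\Phi[\bfs\tau]\Phi[\bfs\beta_2-\bfs\beta_1]]}=\frac{C_{(b)}[\bfs\tau+\bfs\beta_2]}{C_{(b)}[\bfs\beta_2]}$ (valid up to the usual branch/phase conventions), plus the remark that OPE coefficients are extracted from correlations at nodes in $D^*$, away from the insertion, so the correspondence intertwines the $*$-products; symmetry in $\bfs\beta_1,\bfs\beta_2$ gives surjectivity. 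What each approach buys: the paper's reduction is shorter and reuses work, at the cost of asking the reader to track the $\sqrt2$-normalization and to transfer the statement about the whole OPE family through the double; your direct argument is more self-contained and makes explicit two points the paper leaves implicit, namely how the $\OO_{\bfs\beta_1}[\bfs\tau]$ generators transform and why $*$-products are preserved. One phrasing in your vertex step deserves care: the product $V^\odot[\bfs\tau]\,\ee^{\odot i\Phi[\bfs\beta_2-\bfs\beta_1]}$ equals $\ee^{-\E[\Phi[\bfs\tau]\Phi[\bfs\beta_2-\bfs\beta_1]]}\,V^\odot[\bfs\tau+\bfs\beta_2-\bfs\beta_1]$, so when other fields are present in the string the inserted exponential does contract with them --- those mixed contractions are precisely what shifts the background charge seen at the other nodes, as in the derivative-string case --- and your closing remark about ``factorization of mixed contractions'' is where that bookkeeping must actually be carried out; it is routine, but it is not the case that the other nodes are literally ``untouched.''
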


Theorem~\ref{change of beta} follows immediately from Theorem~\ref{change of beta's} and the Schottky double construction \eqref{eq: Schottky4Phibeta} of $\Phi_{\bfs\beta}$ from $\Psi_{\sqrt2\,\bfs\beta^+,\sqrt2\,\bfs\beta_*^-}.$

\smallskip
We now present a typical example of the insertion operators which create the chordal/radial SLE curves.
In the radial case with $\check{\bfs\beta\,} = b\cdot q + b\cdot q^*, (q\in D),$ the insertion of one-leg operator 
$$\leg_p\equiv\leg(p):=\OO_{\check{\bfs\beta\,}}[\bfs\tau],\quad \bfs\tau= a\cdot p-\frac a2\cdot q -\frac a2\cdot q^*, \quad (p\in \partial D)$$ 
produces an operator 
$$\XX\mapsto\wh\XX$$
acting on Fock space functionals/fields by the rules 
\begin{equation} \label{eq: BCrules}
\pa\XX\mapsto\pa\wh\XX,\quad \bp\XX\mapsto\bp\wh\XX,\quad \alpha\XX+\beta\YY\mapsto \alpha\wh\XX+\beta\wh\YY,\quad \XX\odot\YY\mapsto\wh\XX\odot\wh\YY
\end{equation}
for Fock space functionals $\XX$ and $\YY$ in $D^*$ and the formula
\begin{equation} \label{eq: BC}
\wh\Phi_{\check{\bfs\beta\,}}(z) = \Phi_{\bfs\beta}(z) = \Phi_{\check{\bfs\beta\,}}(z)+a\arg \frac{(1-w(z))^2}{w(z)},
\end{equation}
where $w:(D, p,q)\to (\D, 1,0)$ is a conformal map and 
\begin{equation} \label{eq: beta radial} 
\bfs\beta = \check{\bfs\beta\,} + \bfs\tau = a\cdot p+ \Big(b-\frac a2\Big)\cdot q + \Big(b-\frac a2\Big)\cdot q^*.
\end{equation}

In the chordal case with $\check{\bfs\beta\,} = 2b\cdot q, (q\in \pa D),$ recall that the insertion of one-leg operator 
$$\leg_p\equiv\leg(p):=\OO_{\check{\bfs\beta\,}}[\bfs\tau],\quad \bfs\tau = a\cdot p-a\cdot q, \quad (p\in \partial D)$$ 
produces an operator $\XX\mapsto\wh\XX$ by the rules~\eqref{eq: BCrules} and the formula
$$
\wh\Phi_{\check{\bfs\beta\,}}(z) = \Phi_{\bfs\beta}(z) = \Phi_{\check{\bfs\beta\,}}(z)+2a\arg w(z),
$$
where $w:(D, p,q)\to (\H, 0,\infty)$ is a conformal map and 
$$\bfs\beta = \check{\bfs\beta\,} + \bfs\tau = a\cdot p + (2b-a)\cdot q.$$ 
See \cite[Section~14.2]{KM13}.

Let 
$$\wh \E[\XX]:=\frac{\E [\leg_p\XX]}{\E [\leg_p]}=\E [\ee^{\odot i \Phi[\bfs\tau]}\XX], \quad \bfs\tau= a\cdot p-\frac a2\cdot q -\frac a2\cdot q^*.$$

Let $\wh\XX\in \FF_{\bfs\beta}(D^*)$ correspond to the string $\XX\in \FF_{\check{\bfs\beta}\,}(D^*)$ under the map given by \eqref{eq: BCrules} -- \eqref{eq: BC}.
Then by Theorem~\ref{change of beta} we have 
\begin{equation} \label{eq: hat E=E hat}
\wh \E[\XX]=\E[\wh\XX].
\end{equation}

\begin{egs*} Let $\varphi_{\bfs\beta} := \E\,\Phi_{\bfs\beta}$ in the radial case with $\bfs\beta$ in \eqref{eq: beta radial}.

\ss \no (a) The current $ J_{\bfs\beta}$ is a pre-Schwarzian form of order $ib,$
\begin{equation} \label{eq: J hat}
J_{\bfs\beta} =J + ia\frac{w'}{1-w} + \Big(\frac{ia}2-ib\Big)\frac{w'}{w} + ib\frac{w''}{w'}.
\end{equation}
In the $(\D,1,0)$-uniformization, 
$$j_{\bfs\beta}(z)=\E\,J_{\bfs\beta}(z) = ia\dfrac{1}{1-z} + \Big(\dfrac{ia}2-ib\Big)\dfrac{1}{z};$$

\ss \no (b) The Virasoro field $T_{\bfs\beta}$ is a Schwarzian form of order $\frac1{12}c,$
\begin{align} \label{eq: T hat}
T_{\bfs\beta} &= -\dfrac12 J_{\bfs\beta}* J _{\bfs\beta}+ ib\pa J _{\bfs\beta} \\
&= A - j_{\bfs\beta} J + ib\pa J + \frac{c}{12}S_w + h_{1,2}\frac{w'^2}{w(1-w)^2} + h_{0,1/2}\frac{w'^2}{w^2} \qquad (A = -\frac12 J\odot J), 
\nonumber
\end{align}
where $h_{1,2}=\frac12a^2-ab,$ $h_{0,1/2}=\frac18{a^2}-\frac12b^2,$ 
(see \eqref{eq: h12} and \eqref{eq: h01/2})
and $S_w = N_w' -\frac12{N_w^2}, \, (N_w = (\log w')')$ is the Schwarzian derivative of $w.$ 
In the $(\D,1,0)$-uniformization, 
$$\E\,T_{\bfs\beta}(z) = \dfrac{h_{1,2}}{z(1-z)^2} + \dfrac{h_{0,1/2}}{z^2};$$

\ss \no (c) The operator produced by the insertion of one-leg operator can be extended to the formal fields. 
For example, we have 
$$\Phi_{\bfs\beta}^+ = \Phi_{\check{\bfs\beta\,}}^+ + \dfrac{ia}2\log w -ia\log(1-w);$$
and 
$$\OO_{\bfs\beta}^{(\sigma)}=\Big(\frac{1-w}{\sqrt{w}}\Big)^{a\sigma} \OO_{\check{\bfs\beta\,}}^{(\sigma)}.$$
\end{egs*}

\section{Extended OPE family} \label{sec: F} 
In this section we extend the OPE family $\FF_{\bfs\beta}(D^*)$ ($D^*:=D\setminus (\supp\,\bfs\beta^+\cup\,\supp\,\bfs\beta^-)$, $\bfs\beta=\bfs\beta^++\bfs\beta^-_*$) of $\Phi_{\bfs\beta}$ by adding the generators obtained from the operator product expansion of fields in $\FF_{\bfs\beta}(D^*)$ at the punctures $q_k\in \supp\,\bfs\beta^+\cup\,\supp\,\bfs\beta^-$ or the rooting procedure. 
Examples include the OPE exponentials of $\Phi_{\bfs\beta}$ with nodes at $q_k\in \supp\,\bfs\beta^+\cup\,\supp\,\bfs\beta^-.$ 

\subsection{OPE at the puncture} 
Let us consider the radial case first. 
For simplicity we only consider the ``formal" holomorphic puncture differential in the case that $\check{\bfs\beta\,} = b\cdot q + b\cdot q^*:$
$$\OO_q^{(\tau)} 
=(w_q')^{\tau^2/2}\ee^{\odot i\tau\Phi^+(q)}.$$
This is a (formal) Fock space correlation functional with conformal dimension $h_q = \tau_q^2/2$ at $q.$
\begin{prop}
We have 
$$\OO^{(\sigma)}(z)\OO_q^{(\tau)} \sim (z-q)^{\sigma(\tau+b)}\OO_q^{(\sigma+\tau)}.$$ 
\end{prop}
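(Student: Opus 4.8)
The statement is a direct Wick's calculus computation with the formal holomorphic fields, after which one reads off the leading term of the operator product expansion in the scale $(z-q)^\alpha$ of \eqref{OPE alpha}. First I would substitute the definitions: with $w\colon(D,q)\to(\D,0)$ the chart,
$$\OO^{(\sigma)}(z)=(w'(z))^{\lambda_b(\sigma)}\,w(z)^{\sigma b}\,\ee^{\odot i\sigma\Phi^+(z)},\qquad \OO_q^{(\tau)}=(w_q')^{\tau^2/2}\,\ee^{\odot i\tau\Phi^+(q)},$$
and apply the Wick rule for Wick's exponentials of (generalized) Gaussians,
$$\ee^{\odot i\sigma\Phi^+(z)}\,\ee^{\odot i\tau\Phi^+(q)}=\ee^{-\sigma\tau\,\E[\Phi^+(z)\Phi^+(q)]}\,\ee^{\odot i(\sigma\Phi^+(z)+\tau\Phi^+(q))}.$$
Since the formal correlation is $\E[\Phi^+(z)\Phi^+(q)]=\log\frac1{w(z)-w(q)}=\log\frac1{w(z)}$ (because $w(q)=0$), this contributes a factor $w(z)^{\sigma\tau}$, so that
$$\OO^{(\sigma)}(z)\OO_q^{(\tau)}=(w'(z))^{\lambda_b(\sigma)}\,w(z)^{\sigma b+\sigma\tau}\,(w_q')^{\tau^2/2}\,\ee^{\odot i(\sigma\Phi^+(z)+\tau\Phi^+(q))}.$$

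\textbf{Extracting the leading term.} Next I would let $z\to q$ within correlations: $w'(z)\to w_q'$, $w(z)=w(z)-w(q)\sim w_q'(z-q)$, and $\ee^{\odot i(\sigma\Phi^+(z)+\tau\Phi^+(q))}\to\ee^{\odot i(\sigma+\tau)\Phi^+(q)}$, the remaining Taylor corrections (both from $w$ and from the Wick exponential) producing only higher powers of $z-q$ and hence not affecting the leading coefficient. Collecting the prefactors, the power of $w_q'$ is
$$\lambda_b(\sigma)+\sigma b+\sigma\tau+\frac{\tau^2}2=\frac{\sigma^2}2-\sigma b+\sigma b+\sigma\tau+\frac{\tau^2}2=\frac{(\sigma+\tau)^2}2,$$
using $\lambda_b(\sigma)=\sigma^2/2-\sigma b$, while the power of $z-q$ is $\sigma b+\sigma\tau=\sigma(\tau+b)$. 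Therefore
$$\OO^{(\sigma)}(z)\OO_q^{(\tau)}\sim (z-q)^{\sigma(\tau+b)}\,(w_q')^{(\sigma+\tau)^2/2}\,\ee^{\odot i(\sigma+\tau)\Phi^+(q)}=(z-q)^{\sigma(\tau+b)}\,\OO_q^{(\sigma+\tau)},$$
which is the assertion; here ``$\sim$'' is understood as in \eqref{OPE alpha}, i.e. $\OO_q^{(\sigma+\tau)}$ is the $C_0$-coefficient in the expansion with exponent $\alpha=\sigma(\tau+b)$, interpreted against test functionals in the OPE family.

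\textbf{Main point.} There is no genuine obstacle; the only care needed is the bookkeeping of the exponent of $w_q'$ — making sure that the contributions from $(w'(z))^{\lambda_b(\sigma)}$, from $w(z)\sim w_q'(z-q)$ raised to the power $\sigma b+\sigma\tau$, and from the prefactor $(w_q')^{\tau^2/2}$ of $\OO_q^{(\tau)}$ combine to exactly $(\sigma+\tau)^2/2$ — together with stating precisely in what sense the asymptotic holds (within formal correlations, so that the suppressed terms are genuinely subleading in $z-q$). The chordal case, and the variants with further marked points, are handled by the same computation.
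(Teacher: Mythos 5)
Your computation is correct and is essentially identical to the paper's proof: both substitute the formal definitions, use Wick's rule to produce the interaction factor $w^{\sigma\tau}$ (via $\E[\Phi^+(z)\Phi^+(q)]=\log\frac1{w(z)}$), expand $w(z)\sim w_q'(z-q)$, and verify that the exponent of $w_q'$ collapses to $(\sigma+\tau)^2/2$ by the identity $\lambda_b(\sigma)+\sigma(\tau+b)+\tau^2/2=(\sigma+\tau)^2/2$. Nothing further is needed.
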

\begin{proof} As $z\to q,$ we have 
\begin{align*}
\OO^{(\sigma)}(z)\OO_q^{(\tau)}&= w^{\sigma b} (w')^{\lambda_b(\sigma)} \ee^{\odot i\sigma\Phi^+(z)}(w_q')^{\tau^2/2}\ee^{\odot i\tau\Phi^+(q)}\\
&=w^{\sigma (\tau+b)} (w')^{\lambda_b(\sigma)} (w_q')^{\tau^2/2}\ee^{\odot i\sigma\Phi^+(z) + i\tau\Phi^+(q)} \\
&\sim (z-q)^{\sigma (\tau+b)} (w_q')^{\sigma (\tau+b)+\lambda_b(\sigma)+\tau^2/2} \ee^{\odot i(\sigma+\tau)\Phi^+(q)} \\
&= (z-q)^{\sigma (\tau+b)} \OO_q^{(\sigma+\tau)}.
\end{align*}
Here we use $\ee^{\odot i\sigma\Phi^+(z)}\ee^{\odot i\tau\Phi^+(q)} = w^{\sigma \tau}\ee^{\odot i(\sigma\Phi^+(z) + \tau\Phi^+(q))},$ $w(z)\sim(z-q)w_q'$ as $z\to q,$ and the identity $\lambda_b(\sigma)+\tau^2/2+\sigma (\tau+b)= (\sigma+\tau)^2/2.$
The notation $\sim$ means the first leading term of the operator product expansion.
\end{proof}

We can express the above proposition in any of the following formulas:
$$\OO^{(\sigma)}*\OO_q^{(\tau)}\equiv \OO^{(\sigma)}*_{\sigma(\tau+b)}\OO_q^{(\tau)}= \OO_q^{(\sigma+\tau)},\qquad \OO^{(\sigma)}\star\OO_q^{(\tau)}=\OO_q^{(\sigma+\tau)}.$$ 
The point is that the arithmetic of divisors has the OPE nature, both in $D^*:=D\sm\{q\}$ and at the puncture $q.$

Let us mention a special case ($\tau=0$) of the proposition.

\begin{cor}
We have 
$$\OO_q^{(\sigma)}=\OO^{(\sigma)}*1_q.$$
\end{cor}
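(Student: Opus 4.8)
The plan is to specialize the Proposition to $\tau=0$ and read off the constant $C$ in the display. First I would set $\tau=0$ in the statement $\OO^{(\sigma)}(z)\OO_q^{(\tau)}\sim (z-q)^{\sigma(\tau+b)}\OO_q^{(\sigma+\tau)}$, which gives
$$\OO^{(\sigma)}(z)\OO_q^{(0)} \sim (z-q)^{\sigma b}\,\OO_q^{(\sigma)}.$$
Then I would observe that $\OO_q^{(0)} = (w_q')^{0}\ee^{\odot i\cdot 0\cdot\Phi^+(q)} = 1$, the constant field, which earlier is written $1_q$ to emphasize the base point. So the operator product expansion above says precisely that the zeroth coefficient of the expansion of $\OO^{(\sigma)}(\zeta)\,1_q$ with respect to the asymptotic scale $(\zeta-q)^{\sigma b}$ is $\OO_q^{(\sigma)}$; in the notation of \eqref{OPE alpha} this is $\OO^{(\sigma)}*1_q$ (or $\OO^{(\sigma)}*_{\sigma b}1_q$, or $\OO^{(\sigma)}\star 1_q$). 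Hence $\OO_q^{(\sigma)} = \OO^{(\sigma)}*1_q$, which is the claim.

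A slightly more self-contained route, which I would include in case the reader wants it without invoking the Proposition, is to compute directly: as $\zeta\to q$ in the chart of $(\D,0)$,
$$\OO^{(\sigma)}(\zeta) = w(\zeta)^{\sigma b}\,(w'(\zeta))^{\lambda_b(\sigma)}\,\ee^{\odot i\sigma\Phi^+(\zeta)},$$
and using $w(\zeta)\sim (\zeta-q)\,w_q'$ together with $\ee^{\odot i\sigma\Phi^+(\zeta)}\to\ee^{\odot i\sigma\Phi^+(q)}$, one gets $\OO^{(\sigma)}(\zeta)\sim (\zeta-q)^{\sigma b}\,(w_q')^{\sigma b+\lambda_b(\sigma)}\ee^{\odot i\sigma\Phi^+(q)}$, and since $\sigma b+\lambda_b(\sigma) = \sigma b + \tfrac{\sigma^2}2-\sigma b = \tfrac{\sigma^2}2$, the coefficient is exactly $(w_q')^{\sigma^2/2}\ee^{\odot i\sigma\Phi^+(q)} = \OO_q^{(\sigma)}$.

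There is essentially no obstacle here: the corollary is a one-line specialization of the Proposition. The only point worth spelling out carefully is the identification $\OO_q^{(0)} = 1_q$ — i.e., that the "formal holomorphic puncture differential" at parameter $0$ is literally the constant field rooted at $q$ — and the matching of the OPE scale $(\zeta-q)^{\sigma b}$ in $\OO^{(\sigma)}*_{\sigma b}1_q$ with the exponent $\sigma(\tau+b)|_{\tau=0}$ appearing in the Proposition. Both are immediate from the definitions.
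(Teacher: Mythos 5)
Your proposal is correct and matches the paper, which obtains this corollary precisely as the special case $\tau=0$ of the preceding Proposition (with $\OO_q^{(0)}=1_q$ and the OPE scale $(\zeta-q)^{\sigma b}$ identifying the product as $\OO^{(\sigma)}*_{\sigma b}1_q$). Your supplementary direct computation is just the Proposition's own proof specialized to $\tau=0$, so it adds no genuinely different route.
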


This OPE $\OO^{(\sigma)}(z) \sim (z-q)^{\sigma b} \OO_q^{(\sigma)}\,(z\to q)$ is related to the ``rooting" procedure (see \cite[Section~12.3]{KM13} for this procedure in the chordal case):
$$\OO_q^{(\sigma)} = \lim_{\ve\to0} z_\ve^{-\sigma b}\,\OO^{(\sigma)}(z_\ve),$$
where the point $z_\ve$ is at distance $\ve$ from $q=0$ in the chart. 
Equivalently, this definition can be reached by applying the following rooting rule to the formal field $\OO^{(\sigma)}(z) = w(z)^{\sigma b}w'(z)^\lambda\,\ee^{\odot i\sigma\Phi(z)}:$

\renewcommand{\theenumi}{\alph{enumi}}
\begin{enumerate}
\item the point $z$ in the terms $w'(z)^\lambda$ and $\ee^{\odot i\sigma\Phi(z)}$ is replaced by the puncture $q;$
\item the term $w(z)^{\sigma b}$ is replaced by $(w_q')^{\sigma b}.$
\end{enumerate}

\ms

In the chordal case with $\check{\bfs\beta\,} = 2b\cdot q,$ we consider the ``formal" boundary puncture differential 
$$\OO_q^{(\tau)} =(w_q')^{\lambda_b(-\tau)}\ee^{\odot i\tau\Phi^+(q).}$$
In terms of a uniformizing map $w:(D,q)\to(\H,0)$
we have $\OO^{(\sigma)} = (w')^{\lambda_b(\sigma)}w^{2\sigma b}$ and 
$$\OO^{(\sigma)}(z)\OO_q^{(\tau)} \sim (z-q)^{\sigma(\tau+2b)}\OO_q^{(\sigma+\tau)}$$
as $z\to q.$
We use the identity $\lambda_b(\sigma)+\lambda_b(-\tau)+\sigma(\tau+2b) =\lambda_b(-\sigma-\tau).$ 
This OPE can be expressed as 
$$\OO^{(\sigma)}*\OO_q^{(\tau)}\equiv \OO^{(\sigma)}*_{\sigma(\tau+2b)}\OO_q^{(\tau)}= \OO_q^{(\sigma+\tau)},\qquad \OO^{(\sigma)}\star\OO_q^{(\tau)}=\OO_q^{(\sigma+\tau)}.$$ 
We refer to \eqref{OPE alpha} to remind the readers what this type of OPE is. 
In the special case $\tau = 0,$ we have 
$$\OO_q^{(\sigma)}=\OO^{(\sigma)}*_{2\sigma b}1_q.$$
This is related to the ``rooting" procedure:
$$\OO_q^{(\sigma)} = \lim_{\ve\to0} z_\ve^{-2\sigma b}\OO^{(\sigma)}(z_\ve),$$
where the point $z_\ve$ is at distance $\ve$ from $q=0$ in the chart. 
In terms of a uniformizing map $w:(D,q)\to(\H,\infty)$ satisfying $w(\zeta) \sim -1/(\zeta-q)$ as $\zeta\to q$ in a fixed boundary chart at $q,$ then the rooting procedure becomes 
$$\OO_q^{(\sigma)} = \lim_{\ve\to0} w'(z_\ve)^{\sigma b}\OO^{(\sigma)}(z_\ve),$$
where the point $z_\ve$ is at (spherical) distance $\ve$ from $q$ in the chart $\phi,$ see \cite[Section~12.3]{KM13}.

\subsection{Definition of the extended OPE family} \label{ss: extended OPE family}
For simplicity, we consider the standard radial case with $\bfs\beta= b\cdot q + b\cdot q^*.$
If $X\in\FF_{\bfs\beta}(D^*),$ $(D^*:=D\setminus\{q\}),$ then for all points $z\in D$ and all holomorphic (local) vector fields $v$ we have
\begin{equation} \label{eq: FD*}
\frac1{2\pi i}\oint_{(z)} vT_{\bfs\beta}\,X(z) = \LL_v^+X(z) , \qquad
 \frac1{2\pi i}\oint_{(z)} vT_{\bfs\beta}\,\bar X(z) = \LL_v^+\bar X(z).
\end{equation}
This is not a characterization of fields in $\FF_{\bfs\beta}(D^*),$ e.g., $X\Phi(q)$ would satisfy \eqref{eq: FD*}. 
Note that the fields in $\FF_{\bfs\beta}(D^*)$ are typically not defined at $q,$ e.g., $\Phi_{\bfs\beta}, J_{\bfs\beta},$ and $T_{\bfs\beta}.$

A Fock space functional at $q$ is the value of a Fock space field at $q,$
$$X_q = X(q).$$
Here Fock space means the Fock space of the Gaussian free field $\Phi$ and the functionals that depend on the chart at the puncture $q.$ 
A Fock space functional at $q$ correlates with Fock space strings $\XX$ in $D^*.$ 
In fact we can consider the (tensor) products $\XX X_q.$ 
Such products are well-defined Fock space functionals.
Examples of Fock space functional at $q$ include $1_q, \Phi(q),$ and even formal functionals $\OO^{(\sigma)}(q)$ (with our usual proviso of neutrality).

\begin{def*}
We say $X_q\in \FF_{\bfs\beta}(q)$ if $X_q = X_1*(\cdots*(X_n*1_q)\cdots)$ for some fields $X_1,\cdots X_n\in \FF_{\bfs\beta}(D^*).$
The (extended) OPE family $\FF_{\bfs\beta} = \FF_{\bfs\beta}(D,q)$ is the collection of Fock space strings
$$\{\YY = \XX X_q\,|\,\XX\in\FF_{\bfs\beta}(D^*), X_q\in \FF_{\bfs\beta}(q)\}.$$
\end{def*}

For example, if $X = Y = J_{\check{\bfs\beta\,}}$ ($\check{\bfs\beta\,} = b\cdot q + b\cdot q^*$), then 
both $Y*(X*1_q)$ and $(Y*X)*1_q$ are in $\FF_{\bfs\beta}(q),$ but they are different, see the last example in the next subsection.

\subsection{Examples of the OPE family} 
\label{ss: Eg4Fq}
It is clear that $1_q\in\FF(q),$ so every $X\in\FF(D^*)$ belongs to the extended family.
We now present some less obvious examples in the standard radial case, $\check{\bfs\beta\,} = b\cdot q + b\cdot q^*.$

\begin{eg*} As mentioned before,
$$\OO_q^{(\sigma)} = \OO^{(\sigma)}*1_q\in\FF_{\check{\bfs\beta\,}}(q),$$
and more generally
$\OO^{(\sigma)}*\OO_q^{(\tau)} = \OO^{(\sigma+\tau)}_q.$
We define the formal functional $\OO_q^{(\tau,\tau_*)}$ by 
$$\OO_q^{(\tau,\tau_*)} := \OO_q^{(\tau)}\overline{\OO_q^{(\overline{\tau_*})} }=(w_q')^{\tau^2/2} (\overline{w_q'})^{\tau_*^2/2}\ee^{\odot i\tau\Phi^+(q)-i\tau_*\Phi^-(q)}.$$ 
Let us discuss the vertex algebra at the puncture. 
We define the multiplication of OPE exponentials at the puncture by 
$$\OO_q^{(\sigma,\sigma_*)}\OO_q^{(\tau,\tau_*)}=\OO^{(\sigma,\sigma_*)}*\OO_q^{(\tau,\tau_*)}$$
so that 
$\OO_q^{(\sigma,\sigma_*)}\OO_q^{(\tau,\tau_*)} =\OO_q^{(\sigma+\tau,\sigma_*+\tau_*)}.$
\end{eg*}

\begin{egs*} (a) We have $(\Phi_{\check{\bfs\beta\,}})_q = \Phi(q).$ It is because $(z-q)w'(z)/w(z) = 1 + o(z-q).$

\ss \no (b) We have 
$$(J_{\check{\bfs\beta\,}})_q = J(q) + \frac{ib}2N_w(q).$$
This follows from $J_{\check{\bfs\beta\,}}= J+j_{\check{\bfs\beta\,}}$ with $j_{\check{\bfs\beta\,}}= ib({w''}/{w'}-{w'}/{w}),$ and 
$$\Big(\frac{w'}w\Big)*1_q = \lim_{z\to q} \Big(\frac{w'(z)}{w(z)}-\frac1{z-q}\Big)=\lim_{z\to q}\frac{w''(z)}{w'(z)+w(z)/(z-q)}=\frac12 \frac{w''(q)}{w'(q)}.$$

 \ss \no (c) We have 
\begin{align*}
(T_{\check{\bfs\beta\,}})_q &= A(q)+2ib\pa J(q) -\frac{ib}2 N_w(q) J(q)
+\Big(\frac{c}{12}-\frac13\,b^2\Big)S_w(q) - \frac38\,b^2N_w(q)^2.
\end{align*}
\end{egs*}

\begin{eg*} For $X_q = X*1_q,$ $Y*X_q \ne (Y*X)*1_q$ in general. 
For example, with the choice of $X = Y = J_{\check{\bfs\beta\,}},$ we have $Y*(X*1_q) \ne (Y*X)*1_q.$
It is because 
$$\Big(\frac{w''}{w'}-\frac{w'}w\Big)^2*1_q = -\frac43\,S_w(q) - \frac14 \,N_w(q)^2 \ne \bigg(\Big(\frac{w''}{w'}-\frac{w'}w\Big)*1_q\bigg)^2=\frac14\, N_w(q)^2.$$
\end{eg*}

\subsection{Insertions}
In this subsection we extend the insertion of the one-leg operator to the strings containing puncture functionals in particular to the fields in the extended OPE family. 
For simplicity, we consider the standard radial case with $\check{\bfs\beta\,} = b\cdot q + b\cdot q^*.$
Let $\leg_\zeta$ denote the one-leg operator,
$$\leg_\zeta = \OO_{\check{\bfs\beta\,}}[a\cdot\zeta-\frac a2\cdot q -\frac a2\cdot q^*]=\OO^{(a)}(\zeta)\OO^{(-a/2)}(q)\overline{\OO^{(-a/2)}(q)}.$$
If $q$ is a node of $\XX,$ e.g.,
$$\XX = \YY X_q, \qquad X_q = X*1_q\in\FF_{\bfs\beta}(q),$$
then the accurate definition of the product $\leg_\zeta\XX$ is 
$$\leg_\zeta\XX:= \OO^{(a)}(\zeta)\YY(X*\OO_q), \qquad \OO_q:= \OO^{(-a/2)}(q)\overline{\OO^{(-a/2)}(q)}.$$

\begin{prop} \label{prop: insertion}
Let $\XX = \YY X_q, $ $X_q = X*1_q$ and $\wh X_q = \wh X*1_q.$ Then we have 
$$\wh \E\,\YY X_q = \E\,\wh\YY \wh X_q.$$
\end{prop}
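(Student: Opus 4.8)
The plan is to reduce this to the insertion statement for strings in $\FF_{\check{\bfs\beta\,}}(D^*)$ already established in the preceding discussion (the identity $\wh\E\,\XX = \E\,\wh\XX$ of \eqref{eq: hat E=E hat}, coming from Theorem~\ref{change of beta}), together with a limiting/rooting argument to handle the puncture functional $X_q$. First I would unwind the definitions: by the accurate definition of the product $\leg_\zeta\XX$ given just above,
$$\wh\E\,\YY X_q = \E\,\ee^{\odot i\Phi[\bfs\tau]}\,\YY\,(X*\OO_q),\qquad \OO_q = \OO^{(-a/2)}(q)\overline{\OO^{(-a/2)}(q)},\ \bfs\tau = a\cdot p - \tfrac a2\cdot q - \tfrac a2\cdot q^*,$$
so the point is to commute the OPE product $X * 1_q$ past the insertion operator $\XX\mapsto\wh\XX$ and recognize the result as $\wh X * 1_q$ correlated against $\wh\YY$.

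The key steps, in order. First, realize $X_q = X*1_q$ as a genuine limit: $X*1_q = \lim_{z\to q}\big(\text{regularization of }X(z)\big)$, more precisely the zeroth regular coefficient in the operator product expansion of $X(z)$ against the constant field $1_q$ in the fixed chart at $q$ (the rooting procedure described in Subsection on OPE at the puncture). Since $X\in\FF_{\check{\bfs\beta\,}}(D^*)$, the node $z$ lies in $D^*$, so the plain identity \eqref{eq: hat E=E hat} applies to the string $\YY X(z)$ (which is entirely in $\FF_{\check{\bfs\beta\,}}(D^*)$): $\wh\E\,\YY X(z) = \E\,\wh\YY\,\wh X(z)$. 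Second, I would check that the insertion map $\XX\mapsto\wh\XX$ — defined by the rules \eqref{eq: BCrules} and the formula \eqref{eq: BC} — commutes with the operator product expansion in the variable $z$ as $z\to q$, hence with the extraction of the zeroth regular coefficient. This is where the explicit form \eqref{eq: BC} is used: the conjugation adds the \emph{non-random} PPS contribution $a\arg\big((1-w)^2/w\big)$, which near $q$ contributes a controlled power of $(z-q)$ times a functional at $q$; the stated compatibility of the insertion with the rooting rules (a) and (b) in the Subsection on OPE at the puncture is exactly what says $\widehat{X*1_q} = \wh X*1_q$. Third, take the limit $z\to q$ on both sides of $\wh\E\,\YY X(z) = \E\,\wh\YY\,\wh X(z)$: the left side tends to $\wh\E\,\YY X_q$ and the right side to $\E\,\wh\YY\,\wh X_q$, giving the claim. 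Iterating this over a representation $X_q = X_1*(\cdots*(X_n*1_q)\cdots)$ handles the general puncture functional.

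I expect the main obstacle to be the second step: justifying that the insertion operator commutes with the limit defining the puncture product, i.e.\ that regularization (subtracting the divergent terms of the operator product expansion as $z\to q$) and the map $\XX\mapsto\wh\XX$ can be interchanged. The subtlety is that $\wh\XX$ differs from $\XX$ by multiplication by a non-random field with a logarithmic/branch singularity at $q$ (the $\arg$ term) together with the Wick-exponential factor $\ee^{\odot i\Phi[\bfs\tau]}$ whose correlation with $X(z)$ also degenerates as $z\to q$; one must verify that these two sources of $(z-q)$-behavior combine so that the divergent powers match on both sides and the finite parts correspond under the rules \eqref{eq: BCrules}. Concretely I would expand $\E\,\ee^{\odot i\Phi[\bfs\tau]}\,X(z)\,(\cdots)$ by Wick's calculus, isolate the factor $\exp\!\big(-a\sigma\,\E[\Phi^+(z)\Phi^+(q)]+\cdots\big)$ governing the $z\to q$ asymptotics, compare its leading power of $(z-q)$ with the power predicted by the conformal dimension of $\wh X$ at $q$ (which is shifted from that of $X$ precisely by the $a$-charge at $q$ in $\bfs\beta$ versus $\check{\bfs\beta\,}$), and confirm equality using the quadratic identities for $\lambda_b$ that already appeared in the puncture-OPE computations. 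Once that bookkeeping is in place, the rest is the routine limit argument sketched above.
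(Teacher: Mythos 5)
Your proposal is correct and follows essentially the same route as the paper: apply the insertion identity \eqref{eq: hat E=E hat} to the string $\YY X(z)$ with $z\in D^*$, then extract the OPE coefficient at the puncture $q$ on both sides (your limiting/rooting argument is exactly the paper's terse ``comparing the OPE coefficients at the puncture''). The exponent bookkeeping you flag as the main obstacle is precisely the content the paper leaves implicit, and it is handled by the shift of charge at $q$ already recorded in the examples $\wh\OO_q^{(\sigma)}=(w_q')^{-a\sigma/2}\OO_q^{(\sigma)}$.
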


\begin{proof}
By definition,
$$\wh \E\,\YY X_q=\frac{\E\,\leg_p\YY X_q}{\E\,\leg_p}= \frac{\E\,\OO^{(a)}(p)\YY(X*\OO_q)}{\E\,\leg_p}.$$
In follows from \eqref{eq: hat E=E hat} that 
$$\frac{\E\,\YY X(z) \leg_p}{\E\,\leg_p} = \E\,\wh\YY\wh X(z) 1_q.$$
Comparing the OPE coefficients at the puncture, we have 
$$\frac{\E\,\OO^{(a)}(p)\YY(X*\OO_q)}{\E\,\leg_p}=\E\,\wh\YY \wh X_q.$$
\end{proof}

\begin{egs*} Let $\check{\bfs\beta\,} = b\cdot q + b\cdot q^*$ and $\bfs\beta = a\cdot p+ (b-a/2)\cdot q + (b-a/2)\cdot q^*.$

\ms\no (a) Recall that (see \eqref{eq: BC} and Subsection~\ref{ss: Eg4Fq}, Example~(a))
$$\wh\Phi_{\check{\bfs\beta\,}} = \Phi_{\check{\bfs\beta\,}} + a\arg\frac{(1-w)^2}{w} = \Phi_{\bfs\beta},$$ 
and $(\Phi_{\check{\bfs\beta\,}})_q = \Phi(q).$
Since $(\arg w)*1_q = \arg w_q',$ we have 
$$(\wh\Phi_{\check{\bfs\beta\,}})_q=\Phi(q)-a\arg w_q'.$$

\ms\no (b) It follows from \eqref{eq: J hat} and Example~(b) in Subsection~\ref{ss: Eg4Fq} that 
$$(\wh J_{\check{\bfs\beta\,}})_q = J(q) + i\Big(\frac{a}4+\frac{b}2\Big) N_w(q)+ iaw'(q).$$

\ms\no (c) Note $\wh\OO_q^{(\sigma)}:=\wh\OO^{(\sigma)}*1_q = (w_q')^{-a\sigma/2}\OO_q^{(\sigma)}, $
cf. $$ \wh\OO^{(\sigma)} = \Big(\frac{1-w}{\sqrt{w}}\Big)^{a\sigma}\OO^{(\sigma)}.$$
We have $\wh\E \,\OO_q^{(\sigma)}= \E \,\wh\OO_q^{(\sigma)}$ 
because the differentials on both sides have the same dimension and are equal to 1 in $(\D, 0,1).$
\end{egs*}

It is not difficult to see that Theorem~\ref{change of beta} extends to Theorem~\ref{main: change of beta} using the argument in the proof of Proposition~\ref{prop: insertion}. 
We remark that Proposition~\ref{prop: insertion} is a special case ($\check{\bfs\beta\,} = b\cdot q + b\cdot q^*$ and $\bfs\beta = a\cdot p + (b-a/2)\cdot q + (b-a/2)\cdot q^*$) of Theorem~\ref{main: change of beta}. 

For two background charges $\bfs\beta, \check{\bfs\beta\,}$ satisfying the neutrality condition $(\NC_b),$ let 
$$\wh\E\, \XX_{\check{\bfs\beta\,}} := \E\,\XX_{\check{\bfs\beta\,}} V^\odot[\bfs\beta-\check{\bfs\beta\,}] = \frac{\E\, \XX_{\check{\bfs\beta\,}}\OO_{\check{\bfs\beta\,}}[\bfs\beta-\check{\bfs\beta\,}]}{\E\, \OO_{\check{\bfs\beta\,}}[\bfs\beta-\check{\bfs\beta\,}]}.$$
If the nodes of $\XX_{\check{\bfs\beta\,}}$ intersect with the those of $\bfs\beta-\check{\bfs\beta\,}$, then the product of $\XX_{\check{\bfs\beta\,}}$ and $\OO_{\check{\bfs\beta\,}}[\bfs\beta-\check{\bfs\beta\,}]$ should be understood as the sense of OPE's. 

The insertion of $V^\odot[\bfs\beta-\check{\bfs\beta\,}]$ is an operator $\XX\mapsto \wh\XX$ on Fock space functionals/fields given by the rules \eqref{eq: BCrules} and the formula
$$\wh\Phi_{\check{\bfs\beta\,}} = \Phi_{\bfs\beta}.$$

\begin{eg*} 
Suppose that a divisor $\bfs\tau$ on $S$ satisfies the neutrality condition $(\NC_0).$ 
Then we have 
\begin{equation} \label{eq: O hat}
\wh\OO_{\check{\bfs\beta\,}}[\bfs\tau] = \OO_{\bfs\beta}[\bfs\tau].
\end{equation}
Since both sides contain the same Wick's exponential $V^\odot [\bfs\tau] = \ee^{\odot i \Phi[\bfs\tau]},$ all we need is to check that both sides have the same correlation functions. 
It follows from Theorem~\ref{main: change of beta} and the algebra of OPE exponentials that 
$$
\E\,\wh\OO_{\check{\bfs\beta\,}}[\bfs\tau] = \wh \E\, \OO_{\check{\bfs\beta\,}}[\bfs\tau]= \frac{\E\, \OO_{\check{\bfs\beta\,}}[\bfs\tau]\OO_{\check{\bfs\beta\,}}[\bfs\beta-\check{\bfs\beta\,}]}{\E\, \OO_{\check{\bfs\beta\,}}[\bfs\beta-\check{\bfs\beta\,}]} = \frac{\E\, \OO_{\check{\bfs\beta\,}}[\bfs\tau+\bfs\beta-\check{\bfs\beta\,}]}{\E\, \OO_{\check{\bfs\beta\,}}[\bfs\beta-\check{\bfs\beta\,}]} = \frac{C[\bfs\beta+\bfs\tau]}{C[\bfs\beta]} = \E\,\OO_{\bfs\beta}[\bfs\tau].
$$
\end{eg*}

\section{Ward identities and BPZ equations}
We represent the Ward functionals as the Lie derivative operators within correlations of fields in the extended OPE family.
Combining these representations (Ward identities) with the expression of the Virasoro fields in terms of the Ward functionals, we derive the Belavin-Polyakov-Zamolodchikov equations (BPZ equations) for correlations involving the Virasoro fields or the Virasoro generators. 

\subsection{Ward's identity}
Let $D^*:=D\setminus(\supp\,\bfs\beta^+\cup\,\supp\,\bfs\beta^-)$ and $\bfs\beta=\bfs\beta^++\bfs\beta^-_*$ as before.

\begin{lem} \label{*F}
For a non-random local vector field $v,$ all fields in $\FF_{\bfs\beta}(D^*)$ satisfy local Ward's identity~\eqref{eq: LvWv} in $D_\hol(v) \cap D^*.$
\end{lem}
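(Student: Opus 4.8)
**Proof plan for Lemma~\ref{*F}.**

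The plan is to reduce the statement to a stress tensor calculation already available in the excerpt. By Theorem~\ref{SETinD}, the modified bosonic field $\Phi_{\bfs\beta}$ has the stress tensor $(A_{\bfs\beta},\bar A_{\bfs\beta})$ with $A_{\bfs\beta}=A+ib\,\pa J-j_{\bfs\beta}J$, so by definition of a stress tensor the residue form of Ward's identity \eqref{eq: LvWv}, namely $\LL_v X=(A_{\bfs\beta})_v^+X+(A_{\bfs\beta})_v^-X$, holds for $X=\Phi_{\bfs\beta}$ on $D_\hol(v)$. The key structural fact I would invoke is that Ward's family $\FF(A_{\bfs\beta},\bar A_{\bfs\beta})$ is closed under the OPE product $*$ and under $\pa,\bp$; this is exactly the statement quoted after the introduction of stress tensors (``Ward's family is closed under the OPE product $*$ and differentiations,'' cf. \cite[Proposition~5.8]{KM13}). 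First I would therefore observe that $1$, all derivatives $\pa^m\bp^n\Phi_{\bfs\beta}$, and all OPE products thereof lie in $\FF(A_{\bfs\beta},\bar A_{\bfs\beta})$, and in particular so do $J_{\bfs\beta}$ and $T_{\bfs\beta}$.

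The only generators of $\FF_{\bfs\beta}(D^*)$ not of that purely differential-polynomial type are the OPE exponentials $\OO_{\bfs\beta}[\bfs\tau]$ with nodes in $D^*$. For these I would argue separately that each $\OO_{\bfs\beta}[\bfs\tau]$ has $(A_{\bfs\beta},\bar A_{\bfs\beta})$ as a stress tensor, by checking Ward's OPE \eqref{eq: Ward's OPEs}: using the representation $\OO_{\bfs\beta}[\bfs\tau]=(C_{(b)}[\bfs\tau+\bfs\beta]/C_{(b)}[\bfs\beta])\,\ee^{\odot i\Phi[\bfs\tau]}$ and the operator product expansion $J(\zeta)\,\ee^{\odot i\Phi[\bfs\tau]}(z)\sim\big(\sum_j i\tau_j/(\zeta-z_j)\big)\ee^{\odot i\Phi[\bfs\tau]}(z)$ together with $A(\zeta)\Phi_{\bfs\beta}(z)\sim \pa\Phi_{\bfs\beta}(z)/(\zeta-z)+ib/(\zeta-z)^2$, one recovers precisely the action of $\LL^+_{k_\zeta}$ on a $[\lambda_j,\lambda_j^-]$-differential at each node, the weights $\lambda_j=\lambda_b(\sigma_j)$ being exactly the conformal dimensions of $\OO_{\bfs\beta}[\bfs\tau]$ recorded in Section~\ref{sec: O}. (This is the content that the OPE exponentials are primary fields of the stated dimensions; the logarithmic/background-charge contributions at the nodes $q_k$ are invisible here since those nodes are excluded from $D^*$.) Since the family is closed under $*$ and differentiation, and each generator lies in $\FF(A_{\bfs\beta},\bar A_{\bfs\beta})$, every element of $\FF_{\bfs\beta}(D^*)$ lies there, hence satisfies \eqref{eq: LvWv} on $D_\hol(v)\cap D^*$.

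I expect the main obstacle to be bookkeeping rather than a conceptual difficulty: one must be careful that the OPE products defining elements of $\FF_{\bfs\beta}(D^*)$ can involve fields like $J_{\bfs\beta}$ and $j_{\bfs\beta}$ whose one-point functions have poles at the punctures $q_k$, so the closure-under-$*$ statement must be applied on $D^*$ where these are holomorphic, and one must verify that the residue integral in \eqref{eq: LvWv} genuinely localizes at the evaluation point $z\in D^*$ and not at any $q_k$ — this is automatic because $v$ is holomorphic on $D_\hol(v)$ and we restrict to $z\in D_\hol(v)\cap D^*$, so a small loop $\oint_{(z)}$ encircles no puncture. A secondary point worth spelling out is that \eqref{eq: Ward's OPEs} is an equivalence only for symmetric families $\FF(A,\bar A)$; since $\bfs\beta$ is assumed symmetric here, $A_{\bfs\beta^-}^-=\overline{A_{\bfs\beta}}$ by Theorem~\ref{SETinD}, so the symmetric framework applies and the anti-holomorphic half follows by conjugation from the holomorphic half. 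With those caveats handled, the proof is a direct appeal to Theorem~\ref{SETinD} plus the closure properties of Ward's family.
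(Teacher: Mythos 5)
Your reduction identifies the right generators, but it assumes away the actual content of the lemma. The structural fact you lean on --- ``Ward's family is closed under the OPE product $*$ and differentiations'' --- is, in the form cited (\cite[Proposition~5.8]{KM13}), a statement about the OPE coefficients $*_n$ with integer exponents, i.e.\ the case $\alpha=0$ in \eqref{OPE alpha}. The family $\FF_{\bfs\beta}(D^*)$, however, is generated under OPE multiplications of OPE exponentials, and those expansions carry non-integer exponents: $\OO^{(\sigma_1)}(\zeta)\OO^{(\sigma_2)}(z)$ expands in powers $(\zeta-z)^{\sigma_1\sigma_2+n}$, so the relevant products are $*_{\alpha+n}$ with $\alpha=\sigma_1\sigma_2\notin\mathbb{Z}$ in general. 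The closure result you quote does not cover these products, and the paper's proof of the lemma is devoted precisely to this point: assuming $X,Y$ satisfy the residue form of Ward's identity, one must prove
\[
[(vT_{\bfs\beta})*_{-1}X]*_{\alpha+n}Y \;+\; X*_{\alpha+n}[(vT_{\bfs\beta})*_{-1}Y] \;=\; (vT_{\bfs\beta})*_{-1}\bigl(X*_{\alpha+n}Y\bigr),
\]
which together with Leibniz's rule for $\LL_v^+$ gives the induction step $X,Y\mapsto X*_{\alpha+n}Y$. The paper establishes this identity by a contour-deformation argument with three concentric circles $C_-\subset C\subset C_+$ around $z$, writing each side as an iterated contour integral of $v(\zeta)T_{\bfs\beta}(\zeta)\,(\eta-z)^{-\alpha-n-1}X(\eta)Y(z)$ and subtracting; the multivalued factor $(\eta-z)^{-\alpha}$ causes no trouble only because it is cancelled by the $\OPE_\alpha$ of $X(\eta)$ and $Y(z)$. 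None of this appears in your proposal, so the step ``since the family is closed under $*$ and differentiation, every element of $\FF_{\bfs\beta}(D^*)$ lies in Ward's family'' is unsupported exactly where it is needed.

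The parts of your plan that check the generators are fine and consistent with the paper: $\Phi_{\bfs\beta}$ has stress tensor $(A_{\bfs\beta},\bar A_{\bfs\beta})$ by Theorem~\ref{SETinD}, the OPE exponentials with nodes in $D^*$ are primary of the recorded dimensions, the residue in \eqref{eq: LvWv} localizes at $z\in D_\hol(v)\cap D^*$ and never encircles a point of $\supp\,\bfs\beta$ (the behaviour at the punctures is treated separately, within correlations, in Lemma~\ref{Ward@q}), and the symmetric case lets you conjugate the holomorphic half. But the heart of Lemma~\ref{*F} is the stability of the Ward-identity property under the generalized products $*_{\alpha+n}$ with fractional $\alpha$, and your appeal to the $\alpha=0$ closure theorem does not deliver it; you would need to supply the contour argument (or an equivalent computation) to close this gap.
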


\begin{proof}
For simplicity, let us consider the case of \emph{holomorphic} fields $X$ and $Y$ only. 
We need to show that if $X$ and $Y$ satisfy the residue form~\eqref{eq: LvWv} of Ward's identity at the nodes, then so does $X*_{\alpha+n}Y\in\FF_{\bfs\beta}(D^*).$ 
(The case $\alpha =0$ was covered in \cite[Proposition~5.8]{KM13}.)
In this case, we have 
$\LL_v^+X = (vT)*_{-1}X$ and $\LL_v^+Y = (vT)*_{-1}Y.$
We need to check that
$$\LL_v^+(X*_{\alpha+n}Y) = (vT)*_{-1}(X*_{\alpha+n}Y).$$
By Leibniz's rule, the left-hand side is
$(\LL_v^+X)*_{\alpha+n}Y+X*_{\alpha+n}(\LL_v^+Y).$
Since $X$ and $Y$ satisfy the residue form of Ward's identity, it is equal to 
$$[(vT)*_{-1}X]*_{\alpha+n}Y + X*_{\alpha+n}[(vT)*_{-1}Y].$$
Let us show $[(vT)*_{-1}X]*_{\alpha+n}Y + X*_{\alpha+n}[(vT)*_{-1}Y] = (vT)*_{-1}(X*_{\alpha+n}Y).$

Let $C_-,$ $C,$ and $C_+$ be three concentric circles centered at $z$ with increasing radii, $\ve_-<\ve<\ve_+(\ll1).$
In the correlations with the string $\XX$ whose nodes are outside of the discs, we have
\begin{align*}
(vT)*_{-1}(X*_{\alpha+n}Y)(z) =\frac1{2\pi i}\, \oint_{\zeta\in C_+}v(\zeta)T(\zeta)\, (X*_{\alpha+n}Y)(z)\,\dd &\zeta\\
=\frac1{(2\pi i)^2}\, \oint_{\zeta\in C_+}\oint_{\eta\in C} v(\zeta)T(\zeta)\, (\eta-z)^{-\alpha-n-1}X(\eta) \,Y(z)\,\dd \eta\,\dd &\zeta.
\end{align*}
For $\alpha\notin \mathbb{Z},$ the multivalued factor $(\eta-z)^{-\alpha}$ is canceled out by the $\OPE_\alpha$ of $X(\eta)$ and $Y(z).$
In a similar way, we compute $X*_{\alpha+n}[(vT)*_{-1}Y](z)$ as 
$$\qquad\frac1{(2\pi i)^2}\, \oint_{\zeta\in C_-}\oint_{\eta\in C} v(\zeta)T(\zeta)\, (\eta-z)^{-\alpha-n-1}X(\eta) \,Y(z)\,\dd \eta\,\dd \zeta.$$
Subtracting, we express $(vT)*_{-1}(X*_{\alpha+n}Y)(z)-X*_{\alpha+n}[(vT)*_{-1}Y](z)$ as 
\begin{gather*}
\frac1{(2\pi i)^2}\, \oint_{\eta\in C}\oint_{\zeta\in [C_+-C_-]} v(\zeta)T(\zeta)\, (\eta-z)^{-\alpha-n-1}X(\eta) \,Y(z)\,\dd \zeta\,\dd \eta\\
=\frac1{(2\pi i)^2}\, \oint_{\eta\in C}(\eta-z)^{-\alpha-n-1}\oint_{(\eta)} v(\zeta)T(\zeta)X(\eta)\, \dd \zeta\,Y(z)\,\dd \eta.
\end{gather*}
The last integral simplifies to $[(vT)*_{-1}X]*_{\alpha+n}Y(z).$
\end{proof}

The next lemma states that within correlations of fields in $\FF_{\bfs\beta}(D^*)$ the residue form of Ward's identity holds at the marked points $q_k$'s where the background charges are placed.

\begin{lem} \label{Ward@q}
We have
$$ \frac1{2\pi i}\oint_{(q_k)} vA_{\bfs\beta} = \LL_{v}^+(q_k) = v(q_k)\pa_{q_k}$$
within correlations of fields in $\FF_{\bfs\beta}(D^*).$ 
\end{lem}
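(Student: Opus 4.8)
The plan is to reduce the statement to the effective picture where the background charge at $q_k$ is absorbed into the Coulomb gas correlation, and then to exploit the fact that $A_{\bfs\beta}$ has only a specific known type of singularity at $q_k$. First I would recall from Theorem~\ref{SETinD} that $A_{\bfs\beta} = A + ib\,\pa J - j_{\bfs\beta}J$, and that in the $\H$- or $\D$-uniformization $j_{\bfs\beta}$ has, by \eqref{eq: j_beta} (and its radial analogue), a simple pole at each $q_k$ with residue proportional to $\beta_k$. Hence near $q_k$ the quadratic differential $A_{\bfs\beta}$ has at worst a double pole, and the relevant data of the contour integral $\frac1{2\pi i}\oint_{(q_k)} vA_{\bfs\beta}$ acting on a correlation $\E\,X_{\bfs\beta}$ (with $X_{\bfs\beta}\in\FF_{\bfs\beta}(D^*)$, so none of its nodes sit at $q_k$) is governed by the coefficients of $(z-q_k)^{-2}$ and $(z-q_k)^{-1}$ in $\E\,A_{\bfs\beta}(z)X_{\bfs\beta}$ together with the Taylor expansion of $v$ at $q_k$.

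The key computation is therefore to identify those two singular coefficients. For this I would use the OPE description of the extended family: a correlation $\E\,A_{\bfs\beta}(z)X_{\bfs\beta}$, viewed as a function of $z$ near $q_k$, is meromorphic there with a double pole whose coefficient is $\E\,(A_{\bfs\beta})_{q_k}^{(2)}X_{\bfs\beta}$ and whose $(z-q_k)^{-1}$ coefficient is $\E\,(A_{\bfs\beta})_{q_k}^{(1)}X_{\bfs\beta}$, where the relevant "$*_{-2}$" and "$*_{-1}$" products at the puncture can be computed from $A_{\bfs\beta}=A+ib\pa J - j_{\bfs\beta}J$ exactly as in the examples of Subsection~\ref{ss: Eg4Fq} (e.g.\ the computation giving $(T_{\check{\bfs\beta\,}})_q$). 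The upshot should be that the double-pole coefficient is $\lambda_b(\beta_k)$ times the identity (i.e.\ the conformal weight at $q_k$) and that the contributions conspire, via the neutrality condition $(\NC_b)$ and the precise value of the residue of $j_{\bfs\beta}$, so that the residue integral collapses to $v(q_k)\pa_{q_k}$ acting on the correlation. Concretely, writing $v(\zeta)=v(q_k)+v'(q_k)(\zeta-q_k)+\cdots$, the $(z-q_k)^{-2}$ term paired with $v'(q_k)$ would produce $v'(q_k)\lambda_k$, the $(z-q_k)^{-1}$ term paired with $v(q_k)$ would produce the $\pa_{q_k}$ action; one then checks that these are precisely the two terms in the holomorphic Lie derivative $\LL_v^+(q_k)$ of a weight-$\lambda_k$ object at $q_k$ — but since the ``object at $q_k$'' is only the puncture operator $\PP_{\bfs\beta}=C_{(b)}[\bfs\beta]$, whose weight at $q_k$ is exactly $\lambda_b(\beta_k)$, the stated identity $\frac1{2\pi i}\oint_{(q_k)}vA_{\bfs\beta}=v(q_k)\pa_{q_k}$ emerges (the $v'(q_k)$ piece being cancelled against the derivative of the explicit conformal factor in $C_{(b)}[\bfs\beta]$, or more cleanly, absorbed once one differentiates the whole correlation rather than the Wick part alone).

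A cleaner route, which I would try to run in parallel, is the following: use Theorem~\ref{SETinD} to write $\E\,A_{\bfs\beta}(z)X_{\bfs\beta} = \E\,T_{\bfs\beta}(z)X_{\bfs\beta} - S(z)$ where $S$ is the non-random Schwarzian part (holomorphic away from $\supp\,\bfs\beta$), then invoke the already-available local Ward identity on all of $D^*$ (Lemma~\ref{*F}) to move the contour $\oint_{(q_k)}$ off $q_k$; the integrand $vA_{\bfs\beta}X_{\bfs\beta}$ is holomorphic in the punctured neighborhood, so the contour can be deformed and its value is determined entirely by the residue of the explicit meromorphic one-point function $\E\,A_{\bfs\beta}(z)$ at $q_k$ times $\E\,X_{\bfs\beta}$, plus derivative corrections from the pole structure of the two-point interaction. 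Evaluating that residue from the displayed formula $\E\,T_{\bfs\beta}(z)=\sum_k\lambda_k/(z-q_k)^2+\cdots$ in the appropriate uniformization yields the claim directly. The main obstacle I expect is bookkeeping the $(z-q_k)^{-1}$ coefficient correctly: it mixes the ``diagonal'' contribution $\sum_{j\ne k}\beta_j\beta_k/(q_k-q_j)$ from the explicit one-point function with the derivative of $\E\,X_{\bfs\beta}$ with respect to the node locations, and one must check these reassemble into exactly $v(q_k)\pa_{q_k}$ with no leftover term — this is where the neutrality condition $(\NC_b)$ and the fact that $\pa_{q_k}$ is understood to act on the full background-charge-modified correlation (hence also on $C_{(b)}[\bfs\beta]$) are essential, and I would treat that reconciliation as the crux of the argument.
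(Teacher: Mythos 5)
There is a genuine gap, and it sits exactly where you park it at the end: identifying the $(z-q_k)^{-1}$ coefficient with the $\pa_{q_k}$-action is the entire content of the lemma, and your proposal never supplies an argument for it. Within correlations with a string $\XX_{\bfs\beta}$ of fields in $\FF_{\bfs\beta}(D^*)$, the decomposition $A_{\bfs\beta}=A+ib\pa J-j_{\bfs\beta}J$ shows that $\E\,A_{\bfs\beta}(z)\XX_{\bfs\beta}$ has only a \emph{simple} pole at $q_k$: the terms $A$ and $ib\pa J$ correlate holomorphically near $q_k$ (the nodes of $\XX_{\bfs\beta}$ avoid $q_k$), and the only singularity comes from the simple pole of $j_{\bfs\beta}$ with $\Res_{q_k}j_{\bfs\beta}=-i\beta_k$. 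Hence the contour integral equals $i\beta_k\,v(q_k)\,\E\,J(q_k)\XX_{\bfs\beta}$, and what must actually be proved is
$$i\beta_k\,\E\,J(q_k)\,\XX_{\bfs\beta}=\pa_{q_k}\E\,\XX_{\bfs\beta}.$$
The paper proves this by choosing a reference background charge $\bfs\beta_0$ with $\supp\,\bfs\beta\cap\supp\,\bfs\beta_0=\emptyset$, writing $\E\,\XX_{\bfs\beta}=\E\,\XX_{\bfs\beta_0}\,\ee^{\odot i\Phi[\bfs\beta-\bfs\beta_0]}$ (Theorem~\ref{change of beta}), differentiating in $q_k$, and then using Wick's calculus to trade $J(q_k)\odot\ee^{\odot i\Phi[\bfs\beta-\bfs\beta_0]}$ for $J(q_k)\,\ee^{\odot i\Phi[\bfs\beta-\bfs\beta_0]}$ up to the non-random factor $j_{\bfs\beta}-j_{\bfs\beta_0}$, which converts the expression back into $\E\,J(q_k)\XX_{\bfs\beta}$. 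Your plan replaces this step with the hope that the terms ``conspire'' or ``reassemble,'' which is not a proof.

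Two further points in your plan would actively mislead you. First, there is no double pole and hence no $\lambda_b(\beta_k)v'(q_k)$ term to cancel: $A_{\bfs\beta}=T_{\bfs\beta}-\E\,T_{\bfs\beta}$ is the centered field, so the $\lambda_k/(z-q_k)^2$ singularity you invoke lives in the non-random one-point function $\E\,T_{\bfs\beta}$, not in $\E\,A_{\bfs\beta}(z)\XX_{\bfs\beta}$; relatedly, correlations of fields in $\FF_{\bfs\beta}(D^*)$ are scalars with respect to $q_k$ (for instance the weights at $q_k$ cancel in $\E\,\OO_{\bfs\beta}[\bfs\tau]=C_{(b)}[\bfs\tau+\bfs\beta]/C_{(b)}[\bfs\beta]$), which is exactly why $\LL_v^+(q_k)=v(q_k)\pa_{q_k}$ carries no weight term---nothing needs to be ``absorbed'' into $\PP_{\bfs\beta}$, which does not occur in these correlations. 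Second, your alternative route via ``the residue of the explicit one-point function $\E\,A_{\bfs\beta}$'' cannot start, since $\E\,A_{\bfs\beta}\equiv0$ by definition, and Lemma~\ref{*F} only gives the residue form of Ward's identity at the nodes of the string, so it provides no control at $q_k$; the missing input is precisely the change-of-background-charge insertion identity displayed above.
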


\begin{proof}
From the relation
$$A_{\bfs\beta} = A + ib\pa J -j_{\bfs\beta}J, \qquad j_{\bfs\beta}(\zeta) = \E\,J_{\bfs\beta}(\zeta),$$
and the fact that the meromorphic function $j_{\bfs\beta}$ has a simple pole at $q_k$ with $\Res_{q_k}j_{\bfs\beta}=-i\beta_k,$
it follows that 
$$ \frac1{2\pi i}\oint_{(q_k)} vA_{\bfs\beta} = - \frac1{2\pi i}\oint_{(q_k)} v j_{\bfs\beta} J = i\beta_k v(q_k) J(q_k).$$
We now claim that 
\begin{equation}\label{eq: Ward@q claim}
i\beta_k\E\, J(q_k) \XX_{\bfs\beta} = \pa_{q_k}\E\,\XX_{\bfs\beta}
\end{equation}
for any string $\XX_{\bfs\beta}$ of fields in $\FF_{\bfs\beta}(D^*).$ 
Then lemma follows immediately. 
Note that the second identity 
$$\LL_{v}^+(q_k) \E\,\XX_{\bfs\beta}= v(q_k)\pa_{q_k}\E\,\XX_{\bfs\beta}$$
is obvious since $\E\,\XX_{\bfs\beta}$ is a scalar with respect to $q_k$.

To prove the claim, let us consider a reference background charge $\bfs\beta_0$ such that 
$$\supp\, \bfs\beta \cap \supp\, \bfs\beta_0 = \emptyset.$$ 
Differentiating the relation (Theorem~\ref{change of beta}) $\E\,\XX_{\bfs\beta} = \E\,\XX_{\bfs\beta_0} \ee^{\odot i\Phi[\bfs\beta-\bfs\beta_0]},$
we have 
\begin{equation}\label{eq: Ward@q0}
\pa_{q_k}\E\XX_{\bfs\beta} = \pa_{q_k}\E\,\XX_{\bfs\beta_0} \ee^{\odot i\Phi[\bfs\beta-\bfs\beta_0]} = i\beta_k \E\,\XX_{\bfs\beta_0} J(q_k)\odot \ee^{\odot i\Phi[\bfs\beta-\bfs\beta_0]}.
\end{equation}
It follows from Wick's calculus that
\begin{equation}\label{eq: Ward@q1}
J(q_k')\odot \ee^{\odot i\Phi[\bfs\beta-\bfs\beta_0]}
=J(q_k')\, \ee^{\odot i\Phi[\bfs\beta-\bfs\beta_0]} -
i \E[J(q_k')\Phi[\bfs\beta-\bfs\beta_0]]\,\ee^{\odot i\Phi[\bfs\beta-\bfs\beta_0]}.
\end{equation}
(Indeed, if one applies Wick's calculus to $J(q_k')\, \ee^{\odot i\Phi[\bfs\beta-\bfs\beta_0]},$ the term $J(q_k')\odot \ee^{\odot i\Phi[\bfs\beta-\bfs\beta_0]}$ comes from $0$ contraction and the term $i \E[J(q_k')\Phi[\bfs\beta-\bfs\beta_0]]\,\ee^{\odot i\Phi[\bfs\beta-\bfs\beta_0]}$ comes from $1$ contraction.)
Here we take $q_k'\ne q_k$ so that the term $J_{\bfs\beta_0}(q_k')\, \ee^{\odot i\Phi[\bfs\beta-\bfs\beta_0]}$ in the above makes sense but send $q_k'$ to $q_k$ at the end.
By Wick's calculus, we find the non-random factor in the last term as 
\begin{equation}\label{eq: Ward@q2}
i \E[J(q_k')\Phi[\bfs\beta-\bfs\beta_0]] = \E[J(q_k')\,\ee^{\odot i\Phi[\bfs\beta-\bfs\beta_0]}] = j_{\bfs\beta}(q_k')- j_{\bfs\beta_0}(q_k').
\end{equation}
Alternatively, the last identity in the above follows from Theorem~\ref{change of beta}: 
\begin{align*}
j_{\bfs\beta}(q_k') = \E\, J_{\bfs\beta}(q_k') &= \E[J_{\bfs\beta_0}(q_k') \,\ee^{\odot i\Phi[\bfs\beta-\bfs\beta_0]}] \\&= \E[(J(q_k')+j_{\bfs\beta_0}(q_k')) \,\ee^{\odot i\Phi[\bfs\beta-\bfs\beta_0]}] = \E[J(q_k')\,\ee^{\odot i\Phi[\bfs\beta-\bfs\beta_0]}] + j_{\bfs\beta_0}(q_k').
\end{align*}
Combining \eqref{eq: Ward@q1} and \eqref{eq: Ward@q2}, we have 
$$J(q_k')\odot \ee^{\odot i\Phi[\bfs\beta-\bfs\beta_0]}
=J(q_k')\, \ee^{\odot i\Phi[\bfs\beta-\bfs\beta_0]} +\big( j_{\bfs\beta_0}(q_k')-j_{\bfs\beta}(q_k')\big)\,\ee^{\odot i\Phi[\bfs\beta-\bfs\beta_0]}.$$
In application to $\XX_{\bfs\beta_0}$ within correlations, 
$$\E\,\XX_{\bfs\beta_0} J(q_k')\odot \ee^{\odot i\Phi[\bfs\beta-\bfs\beta_0]}
=\E\,\XX_{\bfs\beta_0} J_{\bfs\beta_0}(q_k')\, \ee^{\odot i\Phi[\bfs\beta-\bfs\beta_0]} - j_{\bfs\beta}(q_k') \E\,\XX_{\bfs\beta_0}\, \ee^{\odot i\Phi[\bfs\beta-\bfs\beta_0]}.$$
Applying Theorem~\ref{change of beta} again to each term on the right-hand side, we obtain 
$$\E\,\XX_{\bfs\beta_0} J(q_k')\odot \ee^{\odot i\Phi[\bfs\beta-\bfs\beta_0]}= \E\,\XX_{\bfs\beta} J_{\bfs\beta}(q_k') - j_{\bfs\beta}(q_k') \E\,\XX_{\bfs\beta} = \E\, J(q_k') \XX_{\bfs\beta}.$$ 
Sending $q_k'$ to $q_k,$ we have 
$$\E\,\XX_{\bfs\beta_0} J(q_k)\odot \ee^{\odot i\Phi[\bfs\beta-\bfs\beta_0]} = \E\, J(q_k) \XX_{\bfs\beta}.$$ 
The claim \eqref{eq: Ward@q claim} now follows from the above equation and \eqref{eq: Ward@q0}. 
We now finish the proof of the lemma. 
\end{proof}

Combining the above two lemmas, we claim that all fields in $\FF_{\bfs\beta}(D^*)$ satisfy Ward's identity~\eqref{eq: LvWv} in $D_\hol(v) \cap D$ within correlations.
Applying the same arguments as in the proof of Lemma~\ref{*F} to this claim, we extend this claim to $\FF_{\bfs\beta}(D).$

\begin{thm} \label{Ward's identity}
For a non-random local vector field $v,$ all fields in $\FF_{\bfs\beta}(D)$ satisfy local Ward's identity~\eqref{eq: LvWv} in $D_\hol(v) \cap D$ within correlations.
\end{thm}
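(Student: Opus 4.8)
The plan is to derive Theorem~\ref{Ward's identity} from Lemmas~\ref{*F} and~\ref{Ward@q}. These two lemmas complement each other: Lemma~\ref{*F} supplies the residue form~\eqref{eq: LvWv} at each node, all of which lie in $D^*$, while Lemma~\ref{Ward@q} shows that, within correlations of fields in $\FF_{\bfs\beta}(D^*)$, a contour integral of $v A_{\bfs\beta}$ around a puncture $q_k$ acts precisely as $\LL_v^+(q_k)=v(q_k)\pa_{q_k}$. The point is that away from $\supp\,\bfs\beta^+\cup\,\supp\,\bfs\beta^-$ the modified stress tensor $A_{\bfs\beta}=A+ib\,\pa J-j_{\bfs\beta}J$ differs from the stress tensor $A$ of the Gaussian free field only by a holomorphic term, so in a correlation of fields from $\FF_{\bfs\beta}(D^*)$ the only singularities of $v A_{\bfs\beta}$ beyond the Wick contractions at the nodes are the simple poles of $j_{\bfs\beta}$ at the $q_k$ (with $\Res_{q_k}j_{\bfs\beta}=-i\beta_k$). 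Hence, combining the two lemmas gives~\eqref{eq: LvWv}, within correlations, for every field of $\FF_{\bfs\beta}(D^*)$ at every point of $D_\hol(v)\cap D$; the remaining task is to pass from $\FF_{\bfs\beta}(D^*)$ to the extended family $\FF_{\bfs\beta}(D)$.

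For that passage I would argue exactly as in the proof of Lemma~\ref{*F}. A general element of $\FF_{\bfs\beta}(D)$ is a string $\YY=\XX X_q$ with $\XX\in\FF_{\bfs\beta}(D^*)$ and $X_q=X_1*(\cdots*(X_n*1_q)\cdots)$, $X_j\in\FF_{\bfs\beta}(D^*)$ (and likewise for several punctures $q_k$). I would induct on $n$, repeating the three-circle ($C_-\subset C\subset C_+$) Leibniz computation from Lemma~\ref{*F}: if holomorphic fields $X$ and $Y$ each satisfy~\eqref{eq: LvWv} within correlations at their nodes and at $q$, then so does $X*_{\alpha+n}Y$, the cancellation of the fractional factor $(\zeta-z)^\alpha$ in the operator product expansion~\eqref{OPE alpha} making the contour around $q$ unambiguous. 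The base case is $1_q$, for which~\eqref{eq: LvWv} at the puncture $q$ is exactly the content of Lemma~\ref{Ward@q}. Tensoring the resulting identity for $X_q$ with the factor $\XX$, whose nodes lie in $D^*$ and are covered by the previous step, then yields~\eqref{eq: LvWv} for $\YY$ at every point of $D_\hol(v)\cap D$.

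The main obstacle is the bookkeeping at the puncture during this induction. Unlike at a node, where~\eqref{eq: LvWv} is an identity of Fock space functionals, the residue form at a puncture holds only within correlations, so every deformation of contours must be carried out inside $\E[\,\cdot\,]$; and one must verify that the fractional powers $(\zeta-q)^{\alpha}$ produced by operator product expansions involving the OPE exponentials rooted at $q$ cancel against the $\OPE_\alpha$ of the factors, so that the contours encircling $q$ are well defined and pick up the residue there cleanly --- precisely the place where the proof of Lemma~\ref{*F} invoked $\alpha\notin\mathbb{Z}$. Beyond this, the argument is a routine repetition of the computations already performed in Lemmas~\ref{*F} and~\ref{Ward@q}.
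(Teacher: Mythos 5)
Your proposal follows the paper's own route: the paper likewise combines Lemma~\ref{*F} (residue identity at the nodes) with Lemma~\ref{Ward@q} (residue identity at the punctures, valid within correlations) to get Ward's identity for $\FF_{\bfs\beta}(D^*)$ on $D_\hol(v)\cap D$, and then extends to the family $\FF_{\bfs\beta}(D)$ by repeating the contour/Leibniz argument from the proof of Lemma~\ref{*F}. Your write-up is correct and, if anything, spells out more explicitly than the paper the induction on $X_q=X_1*(\cdots*(X_n*1_q)\cdots)$ and the need to keep all contour deformations inside $\E[\,\cdot\,]$ at the puncture.
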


\subsection{Ward equations on the Riemann sphere} 
In this subsection we express the Virasoro fields $T_{\bfs\beta}$ in terms of Lie derivative within correlations of fields in $\FF_{\bfs\beta}(S).$

Given a meromorphic vector field with poles $\xi_l,$ we define the Ward functional $W_{\bfs\beta}^+(v)$ by
$$\E\,W_{\bfs\beta}^+(v)\XX := - \frac1{2\pi i}\sum_l \oint_{(\xi_l)} v\,\E\,A_{\bfs\beta}\XX$$
for any tensor product $\XX$ of Fock space fields such that the set $S_\XX$ of all nodes of $\XX$ does not intersect the set of poles of $v.$ 

\begin{prop} \label{Ward identities on S}
In correlations with any string of fields in the extended OPE family $\FF_{\bfs\beta}(S),$ we have 
$$W_{\bfs\beta}^+(v) = \LL_v^+.$$
\end{prop}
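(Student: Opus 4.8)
The plan is to reduce Proposition~\ref{Ward identities on S} to the two local Ward identities already established, namely Lemma~\ref{*F} (the residue form holds as an identity of Fock space functionals at each node of a string in $\FF_{\bfs\beta}(D^*)$, and by the extension argument in $\FF_{\bfs\beta}(D)$) and Lemma~\ref{Ward@q} (the residue form holds within correlations at each puncture $q_k\in\supp\,\bfs\beta^\pm$). The key observation is that on the Riemann sphere the space of meromorphic vector fields is finite-dimensional modulo nothing: given a string $\XX$ of fields in $\FF_{\bfs\beta}(S)$ and a meromorphic vector field $v$ whose poles avoid $S_\XX$, the form $v\,\E\,A_{\bfs\beta}\XX$ is a meromorphic $1$-form on $\wh\C$ whose only possible poles lie in $S_\XX\cup\{\text{poles of }v\}\cup\supp\,\bfs\beta^\pm$. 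Since the sum of all residues of a meromorphic $1$-form on $\wh\C$ vanishes, we get
$$\sum_{\xi_l}\oint_{(\xi_l)}v\,\E\,A_{\bfs\beta}\XX + \sum_{z\in S_\XX}\oint_{(z)}v\,\E\,A_{\bfs\beta}\XX + \sum_{q_k}\oint_{(q_k)}v\,\E\,A_{\bfs\beta}\XX = 0,$$
and hence, by the definition of $W_{\bfs\beta}^+(v)$,
$$\E\,W_{\bfs\beta}^+(v)\XX = \frac1{2\pi i}\sum_{z\in S_\XX}\oint_{(z)}v\,\E\,A_{\bfs\beta}\XX + \frac1{2\pi i}\sum_{q_k}\oint_{(q_k)}v\,\E\,A_{\bfs\beta}\XX.$$

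First I would make precise what it means for the relevant correlation $\E\,A_{\bfs\beta}(\zeta)\XX$ to extend to a global meromorphic $1$-form in $\zeta$ on $\wh\C$: one needs to know that $A_{\bfs\beta}^+$ (the holomorphic part of the stress tensor) contributes a holomorphic quadratic differential away from the punctures and the nodes, which follows from Theorem~\ref{SETinS}, and that its correlation with $\XX$ has at worst poles at the punctures $q_k$ (coming from the simple pole of $j_{\bfs\beta}$, as used in Lemma~\ref{Ward@q}) and at the nodes of $\XX$ (coming from the OPE singularities). Then $v\cdot\E\,A_{\bfs\beta}\XX$, with $v$ meromorphic, has at worst poles in the finite set described above, and the residue theorem on $\wh\C$ applies verbatim (taking care of the behavior at $\infty$ by the usual change of chart $z\mapsto -1/z$).

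Next, at each node $z\in S_\XX$ I would invoke Lemma~\ref{*F} (extended to $\FF_{\bfs\beta}(D)$, or rather its $S$-version via the Schottky double / the $\wh\C$-version) to replace $\frac1{2\pi i}\oint_{(z)}v\,A_{\bfs\beta}\XX$ by $\LL_v^+(z)\XX$ within correlations, and at each puncture $q_k$ I would invoke Lemma~\ref{Ward@q} to replace $\frac1{2\pi i}\oint_{(q_k)}v\,A_{\bfs\beta}\XX$ by $\LL_v^+(q_k)\XX = v(q_k)\pa_{q_k}\XX$. Summing these over all nodes and all punctures gives exactly $\LL_v^+\XX$ acting on the whole string (the Lie derivative being additive over the factors, with the convention that it acts on every variable on which $\XX$ depends, including the puncture variables). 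Hence $\E\,W_{\bfs\beta}^+(v)\XX = \E\,\LL_v^+\XX$, which is the claim.

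The main obstacle I anticipate is purely bookkeeping but genuinely delicate: ensuring that $v\,\E\,A_{\bfs\beta}\XX$ really has \emph{no other} poles than those enumerated — in particular that there is no pole or monodromy issue coming from the multivaluedness of the OPE exponentials $\OO_{\bfs\beta}[\bfs\tau]$ when $\bfs\tau$ has non-integer charges, and that the ``residue'' $\oint_{(z)}$ of a multivalued $1$-form around $z$ is still well defined and still sums to zero globally. This is handled the same way as in the proof of Lemma~\ref{*F}: the multivalued factors $(\eta-z)^{-\alpha}$ are cancelled by the matching $\OPE_\alpha$ so that each local contour integral picks out a single-valued coefficient, and one should check that choosing consistent branches along a system of cuts does not spoil the global residue identity. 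The other point to be careful about is the contribution at $\infty$: one must verify that either $\infty$ is a regular point of $v\,\E\,A_{\bfs\beta}\XX$ or, if not, its residue is correctly accounted for among the $\xi_l$, $q_k$, or $S_\XX$; the cleanest way is to choose $v$ with a zero of sufficiently high order at $\infty$ first, prove the identity for such $v$, and then use that both sides are (anti)holomorphic in the coefficients of $v$ to extend to all meromorphic $v$ — or simply absorb $\infty$ into the list of poles/punctures with the chart $z\mapsto-1/z$, exactly as in Section~\ref{sec: Coulomb gas correlations} (I will not grind through this).
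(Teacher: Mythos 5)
Your argument is essentially the paper's own proof: the paper likewise applies the residue theorem to the meromorphic $1$-differential $\zeta\mapsto v(\zeta)\,\E\,A_{\bfs\beta}(\zeta)\XX$ to convert $\E\,W_{\bfs\beta}^+(v)\XX$ into the sum of contour integrals around the nodes of $\XX$ and the punctures $q_k$, and then replaces each of these by $\LL_v^+(z_j)$ and $\LL_v^+(q_k)$ using the local Ward identities (the paper cites Theorem~\ref{Ward's identity}, which is exactly the combination of Lemma~\ref{*F} and Lemma~\ref{Ward@q} that you invoke). Your additional remarks on multivaluedness and the point at infinity are sensible but amount to the same bookkeeping the paper handles through its definition of $W_{\bfs\beta}^+(v)$ and the OPE arguments in Lemma~\ref{*F}.
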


\begin{proof}
Since the sum of all residues of the meromorphic $1$-differential $\zeta\mapsto v(\zeta)\E\,A_{\bfs\beta}(\zeta)\XX_{\bfs\beta}$ is zero, we have
$$\E\,W_{\bfs\beta}^+(v)\XX_{\bfs\beta} = \frac1{2\pi i} \sum_j \oint_{(z_j)} v\,\E\,A_{\bfs\beta}\XX_{\bfs\beta}+\frac1{2\pi i} \sum_k \oint_{(q_k)} v\,\E\,A_{\bfs\beta}\XX_{\bfs\beta}.$$ 
It follows from Ward's identity (Theorem~\ref{Ward's identity}) that 
$$\frac1{2\pi i} \sum_j \oint_{(z_j)} v\,\E\,A_{\bfs\beta}\XX_{\bfs\beta}+\frac1{2\pi i} \sum_k \oint_{(q_k)} v\,\E\,A_{\bfs\beta}\XX_{\bfs\beta} =\sum_j \E\, \LL_v^+(z_j)\XX_{\bfs\beta}+\sum_k \E\, \LL_v^+(q_k)\XX_{\bfs\beta} = \E\,\LL_v^+\XX_{\bfs\beta},$$
which completes the proof.
\end{proof}

Given $\xi\in\C,$ let us consider the vector fields $k_\xi, v_\xi$ given by 
$$k_\xi(z) = \frac1{\xi-z},\qquad v_\xi(z) = z\frac{\xi+z}{\xi-z} $$
in the identity chart of $\C.$
Ward's equations on the Riemann sphere now follow from the previous theorem.
\begin{cor} \label{Ward equation0}
In the $\wh\C$-uniformization, for any tensor product $\XX_{\bfs\beta}$ of fields in the extended OPE family $\FF_{\bfs\beta},$ we have 
\begin{equation} \label{eq: Ward equation0}
\E\,T_{\bfs\beta}(\xi) \XX_{\bfs\beta} = \E\,T_{\bfs\beta}(\xi)\,\E\,\XX_{\bfs\beta} + \E\, \LL_{k_\xi}^+\XX_{\bfs\beta}
\end{equation}
and 
\begin{equation} \label{eq: Ward equation0'}
2\xi^2\E\,T_{\bfs\beta}(\xi) \XX_{\bfs\beta} = 2\xi^2 \E\, T_{\bfs\beta}(\xi)\,\E\,\XX_{\bfs\beta} + \E\, \LL_{v_\xi}^+\XX_{\bfs\beta}.
\end{equation}
\end{cor}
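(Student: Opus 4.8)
The plan is to derive both identities directly from Proposition~\ref{Ward identities on S} by choosing the right meromorphic vector field and then isolating the pole at $\xi$ from the computation. Observe first that $A_{\bfs\beta} = T_{\bfs\beta} - \E\,T_{\bfs\beta}$ (by definition of $A_{\bfs\beta}$), so that $\E\,A_{\bfs\beta}(\zeta)\XX_{\bfs\beta}$ differs from $\E\,T_{\bfs\beta}(\zeta)\XX_{\bfs\beta}$ by the non-random term $\E\,T_{\bfs\beta}(\zeta)\,\E\,\XX_{\bfs\beta}$. The vector fields $k_\xi$ and $v_\xi$ are holomorphic on $\wh\C \setminus \{\xi\}$ — for $v_\xi$ one checks the behaviour at $\infty$ in the chart $z \mapsto -1/z$, where $v_\xi$ extends holomorphically with a zero, so $\xi$ is its only pole on the sphere — hence Proposition~\ref{Ward identities on S} applies with these choices once the nodes of $\XX_{\bfs\beta}$ avoid $\xi$.

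Concretely, for the chordal statement I would take $v = k_\xi$. By Proposition~\ref{Ward identities on S},
$$\E\,W_{\bfs\beta}^+(k_\xi)\XX_{\bfs\beta} = \E\,\LL_{k_\xi}^+\XX_{\bfs\beta},$$
and by the definition of the Ward functional the left-hand side is $-\frac1{2\pi i}\oint_{(\xi)} k_\xi \,\E\,A_{\bfs\beta}\XX_{\bfs\beta}$. Since $A_{\bfs\beta}$ is a holomorphic quadratic differential away from $\supp\,\bfs\beta$, its correlation $\zeta \mapsto \E\,A_{\bfs\beta}(\zeta)\XX_{\bfs\beta}$ is holomorphic near $\xi$ (as $\xi$ is not a node and not in $\supp\,\bfs\beta$), so the contour integral of $k_\xi(\zeta)\,\E\,A_{\bfs\beta}(\zeta)\XX_{\bfs\beta} = \frac{1}{\xi - \zeta}\E\,A_{\bfs\beta}(\zeta)\XX_{\bfs\beta}$ around $\xi$ picks out, up to sign and the factor $2\pi i$, the residue $-\E\,A_{\bfs\beta}(\xi)\XX_{\bfs\beta}$. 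This yields $\E\,A_{\bfs\beta}(\xi)\XX_{\bfs\beta} = \E\,\LL_{k_\xi}^+\XX_{\bfs\beta}$, and substituting $A_{\bfs\beta} = T_{\bfs\beta} - \E\,T_{\bfs\beta}$ gives \eqref{eq: Ward equation0}. For the radial statement \eqref{eq: Ward equation0'} I would run the same argument with $v = v_\xi$; now $v_\xi(\zeta) = \zeta\frac{\xi+\zeta}{\xi-\zeta}$ has residue $-2\xi^2$ at $\zeta = \xi$ (computing $\lim_{\zeta\to\xi}(\zeta-\xi)v_\xi(\zeta) = -2\xi^2$), so the contour integral around $\xi$ produces $2\xi^2\,\E\,A_{\bfs\beta}(\xi)\XX_{\bfs\beta}$, and the same substitution gives the second formula.

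The only real subtlety — and the step I would be most careful about — is confirming that $v_\xi$ has no pole at infinity on $\wh\C$, so that $\xi$ is genuinely the sole pole and Proposition~\ref{Ward identities on S} (whose Ward functional sums over \emph{all} poles of $v$) reduces to the single residue at $\xi$. This is a short chart computation: under $z \mapsto w = -1/z$ a vector field $v(z)\,\partial_z$ transforms by $v(z) \mapsto w'(z)\, v(1/(-w)) \cdot(\text{appropriate factor})$, and one sees $v_\xi$ transforms to something holomorphic (indeed vanishing) at $w = 0$. Everything else is residue bookkeeping, the holomorphicity of $\E\,A_{\bfs\beta}(\zeta)\XX_{\bfs\beta}$ near $\xi$ (immediate from the definitions of the OPE family and the hypothesis that $\xi$ avoids the nodes and $\supp\,\bfs\beta$), and the decomposition $T_{\bfs\beta} = A_{\bfs\beta} + \E\,T_{\bfs\beta}$.
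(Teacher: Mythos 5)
Your argument is correct and is precisely the route the paper intends: apply Proposition~\ref{Ward identities on S} with $v=k_\xi$ and $v=v_\xi$ (each having its sole pole on $\wh\C$ at $\xi$, being holomorphic at $\infty$ in the chart $z\mapsto -1/z$), evaluate the single residue using the holomorphicity of $\zeta\mapsto\E\,A_{\bfs\beta}(\zeta)\XX_{\bfs\beta}$ near $\xi$ (with $\mathrm{Res}_{\zeta=\xi}v_\xi=-2\xi^2$), and substitute $A_{\bfs\beta}=T_{\bfs\beta}-\E\,T_{\bfs\beta}$. The residue bookkeeping and the chart check at infinity are exactly the details the paper leaves implicit, so nothing further is needed.
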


\subsection{Ward's equations in a simply-connected domain}
To derive Ward's equations in $\H$ and $\D$, we modify Ward's functional to be defined in a simply-connected domain.
Ward's equations in $\H$ (resp. in $\D$) represent the insertion of the Virasoro fields within correlations of fields in $\FF_{\bfs\beta}$ in terms of Lie derivative operators with respect to the chordal (resp. radial) Loewner vector fields.

We first consider the special case that no background charge is placed on the boundary $\pa D.$
The general case can be treated through a suitable limit procedure. 
In this special case, $A_{\bfs\beta}$ is continuous on the boundary. 
(The continuity on the boundary should be understood in terms of standard boundary charts.)
As we mentioned before, we mostly concern ourselves with a symmetric background charge $\bfs\beta.$
In this case, $A_{\bfs\beta}$ is real on the boundary. 
(Again, we understand the real-valuedness on the boundary in terms of standard boundary charts.)

Given a meromorphic vector field $v$ in $D$ (continuous up to the boundary) with poles $\xi_l$'s ($\xi_l\in D$), we define the Ward functional $W_{\bfs\beta}^+(v)$ by
$$W_{\bfs\beta}^+(v) := \lim_{\ve\to 0 }\Big(\frac1{2\pi i} \int_{\pa D_\ve} vA_{\bfs\beta} - \frac1\pi \int_{D_\ve} (\bp v)A_{\bfs\beta}\Big),$$
where $D_\ve = D \setminus \bigcup B(\xi_l,\ve).$
Both integrals are coordinate independent since $vA_{\bfs\beta}$ is a $[1,0]$-differential and $(\bp v)A_{\bfs\beta}$ is a $[1,1]$-differential. Their correlations with Fock space functionals $\XX$ are well-defined provided that any node of $\XX$ is not a pole of $v.$

Somewhat symbolically, the Ward functional $W_{\bfs\beta}^+(v)$ can be represented as 
$$W_{\bfs\beta}^+(v)=\frac1{2\pi i}\int_{\pa D} vA_{\bfs\beta}-\sum_l\frac1{2\pi i}\oint_{(\xi_l)} vA_{\bfs\beta},$$
or 
$$
W_{\bfs\beta}^+(v)=\frac1{2\pi i}\int_{\pa D} vA_{\bfs\beta}-\frac1{\pi }\int_{D} (\bp v)A_{\bfs\beta},$$
where $\bp v$ in the last integral should be understood in the sense of distributions.
If a background charge is placed on the boundary, then the first integral should be taken in the sense of the Cauchy principal value. 

From now on, we only consider the case that $\bfs\beta$ is symmetric.
Then $W_{\bfs\beta} = (A_{\bfs\beta},\overline{A_{\bfs\beta}}).$ 
We now prove Theorem~\ref{main: Ward identities in D}. 

\begin{proof}[Proof of Theorem~\ref{main: Ward identities in D}]
We may consider the case that $\supp\,\bfs\beta^\pm \subseteq D.$
It is enough to show that
$$W_{\bfs\beta}^+(v) = \LL_v^+.$$
in correlations with any string of fields in the extended OPE family $\FF_{\bfs\beta}(D).$

Let $D_j \,(D'_k, D''_l)$ be a sufficiently small open disk centered at $z_j \,(q_k, \xi_l),$ respectively. 
Applying Stokes' theorem to 
$$\dd\big(v(z)\E\,A_{\bfs\beta}(z)\XX_{\bfs\beta}\,\dd z\big) = -\bp\big(v(z)\E\,A_{\bfs\beta}(z)\XX_{\bfs\beta}\big)\,\dd z\wedge \dd \bar z = 2i \bp\big(v(z)\E\,A_{\bfs\beta}(z)\XX_{\bfs\beta}\big) \,\dd x \wedge \dd y$$
over $D \sm (\bigcup_j D_j \cup \bigcup_k D'_k\cup \bigcup_l D''_l),$ we have 
$$\E\,W_{\bfs\beta}^+(v)\XX_{\bfs\beta} = \frac1{2\pi i} \sum_j \oint_{(z_j)} v\,\E\,A_{\bfs\beta}\XX_{\bfs\beta}+\frac1{2\pi i} \sum_j \oint_{(q_k)} v\,\E\,A_{\bfs\beta}\XX_{\bfs\beta}.$$ 
Theorem now follows from Ward's identity (Theorem~\ref{Ward's identity}) as in the proof of Proposition~\ref{Ward identities on S}. 
\end{proof}

We recall the representation of a stress tensor $A$ in terms of Ward's functionals $W_{\bfs\beta}^+(v)$ with the meromorphic vector field $v = k_\zeta (\zeta\in\C).$ 

\begin{prop}[Proposition~5.11 in \cite{KM13}] \label{represent A}
Let $A$ be a holomorphic quadratic differential in $\H,$ and $W=(A,\bar A).$
Suppose $A$ is continuous and real on the boundary (including $\infty$).
Then
\begin{equation} \label{eq: A}
(A\,\|\,\id)(\xi)=W^+(k_\xi)+\overline{W^+(k_{\bar \xi})}.
\end{equation}
\end{prop}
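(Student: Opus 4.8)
The plan is to verify the identity \eqref{eq: A} by evaluating the right-hand side directly from the definition of the Ward functional in a simply-connected domain. Since no background charge appears here ($W = (A,\bar A)$ with $A$ a genuine holomorphic quadratic differential continuous and real on $\partial\H = \wh\R$), the functional $W^+(k_\xi)$ has the representation
$$W^+(k_\xi) = \frac1{2\pi i}\int_{\pa\H} k_\xi A - \frac1{2\pi i}\oint_{(\xi)} k_\xi A,$$
because $k_\xi$ is meromorphic in $\H$ with a single simple pole at $\xi\in\H$ and $\bp k_\xi$ vanishes away from that pole. First I would compute the residue term: since $k_\xi(z) = 1/(\xi-z)$, we have $\oint_{(\xi)} k_\xi A = -2\pi i\, A(\xi)$ (with a sign coming from the orientation and from $k_\xi = -1/(z-\xi)$), so $-\frac1{2\pi i}\oint_{(\xi)} k_\xi A = A(\xi)$. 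This already produces the field $A$ itself; the remaining task is to show that the boundary-integral contributions from $W^+(k_\xi)$ and $\overline{W^+(k_{\bar\xi})}$ cancel.

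The key step is the boundary-integral cancellation. On $\wh\R$ the differential $A$ is real, so writing the contour integral over $\R$ (with the usual care at $\infty$, where $A$ is continuous in the chart $z\mapsto -1/z$), the term from $W^+(k_\xi)$ is $\frac1{2\pi i}\int_\R \frac{A(t)}{\xi - t}\,dt$, and the conjugate term from $\overline{W^+(k_{\bar\xi})}$ contributes $\overline{\frac1{2\pi i}\int_\R \frac{A(t)}{\bar\xi - t}\,dt} = -\frac1{2\pi i}\int_\R \frac{A(t)}{\xi - t}\,dt$, using that $A(t)$ and $dt$ are real on $\R$. These are negatives of each other and cancel. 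One must also check that $\overline{W^+(k_{\bar\xi})}$ contributes no residue term at $\xi$: indeed $k_{\bar\xi}$ has its pole at $\bar\xi\in\H^- = \C\setminus\overbar\H$, so it is holomorphic throughout $\H$ and the residue piece is absent. Hence $W^+(k_\xi) + \overline{W^+(k_{\bar\xi})} = A(\xi)$, which is \eqref{eq: A}.

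The main obstacle I anticipate is the careful handling of the behavior at infinity: $\infty\in\wh\R$ is a boundary point, the vector field $k_\xi$ and the quadratic differential $A$ must be read in the chart $z\mapsto -1/z$ there, and one must confirm both that the boundary integral over the prime end at $\infty$ converges and that no spurious residue is picked up at $\infty$ (this is where the hypothesis that $A$ is continuous \emph{and real} at $\infty$ is used). A clean way to organize this is to replace the contour $\R$ by $\R\cup\{$large semicircle$\}$, estimate the semicircular piece using the quadratic-differential decay of $A$ in the chart at $\infty$, and let the radius tend to infinity; the interior residue at $\xi$ is unaffected. Everything else is routine: the residue computation is a one-line Cauchy formula, and the reality of $A$ on $\R$ is exactly the input that makes the two boundary integrals conjugate-opposite.
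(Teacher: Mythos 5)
Your computation is correct and is essentially the argument behind this statement: the paper does not reprove it (it is quoted from \cite[Proposition~5.11]{KM13}), and the proof there is exactly this direct evaluation --- the residue of $k_\xi A$ at $\xi$ yields $A(\xi)$, the field $k_{\bar\xi}$ is pole-free in $\H$, and the two boundary integrals over $\wh\R$ cancel by the reality of $A$ there, with continuity of $A$ at $\infty$ as a quadratic differential (so $A=O(|z|^{-4})$ in the identity chart) giving convergence and no contribution from a large semicircle. The only point worth making explicit is that, since $A$ is a Fock space field, the identity and your cancellation step $\overline{A(t)}=A(t)$ are to be read within correlations with strings whose nodes lie in $\H$ (reality on $\pa\H$ meaning $A=\bar A$ there), which changes nothing in the computation.
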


Applying the above proposition to $A_{\bfs\beta}$ in the case that $\supp\,\bfs\beta^\pm \subseteq D,$ we obtain \eqref{eq: A} for $A = A_{\bfs\beta}$ and $W^+ =W^+_{\bfs\beta}.$ 
Combining the above proposition with Theorem~\ref{main: Ward identities in D} and using a limit procedure in the general case that $\supp\,\bfs\beta^\pm \subseteq \overbar{D},$ we derive Ward's equation in $\H.$

\begin{cor}[Ward's equations in $\H$] \label{Ward equationH}
In the $\H$-uniformization, for any tensor product $\XX_{\bfs\beta}$ of fields in the extended OPE family $\FF_{\bfs\beta},$ we have 
\begin{equation} \label{eq: Ward equationH}
\E\,T_{\bfs\beta}(\xi) \XX_{\bfs\beta} = \E\,T_{\bfs\beta}(\xi)\,\E\,\XX_{\bfs\beta} + \E\, (\LL_{k_\xi}^++\LL_{k_{\bar \xi}}^-)\XX_{\bfs\beta}.
\end{equation}
\end{cor}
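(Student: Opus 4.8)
The plan is to derive \eqref{eq: Ward equationH} by combining the representation of the stress tensor $A_{\bfs\beta}$ through Ward's functionals (Proposition~\ref{represent A}) with the identification $W^+_{\bfs\beta}(v)=\LL^+_v$ within correlations that is established inside the proof of Theorem~\ref{main: Ward identities in D}. I would first treat the case $\supp\,\bfs\beta^\pm\subseteq D$, in which $A_{\bfs\beta}$ is continuous and real on $\pa D=\wh\R$ in standard boundary charts, so that Proposition~\ref{represent A} applies with $A=A_{\bfs\beta}$, $W^+=W^+_{\bfs\beta}$ and gives, in the identity chart of $\H$,
\[(A_{\bfs\beta}\,\|\,\id)(\xi)=W^+_{\bfs\beta}(k_\xi)+\overline{W^+_{\bfs\beta}(k_{\bar\xi})},\qquad \xi\in\H.\]
Here the vector fields $k_\xi$ and $k_{\bar\xi}$ have poles at $\xi$ and $\bar\xi$, which are neither nodes of $\XX_{\bfs\beta}$ nor support points of $\bfs\beta$, so all the relevant Ward functionals are defined on correlations against $\XX_{\bfs\beta}$.

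Next I would evaluate both sides of this identity within correlations against a string $\XX_{\bfs\beta}$ in $\FF_{\bfs\beta}$. Since $A_{\bfs\beta}=T_{\bfs\beta}-\E\,T_{\bfs\beta}$ and $\E\,T_{\bfs\beta}$ is non-random, the left side contributes $\E\,T_{\bfs\beta}(\xi)\XX_{\bfs\beta}-\E\,T_{\bfs\beta}(\xi)\,\E\,\XX_{\bfs\beta}$. On the right side, the identity $W^+_{\bfs\beta}(v)=\LL^+_v$ in correlations of fields in $\FF_{\bfs\beta}(D)$ turns the first term into $\E\,\LL^+_{k_\xi}\XX_{\bfs\beta}$. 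For the second term I would use that $\bfs\beta$ is symmetric, so $\FF_{\bfs\beta}$ is closed under complex conjugation and $W_{\bfs\beta}=(A_{\bfs\beta},\overline{A_{\bfs\beta}})$ is a (symmetric) stress tensor; applying $W^+_{\bfs\beta}(k_{\bar\xi})=\LL^+_{k_{\bar\xi}}$ to $\overline{\XX_{\bfs\beta}}\in\FF_{\bfs\beta}$ and conjugating yields $\E\,\overline{W^+_{\bfs\beta}(k_{\bar\xi})}\XX_{\bfs\beta}=\E\,\LL^-_{k_{\bar\xi}}\XX_{\bfs\beta}$. Collecting the three pieces gives \eqref{eq: Ward equationH} for background charges supported in $D$.

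Finally, for the general case $\supp\,\bfs\beta^\pm\subseteq\overbar{D}$, I would pass to the limit, perturbing each boundary charge $q_k\in\pa D$ to an interior point while keeping the double structure symmetric, applying the equation just proved to the perturbed background charge, and letting the perturbation tend to zero. The relevant quantities $\E\,T_{\bfs\beta}(\xi)\XX_{\bfs\beta}$, $\E\,T_{\bfs\beta}(\xi)$, and $\E\,(\LL^+_{k_\xi}+\LL^-_{k_{\bar\xi}})\XX_{\bfs\beta}$ all depend continuously on the locations of the background charges away from coincidences, since they are assembled from the explicit Coulomb gas correlation $C_{(b)}$ and its derivatives, and the poles $\xi,\bar\xi$ stay bounded away from the moving charges. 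The step I expect to be the main obstacle is precisely this boundary limit: one must check that as a charge approaches $\pa D$ no new boundary terms are produced in $W^+_{\bfs\beta}(k_\xi)$ — this is exactly what the Cauchy principal value prescription in the definition of $W^+_{\bfs\beta}$ is designed to handle — and that the hypotheses of Proposition~\ref{represent A}, namely continuity and reality of $A_{\bfs\beta}$ on the boundary, are recovered in the limit. The remaining manipulations are routine once the residue-form Ward identities are in hand.
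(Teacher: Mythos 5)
Your proposal is correct and follows essentially the same route as the paper: the case $\supp\,\bfs\beta^\pm\subseteq D$ is handled by applying Proposition~\ref{represent A} to $A_{\bfs\beta}$ and converting $W^+_{\bfs\beta}(k_\xi)$, $\overline{W^+_{\bfs\beta}(k_{\bar\xi})}$ into $\LL^+_{k_\xi}$, $\LL^-_{k_{\bar\xi}}$ within correlations via the identity proved for Theorem~\ref{main: Ward identities in D}, and the general case $\supp\,\bfs\beta^\pm\subseteq\overbar{D}$ is obtained by the same limiting procedure the paper invokes. The extra details you supply (splitting $A_{\bfs\beta}=T_{\bfs\beta}-\E\,T_{\bfs\beta}$ and using symmetry of $\bfs\beta$ for the anti-linear part) are exactly the routine steps the paper leaves implicit.
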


\begin{eg*} Let $\XX_{\bfs\beta} = X_1(z_1)\cdots X_n(z_n)$ be the tensor product of $[\lambda_j^+,\lambda_j^-]$-differentials $X_j$ in $\FF_{\bfs\beta}.$ 
If $\xi\in\R,$ then we have 
\begin{align*}
\E\,T_{\bfs\beta}(\xi) \XX_{\bfs\beta} &= \E\,T_{\bfs\beta}(\xi)\,\E\,\XX_{\bfs\beta} + \sum_k\Big(\frac{\pa_{q_k}}{\xi-q_k}+\frac{\bp_{q_k}}{\xi-\bar q_k}\Big)\,\E\,\XX_{\bfs\beta}\\
&+ \sum_j\Big(\frac{\pa_j}{\xi-z_j}+\frac{\lambda_j^+}{(\xi-z_j)^2}+\frac{\bp_j}{\xi-\bar z_j}+\frac{\lambda_j^-}{(\xi-\bar z_j)^2}\Big)\,\E\,\XX_{\bfs\beta}.
\end{align*}
\end{eg*}

Given a meromorphic vector field $v$ with a local flow $z_t$ on $\wh\C,$
we define its reflected vector field $v^*$ with respect to the unit circle $\partial \mathbb{D}$ by the vector field of the reflected flow $z_t^*$ of $z_t.$
Indeed, if $z_t$ is the flow of $v$, then $z_t^*= 1/\bar z_t$ is the reflected flow and
its vector field $v^*$ is given by the equation $\dot{z}^* = -\overline{\dot{z}/z^2} = -\overline{v(z)}(z^*)^2.$
Thus we have 
$v^*(z) = -z^2\overline{v({1}/{\bar z})}.$
Using a similar method as in the previous corollary, we obtain the following form of Ward's equations in the radial case. 

\begin{cor}[Ward's equations in $\D$] \label{Ward equationD}
In the $\D$-uniformization, for any tensor product $\XX_{\bfs\beta}$ of fields in the extended OPE family $\FF_{\bfs\beta},$
\begin{equation} \label{eq: Ward equationD}
2\zeta^2\E\,T_{\bfs\beta}(\zeta) \XX_{\bfs\beta} = 2\zeta^2 \E\,T_{\bfs\beta}(\zeta)\,\E\,\XX_{\bfs\beta} + \E\, (\LL_{v_\zeta}^++\LL_{v_{\zeta^*}}^-)\XX_{\bfs\beta}.
\end{equation}
\end{cor}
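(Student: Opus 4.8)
The plan is to mimic the derivation of Ward's equations in $\H$ (Corollary~\ref{Ward equationH}) but with the radial Loewner vector fields $v_\zeta$ in place of the chordal ones, exploiting the reflection symmetry across $\pa\D$. First I would record the radial analogue of Proposition~\ref{represent A}: for a holomorphic quadratic differential $A$ in $\D$ with $W=(A,\bar A)$, continuous and real on $\pa\D$ in standard boundary charts, one has
$$
2\zeta^2 (A\,\|\,\id_\D)(\zeta) = W^+(v_\zeta) + \overline{W^+(v_{\zeta^*})},\qquad \zeta^*=1/\bar\zeta .
$$
This is the disc version of \cite[Proposition~5.11]{KM13}; the factor $2\zeta^2$ is exactly the one appearing in the statement and reflects that $v_\zeta(z)=z(\zeta+z)/(\zeta-z)$ has a simple pole at $z=\zeta$ with residue $-2\zeta^2$ (in the identity chart). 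I would prove it the same way as the chordal case: write $W^+(v_\zeta)$ as a boundary integral minus the residue at the pole $\zeta$, use that $v_\zeta$ is tangent to $\pa\D$ and that $A$ is real there so the boundary term of $W^+(v_\zeta)+\overline{W^+(v_{\zeta^*})}$ cancels, leaving only the residue contributions, which sum to $2\zeta^2 A(\zeta)$. The key input is that $v_\zeta^*=v_\zeta$ (the radial Loewner field is its own reflection across $\pa\D$, up to the sign conventions for $v^*$), which is the reason the radial field is the natural one here; I would verify $v^*(z)=-z^2\overline{v(1/\bar z)}$ gives $v_\zeta^*=v_\zeta$ by direct substitution.

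Next I would apply this representation to $A=A_{\bfs\beta}$. In the special case $\supp\,\bfs\beta^\pm\subseteq D$ the quadratic differential $A_{\bfs\beta}$ is continuous and real on $\pa\D$ (as noted just before the proof of Theorem~\ref{main: Ward identities in D} for symmetric $\bfs\beta$), so the radial Proposition~\ref{represent A} applies verbatim and gives
$$
2\zeta^2 \E\,A_{\bfs\beta}(\zeta)\XX_{\bfs\beta} = \E\,W_{\bfs\beta}^+(v_\zeta)\XX_{\bfs\beta} + \overline{\E\,W_{\bfs\beta}^+(v_{\bar\zeta^{-1}})\XX_{\bfs\beta}}
$$
inside correlations with any string $\XX_{\bfs\beta}$ in $\FF_{\bfs\beta}(D)$ whose nodes avoid $\zeta,\zeta^*$. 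Then I invoke Ward's identities in $D$ (Theorem~\ref{main: Ward identities in D}), which say $\E\,W_{\bfs\beta}^+(v)\XX_{\bfs\beta} = \E\,\LL_v^+\XX_{\bfs\beta}$, to replace the two Ward functionals by Lie derivatives $\E\,\LL_{v_\zeta}^+\XX_{\bfs\beta}$ and $\overline{\E\,\LL_{v_{\bar\zeta^{-1}}}^+\XX_{\bfs\beta}} = \E\,\LL_{v_{\zeta^*}}^-\XX_{\bfs\beta}$ (using $\LL_v^-=\overline{\LL_v^+}$ and $\overline{v_{\bar\zeta^{-1}}}$ evaluated appropriately). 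Finally I convert from $A_{\bfs\beta}$ to $T_{\bfs\beta}=A_{\bfs\beta}+(T_{\bfs\beta}-A_{\bfs\beta})$, where $T_{\bfs\beta}-A_{\bfs\beta}$ is the non-random (meromorphic) Schwarzian form; since it is non-random, $\E\,(T_{\bfs\beta}-A_{\bfs\beta})(\zeta)\XX_{\bfs\beta} = (T_{\bfs\beta}-A_{\bfs\beta})(\zeta)\,\E\,\XX_{\bfs\beta}$ and this contributes precisely the $2\zeta^2\big(\E\,T_{\bfs\beta}(\zeta)-\E\,A_{\bfs\beta}(\zeta)\big)\E\,\XX_{\bfs\beta}$ piece, assembling to \eqref{eq: Ward equationD}.

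The general case $\supp\,\bfs\beta^\pm\subseteq\overbar D$ (some background charges on $\pa\D$) is handled, as stated in the excerpt, "through a suitable limit procedure": approximate a boundary charge $\beta_k\cdot q_k$ by interior charges $\beta_k\cdot q_k^{(\ve)}$ with $q_k^{(\ve)}\to q_k$ from inside $\D$, apply the already-proven interior case, and pass to the limit, noting that both sides of \eqref{eq: Ward equationD} depend continuously on the positions of the background charges (the correlation functions $C_{(b)}$ and the Lie-derivative terms are explicit rational/power expressions in the node and charge locations). The main obstacle I anticipate is the radial version of Proposition~\ref{represent A} — getting the boundary-term cancellation and the residue bookkeeping right with the correct $2\zeta^2$ normalization and the self-reflecting property $v_\zeta^*=v_\zeta$; once that lemma is in hand, the rest is a direct transcription of the chordal argument together with the routine limiting step for boundary-supported background charges.
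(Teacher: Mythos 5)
Your plan follows the paper's own route: the disc analogue of Proposition~\ref{represent A} obtained by reflecting across $\pa\D$, combined with Theorem~\ref{main: Ward identities in D}, the decomposition $T_{\bfs\beta}=A_{\bfs\beta}+\E\,T_{\bfs\beta},$ and a limiting argument for boundary-supported background charges --- exactly the ``similar method as in the previous corollary'' that the paper invokes. One small correction: the reflection identity is $v_\zeta^{*}=v_{\zeta^*}$ (so $v_\zeta^{*}=v_\zeta$ only when $\zeta\in\pa\D$), and it is this, together with the reality of $A_{\bfs\beta}$ on $\pa\D,$ that cancels the boundary terms in $W_{\bfs\beta}^+(v_\zeta)+\overline{W_{\bfs\beta}^+(v_{\zeta^*})},$ consistent with the displayed formula you actually use.
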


Let us emphasize that the above formulas apply to the strings without nodes at $\zeta.$
The case where $\zeta$ is a node is discussed in the following subsection. 
We use OPE calculus and local operators.
Theorem~\ref{main: Ward's equations} now follows from Corollaries~\ref{Ward equationH}~--~\ref{Ward equationD} and the next lemma.

\begin{lem} \label{ETZ}
We have
\begin{equation} \label{eq: ETZ in H}
\E\,T_{\bfs\beta}(z) = \PP_{\bfs\beta}^{-1}\LL^+_{k_z}\PP_{\bfs\beta} + \PP_{\bfs\beta}^{-1}\LL^-_{k_{\bar z}}\PP_{\bfs\beta}
\end{equation}
in the identity chart of $\H$
and 
\begin{equation} \label{eq: ETZ in D}
2z^2\E\,T_{\bfs\beta}(z) = \PP_{\bfs\beta}^{-1}\LL^+_{v_z}\PP_{\bfs\beta} + \PP_{\bfs\beta}^{-1}\LL^-_{v_{z^*}}\PP_{\bfs\beta}
\end{equation}
 in the identity chart of $\D.$
\end{lem}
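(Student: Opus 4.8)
The plan is to prove both identities by direct computation, comparing each side with the explicit one-point function $\E\,T_{\bfs\beta}$ obtained from $T_{\bfs\beta}=-\tfrac12 J_{\bfs\beta}*J_{\bfs\beta}+ib\pa J_{\bfs\beta}$ and $j_{\bfs\beta}=\E\,J_{\bfs\beta}$ (cf.\ the examples following Theorems~\ref{SETinS} and \ref{SETinD}). The essential point is that $\PP_{\bfs\beta}=C_{(b)}[\bfs\beta]$ is a \emph{non-random} differential (Theorem~\ref{C under NCb}) with conformal dimensions $\lambda_b(\beta_k^\pm)$ at its nodes, so by the transformation law \eqref{eq: LvDiff} applied in each variable, $\PP_{\bfs\beta}^{-1}\LL^+_v\PP_{\bfs\beta}$ and $\PP_{\bfs\beta}^{-1}\LL^-_v\PP_{\bfs\beta}$ are non-random meromorphic functions of the node positions: the holomorphic one equals $\sum_k\bigl(v(q_k)\,\pa_{q_k}\log\PP_{\bfs\beta}+\lambda_b(\beta_k)\,v'(q_k)\bigr)$, with the analogous $\bp$-expression for the anti-linear part. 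I would carry out the computation on the Schottky double $S=\wh\C$, where $\PP_{\bfs\beta}$ is the genuinely holomorphic (multivalued) differential $C_{(b)}[\bfs\beta]$ with nodes $q_k$ \emph{and} their reflections $q_k^*$, and then transfer to $D$.

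For \eqref{eq: ETZ in H} I would use that the value of $C_{(b)}[\bfs\beta]$ in the identity chart of $\wh\C$ is $\prod_{j<k}(q_j-q_k)^{\beta_j\beta_k}$, so $\pa_{q_k}\log\PP_{\bfs\beta}=\sum_{l\ne k}\beta_k\beta_l/(q_k-q_l)$. Substituting $k_z(\zeta)=1/(z-\zeta)$, $k_z'(\zeta)=1/(z-\zeta)^2$, the $v'$-term yields $\sum_k\lambda_b(\beta_k)/(z-q_k)^2$ and the double sum, symmetrized in $k\leftrightarrow l$ via the partial-fraction identity
\[
\frac1{(z-q_k)(q_k-q_l)}+\frac1{(z-q_l)(q_l-q_k)}=\frac1{(z-q_k)(z-q_l)},
\]
collapses to $\sum_{k<l}\beta_k\beta_l/\bigl((z-q_k)(z-q_l)\bigr)$; together these reproduce the holomorphic part of $\E\,T_{\bfs\beta}$ as in the example after Theorem~\ref{SETinS}. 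Splitting the node set of $\bfs\beta$ on $\wh\C$ into the nodes of $\bfs\beta^+$ and the reflected nodes of $\bfs\beta^-_*$, the second group's contribution is exactly $\PP_{\bfs\beta}^{-1}\LL^-_{k_{\bar z}}\PP_{\bfs\beta}$ (the reflection of $k_z$ across $\wh\R$ is $k_{\bar z}$, and, being anti-holomorphic in $z$ on $D$, it enters through the anti-linear part $\LL^-$); since $\bfs\beta$ is symmetric, adding the two halves gives $\E\,T_{\bfs\beta}(z)$ in the identity chart of $\H$ via the Schottky-double embedding of $\Phi_{\bfs\beta}$ from Section~\ref{sec: GFF_beta}. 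This is \eqref{eq: ETZ in H}.

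Identity \eqref{eq: ETZ in D} is the same computation with $k_z$ replaced by the radial Loewner field $v_z(\zeta)=\zeta(z+\zeta)/(z-\zeta)$ and $z^*$ in place of $\bar z$, the prefactor $2z^2$ reconciling the Schwarzian-form nature of $T_{\bfs\beta}$ with the pole/zero structure of $v_z$. Here one also has a node of $\bfs\beta$ at the marked interior point $q$ (where $w(q)=0$): its dimension is $\lambda_b(\beta_q)$ but its transformation law must be read in the chart $z\mapsto-1/z$ at the preimage of $\infty$ on the double, and this is precisely what produces the extra terms $\lambda_q/z^2$ and $\sum_k\beta_q\beta_k/\bigl(z(z-q_k)\bigr)$ in the example after Theorem~\ref{SETinD}. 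After the same symmetrization and adding the $z^*$-reflected half one obtains \eqref{eq: ETZ in D}; combined with Corollaries~\ref{Ward equationH}--\ref{Ward equationD} and Leibniz's rule $\PP_{\bfs\beta}^{-1}\LL^+_v(\PP_{\bfs\beta}X)=(\PP_{\bfs\beta}^{-1}\LL^+_v\PP_{\bfs\beta})X+\LL^+_vX$, this yields Theorem~\ref{main: Ward's equations}.

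The main difficulty is bookkeeping rather than anything conceptual: correctly assigning each residue/node contribution to $\LL^+_{k_z}$ versus $\LL^-_{k_{\bar z}}$ (resp.\ $\LL^+_{v_z}$ versus $\LL^-_{v_{z^*}}$) under the Schottky-double identification, and treating the node at the marked interior point in its non-identity chart so that the $1/z^2$ and $1/(z(z-q_k))$ terms emerge with the correct coefficients and the $2z^2$ normalization is absorbed correctly. Once the chart conventions at that node and the reflected-vector-field identities are fixed, the remaining work is the elementary partial-fraction algebra above.
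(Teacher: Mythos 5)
Your proposal is correct and follows essentially the same route as the paper's proof: expand $\PP_{\bfs\beta}^{-1}\LL^{\pm}\PP_{\bfs\beta}$ using the fact that $\PP_{\bfs\beta}=C_{(b)}[\bfs\beta]$ is a non-random $[\lambda_j^+,\lambda_j^-]$-differential (so each node contributes $\lambda\,v'(q_j)+v(q_j)\pa_j\log\PP_{\bfs\beta}$ and its anti-linear counterpart), compute $\pa_j\log\PP_{\bfs\beta}$ from the explicit Coulomb gas product, and match against $\E\,T_{\bfs\beta}=-\tfrac12 j_{\bfs\beta}^2+ib\,j_{\bfs\beta}'$ via the partial-fraction identity. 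The only differences are cosmetic: the paper works directly in the $\H$-uniformization with the double divisor $(\bfs\beta^+,\bfs\beta^-)$ rather than detouring through $\wh\C$, and the mixed terms $\beta_j^+\beta_k^-/\bigl((z-q_j)(z-\bar q_k)\bigr)$ (and, in $\D$, the $\lambda_q/z^2$ and $\beta_q\beta_k/(z(z-q_k))$ terms) arise from the same symmetrization across holomorphic and reflected nodes that you describe, so your bookkeeping plan carries through.
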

\begin{proof}
Let $\lambda_k^\pm=\lambda(\beta_k^\pm), (\lambda(\sigma) = \sigma^2/2-\sigma b).$
Since $\PP_{\bfs\beta} = C_{(b)}[\bfs\beta]$ is a $[\lambda_j^+,\lambda_j^-]$-differential at $q_j,$
$$\PP_{\bfs\beta}^{-1}\LL^+_{k_z}\PP_{\bfs\beta} = \sum_j \frac{\lambda_j^+}{(z-q_j)^2} + \frac{\pa_j\log \PP_{\bfs\beta}}{z-q_j}, \qquad \PP_{\bfs\beta}^{-1}\LL^-_{k_{\bar z}}\PP_{\bfs\beta} = \sum_j \frac{\lambda_j^-}{(z-\bar q_j)^2} + \frac{\bp_j\log \PP_{\bfs\beta}}{z-\bar q_j}$$
in the identity chart of $\H.$ 
Differentiating $\PP_{\bfs\beta} = C_{(b)}[\bfs\beta],$ we have 
$$\pa_j\log \PP_{\bfs\beta} = \sum_{k\ne j} \frac{\beta_j^+\beta_k^+}{q_j-q_k} + \sum_k \frac{\beta_j^+\beta_k^-}{q_j-\bar q_k}, \qquad \bp_j\log \PP_{\bfs\beta} = \sum_{k\ne j} \frac{\beta_j^-\beta_k^-}{\bar q_j-\bar q_k} + \sum_k \frac{\beta_j^-\beta_k^+}{\bar q_j- q_k}.$$
On the other hand, it follows from \eqref{eq: j_beta} and $\E\,T_{\bfs\beta} = -\frac12 j_{\bfs\beta}^2 + ib j'_{\bfs\beta}$ that 
\begin{align*}
\E\,T_{\bfs\beta}(z) &= \sum _j \bigg(\frac{\lambda_j^+}{(z-q_j)^2} +\frac{\lambda_j^-}{(z-\bar q_j)^2}\bigg) \\
&+ \sum_{j<k} \bigg(\frac{\beta_j^+\beta_k^+}{(z-q_j)(z-q_k)} + \frac{\beta_j^-\beta_j^-}{(z-\bar q_j)(z-\bar q_k)} \bigg) + \sum_{j,k}\frac{\beta_j^+\beta_k^-}{(z-q_j)(z-\bar q_k)}.
\end{align*}
As a rational function in $z,$
\begin{align*}
\sum_j \frac1{z-q_j}&\bigg(\sum_{k\ne j} \frac{\beta_j^+\beta_k^+}{q_j-q_k} + \sum_k \frac{\beta_j^+\beta_k^-}{q_j-\bar q_k}\bigg) + \sum_j \frac1{z-\bar q_j}\bigg( \sum_{k\ne j} \frac{\beta_j^-\beta_k^-}{\bar q_j-\bar q_k} + \sum_k \frac{\beta_j^-\beta_k^+}{\bar q_j- q_k}\bigg) \\
&=\sum_{j<k} \bigg(\frac{\beta_j^+\beta_k^+}{(z-q_j)(z-q_k)} + \frac{\beta_j^-\beta_j^-}{(z-\bar q_j)(z-\bar q_k)} \bigg) + \sum_{j,k}\frac{\beta_j^+\beta_k^-}{(z-q_j)(z-\bar q_k)},
\end{align*}
which shows \eqref{eq: ETZ in H}.
The formula \eqref{eq: ETZ in D} can be shown similarly.
\end{proof}

\subsection{BPZ equations}

Recall the definition of Virasoro generators $L_n\equiv L_n^{\bfs\beta}$: they are operators $X\mapsto L_nX$ acting on fields, $L_n^{\bfs\beta}X=T_{\bfs\beta}*_{(-n-2)}X,$ i.e.,
\begin{equation} \label{eq: Ln}
T_{\bfs\beta}(\zeta)X(z)=
\cdots+\frac{(L_0X)(z)}{(\zeta-z)^2}+\frac{(L_{-1}X)(z)}{\zeta-z}+(L_{-2}X)(z)+\cdots \qquad (\zeta\to z).
\end{equation}
We write $L(z) = L_{-2}(z)$ in $(\H,\infty)$ and 
\begin{equation} \label{eq: L in D}
L(z) = 2z^2L_{-2}(z) + 3zL_{-1}(z) + L_0(z)
\end{equation}
in $(\D,0).$

Recall the Belavin-Polyakov-Zamolodchikov equations (BPZ equations \cite{BPZ84}) in the chordal case with $\bfs\beta = 2b\cdot q,$ see \cite[Proposition~5.13]{KM13}. 
\begin{prop}
Let $Y, X_1,\cdots, X_n\in \FF_{\bfs\beta}$ and let $X$ be the tensor product of $X_j$'s.
Then
$$\E\, (T_{\bfs\beta}*Y)(z)\,X=\E\,Y(z) \LL^+_{k_z}X+\E\,\LL^-_{k_{\bar z}}[Y(z)X], \qquad \bfs\beta = 2b\cdot q,$$
where all fields are evaluated in the identity chart of $(\H,\infty).$
\end{prop}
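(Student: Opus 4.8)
The plan is to deduce the chordal BPZ equation stated above from the more general Ward's equations in $\H$ (Corollary~\ref{Ward equationH}) together with Lemma~\ref{*F} and Theorem~\ref{Ward's identity}. First I would fix $z\in\H$ with $z$ not among the nodes $z_1,\dots,z_n$ of $X$, and consider the tensor product $Y(z)X$. The key idea is to relate the correlation $\E\,(T_{\bfs\beta}*Y)(z)\,X$ to the singular and regular parts of the operator product expansion $T_{\bfs\beta}(\zeta)Y(z)$ as $\zeta\to z$. Since $\bfs\beta = 2b\cdot q$ with $q\in\pa D$ (and $q=\infty$ in the $\H$-chart), no background charge is placed in the interior, and $A_{\bfs\beta}$ is continuous and real on $\wh\R$; moreover $\PP_{\bfs\beta} = \PP_q = 1$ in all charts, so the puncture operator terms in Theorem~\ref{main: Ward's equations} drop out and $\E\,T_{\bfs\beta}(\zeta)\,\E\,\XX_{\bfs\beta}$ reduces to $\E\,T_{\bfs\beta}(\zeta)$, which vanishes (or is explicitly computable) in this standard chordal case.

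The main step is an OPE/contour-integral manipulation. Applying Ward's equation in $\H$ (Corollary~\ref{Ward equationH}) to the string $\YY_{\bfs\beta} = Y(z)X$ gives
\begin{equation*}
\E\,T_{\bfs\beta}(\zeta)\,Y(z)X = \E\,T_{\bfs\beta}(\zeta)\,\E\,Y(z)X + \E\,(\LL_{k_\zeta}^+ + \LL_{k_{\bar\zeta}}^-)\big(Y(z)X\big),
\end{equation*}
valid for $\zeta$ away from $z$ and from the nodes of $X$. I would then extract the coefficient of $(\zeta-z)^0$ in the asymptotic expansion as $\zeta\to z$: on the left-hand side this is precisely $\E\,(T_{\bfs\beta}*Y)(z)X = \E\,(L_{-2}^{\bfs\beta}Y)(z)X$ by the definition \eqref{eq: Ln} of the Virasoro generators and the fact that $T_{\bfs\beta}$ is holomorphic in $D^*$ so that the singular terms of $T_{\bfs\beta}(\zeta)Y(z)$ are accounted for by lower $L_n$'s. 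On the right-hand side, $\LL_{k_\zeta}^+$ acts on the $z$-variable of $Y(z)$ and on the nodes of $X$; the part acting on $Y(z)$ must be separated out — by Lemma~\ref{*F}, $Y$ satisfies the residue form of Ward's identity, so the singular contributions of $\LL_{k_\zeta}^+Y(z)$ combine with the OPE of $T_{\bfs\beta}(\zeta)Y(z)$, leaving the regular limit $\LL^+_{k_z}$ acting on the remaining nodes (i.e.\ on $X$), which gives the term $\E\,Y(z)\LL^+_{k_z}X$. The anti-holomorphic term $\LL_{k_{\bar\zeta}}^-$ has no singularity at $\zeta\to z$ along the relevant expansion (since $z\in\H$, $\bar\zeta\to\bar z\ne z$ stays away), so it passes to the limit directly as $\E\,\LL^-_{k_{\bar z}}[Y(z)X]$.

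The main obstacle will be bookkeeping the split of the Lie derivative $\LL_{k_\zeta}^+$ into "the part acting on $Y(z)$" versus "the part acting on $X$", and checking that the former, when combined with the OPE $\Sing_{\zeta\to z}[T_{\bfs\beta}(\zeta)Y(z)]$, contributes exactly the terms needed so that only $\E\,(T_{\bfs\beta}*Y)(z)X$ survives on the left and $\E\,Y(z)\LL^+_{k_z}X$ on the right — i.e.\ verifying that no anomalous term is produced by the $z$-dependence of the vector field $k_\zeta$ near its pole. This is where one uses the residue form of Ward's identity for $Y$ (Lemma~\ref{*F}) as an \emph{identity of Fock space functionals}, not merely within correlations, together with the fact that $\E\,T_{\bfs\beta}(\zeta)$ is holomorphic at $\zeta = z$ (indeed $\E\,T_{\bfs\beta} = 0$ here since $\bfs\beta = 2b\cdot q$, $\PP_q=1$). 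Once this local OPE computation is organized cleanly — essentially expanding $k_\zeta(w) = 1/(\zeta-w)$ around $w=z$ and matching powers of $(\zeta-z)$ — the claimed formula follows. I would present the proof by first stating this local decomposition as the content of Lemma~\ref{*F} applied to $Y$, then taking the $(\zeta-z)^0$-coefficient in Ward's equation, and finally identifying each resulting term.
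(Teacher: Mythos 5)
Your proposal is correct and follows essentially the same route the paper uses for its general BPZ equations (Theorems~\ref{BPZinH}--\ref{BPZinD}): apply Ward's equations to the string $Y(z)X$, identify the singular part of $T_{\bfs\beta}(\zeta)Y(z)$ with $\LL^+_{k_\zeta}Y(z)$ via Ward's OPE, subtract it, and let $\zeta\to z$, noting that here $\PP_{\bfs\beta}=1$ and $\E\,T_{\bfs\beta}=0$ in $(\H,\infty)$ so no puncture-operator or $L_0,L_{-1}$ corrections arise. The only cosmetic point is that the ingredient you attribute to Lemma~\ref{*F} is really Ward's OPE \eqref{eq: Ward's OPEs} (equivalent to the residue form by \cite[Proposition~5.3]{KM13}), which is exactly what the paper invokes.
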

Somewhat symbolically, we have 
$$L(z) = \check{\LL}^+_{k_z}+\LL^-_{k_{\bar z}},$$
where $ \check{\LL}^+_{k_z}$ means $\LL^+_{k_z}(\H\sm\{z\}),$ e.g., $\check{\LL}^+_{k_z}Y(z) X:=Y(z) \LL^+_{k_z}X .$ 
We can neglect the check mark if we come to terms that we never differentiate at the poles of $v.$ 

The previous proposition can be extended to the general background charge, see the next two theorems.
We only present the proof of BPZ equations in the radial case. 
The chordal case can be proved in a similar way. 

\begin{thm}[BPZ equations in $\H$] \label{BPZinH}
Let $Y, X_1,\cdots, X_n\in \FF_{\bfs\beta}$ and let $X$ be the tensor product of $X_j$'s.
Then
$$\E\, (T_{\bfs\beta}*Y)(z)\,X=\E\,Y(z) \PP_{\bfs\beta}^{-1}\LL^+_{k_z}\PP_{\bfs\beta}X+\E\,\PP_{\bfs\beta}^{-1}\LL^-_{k_{\bar z}}[\PP_{\bfs\beta}Y(z)X],$$
where all fields are evaluated in the identity chart of $\H.$
\end{thm}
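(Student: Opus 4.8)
## Proof proposal

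The plan is to derive the BPZ equation in $\H$ from Ward's equation in $\H$ (Corollary~\ref{Ward equationH}) together with Lemma~\ref{ETZ}, by localizing Ward's equation at the node $z$ where we now place the field $Y.$ The starting observation is that BPZ equations are exactly Ward's equations "with one of the differentials $X_j$ replaced by $Y$ sitting at the \emph{same point} $z$ as the insertion point of $T_{\bfs\beta}$." Concretely, I would pick an auxiliary point $\zeta\ne z$, apply Corollary~\ref{Ward equationH} to the enlarged string $Y(z)X$ with $T_{\bfs\beta}$ inserted at $\zeta$, and then take the OPE limit $\zeta\to z$, keeping the coefficient of the regular (order $0$) term, which by \eqref{eq: Ln} is precisely $\E\,(T_{\bfs\beta}*Y)(z)\,X = \E\,(L_{-2}^{\bfs\beta}Y)(z)\,X.$

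The key steps, in order. First, by Corollary~\ref{Ward equationH} applied to the string $Y(z)X$ and insertion point $\zeta$,
$$\E\,T_{\bfs\beta}(\zeta)\,Y(z)X = \E\,T_{\bfs\beta}(\zeta)\,\E\,[Y(z)X] + \E\,\big(\LL^+_{k_\zeta}+\LL^-_{k_{\bar\zeta}}\big)[Y(z)X],$$
valid for $\zeta$ not a node. Second, I split the Lie-derivative term: $\LL^+_{k_\zeta}$ acts on all nodes including $z$ (the node of $Y$ and of the $X_j$'s), while $\LL^-_{k_{\bar\zeta}}$ acts anti-holomorphically; separate the part of $\LL^+_{k_\zeta}$ differentiating at $z$ (the $Y$-slot) from the part $\check\LL^+_{k_\zeta}$ differentiating the remaining nodes. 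Third, let $\zeta\to z.$ On the left, $T_{\bfs\beta}(\zeta)Y(z)$ expands as in \eqref{eq: Ln}; the singular terms $(\zeta-z)^{-2}(L_0Y)(z)+(\zeta-z)^{-1}(L_{-1}Y)(z)$ must match, term by term, the singular parts produced on the right by $\E\,T_{\bfs\beta}(\zeta)\E[Y(z)X]$ (whose singular part at $\zeta=z$ encodes $L_0,L_{-1}$ acting on the scalar $\E Y(z)$, i.e. the conformal weight of $Y$ and $\partial$) together with the piece of $\LL^+_{k_\zeta}$ acting at the $Y$-slot (which is $\LL^+_{k_\zeta}$ applied to a $[\lambda^+,\lambda^-]$-differential, giving exactly the $(\zeta-z)^{-2}$ and $(\zeta-z)^{-1}$ contributions via \eqref{eq: LvDiff}). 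After cancellation of all singular terms, the order-$0$ term gives
$$\E\,(T_{\bfs\beta}*Y)(z)\,X = \E\,Y(z)\,\check\LL^+_{k_z}[\,\E\text{-part}\,] + \E\,\LL^-_{k_{\bar z}}[Y(z)X] + (\text{puncture contribution}).$$
Fourth, I identify the surviving holomorphic "check" term together with the puncture contribution (the poles of $k_z$ at $\supp\bfs\beta^\pm$, which Ward's equation routes through $\E\,T_{\bfs\beta}(\zeta)\E[Y(z)X]$ and which Lemma~\ref{ETZ} rewrites as $\PP_{\bfs\beta}^{-1}\LL^+_{k_z}\PP_{\bfs\beta}$ acting on the $q_k$-nodes) as $\PP_{\bfs\beta}^{-1}\LL^+_{k_z}\PP_{\bfs\beta}X$; likewise the anti-holomorphic side assembles into $\PP_{\bfs\beta}^{-1}\LL^-_{k_{\bar z}}\PP_{\bfs\beta}[Y(z)X]$. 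This is the claimed identity.

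The main obstacle I anticipate is the bookkeeping of the singular-part cancellation as $\zeta\to z$: one must verify that the divergent terms coming from the OPE of $T_{\bfs\beta}(\zeta)$ with $Y(z)$ on the left-hand side are \emph{exactly} reproduced by (i) the singular part at $\zeta=z$ of the non-random function $\zeta\mapsto\E\,T_{\bfs\beta}(\zeta)$ multiplied by $\E\,[Y(z)X]$ and (ii) the $Y$-slot part of $\LL^+_{k_\zeta}$ — and that nothing extra survives. This requires knowing that $Y\in\FF_{\bfs\beta}$ has definite conformal dimensions (so \eqref{eq: LvDiff} applies to its correlation) and that the puncture poles of $k_\zeta$ contribute only through $\E\,A_{\bfs\beta}$, which is handled by Lemma~\ref{Ward@q} inside Ward's identity. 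A secondary subtlety is justifying that the OPE limit commutes with $\E$ and with the contour manipulations — but this is exactly the content of Lemma~\ref{*F}/Theorem~\ref{Ward's identity} (Ward's family is closed under $*$), so it can be invoked rather than re-proved. Once the singular terms are matched, extracting the order-$0$ coefficient and invoking Lemma~\ref{ETZ} to convert the residual puncture sums into $\PP_{\bfs\beta}^{\pm 1}\LL^\pm\PP_{\bfs\beta}$ is routine.
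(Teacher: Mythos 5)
Your proposal is correct and is essentially the paper's own argument: the paper proves the radial case by applying Ward's equations (Theorem~\ref{main: Ward's equations}) to the string $Y(z)X$ with $T_{\bfs\beta}$ at an auxiliary point $\zeta$, splitting off the $Y$-slot Lie derivative by Leibniz's rule, cancelling the singular terms via Ward's OPE \eqref{eq: Ward's OPEs}, and letting $\zeta\to z$, noting that the chordal case is analogous (and in fact simpler, since for $k_\zeta$ no correction by the vector field $\delta_\zeta$ with $L_0,L_{-1}$ terms is needed). The only minor adjustment to your bookkeeping is that $\zeta\mapsto\E\,T_{\bfs\beta}(\zeta)$ is regular at $\zeta=z$ (its poles sit at $\supp\,\bfs\beta$), so the singular matching is carried entirely by the $Y$-slot part of $\LL^+_{k_\zeta}$, exactly as Ward's OPE asserts.
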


\begin{thm}[BPZ equations in $\D$] \label{BPZinD}
Let $Y, X_1,\cdots, X_n\in \FF_{\bfs\beta}$ and let $X$ be the tensor product of $X_j$'s.
Then
$$\E\, L(z)Y(z)\,X=\E\,Y(z)\PP_{\bfs\beta}^{-1} \LL^+_{v_z} \PP_{\bfs\beta} X+\E\,\PP_{\bfs\beta}^{-1} \LL^-_{v_{z^*}}[\PP_{\bfs\beta} Y(z)X],$$
where all fields are evaluated in the identity chart of $\D$ and $L(z)$ is given by \eqref{eq: L in D}. 
\end{thm}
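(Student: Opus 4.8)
The plan is to derive Theorem~\ref{BPZinD} from Ward's equation in $\D$ (Theorem~\ref{main: Ward's equations}, equivalently Corollary~\ref{Ward equationD} together with Lemma~\ref{ETZ}) by the standard contour-integral argument: express the OPE product $L(z)Y(z)$ as a $T_{\bfs\beta}$-insertion, substitute Ward's equation, and extract the residue at the node $z$. First I would turn the left-hand side into a contour integral. Expanding the radial Loewner vector field about its pole gives $v_z(\zeta)=-\tfrac{2z^2}{\zeta-z}-3z-(\zeta-z)$, so, comparing with the definition~\eqref{eq: Ln} of the Virasoro generators $L_n^{\bfs\beta}$ and with~\eqref{eq: L in D}, a residue count shows
\[
L(z)Y(z)=-\frac1{2\pi i}\oint_{(z)}v_z(\zeta)\,T_{\bfs\beta}(\zeta)\,Y(z)\,\dd\zeta ,
\]
the three surviving residues being exactly $2z^2L_{-2}Y$, $3zL_{-1}Y$ and $L_0Y$. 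Taking correlations with $X$ and shrinking the loop so that it encloses no other node, $\E\,L(z)Y(z)\,X=-\tfrac1{2\pi i}\oint_{(z)}v_z(\zeta)\,\E\,T_{\bfs\beta}(\zeta)Y(z)X\,\dd\zeta$.

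Next, since $\zeta$ lies on the loop it is not a node of $Y(z)X$, so Ward's equation applies with Virasoro point $\zeta$ and string $Y(z)X$, giving
\[
2\zeta^2\,\E\,T_{\bfs\beta}(\zeta)Y(z)X=\E\,\PP_{\bfs\beta}^{-1}\big(\LL^+_{v_\zeta}+\LL^-_{v_{\zeta^*}}\big)\PP_{\bfs\beta}Y(z)X .
\]
Substituting this, pulling out the $\zeta$-independent factor $\PP_{\bfs\beta}^{-1}$, and using $v_z(\zeta)/(2\zeta^2)=\tfrac1{2\zeta}+\tfrac1{z-\zeta}$ (valid when $z\neq0$; the puncture node $z=0$, where $v_0(\zeta)=-\zeta$, is a direct variant of the first step), the proof reduces to evaluating $\Res_{\zeta=z}\big[\tfrac{v_z(\zeta)}{2\zeta^2}(\LL^+_{v_\zeta}+\LL^-_{v_{\zeta^*}})\,g\big]$, where $g:=\E[\PP_{\bfs\beta}Y(z)X]$.

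To evaluate that residue I would split $\LL^+_{v_\zeta}g$ into the piece $\check{\LL}^+_{v_\zeta}g$ acting at the nodes other than $z$ (the support of $\bfs\beta$ and the nodes of $X$) and the piece acting at the node $z$ of $Y$. The first piece is holomorphic in $\zeta$ near $z$; its residue is $-\check{\LL}^+_{v_z}g$, which, after multiplication by $-\PP_{\bfs\beta}^{-1}$ and distributing the derivation $\LL^+_{v_z}$ (which does not reach the $Y(z)$-node), is the first term $\E\,Y(z)\PP_{\bfs\beta}^{-1}\LL^+_{v_z}\PP_{\bfs\beta}X$ on the right-hand side. The second piece is, term by term, the $j$-th Taylor coefficient of $v_\zeta$ at $z$ times a $\zeta$-independent differential operator applied to $g$; since that coefficient equals $2\zeta^2/(\zeta-z)^{j+1}$ for $j\ge2$ (with explicit analogous forms for $j=0,1$), a short computation shows that $\tfrac{v_z(\zeta)}{2\zeta^2}$ times this piece has \emph{vanishing} residue at $\zeta=z$; this cancellation is precisely what fixes the coefficients $2z^2,3z,1$ in~\eqref{eq: L in D}. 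Finally $\LL^-_{v_{\zeta^*}}g$ is holomorphic in $\zeta$ near $z$ (its $\zeta$-singularities sit at $z^*,q_k^*,\dots$, outside the loop), so its residue is $-\LL^-_{v_{z^*}}g$, giving, after the $-\PP_{\bfs\beta}^{-1}$ prefactor, the second term $\E\,\PP_{\bfs\beta}^{-1}\LL^-_{v_{z^*}}[\PP_{\bfs\beta}Y(z)X]$. Adding the two contributions yields the claimed identity.

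The step I expect to be the main obstacle is the residue bookkeeping just described: one must verify the cancellation uniformly over all $Y\in\FF_{\bfs\beta}$, including non-primary fields such as $T_{\bfs\beta}$ itself, for which the Lie derivative at $z$ already involves the third-order Taylor jet of $v_\zeta$, and one must dispose of the degenerate node $z=0$ by hand. The chordal statement (Theorem~\ref{BPZinH}) follows from the same three steps with $v_z$ replaced by $k_z$ and $L(z)=L_{-2}(z)$, and there this cancellation does not even arise.
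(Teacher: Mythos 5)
Your argument is correct, and it reaches the theorem by a mildly different route than the paper. Both proofs rest on the same two inputs -- Ward's equation (Theorem~\ref{main: Ward's equations}) applied to the string $Y(z)X$, and Ward's OPE at the node $z$ -- but the final identification of $L(z)$ is organized differently. The paper never integrates in $\zeta$: it observes that $\delta_\zeta(\eta)=2\zeta^2k_\zeta(\eta)-v_\zeta(\eta)=2\zeta+\eta$ is a \emph{polynomial} vector field, so the action of $\LL^+_{v_\zeta}$ at the node $z$ differs from $2\zeta^2\Sing_{\zeta\to z}T_{\bfs\beta}(\zeta)Y(z)$ by exactly $L_0Y(z)+(2\zeta+z)L_{-1}Y(z)$; subtracting the singular terms and letting $\zeta\to z$ then yields $2z^2L_{-2}+3zL_{-1}+L_0$ in one stroke, for arbitrary non-primary $Y$. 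You instead extract $L(z)Y(z)=-\tfrac1{2\pi i}\oint_{(z)}v_z(\zeta)T_{\bfs\beta}(\zeta)Y(z)\,\dd\zeta$ (which is right: the expansion $v_z(\zeta)=-\tfrac{2z^2}{\zeta-z}-3z-(\zeta-z)$ picks out precisely those three modes) and then must verify that $\Res_{\zeta=z}\bigl[\tfrac{v_z(\zeta)}{2\zeta^2}\cdot(\text{action of }\LL^+_{v_\zeta}\text{ at }z)\bigr]=0$. I checked the step you flag as the main obstacle: with $\tfrac{v_z(\zeta)}{2\zeta^2}=\tfrac1{2\zeta}+\tfrac1{z-\zeta}$ and the Taylor coefficients $\tfrac{v_\zeta^{(j)}(z)}{j!}$ equal to $\tfrac{2\zeta^2}{\zeta-z}-2\zeta-z$ for $j=0$, $\tfrac{2\zeta^2}{(\zeta-z)^2}-1$ for $j=1$, and $\tfrac{2\zeta^2}{(\zeta-z)^{j+1}}$ for $j\ge2$, each product does have vanishing residue at $\zeta=z$, so the cancellation holds for every $Y$ in the family and your proof closes. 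The trade-off: your route makes it transparent that the coefficients $2z^2,3z,1$ in \eqref{eq: L in D} are forced by this cancellation, at the price of an infinite family of residue checks that the paper's single polynomial identity $\delta_\zeta(\eta)=2\zeta+\eta$ disposes of all at once.
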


\begin{proof}
Recall Ward's OPE, $\Sing_{\zeta\to z} T_{\bfs\beta}(\zeta) Y(z) = \LL^+_{k_\zeta}Y(z)$ (see \cite[Proposition~5.3]{KM13}). 
It follows immediately that 
\begin{equation} \label{eq: Ward's OPE4Y}
2\zeta^2\Sing_{\zeta\to z} T_{\bfs\beta}(\zeta) Y(z) = \LL^+_{2\zeta^2k_\zeta}Y(z).
\end{equation}
Denote $\delta_\zeta(\eta) = 2\zeta^2 k_\zeta(\eta) - v_\zeta(\eta) = 2\zeta+\eta.$ 
By \eqref{eq: Ln}, we observe that 
$$\LL_{\delta_\zeta}^+Y(z) = \frac1{2\pi i}\oint_{(z)} \delta_\zeta(\eta)T_{\bfs\beta}(\eta)Y(z)\,\dd \eta = L_0Y(z) + (2\zeta+z)L_{-1}Y(z).$$
It follows from \eqref{eq: Ward's OPE4Y} that 
\begin{equation} \label{eq: sing OPE4TY}
2\zeta^2\Sing_{\zeta\to z} T_{\bfs\beta}(\zeta) Y(z)- \LL^+_{v_\zeta}Y(z)=L_0Y(z) + (2\zeta+z)L_{-1}Y(z).
\end{equation}
On the other hand, by Theorem~\ref{main: Ward's equations}, 
$$2\zeta^2\,\E\,T_{\bfs\beta}(\zeta) Y(z)X=\E\,\PP_{\bfs\beta}^{-1}\LL^+_{v_\zeta}\PP_{\bfs\beta}Y(z)X+\E\,\PP_{\bfs\beta}^{-1}\LL^-_{v_{\zeta^*}}\PP_{\bfs\beta}Y(z)X,$$
where the first term on the right-hand side is 
$$\E\,\PP_{\bfs\beta}^{-1}\LL^+_{v_\zeta}\PP_{\bfs\beta}Y(z)X = \E\,\LL^+_{v_\zeta}Y(z)X+ \E\,Y(z)\PP_{\bfs\beta}^{-1}\LL^+_{v_\zeta}\PP_{\bfs\beta}X.$$
Combining the last three equations and subtracting all singular terms, we get
\begin{align*}
\E\, \big(2\zeta^2\Reg_{\zeta\to z}(T_{\bfs\beta}(\zeta) Y(z))X &+ L_0Y(z)X + (2\zeta+z)L_{-1}Y(z)X\big) \\
&= \E\,Y(z)\PP_{\bfs\beta}^{-1}\LL^+_{v_\zeta}\PP_{\bfs\beta}X + \E\,\PP_{\bfs\beta}^{-1}\LL^-_{v_{\zeta^*}}\PP_{\bfs\beta}Y(z)X,
\end{align*}
where $\Reg_{\zeta\to z}T_{\bfs\beta}(\zeta) Y(z)$ is the regular part of the operator product expansion. 
Theorem now follows by taking the limit $\zeta\to z$ in both sides. 
For instance, the left-hand side converges to $\E\, L(z)Y(z)\,X.$
\end{proof}

\subsection{Null vectors}

After we briefly review the level two degeneracy equations for current primary fields, we apply these equations and the BPZ equations to the (tensor/OPE) products of the one-leg operators and the fields in the extended OPE family.

We denote by $\{ J_n\}$ and $\{ L_n\}$ the current generators and the Virasoro generators, the modes of $J_{\bfs\beta}$ and $T_{\bfs\beta}$ in $\FF_{\bfs\beta}$ theory, respectively: 
$$J_n(z):=\frac1{2\pi i}\oint_{(z)}(\zeta-z)^{n} J_{\bfs\beta}(\zeta)\,\dd\zeta,\qquad L_n(z):=\frac1{2\pi i}\oint_{(z)}(\zeta-z)^{n+1} T_{\bfs\beta}(\zeta)\,\dd\zeta.$$
As operators acting on fields in $\FF_{\bfs\beta},$ we have the following equations:
$$[J_m,J_n] = n\delta_{m+n,0}, \quad [L_m,J_n] = -nJ_{m+n}+ibm(m+1)\delta_{m+n,0},$$
and
$$[L_m, L_n]=(m-n)L_{m+n}+\frac c{12}m(m^2-1)\delta_{m+n,0},$$
where $c$ is the central charge, $c=1-12b^2.$

By definition, $X\in\FF_{\bfs\beta}$ is (Virasoro) primary if $X$ is a $[\lambda,\lambda_*]$-differential; equivalently
\begin{equation} \label{eq: T primary} 
L_{\ge1}X=0,\qquad L_0X=\lambda X, \qquad L_{-1}X=\partial X,
\end{equation}
and similar equations hold for $\bar X.$
Here $L_{\ge k}X = 0$ means that $L_nX = 0$ for all $n\ge k.$ 
A (Virasoro) primary field $X$ is called \emph{current primary} with charges $\sigma$ and $\sigma_*$ if 
$J_{\ge1}X = J_{\ge1}\bar X = 0,$
and
$$J_0X = -i\sigma X, \quad J_0\bar X = i\bar \sigma_*\bar X.$$
Let us recall the characterization of level two degenerate current primary fields.

\begin{prop}[Proposition~11.2 in \cite{KM13}] \label{degeneracy2} 
Let $\OO$ be a current primary field in $\FF_{\bfs\beta},$ and let $\sigma,\sigma_*$ be charges of $\OO.$
If $2\sigma(b+\sigma) = 1,\, \eta = -1/(2\sigma^2),$ then
$$(L_{-2}+\eta L_{-1}^2)\OO =0.$$
\end{prop}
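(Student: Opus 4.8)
The plan is to verify the null-vector equation $(L_{-2}+\eta L_{-1}^2)\OO = 0$ by exhibiting it as a consequence of the current-algebra structure, rewriting everything in terms of the current modes $J_n$ acting on the current primary field $\OO$. The key input is that $\OO$ is current primary with charges $\sigma,\sigma_*$, so $J_{\geq 1}\OO = 0$, $J_0\OO = -i\sigma\OO$, and $\OO$ is Virasoro primary with $L_{-1}\OO = \pa\OO$. First I would recall the Sugawara-type expression for the Virasoro generators in terms of the current modes coming from $T_{\bfs\beta} = -\tfrac12 J_{\bfs\beta}*J_{\bfs\beta} + ib\,\pa J_{\bfs\beta}$: concretely, acting on fields,
\begin{equation*}
L_{-1} = \sum_{k} {:}J_{-1-k}J_{k}{:} + ib\,(\text{derivative term}), \qquad L_{-2} = \sum_k {:}J_{-2-k}J_k{:} + ib\,(\text{derivative term}),
\end{equation*}
with the normal-ordering conventions from the excerpt. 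Applying these to $\OO$ and using $J_{\geq 1}\OO = 0$, only finitely many terms survive.

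Next I would compute $L_{-1}^2\OO$ and $L_{-2}\OO$ separately in terms of $J_{-1}J_{-1}\OO$, $J_{-2}\OO$, $J_{-1}J_0\OO$, and $J_0^2\OO$. Using $J_0\OO = -i\sigma\OO$ and the commutation relations $[J_m,J_n] = n\delta_{m+n,0}$, $[L_m,J_n] = -nJ_{m+n} + ibm(m+1)\delta_{m+n,0}$, one reduces $L_{-1}\OO = \pa\OO$ to a statement about $J_{-1}\OO$ (indeed $\pa\OO = L_{-1}\OO$ while the Sugawara formula gives $L_{-1}\OO$ in terms of $J_{-1}J_0\OO$ and the $ib\pa J$ piece, so $J_{-1}\OO$ is expressed through $\pa\OO$). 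The heart of the computation is then to show
\begin{equation*}
L_{-2}\OO = -\frac1{2\sigma^2}\,L_{-1}^2\OO,
\end{equation*}
i.e. that the combination $L_{-2}\OO + \eta L_{-1}^2\OO$ collapses to zero once every $J$-mode has been pushed to the right onto $\OO$ and every nonzero commutator and the value $\eta = -1/(2\sigma^2)$ have been inserted. The constraint $2\sigma(b+\sigma)=1$ is exactly what is needed for the $L_0$-eigenvalue bookkeeping to be consistent: the conformal weight of $\OO$ is $\lambda = \lambda_b(\sigma) = \sigma^2/2 - \sigma b$, and the level-two descendant $(L_{-2}+\eta L_{-1}^2)\OO$ has weight $\lambda+2$; for it to be a null vector the standard Kac-type relation $2\lambda + 1 = \lambda/\eta$-type identity (equivalently $\eta = -1/(2\sigma^2)$ together with $2\sigma(b+\sigma)=1$) must hold, and I would check this arithmetic explicitly.

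A cleaner route, which I would present as the actual argument, is to bypass the abstract mode algebra and instead use the explicit realization: $\OO = \OO_{\bfs\beta}[\bfs\tau]$ is an OPE exponential, so $L_n\OO$ can be read off from the operator product expansion $T_{\bfs\beta}(\zeta)\OO(z)$, which in turn is governed by Ward's OPE $\Sing_{\zeta\to z}T_{\bfs\beta}(\zeta)\OO(z) = (\LL^+_{k_\zeta}\OO)(z)$ together with the structure of $T_{\bfs\beta} = -\tfrac12 J_{\bfs\beta}*J_{\bfs\beta} + ib\pa J_{\bfs\beta}$ and the fact that $J_{\bfs\beta}$ acts on $\OO$ by $J_{\bfs\beta}(\zeta)\OO(z) \sim \tfrac{-i\sigma}{\zeta-z}\OO(z) + \text{regular}$ (current primary). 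Then $L_{-1}\OO = \pa\OO$, $L_0\OO = \lambda\OO$, and $L_{-2}\OO$ comes from the second-order term; expanding $-\tfrac12(J_{\bfs\beta}*J_{\bfs\beta})$ around $z$ produces exactly $\tfrac12 J_{-1}J_{-1}\OO$-type contributions plus a term proportional to $\pa^2\OO$, and matching against $L_{-1}^2\OO = \pa^2\OO$ (using that $L_{-1}^2\OO = L_{-1}\pa\OO = \pa^2\OO$ modulo the correction from $[L_{-1},L_{-1}]=0$ — actually $L_{-1}^2 = \pa^2$ on primaries is subtle, so one keeps $L_{-1}^2\OO$ as $\pa(\pa\OO)$ plus a commutator term and tracks it carefully) gives the identity under $2\sigma(b+\sigma)=1$. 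The main obstacle I anticipate is precisely this bookkeeping of the $ib\pa J_{\bfs\beta}$ contribution to $L_{-2}$ and the noncommutativity/non-associativity of the OPE product: one must be careful that $(J_{\bfs\beta}*J_{\bfs\beta})$ is computed with the correct Wick-plus-contraction prescription and that the modes extracted really satisfy the quoted $[L_m,J_n]$ relation with the $ibm(m+1)$ anomaly, since it is exactly this anomalous term (at $m=-2$) that conspires with $2\sigma(b+\sigma)=1$ to kill the descendant. I would isolate that single coefficient computation as the crux and verify it by a direct residue calculation, citing Proposition~11.2 in \cite{KM13} for the mode-algebra identities already established there.
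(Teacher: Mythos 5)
Your main route is exactly the proof behind this statement: the paper itself gives no argument here (it simply quotes Proposition~11.2 of \cite{KM13}), and that proof is the mode computation you outline — from $T_{\bfs\beta}=-\tfrac12 J_{\bfs\beta}*J_{\bfs\beta}+ib\pa J_{\bfs\beta}$ one gets, on a current primary, $L_{-1}\OO=i\sigma J_{-1}\OO$, $L_{-2}\OO=i(\sigma+b)J_{-2}\OO-\tfrac12 J_{-1}^2\OO$, hence $L_{-1}^2\OO=-\sigma^2 J_{-1}^2\OO+i\sigma J_{-2}\OO$, and the hypotheses $\eta=-1/(2\sigma^2)$ and $2\sigma(\sigma+b)=1$ kill the coefficients of $J_{-1}^2\OO$ and $J_{-2}\OO$ respectively, while your alternative Wick/OPE route is precisely how the paper proves the Neumann analogue (Proposition~\ref{level2Neumann}). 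Two small corrections: the $b$-dependence enters through the $+ibJ_{-2}$ contribution of the $ib\pa J_{\bfs\beta}$ term to $L_{-2}$, not through the $ibm(m+1)$ anomaly in $[L_m,J_n]$ (the only commutator needed is $[L_{-1},J_{-1}]=J_{-2}$, which is anomaly-free), and you should cite the displayed commutation relations rather than Proposition~11.2 itself, which is the statement being proved.
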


If the parameters $a$ and $b$ are related to the SLE parameter $\kappa$ as 
$$a = \pm\sqrt{2/\kappa},\qquad b = a(\kappa/4-1),$$
then $2a(a+b) = 1$ and the formal fields $\OO^{(a)}, \OO^{(-a-b)}$ satisfy
\begin{equation}\label{eq: level2}
L_{-2}\, \OO^{(a)} =\frac\kappa4\,\pa_z^2\OO^{(a)},\qquad L_{-2}\, \OO^{(-a-b)} =\frac4\kappa\,\pa_z^2\OO^{(-a-b)}
\end{equation}
(in all charts).
Since $\OO^{(a)}, \OO^{(-a-b)}$ are formal fields, we need to assume the neutrality condition $(\NC_0)$ on $\OO^{(a)}X, \OO^{(-a-b)}X$ for some product $X$ of formal fields. 
For example, the one-leg operator $\leg_p$ (both in the chordal case and in the radial one) satisfy the level two degeneracy equation.
As mentioned before, we will use this level two degeneracy equation for $\leg_p$ to establish the connection between the chordal/radial SLE theory and conformal field theory.

\subsubsec{Standard chordal case with $\bfs\beta = 2b\cdot q$}
Recall the BPZ equations in $(\H,\infty):$
$$L(z) = \check{\LL}^+_{k_z}+\LL^-_{k_{\bar z}},\qquad k_\zeta(\eta) = \frac1{\zeta-\eta},$$
which we apply to the field $Y(z)X$ in $\FF_{\bfs\beta},$ where $X$ has no node at $z.$
In all applications $Y$ is holomorphic, so we have
\begin{equation} \label{eq: BPZ in H}
\big(L(z)Y(z)\big)X = Y(z)(\check{\LL}^+_{k_z}+\LL^-_{k_{\bar z}})X = Y(z)(\LL^+_{k_z}+\LL^-_{k_{\bar z}})X.
\end{equation}
If $X$ is also holomorphic (in particular, in the ``boundary" situation), then \eqref{eq: BPZ in H} becomes
$$\big(L(z)Y(z)\big)X = Y(z)\LL^+_{k_z}X.$$
Let us discuss both sides in the equation~\eqref{eq: BPZ in H}.

We study the cases that $Y$ is the null vector 
$$Y(z) = \OO^{(a)}(z) , \qquad Y(z) = \OO^{(-a-b)}(z).$$
Of course, $Y$ is a formal field, so we need to assume the neutrality condition $(\NC_0)$ in $Y(z)X.$
Recall that the conformal dimensions $h,h'$ of $\OO^{(a)},\OO^{(-a-b)},$ respectively at $z$ are 
\begin{equation} \label{eq: h,h'}
h := \frac{3a^2}2-\frac12 = \frac{6-\kappa}{2\kappa}, \qquad h' := \frac{3}{8a^2}-\frac12 = \frac{3\kappa-8}{16}.
\end{equation}

Recall the level two degeneracy equation \eqref{eq: level2} for the formal field $\OO^{(a)}$.
In the boundary case $z= x\in\R$ we have 
\begin{equation}\label{eq: level2chordal}
L_{-2}\, \OO^{(a)}(x) =\frac\kappa4\,\pa_x^2\OO^{(a)}(x)
\end{equation}
in the $(\H,\infty)$-uniformization because $\pa_x = \pa + \bp$ and $\OO^{(a)}$ is holomorphic. 

In the case that $X = \OO_{\bfs\beta}[\bfs\tau]$ with $\bfs\tau = \sum\tau_j\cdot z_j,$ we have 
$$Y(z)\LL^+_{k_z}X = \big(\sum k_z(z_j)\pa_{z_j}+\lambda_j\pa_{z_j}k_z(z_j)\big)Y(z)X,$$
where $\lambda_j = \lambda_b(\tau_j) = \frac12\tau_j^2 - b\tau_j.$

\subsubsec{Standard radial case with $\bfs\beta = b\cdot q + b\cdot q^*$} \label{ss: BPZ in D}
We apply the BPZ equation to the tensor product $Y(z)X$ of fields in $\FF_{\bfs\beta}(D,q)$ with holomorphic $Y:$
\begin{equation} \label{eq: BPZ in D}
\big(L(z)Y(z)\big)X = Y(z)\PP_q^{-1}(\LL^+_{v_z}+\LL^-_{v_{z^*}})\PP_qX, \qquad v_\zeta(\eta) = \eta\frac{\zeta+\eta}{\zeta-\eta}.
\end{equation}
Let us discuss both sides in the equation~\eqref{eq: BPZ in D}.

We consider the $(\D,0)$-uniformization. 
For $Y=\OO^{(a)},$ we have 
\begin{equation}\label{eq: level2radial}
L(z)\OO^{(a)}=\Big(\frac\kappa2\,z^2\pa_z^2 + 3z\pa_z + h\Big)\OO^{(a)},
\end{equation}
where $h$ is given by \eqref{eq: h,h'}.
In particular, for $z = \ee^{i\theta} \in \pa \D,$ 
$$L(\ee^{i\theta}) \OO^{(a)} = \Big(\!-\kappa\big(\frac12\pa_\theta^2 + ih\pa_\theta\big)+h\Big)\OO^{(a)}.$$
In the case of $O^{(-a-b)}$ the corresponding differential operator is
$$L(\ee^{i\theta}) = -\kappa'\big(\frac12\pa_\theta^2 + ih'\pa_\theta\big)+h',$$
where $\kappa'=16/\kappa$ and $h'$ is given by \eqref{eq: h,h'}.

By Lemma~\ref{ETZ}, the right hand-side becomes
$$Y(z)(\LL^+_{v_z}+\LL^-_{v_{z^*}})X+2z^2Y(z)X\, \E\,T_{\bfs\beta}(z).$$

\section{Connection to the theory of forward chordal/radial SLE with forces and spins} 

In Subsections~\ref{ss: 1-leg spin chordal}~--~\ref{ss: 1-leg radial} we define the one-leg operators for chordal/radial SLEs with forces and spins as the OPE exponentials with specific background charge $\bfs\beta$ at the marked points. 
The insertion of a one-leg operator changes the boundary values of Fock space fields. 
As the PDEs for correlation functions of the fields in the extended OPE family $\FF_{\bfs\beta}(D),$ the BPZ-Cardy equations for such insertion operations are used to prove the connection (Theorem~\ref{main}) between the theory $\FF_{\bfs\beta}$ and the chordal/radial $\SLE[\bfs\beta]$ theory.
A dipolar theory studied in \cite{KT13} is the special of this theory since it is well known (\cite{SW05,Zhan08}) that dipolar $\SLE(\kappa)$ with two force points $q_\pm$ is equivalent to chordal $\SLE(\kappa,\bfs\rho)$ with $\bfs\rho = \frac12(\kappa-6)\cdot q_+ +\frac12(\kappa-6)\cdot q_-.$ 
In Subsection~\ref{ss: Restriction formulas} we present field theoretic approach to chordal $\SLE(\kappa,\bfs\rho)$ restriction martingale-observable introduced in \cite[Lemma~4]{Dubedat05}.

\subsection{One-leg operators for chordal SLEs with forces and spins} \label{ss: 1-leg spin chordal}
Let $(D,p,q)$ be a simply-connected domain with two distinct marked boundary points $p,q\in\pa D,$ 
and $\bfs\beta$ be a symmetric background charge on $S = D^\mathrm{double}$ with 
$$\bfs\beta = a\cdot p + \bfs\eta, \quad p\notin\supp\,\bfs\eta $$
satisfying the neutrality condition $(\NC_b):$ 
$$\int \bfs\beta = 2b.$$
Here the parameters $a$ and $b$ are related to the SLE parameter $\kappa$ as 
$$a = \pm\sqrt{2/\kappa},\qquad b = a(\kappa/4-1).$$
Let $\bfs\tau = a\cdot p - a \cdot q.$ 
For any divisor $\bfs\sigma,$ we denote $\check{\bfs\sigma\,} = \bfs\sigma -\bfs\tau$ and $\hat{\bfs\sigma\,} = \bfs\sigma +\bfs\tau.$ 

\subsubsec{Insertions}
The insertion of the one-leg operator 
$$\leg_p \equiv \leg(p) := \OO_{\check{\bfs\beta\,}}[\bfs\tau]$$
produces an operator $\XX\mapsto \wh\XX$ acting on Fock space functionals/fields by the rules~\eqref{eq: BCrules} and the formula 
$$\wh\Phi_{\check{\bfs\beta\,}}(z) = \Phi_{\check{\bfs\beta\,}}(z) + 2a \arg w(z) = \Phi_{\bfs\beta}(z),$$
where $w:(D,p,q)\to(\H,0,\infty)$ is a conformal map.

We denote 
$$\wh\E\,\XX := \frac{\E\,\leg_p\XX}{\E\,\leg_p} = \frac{\E\,\leg_p^\eff\XX}{\E\,\leg_p^\eff} = \E\, \XX \ee^{\odot i a \Phi^+(p,q)},$$
where $\leg_p^\eff$ is the effective one-leg operator 
$$\leg_p^\eff = \PP_{\check{\bfs\beta\,}}\leg_p = C_{(b)}[\check{\bfs\beta\,}] \OO_{\check{\bfs\beta\,}}[\bfs\tau] = C_{(b)}[\bfs\beta] V^{\odot}[\bfs\tau] = V_{\check{\bfs\beta\,}}[\bfs\beta].$$ 
Theorem~\ref{main: change of beta} says that 
$$\wh\E\,\XX = \E\, \wh\XX$$
for any string $\XX$ of fields in $\FF_{\check{\bfs\beta\,}}.$
We define the partition function associated with a symmetric background charge $\bfs\beta$ by 
$$Z_{\bfs\beta}:= \big|C_{(b)}[\bfs\beta]\big|.$$ 
In the $\H$-uniformization, $C_{(b)}[\bfs\beta]$ is non-negative (up to a phase), see Example (b) in Subsection~\ref{ss: C} and thus we have $Z_{\bfs\beta}= C_{(b)}[\bfs\beta].$ 

\subsubsec{BPZ equations}
Let $\xi\in\pa D$ and $\bfs\beta_\xi = \bfs\beta - a\cdot p + a\cdot \xi.$
Let $X$ be the tensor product $X= X_1(z_1)\cdots X_n(z_n)$ of fields $X_j$ in $\FF_{\check{\bfs\beta\,}}$ $(z_j\in D).$
It follows from Theorem~\ref{BPZinH} that 
\begin{equation} \label{eq: BPZinH} 
\frac1{2a^2} \pa_\xi^2 \E\leg_\xi^\eff X = \E\, \check{\LL}_{k_\xi} \leg_\xi^\eff X, \qquad k_\xi(z):=\frac1{\xi-z},
\end{equation}
in the $\H$-uniformization.
Here $\pa_\xi = \pa + \bar\pa$ is the operator of differentiation with respect to the real variable $\xi$ and $\check{\LL}_{k_\xi}$ is taken over the finite notes of $X$ and $\supp\,\bfs\beta_\xi\setminus\{\xi\}.$
In particular, the partition function $Z_\xi\equiv Z_{\bfs\beta_\xi}$ satisfies the null vector equation
\begin{equation} \label{eq: chordal NV}
\frac1{2a^2} \pa_\xi^2 Z_\xi = \check{\LL}_{k_\xi} Z_\xi,
\end{equation}
in the $\H$-uniformization since $Z_\xi = \E\leg_\xi^\eff$ up to a phase in the $\H$-uniformization.
Here $\check{\LL}_{k_\xi}$ is taken over $\supp\,\bfs\beta_\xi\setminus\{\xi\}.$

\subsubsec{BPZ-Cardy equations}
For $\xi\in\R$ and a tensor product $X = X_1(z_1)\cdots X_n(z_n)$ of fields $X_j$ in $\FF_{\check{\bfs\beta\,}}$ $(z_j\in\H)$, we denote
$$R_\xi(\bfs z,\bfs q) \equiv\wh\E_\xi X:= \frac{\E\,\leg_\xi X}{\E\,\leg_\xi}=\frac{\E\,\leg_\xi^\eff X}{\E\,\leg_\xi^\eff}= \E\, X \ee^{\odot i a \Phi^+(\xi,q)}.$$

The following is the first half of Theorem~\ref{main: BPZ-Cardy}.

\begin{prop}[BPZ-Cardy equations in $\H$] \label{prop: BPZ-Cardy4chordalSLE(kappa,rho)}
In the identity chart of $\H,$ we have
\begin{equation} \label{BPZ_Cardy}
\frac1{2a^2}\big(\pa_\xi^2\wh\E_{\xi}\,X + 2(\pa_\xi \log Z_\xi)\pa_\xi\wh\E_{\xi}\,X \big) = \check \LL_{k_\xi}\wh\E_{\xi}\,X, \qquad k_\xi(z):=\frac1{\xi-z},
\end{equation}
where $\pa_\xi = \pa + \bp$ and $\check{\LL}_{k_\xi}$ is taken over the finite notes of $X$ and $\supp\,\bfs\beta_\xi\setminus\{\xi\}.$ 
\end{prop}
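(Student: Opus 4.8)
The plan is to derive the BPZ--Cardy equation \eqref{BPZ_Cardy} from the null vector equation \eqref{eq: BPZinH} for the effective one-leg observable $\E\,\leg_\xi^\eff X$ by factoring out the partition function $Z_\xi = \E\,\leg_\xi^\eff$. First I would recall from Theorem~\ref{BPZinH} (applied with $Y = \OO^{(a)}$ rooted at $p$, so that $\leg_\xi^\eff = V_{\check{\bfs\beta\,}}[\bfs\beta_\xi]$ and the level-two degeneracy equation \eqref{eq: level2chordal} holds on the boundary) that in the $\H$-uniformization
\begin{equation*}
\frac1{2a^2}\pa_\xi^2\,\E\,\leg_\xi^\eff X = \check\LL_{k_\xi}\,\E\,\leg_\xi^\eff X,
\end{equation*}
which is precisely \eqref{eq: BPZinH}; here $\pa_\xi = \pa+\bp$ since $\OO^{(a)}$ is holomorphic. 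The key point is that $\leg_\xi^\eff$ is a \emph{scalar} in the variable $\xi$ (conformal dimension $\lambda_b(2b-a)=\lambda_b(a)=h$ at $\xi$ would make it an $h$-differential, but in the identity chart of $\H$ its value at a boundary point involves only the scalar factor once normalized) and more importantly $Z_\xi$ carries \emph{no} dependence on the other nodes $z_j$ or on $\supp\,\bfs\beta_\xi\setminus\{\xi\}$ except through the charge $a$ at $\xi$; consequently the Lie derivative $\check\LL_{k_\xi}$, taken over the finite nodes of $X$ and $\supp\,\bfs\beta_\xi\setminus\{\xi\}$ but \emph{not} over $\xi$ itself, annihilates $Z_\xi$: $\check\LL_{k_\xi} Z_\xi$ equals the null vector equation \eqref{eq: chordal NV} for $Z_\xi$, i.e.\ $\frac1{2a^2}\pa_\xi^2 Z_\xi$.

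The second step is the algebra of separating variables. Writing $\E\,\leg_\xi^\eff X = Z_\xi\cdot\wh\E_\xi X$ (using $\wh\E_\xi X = \E\,\leg_\xi^\eff X/\E\,\leg_\xi^\eff$ and $Z_\xi = \E\,\leg_\xi^\eff$ in the $\H$-uniformization), I would expand the left-hand side of \eqref{eq: BPZinH} by the Leibniz rule for $\pa_\xi^2$:
\begin{equation*}
\pa_\xi^2\big(Z_\xi\,\wh\E_\xi X\big) = (\pa_\xi^2 Z_\xi)\,\wh\E_\xi X + 2(\pa_\xi Z_\xi)(\pa_\xi\wh\E_\xi X) + Z_\xi\,\pa_\xi^2\wh\E_\xi X,
\end{equation*}
and expand the right-hand side using that $\check\LL_{k_\xi}$ is a first-order differential operator not acting on $\xi$, hence a derivation in the remaining variables that leaves $Z_\xi$ fixed: $\check\LL_{k_\xi}(Z_\xi\,\wh\E_\xi X) = Z_\xi\,\check\LL_{k_\xi}\wh\E_\xi X$. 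Equating, dividing through by $Z_\xi$, and using the null vector equation \eqref{eq: chordal NV} to cancel the term $\frac1{2a^2}(\pa_\xi^2 Z_\xi/Z_\xi)\wh\E_\xi X$ against $\check\LL_{k_\xi} Z_\xi/Z_\xi \cdot \wh\E_\xi X$, one is left with exactly
\begin{equation*}
\frac1{2a^2}\big(\pa_\xi^2\wh\E_\xi X + 2(\pa_\xi\log Z_\xi)\,\pa_\xi\wh\E_\xi X\big) = \check\LL_{k_\xi}\wh\E_\xi X,
\end{equation*}
which is \eqref{BPZ_Cardy}.

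The main obstacle I anticipate is the careful bookkeeping of which operator acts on which variable, and in particular justifying that $\check\LL_{k_\xi} Z_\xi$ genuinely coincides with the null vector quantity $\frac1{2a^2}\pa_\xi^2 Z_\xi$ rather than merely being related to it: this requires knowing that $Z_\xi = \E\,\leg_\xi^\eff$ \emph{as a differential}, i.e.\ that the effective one-leg operator's one-point function transforms correctly under the remaining marked points, so that the restriction of Theorem~\ref{BPZinH} to the trivial string $X=1$ yields precisely \eqref{eq: chordal NV}. A secondary subtlety is the boundary specialization $\xi\in\R$: one must verify that $\pa_\xi=\pa+\bp$ is the correct total real derivative and that no contributions from the anti-holomorphic Lie derivative $\LL^-_{k_{\bar\xi}}$ survive --- this follows because $\OO^{(a)}$ is holomorphic and $\bar\xi=\xi$ on $\R$, collapsing the two terms in \eqref{eq: BPZ in H} into a single real differential operator, but it should be stated explicitly. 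Once these points are in place the computation is the routine Leibniz expansion sketched above.
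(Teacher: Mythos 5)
Your overall route is the same as the paper's: write $\E\,\leg_\xi^\eff X = Z_\xi\,\wh\E_\xi X$, apply the chordal BPZ equation \eqref{eq: BPZinH}, expand both sides by Leibniz, divide by $Z_\xi$, and cancel the zeroth-order terms using the null vector equation \eqref{eq: chordal NV}. However, the justification you give for the right-hand side contains a genuine error that, taken at face value, breaks the argument. You assert that $Z_\xi$ ``carries no dependence on $\supp\,\bfs\beta_\xi\setminus\{\xi\}$'' and that consequently $\check\LL_{k_\xi}(Z_\xi\,\wh\E_\xi X) = Z_\xi\,\check\LL_{k_\xi}\wh\E_\xi X$. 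This is false: $Z_\xi=\big|C_{(b)}[\bfs\beta_\xi]\big|$ is exactly the Coulomb gas correlation of \emph{all} the background charges, and in the $\H$-uniformization it contains factors such as $(\xi-q_k)^{a\beta_k}$ and $(q_j-q_k)^{\beta_j\beta_k}$, so it depends on every point of $\supp\,\bfs\beta_\xi\setminus\{\xi\}$ and $\check\LL_{k_\xi}Z_\xi\neq0$ in general (this dependence is the whole reason the drift $\pa_\xi\log Z_\xi$ is nontrivial). Your claim is also internally inconsistent with your final step: you invoke the cancellation of $\frac1{2a^2}(\pa_\xi^2 Z_\xi/Z_\xi)\,\wh\E_\xi X$ against $(\check\LL_{k_\xi}Z_\xi/Z_\xi)\,\wh\E_\xi X$, but the latter term only exists if the Leibniz rule for $\check\LL_{k_\xi}$ produces it, i.e.\ if $\check\LL_{k_\xi}Z_\xi\neq0$. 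If $\check\LL_{k_\xi}Z_\xi$ really vanished, the null vector equation \eqref{eq: chordal NV} would force $\pa_\xi^2 Z_\xi=0$, which holds only in the standard case $\bfs\beta_\xi=a\cdot\xi+(2b-a)\cdot\infty$ where $Z_\xi\equiv1$, not for general symmetric $\bfs\beta$ with force points.

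The repair is simply to keep the Leibniz term: from \eqref{eq: BPZinH},
\begin{equation*}
\frac{\E\,\check\LL_{k_\xi}\leg_\xi^\eff X}{\E\,\leg_\xi^\eff}
=\check\LL_{k_\xi}\wh\E_\xi X+\frac{\check\LL_{k_\xi}Z_\xi}{Z_\xi}\,\wh\E_\xi X,
\end{equation*}
while the left side expands as $\frac1{2a^2}\big(\pa_\xi^2\wh\E_\xi X+2\frac{\pa_\xi Z_\xi}{Z_\xi}\pa_\xi\wh\E_\xi X+\frac{\pa_\xi^2 Z_\xi}{Z_\xi}\wh\E_\xi X\big)$; the two $\wh\E_\xi X$ coefficients cancel precisely by \eqref{eq: chordal NV}, which is the case $X=1$ of \eqref{eq: BPZinH} (your ``main obstacle'' paragraph correctly identifies this). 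With that correction your argument coincides with the paper's proof. The side remarks about $\leg_\xi^\eff$ being a scalar in $\xi$ are unnecessary (and not quite accurate, since the one-leg operator has dimension $h$ at $\xi$); all that is needed is that the computation is carried out in the fixed identity chart of $\H$ with $\xi\in\R$, where $\pa_\xi=\pa+\bp$ as in \eqref{eq: level2chordal}.
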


\begin{proof}
By differentiation,
$$\frac{\pa_\xi^2\E\,\leg_\xi^\eff X}{\E\,\leg_\xi ^\eff} =\pa_\xi^2\wh\E_{\xi}\,X + 2\frac{\pa_\xi Z_\xi}{Z_\xi} \pa_\xi\wh\E_{\xi}\,X + \frac{\pa_\xi^2 Z_\xi}{Z_\xi}\wh\E_{\xi}\,X.$$
On the other hand, it follows from the BPZ equations \eqref{eq: BPZinH} that the left-hand side of the above becomes $2a^2{\E\, \check{\LL}_{k_\xi} \leg_\xi^\eff X}/{\E\,\leg_\xi^\eff }.$
By Leibniz's rule for Lie derivatives, we have 
$$\frac{\E\, \check{\LL}_{k_\xi} \leg_\xi^\eff X}{\E\,\leg_\xi^\eff }= \check{\LL}_{k_\xi} \wh\E_\xi\,X + \frac{\check{\LL}_{k_\xi}Z_\xi}{Z_\xi}\, \wh\E_\xi\,X.$$
Combining all of the above, we find
$$\frac1{2a^2}\big(\pa_\xi^2\wh\E_{\xi}\,X + 2\frac{\pa_\xi Z_\xi}{Z_\xi} \pa_\xi\wh\E_{\xi}\,X + \frac{\pa_\xi^2 Z_\xi}{Z_\xi}\wh\E_{\xi}\,X\big) = \check{\LL}_{k_\xi} \wh\E_\xi\,X + \frac{\check{\LL}_{k_\xi}Z_\xi}{Z_\xi}\, \wh\E_\xi\,X.$$
We remark that the coefficient of $\wh\E_{\xi}\,X$ vanishes since the partition function $Z_\xi$ satisfies the null vector equation~\eqref{eq: chordal NV}.
\end{proof}

\subsection{One-leg operators for radial SLEs with forces and spins} \label{ss: 1-leg radial}
Let $(D,p,q)$ be a simply-connected domain $D$ with a marked boundary point $p\in\pa D,$ and a marked interior point $q\in D.$ 
As in the previous subsection, we consider a symmetric background charge $\bfs\beta$ on $S = D^\mathrm{double}$ with 
$$\bfs\beta = a\cdot p + \bfs\eta, \quad p\notin\supp\,\bfs\eta $$
satisfying the neutrality condition $(\NC_b).$

Let $\eta\in\R.$ We introduce the one-leg operator $\leg_p^s$ with spin $s = i\eta {a^2}/2$ as 

$$\leg_p^s \equiv \OO_{\check{\bfs\beta\,}}[\bfs\tau], \qquad \bfs\tau = a\cdot p -\frac{a+i\delta}2\cdot q -\frac{a-i\delta}2\cdot q^*,\qquad \delta = \eta a,$$
where $\check{\bfs\beta\,} = \bfs\beta - \bfs\tau$ is a symmetric background charge.

\subsubsec{Insertions}
As in the chordal theory, the insertion of the one-leg operator $\leg_p^s$
produces an operator $\XX\mapsto \wh\XX$ acting on Fock space functionals/fields by the rules~\eqref{eq: BCrules} and the formula 
$$\wh\Phi_{\check{\bfs\beta\,}}(z) = \Phi_{\bfs\beta}(z).$$
We denote 
$$\wh \E[\XX] :=\frac{\E [\leg_p^s\XX]}{\E [\leg_p^s]}=\frac{\E [(\leg_p^s)^\eff\XX]}{\E [(\leg_p^s)^\eff]}=\E\,\XX \ee^{\odot i\Phi[\bfs\tau]},$$
where $(\leg_p^s)^\eff$ is the effective one-leg operator with spin $s$:
$(\leg_p^s)^\eff = \PP_{\check{\bfs\beta\,}}\leg_p^s.$
Then we have 
$\wh \E[\XX]=\E[\wh\XX].$

\begin{egs*} We consider the case 
$$\bfs\beta^s =a \cdot p + \Big(b-\frac {a+i\delta}2\Big)\cdot q+ \Big(b-\frac {a-i\delta}2\Big)\cdot q^*,\qquad \delta = \eta a.$$
Let $\bfs\beta = \bfs\beta^0.$ 

\ss \no (a) The bosonic $\Phi_{\bfs\beta^s}$ is a pre-pre-Schwarzian form of order $(ib,-ib),$
$$\Phi_{\bfs\beta^s} = \Phi_{\bfs\beta} - \delta\log|w|,$$
where $w:(D,p,q)\to(\D,1,0)$ is a conformal map;

\ss \no (b) The current $J_{\bfs\beta^s}$ is a pre-Schwarzian form of order $ib,$
$$J_{\bfs\beta^s} = J_{\bfs\beta} - \frac{\delta}2\frac{w'}{w};$$

\ss \no (c) The Virasoro field $ T_{\bfs\beta^s}$ is a Schwarzian form of order $\frac1{12}c,$
\begin{align*} 
T_{\bfs\beta^s} &= -\dfrac12 J_{\bfs\beta^s}* J_{\bfs\beta^s} + ib\pa J_{\bfs\beta^s}\\
&= A - j_{\bfs\beta^s} J + ib\pa J + \frac{c}{12}S_w + h_{1,2}\frac{w'^2}{w(1-w)^2} + h_{0,1/2}^s\frac{w'^2}{w^2} + s \frac{w'^2}{w(1-w)},
\end{align*}
where $j_{\bfs\beta^s}=\E\,J_{\bfs\beta^s},$ $h_{1,2}=\frac12a^2-ab$ and 
$h_{0,1/2}^s=\frac18 (a+i\delta)^2-\frac12b^2.$
\end{egs*}

Let $X$ be a tensor product of fields in the extended family $\FF_{\check{\bfs\beta\,}}(D).$
We assume that $\zeta$ is not a node of $X.$ 
Denote 
$$R_\zeta \equiv \wh\E_\zeta X= \frac{\E\,(\leg^s_\zeta)^\eff X}{\E\,(\leg^s_\zeta)^\eff}, \quad C_\zeta = \E\,(\leg^s_\zeta)^\eff = C_{(b)}[\bfs\beta_\zeta], \quad Z_\zeta = |C_\zeta|,$$
where $\bfs\beta_\zeta = \bfs\beta - a\cdot p + a\cdot \zeta.$
The following is the second half of Theorem~\ref{main: BPZ-Cardy}.

\begin{prop}[BPZ-Cardy equations in $\D$] \label{Cardy4spin}
In the $(\D,0)$-uniformization we have
$$-\frac2{a^2}\Big(\frac12\pa_\theta^2 + \big(\pa_\theta\log Z_\zeta \big) \pa_\theta\Big)\wh\E_\zeta X = \check{\LL}_{v_\zeta}\wh\E_\zeta X, \qquad(\zeta = \ee^{i\theta}, \theta\in\R),$$
where the Lie derivative $\check{\LL}_{v_\zeta}$ is taken over the finite notes of $X$ and $\supp\,\bfs\beta_\zeta\setminus\{\zeta\}.$
\end{prop}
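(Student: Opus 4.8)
The plan is to mirror the proof of the chordal BPZ--Cardy equation (Proposition~\ref{prop: BPZ-Cardy4chordalSLE(kappa,rho)}) but carried out in the $(\D,0)$-uniformization, using the radial BPZ equations in $\D$ (Theorem~\ref{BPZinD}) together with the level two degeneracy of the radial one-leg operator. First I would record the null-vector equation satisfied by the effective one-leg operator $(\leg_\zeta^s)^\eff$. Since $(\leg_p^s)^\eff = \PP_{\check{\bfs\beta\,}}\leg_p^s$ and $\leg_p^s = \OO_{\check{\bfs\beta\,}}[\bfs\tau]$ with $\bfs\tau(p)=a$, the field $\leg_\zeta^s$ is current primary with charge $a$ at $\zeta$, and $2a(a+b)=1$ by \eqref{eq: ab}, so Proposition~\ref{degeneracy2} applies and $\OO^{(a)}$ satisfies the level two degeneracy \eqref{eq: level2radial}: in $(\D,0)$,
$$L(\zeta)\OO^{(a)} = \Big(\tfrac\kappa2\,\zeta^2\pa_\zeta^2 + 3\zeta\pa_\zeta + h\Big)\OO^{(a)},$$
and for $\zeta=\ee^{i\theta}$ this differential operator is $-\kappa\big(\tfrac12\pa_\theta^2 + ih\pa_\theta\big)+h$. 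Feeding $Y(\zeta)X = (\leg_\zeta^s)^\eff X$ into the radial BPZ equation of Theorem~\ref{BPZinD} and using \eqref{eq: level2radial}, the left-hand side $\E\,L(\zeta)(\leg_\zeta^s)^\eff X$ gets replaced by the action of this second-order operator in $\theta$; I would first specialize to $X=1$ to get the null-vector equation for the partition function, namely (in $(\D,0)$)
$$-\frac2{a^2}\cdot\tfrac12\,\pa_\theta^2 C_\zeta = \check{\LL}_{v_\zeta} C_\zeta,$$
after absorbing the lower-order $\pa_\theta$ and constant terms into the rescaling. Care is needed here with the exact coefficients coming from the conversion between $\pa_\zeta$ and $\pa_\theta$ on $|\zeta|=1$, and with the $3\zeta\pa_\zeta + h$ piece; I expect these to be exactly cancelled against the $2\zeta^2\E\,T_{\bfs\beta}(\zeta)$ term in Lemma~\ref{ETZ} / Theorem~\ref{main: Ward's equations}, just as in the chordal boundary case the $h$-terms and first-order terms organize into the null-vector equation.

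Next I would divide through by $C_\zeta$. Writing $R_\zeta = \wh\E_\zeta X = \E\,(\leg_\zeta^s)^\eff X / C_\zeta$ and differentiating the product $C_\zeta R_\zeta$ twice in $\theta$:
$$\frac{\pa_\theta^2\big(\E\,(\leg_\zeta^s)^\eff X\big)}{C_\zeta} = \pa_\theta^2 R_\zeta + 2\,\frac{\pa_\theta C_\zeta}{C_\zeta}\,\pa_\theta R_\zeta + \frac{\pa_\theta^2 C_\zeta}{C_\zeta}\,R_\zeta.$$
By the radial BPZ equation applied to $Y(\zeta)X$, the left side equals $-\tfrac{a^2}{2}\cdot\tfrac{2}{\kappa\cdot(\text{const})}$ times $\E\,\check{\LL}_{v_\zeta}(\leg_\zeta^s)^\eff X/C_\zeta$ (with the precise normalization $-2/a^2\cdot\tfrac12$ appearing in the statement), and then Leibniz's rule for Lie derivatives gives
$$\frac{\E\,\check{\LL}_{v_\zeta}(\leg_\zeta^s)^\eff X}{C_\zeta} = \check{\LL}_{v_\zeta} R_\zeta + \frac{\check{\LL}_{v_\zeta}C_\zeta}{C_\zeta}\,R_\zeta.$$
Combining these, the coefficient of $R_\zeta$ is $-\tfrac2{a^2}\cdot\tfrac12\cdot\tfrac{\pa_\theta^2 C_\zeta}{C_\zeta} - \tfrac{\check{\LL}_{v_\zeta}C_\zeta}{C_\zeta}$, which vanishes precisely by the null-vector equation for $C_\zeta$ established in the previous step. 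What remains is exactly
$$-\frac2{a^2}\Big(\tfrac12\pa_\theta^2 + \tfrac{\pa_\theta C_\zeta}{C_\zeta}\,\pa_\theta\Big) R_\zeta = \check{\LL}_{v_\zeta} R_\zeta,$$
and since $Z_\zeta = |C_\zeta|$ and the argument of $C_\zeta$ is (locally in $\theta$) a constant phase times a smooth function whose log-derivative is real — more carefully, $\pa_\theta\log C_\zeta = \pa_\theta\log Z_\zeta$ because $\wh\E_\zeta X$, being a ratio, only sees $|C_\zeta|$, and one checks that the phase of $C_\zeta$ in $(\D,0)$ contributes a purely imaginary constant that drops out — this is the claimed equation.

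The main obstacle I anticipate is the bookkeeping of phases and of the first-order ($\pa_\theta$) and zeroth-order terms: one must verify that in passing from $\pa_\zeta$ to $\pa_\theta$ along $|\zeta|=1$, the $3\zeta\pa_\zeta+h$ part of $L(\zeta)\OO^{(a)}$ combines with the $2\zeta^2\E\,T_{\bfs\beta}(\zeta)$ contribution and with the lower-order pieces of $\LL^+_{v_\zeta}\PP_{\bfs\beta}$ so that nothing spurious survives, and that the factor $-2/a^2$ (equivalently $\kappa$ up to the constant in the relation $a=\pm\sqrt{2/\kappa}$) comes out correctly. A secondary subtlety is justifying that $Z_\zeta$ rather than $C_\zeta$ appears: since $\bfs\beta$ is symmetric and $q\in D$, $C_{(b)}[\bfs\beta_\zeta]$ is real up to a $\zeta$-independent phase in the $(\D,0)$-uniformization (cf. Example~(b) in Subsection~\ref{ss: C} and the discussion of $Z_{\bfs\beta}=|C_{(b)}[\bfs\beta]|$), so $\pa_\theta\log C_\zeta$ is real and equals $\pa_\theta\log Z_\zeta$; one should check this persists in the presence of a spin $s=i\eta a^2/2$, which shifts the charges at $q,q^*$ by $\mp i\delta/2$ but keeps $\bfs\beta_\zeta$ symmetric under $\overbar{\phantom{x}}_*$. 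Once these normalizations are pinned down the rest is the same formal manipulation as in the chordal proof, with $k_\xi,\pa_\xi$ replaced by $v_\zeta,\pa_\theta$.
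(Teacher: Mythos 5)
Your skeleton is the same as the paper's (apply the radial BPZ equation to $(\leg^s_\zeta)^\eff X$, use the level two degeneracy operator, divide by $C_\zeta$, cancel the zeroth-order coefficient by the null-vector equation), but there is a genuine gap exactly where the radial case differs from the chordal one: the first-order term of the degeneracy operator and the phase of $C_\zeta$. The operator is $L(\ee^{i\theta})=-\kappa\big(\tfrac12\pa_\theta^2+ih\pa_\theta\big)+h$, and neither the $ih\pa_\theta$ piece nor the constant $h$ can be ``absorbed into the rescaling'' or cancelled against $2\zeta^2\E\,T_{\bfs\beta}$: the null-vector equation satisfied by $C_\zeta$ is $L(\zeta)C_\zeta=\check{\LL}_{v_\zeta}C_\zeta$ with the \emph{full} operator (this is \eqref{eq: radial NV}), not $-\tfrac{\kappa}2\pa_\theta^2 C_\zeta=\check{\LL}_{v_\zeta}C_\zeta$; already in the standard radial case, where $C_\zeta=\zeta^{-h-s}$, your simplified equation fails. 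Carrying the full $L$ through your Leibniz computation, the constant $h$ does cancel in the coefficient of $R_\zeta$ via the true null-vector equation, but the surviving drift is $\big(ih+\pa_\theta\log C_\zeta\big)\pa_\theta$, not $(\pa_\theta\log C_\zeta)\pa_\theta$.

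Correspondingly, your justification for the appearance of $Z_\zeta$ — that the phase of $C_\zeta$ in $(\D,0)$ is a constant that drops out, so $\pa_\theta\log C_\zeta=\pa_\theta\log Z_\zeta$ — is false: the phase is $\theta$-dependent (e.g.\ $\arg C_\zeta=-h\theta$ in the standard radial case, with or without spin), so $\pa_\theta\log C_\zeta=\pa_\theta\log Z_\zeta-ih$ and your identification is off by exactly the $ih$ you discarded. The two errors compensate, which is why you land on the correct statement, but as written both intermediate steps are wrong and the final step is unjustified. The paper handles this by comparing with the reference correlation $C^0_\zeta=C_{(b)}\big[a\cdot\zeta+(b-\tfrac a2)\cdot q+(b-\tfrac a2)\cdot q^*\big]$, for which $\zeta\pa_\zeta\log C^0_\zeta=-h$, so the drift becomes $\big(\pa_\theta\log (C_\zeta/C^0_\zeta)\big)\pa_\theta$, and then verifying $\pa_\theta\log(C_\zeta/C^0_\zeta)=\pa_\theta\log Z_\zeta$; that identity (the imaginary parts cancel thanks to the neutrality condition, even in the presence of spin and force points) is a short but genuine computation and is precisely the content your proposal skips. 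To repair your argument, keep the $ih\pa_\theta$ term throughout and prove $ih+\pa_\theta\log C_\zeta=\pa_\theta\log Z_\zeta$ instead of asserting $\pa_\theta\log C_\zeta=\pa_\theta\log Z_\zeta$.
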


\begin{proof}
Let $\zeta = \ee^{i\theta} \in \pa D.$
Applying the BPZ equation in $(\D,0)$ to $(\leg^s_\zeta)^\eff X,$ we have 
$$\E\,L(\zeta)(\leg^s_\zeta)^\eff X = \check{\LL}_{v_\zeta}\E\,(\leg^s_\zeta)^\eff X,$$
where $L$ is the differential operator in Subsection~\ref{ss: BPZ in D}:
$$L(z) = \frac\kappa2 z^2\pa_z^2 + 3z\pa_z + h, \textrm{ or } L(\ee^{i\theta}) = -\kappa\big(\frac12\pa_\theta^2 + ih\pa_\theta\big)+h.$$
In particular, $C_\zeta = \E\,(\leg^s_\zeta)^\eff$ satisfies the null vector equation
\begin{equation}\label{eq: radial NV}
L(\zeta)C_\zeta = \check{\LL}_{v_\zeta}C_\zeta .
\end{equation}
By differentiation, 
$$\frac{\E\,L(\zeta)(\leg^s_\zeta)^\eff X }{\E\,(\leg^s_\zeta)^\eff } = (L(\zeta)-h) \wh\E_\zeta X + \kappa\,\zeta^2\frac{\pa_\zeta C_\zeta}{C_\zeta} \pa_\zeta \wh\E_\zeta X + \frac{L(\zeta) C_\zeta}{C_\zeta} \wh\E_\zeta X.$$
It follows from Leibniz rule for Lie derivative that
$$\frac{\check{\LL}_{v_\zeta}\E\,(\leg^s_\zeta)^\eff X}{\E\,(\leg^s_\zeta)^\eff }=\check{\LL}_{v_\zeta}\wh\E_\zeta X +\frac{\check{\LL}_{v_\zeta}C_\zeta}{C_\zeta}\,\wh\E_\zeta X.$$
Combining all of the above, we have 
\begin{equation} \label{eq: Cardy4spin0}
-\kappa\Big(\frac12\pa_\theta^2 + i\big(h + \zeta\frac{\pa_\zeta C_\zeta}{C_\zeta}\big)\pa_\theta\Big)\wh\E_\zeta X = \check{\LL}_{v_\zeta}\wh\E_\zeta X
\end{equation}
in the $(\D,0)$-uniformization. 
Note that the coefficient of $\wh\E_\zeta X$ vanishes since $C_\zeta$ satisfies the null vector equation~\eqref{eq: radial NV}.

Let 
$$\bfs\beta_\zeta^0 =a \cdot \zeta + \Big(b-\frac {a}2\Big)\cdot q+ \Big(b-\frac {a}2\Big)\cdot q^*,\qquad C_\zeta^0 = C_{(b)}[\bfs\beta_\zeta^0].$$
Then we have 
$$\zeta\frac{\pa_\zeta C_\zeta^0}{C_\zeta^0} = -h, \quad h = h_{1,2} := \frac12a^2-ab.$$
and rewrite \eqref{eq: Cardy4spin0} as
$$-\kappa\Big(\frac12\pa_\theta^2 +\big(\pa_\theta \log\frac {C_\zeta}{C^0_\zeta}\big)\pa_\theta\Big)\wh\E_\zeta X = \check{\LL}_{v_\zeta}\wh\E_\zeta X.$$
It is easy to see that 
$$\pa_\theta\log Z_\zeta = \pa_\theta \log\frac {C_\zeta}{C^0_\zeta}$$
in the $\D$-uniformization. 
\end{proof}

\subsection{Martingale-observables for chordal/radial SLE with forces and spins} \label{ss: proof of main theorem}
We now prove Theorem~\ref{main}.

\begin{proof}[Proof of Theorem~\ref{main}]
We first consider the chordal case.
Let $g_t$ be the chordal $\SLE[\bfs\beta]$ map driven by the real process $\xi_t:$ 
$$\dd\xi_t = \sqrt\kappa\, \dd B_t + \lambda(t)\,\dd t, \quad \lambda(t) = (\lambda\,\|\,g_t^{-1}), \quad \lambda = \kappa\, \pa_\xi \log Z_\xi,$$
Then we have 
$$M_t = m(\xi_t,t), \quad m(\xi,t) = (R_\xi \,\|\, g_t^{-1}).$$
It follows from It\^o's formula that 
\begin{align*}
\dd M_t & = \sqrt\kappa\, \pa_\xi\big|_{\xi =\xi_t} m(\xi,t)\,\dd B_t + \kappa\, \pa_\xi\big|_{\xi =\xi_t} m(\xi,t) \,\pa_\xi\big|_{\xi =\xi_t} (\log Z_\xi \,\|\, g_t^{-1})\,\dd t \\
&+\frac\kappa2 \, \pa_\xi^2\big|_{\xi =\xi_t} m(\xi,t)\,\dd t + L_t\,\dd t,
\end{align*}
where 
$$L_t := \frac {\dd}{\dd s}\Big|_{s=0} \big(R_{\xi_t}\,\|\,g_{t+s}^{-1}\big).$$
At each time $t,$ $f_{s,t} := g_{t+s}\circ g_t^{-1}$ satisfy the differential equations 
$$\frac{\dd}{\dd s}f_{s,t}(\zeta) = \frac{2}{f_{s,t}(\zeta)-\xi_{t+s}}.$$
As $s\to 0,$
$$ f_{s,t} = \id -2sk_{\xi_t} + o(s), \qquad k_{\xi}(z)=\frac1{\xi-z}.$$
In terms of the time-dependent flows $f_{s,t},$ $L_t$ is rewritten as 
$$L_t=\frac {\dd}{\dd s}\Big|_{s=0}\big(R_{\xi_t}\,\|\,g_t^{-1}\circ f_{s,t}^{-1}\big)=-2\big(\check\LL_{k_{\xi_t}}R_{\xi_t}\,\|\,g_{t}^{-1}\big).$$
The last equality follows from the fact any fields in $\FF_{\bfs\beta}$ depend smoothly on local charts.
Thus the drift term of $\dd M_t$ simplifies to 
$$\Big(\frac\kappa2 \, \pa_\xi^2\big|_{\xi =\xi_t} m(\xi,t) +
\kappa\, \pa_\xi\big|_{\xi =\xi_t} m(\xi,t) \frac{\pa_\xi|_{\xi=\xi_t} (Z_\xi\,\|\,g_t^{-1})}{(Z_\xi\,\|\,g_t^{-1})} -2\big(\check{\LL}_{k_{\xi_t}}R_{\xi_t}\,\|\,g_{t}^{-1}\big)\Big)\,\dd t.
$$
It vanishes by the BPZ-Cardy equations (Proposition~\ref{prop: BPZ-Cardy4chordalSLE(kappa,rho)}). 

\medskip
Next, we consider the radial case. 
For $\XX = X_1(z_1)\cdots X_n(z_n), X_j\in\FF_{\check{\bfs\beta\,}}(D),$ denote 
$$R_\zeta (z_1,\cdots, z_n)\equiv\wh\E_\zeta [X_1(z_1)\cdots X_n(z_n)].$$
Then the process $M_t(z_1,\cdots, z_n)$
is represented by 
$$M_t = m(\zeta_t,t), \qquad m(\zeta,t) =\big(R_\zeta \,\|\,g_t^{-1}\big),$$
where $g_t:(D_t,\gamma_t,q)\to(\D,\zeta_t,0)$ is the radial $\SLE[\bfs\beta]$ map driven by the real process $\theta_t:$
$$\dd\theta_t = \sqrt\kappa\, \dd B_t + \lambda(t)\,\dd t, \quad \lambda(t) = (\lambda\,\|\,g_t^{-1}), \quad \lambda = \kappa \,\pa_\theta \log Z_{\bfs\beta_\zeta}, \quad \zeta_t = \ee^{i\theta_t}.$$
Using a similar argument in the chordal case, we find the drift term of $\dd M_t$ as 
$$\Big(\frac\kappa2\pa_\theta^2+\kappa\frac{\pa_\theta (Z_\zeta\,\|\,g_{t}^{-1})}{(Z_\zeta\,\|\,g_{t}^{-1})}\big)\pa_\theta\Big)\Big|_{\zeta=\zeta_t}~m(\zeta,t)\,\dd t\\
+\ \big(\check\LL_{v_{\zeta_t}}R_{\zeta_t}\,\|\,g_{t}^{-1}\big) \dd t.$$
By the BPZ-Cardy equations (Proposition~\ref{Cardy4spin}) in the radial case, $M_t$ is driftless. 
\end{proof}

\begin{eg*}
Non-chiral vertex fields
$$\VV_{\bfs\beta}^{(\sigma)}:=\ee^{*i\sigma\Phi_{\bfs\beta}}=\OO_{\bfs\beta}[\sigma\cdot z, - \sigma\cdot z]$$
have real conformal dimensions $[\lambda^+,\lambda^-]$ at $z:$
$$\lambda^+ = \frac{\sigma^2}2-\sigma b, \qquad \lambda^- = \frac{\sigma^2}2+\sigma b.$$
If the conformal spin $s:=\lambda^+-\lambda^-$ is 1, then $\VV_{\bfs\beta}^{(\sigma)}$ transforms as a (formal) vector field and its flow lines can be identified with $\SLE[\bfs\beta],$ see \cite{MS16} for a chordal case and \cite{MS17} for a radial case with spin at $q.$ 
\end{eg*}

\subsection{Examples of radial SLE martingale-observables} \label{ss: EgsMO}
In this subsection we present examples of radial SLE martingale-observables including Schramm-Sheffield's observables, Friedrich-Werner's formula, and the restriction formula in the standard radial case with the following background charge $\bfs\beta$: 
\begin{equation}\label{eq: radial beta}
\bfs\beta =a \cdot p + \Big(b-\frac a2\Big)\cdot q+ \Big(b-\frac a2\Big)\cdot q^*,
\end{equation}

\begin{eg*}[Schramm-Sheffield's observables] In the standard chordal case with 
$$\bfs\beta =a \cdot p + (2b-a)\cdot q,$$
the 1-point functions of the bosonic fields
$$\varphi_{\bfs\beta}(z) = \E[\Phi_{\bfs\beta}(z)] = 2a\arg w(z) -2b\arg w'(z), \quad w:(D,p,q)\to(\H,0,\infty)$$
were introduced as SLE martingale-observables by Schramm and Sheffield, see \cite{SS13}.
Similarly, the 1-point functions $\varphi_{\bfs\beta}= \E\,\Phi_{\bfs\beta}$ (with standard radial background charge $\bfs\beta$ in \eqref{eq: radial beta}) of the bosonic fields are martingale-observables of radial SLEs:
$$\varphi_{\bfs\beta}(z) = \E[\Phi_{\bfs\beta}(z)] = 2a\arg (1-w(z)) -a\arg w(z) -2b\arg \frac{w'(z)}{w(z)},$$
where $w:(D,p,q)\to(\D,1,0)$ is a conformal map.
By It\^o's calculus, we have 
\begin{align*}
\varphi_t(z) &= \E[\Phi_{D_t,\gamma_t,q}(z)]= 2a\arg (1-w_t(z)) -a\arg w_t(z) -2b\arg \frac{w_t'(z)}{w_t(z)} \\
&= 2a\arg (1-w(z)) -a\arg w(z) -2b\arg \frac{w'(z)}{w(z)} +\sqrt2\int_0^t\Re\, \frac{1+w_s(z)}{1-w_s(z)}\,\dd B_s.
\end{align*}
One can use the $2$-point martingale-observables
$$\E[\Phi_{\bfs\beta}(z_1)\Phi_{\bfs\beta}(z_2)]=2G(z_1,z_2)+\varphi_{\bfs\beta}(z_1)\varphi_{\bfs\beta}(z_2)$$
or Hadamard's variation formula
\begin{equation} \label{eq: Hadamard}
\dd G_{D_t}(z_1,z_2)=-\,\Re\,\frac{1+w_t(z_1)}{1-w_t(z_1)}\,\Re\,\frac{1+w_t(z_2)}{1-w_t(z_2)}\,\dd t=-\frac12\,\dd \langle \varphi_{\bfs\beta}(z_1),\varphi_{\bfs\beta}(z_2)\rangle_t
\end{equation}
to construct a coupling of radial SLE and the Gaussian free field such that
$$\E[\, \Phi_{D,p,q}\,|\,\gamma[0,t]\,]= \Phi_{D_t,\gamma_t,q},$$
see \cite{Dubedat09}.
\end{eg*}

Let us recall the \emph{restriction property} of radial $\SLE(8/3)$ (see \cite[Section~6.5]{Lawler05}):
{\setlength{\leftmargini}{1.7em}
\begin{itemize}
 \item the law of $\SLE(8/3)$ in $\D$ conditioned to avoid a fixed hull $K$ is identical to the law of $\SLE(8/3)$ in $\D\sm K;$
 \item equivalently, there exist $\lambda$ and $\mu$ such that for all $K,$
$$\P(\SLE(8/3) \textrm{ avoids }K) = |\Psi_K'(1)|^\lambda (\Psi_K'(0))^\mu,$$
where $\Psi_K$ is the conformal map $(\D\sm K,0)\to (\D,0)$ satisfying $\Psi_K'(0)>0.$
(The restriction exponents $\lambda$ and $\mu$ of radial $\SLE(8/3)$ are equal to $5/8$ and $5/48,$ respectively.)
\end{itemize}}

Let $\kappa\le4.$ 
On the event $\gamma[0,\infty)\cap K = \emptyset,$ a conformal map $h_t:\Omega_t = g_t(D_t\sm K)\to\D$ is defined by
$$h_t = \wt g_t \circ \Psi_K \circ g_t^{-1},$$
where $\wt g_t$ is a Loewner map from $\wt D_t = D\sm\wt\gamma[0,t]$ onto $\D,$ and 
$\wt\gamma(t) = \Psi_K \circ \gamma(t).$
Let $\leg^\eff$ be the effective one-leg operator, i.e., $\leg^\eff=\PP_q\leg$ and 
$$M_t := \frac{(Z_{\bfs\beta_t} \|\,\id_{\Omega_t})}{(Z_{\bfs\beta_t} \|\,\id_{\H})}, \quad
\bfs\beta_t = a\cdot\zeta_t + (b-\frac{a+\beta}2)\cdot q+ (b-\frac{a+\beta}2)\cdot q^*,\qquad(\zeta_t = \ee^{i\sqrt\kappa B_t}).$$
Then 
$$M_t=|h_t'(\zeta_t)|^\lambda h_t'(0)^\mu=\Big(\zeta_t\frac{h_t'(\zeta_t)}{h_t(\zeta_t)}\Big)^\lambda h_t'(0)^\mu,$$
where exponents are given by 
$$\lambda = \lambda_b(\leg^\eff) \equiv \frac{a^2}2-ab = \frac{6-\kappa}{2\kappa}, \qquad \mu = H_q(\leg^\eff)\equiv\frac{a^2}4-b^2 = \frac{(\kappa-2)(6-\kappa)}{8\kappa}.$$ 
Restriction property of radial $\SLE(8/3)$ follows from the local martingale property of $M_t$ (by optional stopping theorem). 
This is a special case of the following formula:
\begin{equation} \label{eq: restriction}
\textrm{the drift term of } \dd M_t= -\frac c6 \zeta_t^2S_{h_t}(\zeta_t) M_t\,\dd t.
\end{equation}
In Subsection~\ref{ss: Restriction formulas} we use the CFT argument to prove \eqref{eq: restriction} for radial $\SLE[\bfs\beta].$

We now prove Friedrich-Werner's formula in the radial case.
\begin{thm}\label{radial FW}
Let $\theta_j\in (0,2\pi)$ all distinct. Then we have 
$$ \lim_{t\to0}\frac{1}{t^n} 
\P(\SLE(8/3)\textrm{ hits all }[r_t\ee^{i\theta_j},\ee^{i\theta_j}])=(-2)^n\ee^{2i\sum_{j=1}^n\theta_j}\E\,[\,T_{\bfs\beta}(\ee^{i\theta_1} )\cdots T_{\bfs\beta}(\ee^{i\theta_n} )\,\|\,\id_\D\,],$$
where $\bfs\beta=a\cdot p + (b-a/2)\cdot q + (b-a/2)\cdot q^*,$ $a = \frac12\sqrt3,$ $b = -\frac16\sqrt3,$ $p=1,q=0,$ and $r_t = 1-2\sqrt t.$
\end{thm}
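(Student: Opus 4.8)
\textbf{Proof proposal for Theorem~\ref{radial FW} (radial Friedrich--Werner).}
The plan is to follow the classical chordal argument of Friedrich and Werner but in the $(\D,0)$-uniformization, using the Virasoro field $T_{\bfs\beta}$ of the standard radial theory with central charge $c=0$ (which forces $\kappa=8/3$, $a=\tfrac12\sqrt3$, $b=-\tfrac16\sqrt3$) as the $n$-point SLE observable. The three ingredients are: (i) an asymptotic geometric lemma translating the event $\{\SLE(8/3)\text{ hits }[r_t\ee^{i\theta_j},\ee^{i\theta_j}]\}$ into a product of small capacities, giving a factor $\sim t$ per slit; (ii) the fact that $\E\,[T_{\bfs\beta}(\ee^{i\theta_1})\cdots T_{\bfs\beta}(\ee^{i\theta_n})\,\|\,\id_\D]$ (the effective one-leg normalized correlation $\wh\E$ from Subsection~\ref{ss: Example: one-leg operators}) is a radial $\SLE(8/3)$ martingale-observable, which is the special case of Theorem~\ref{main} with $\XX_{\bfs\beta}=T_{\bfs\beta}(\ee^{i\theta_1})\cdots T_{\bfs\beta}(\ee^{i\theta_n})$ and $\bfs\beta$ as stated; and (iii) an induction on $n$ that relates the $n$-slit hitting probability to the $(n-1)$-slit one by conditioning on the SLE swallowing the first slit and using the restriction-type / conformal-covariance properties encoded in $T_{\bfs\beta}$.

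First I would establish the base case $n=1$. For $\kappa=8/3$ the radial $\SLE$ is the restriction measure (see \eqref{eq: avoid_radial} and Subsection~\ref{ss: Restriction formulas}): the probability of hitting the slit $[r_t\ee^{i\theta},\ee^{i\theta}]$ equals $1-\P(\SLE(8/3)\text{ avoids }K_t)$ with $K_t$ the slit, and $\P(\SLE(8/3)\text{ avoids }K_t)=|\Psi_{K_t}'(1)|^{\lambda}(\Psi_{K_t}'(0))^{\mu}$ with $\lambda=5/8,\mu=5/48$. Expanding $\Psi_{K_t}$ for a slit of length $2\sqrt t$ at angle $\theta$, one gets $\Psi_{K_t}'(1)=1+O(\sqrt t)$ with no $O(t^{1/2}\cdot t^{1/2})$-free linear term in the relevant combination, so that $1-|\Psi_{K_t}'(1)|^\lambda(\Psi_{K_t}'(0))^\mu = c_1(\theta)\,t+o(t)$; a direct computation identifies $c_1(\theta)=-2\ee^{2i\theta}\E\,[T_{\bfs\beta}(\ee^{i\theta})\,\|\,\id_\D]$, where the 1-point function was computed explicitly in Subsection~\ref{ss: Radial CFT} / \eqref{eq: T hat} as $\E\,T_{\bfs\beta}(z)=h_{1,2}/(z(1-z)^2)+h_{0,1/2}/z^2$ with $c=0$. (One must check that $\E\,T_{\bfs\beta}$ here is real and positive-oriented so that the sign matches.)

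Next, for the induction I would use the martingale $M_t(z_1,\dots,z_n)=\E\,[T_{\bfs\beta}(z_1)\cdots T_{\bfs\beta}(z_n)\,\|\,g_t^{-1}]$ (Theorem~\ref{main}) together with the Markov property of $\SLE$: run the SLE a little, so it swallows a neighborhood of one of the $\ee^{i\theta_j}$'s, conditionally the remaining hitting event becomes an $(n-1)$-slit event in the slit domain $D_t$, and the conformal covariance of the $(n-1)$-point $T_{\bfs\beta}$-correlation (as a product of $[2,0]$-differentials at the $z_j$) supplies exactly the Jacobian factors $\prod w_t'(\ee^{i\theta_j})^2$ that, to leading order in the small capacities, reproduce the $n$-point formula. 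The strategy of Friedrich--Werner in the chordal case is to write the $n$-point hitting probability as a sum over which slit is hit first, pass to the limit using the strong Markov property and the scaling $t\to 0$, and recognize the recursion $B(x_1,\dots,x_n)=\sum_j(\text{local factor at }x_j)\,B(\text{rest})$; here the role of $B$ is played by $(-2)^n\ee^{2i\sum\theta_j}\E\,[\prod T_{\bfs\beta}]$, and the recursion is precisely the statement that $T_{\bfs\beta}$ is a primary $[2,0]$-differential whose OPE with itself (the level-two / BPZ structure from Section on Null vectors) closes the recursion. Thus the inductive step reduces to: (a) a combinatorial/probabilistic expansion of the $n$-slit event; (b) uniform control of error terms of order $o(t^n)$; and (c) matching the leading coefficient against the OPE expansion of $T_{\bfs\beta}(\zeta)T_{\bfs\beta}(z)$.

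The main obstacle I expect is step (b), the uniform error estimates: one needs that the $n$-slit hitting probability, after factoring out the $n$ independent capacities, converges to the $T_{\bfs\beta}$-correlation \emph{with} an $o(t^n)$ remainder uniform over the fixed configuration $\theta_1,\dots,\theta_n$, and this requires a priori bounds on SLE(8/3) hitting multiple small targets (a multi-point version of the one-slit estimate), i.e., a quantitative decoupling as the slits shrink. In the chordal case this is handled by restriction-measure monotonicity and explicit conformal maps; in the radial case I would import the same via the restriction property of radial $\SLE(8/3)$ established in Subsection~\ref{ss: Restriction formulas}, using that $\P(\SLE(8/3)\text{ avoids }K)=|\Psi_K'(1)|^\lambda(\Psi_K'(0))^\mu$ for an arbitrary hull $K$ (here $K=\bigcup_j[r_t\ee^{i\theta_j},\ee^{i\theta_j}]$), expanding $\log\Psi_K'$ in the small parameter $\sqrt t$, and collecting the order-$t^n$ term, which by the additivity of $\log\Psi_K'$ over well-separated slits produces exactly the sum of products matching $\E\,[\prod_j T_{\bfs\beta}(\ee^{i\theta_j})]$ via the explicit form of $\E\,T_{\bfs\beta}$ and its multi-point analogue. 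Once this expansion of the radial restriction formula is in hand, the theorem follows by comparing coefficients, with the normalization $(-2)^n\ee^{2i\sum\theta_j}$ arising from $\partial_t(\text{capacity})$ and the chart change $z=\ee^{i\theta}$ (each $T_{\bfs\beta}$ being a $[2,0]$-differential contributes a factor $(\ee^{i\theta})^2\cdot i$, and the $-2$ from the Loewner capacity normalization).
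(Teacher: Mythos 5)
Your base case and several ingredients (the radial restriction property, the slit-map expansion, the bookkeeping behind the factor $(-2)^n\ee^{2i\sum\theta_j}$) are on target, but both routes you sketch for the multi-point step have a genuine gap. The route via the restriction formula applied to the union $K=\bigcup_j[r_t\ee^{i\theta_j},\ee^{i\theta_j}]$ together with ``additivity of $\log\Psi_K'$ over well-separated slits'' cannot yield the theorem: if the avoidance probabilities were multiplicative over the slits (which is what exact additivity gives), then inclusion--exclusion collapses the $n$-slit hitting probability to $\prod_j\P(\textrm{hit the }j\textrm{-th slit})$, i.e.\ to the product of $1$-point functions $\prod_j\E\,T_{\bfs\beta}(\ee^{i\theta_j})$, which is not the $n$-point function. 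The whole content of the statement sits in the deviations from additivity --- the order-$t^n$ interaction terms in the expansion of $\Psi_K'$ for $n$ shrinking slits --- and your proposal offers no mechanism for identifying those deterministic cross terms with the connected part of $\E[T_{\bfs\beta}(z_1)\cdots T_{\bfs\beta}(z_n)]$; doing so is essentially equivalent to re-deriving the Ward-identity structure rather than a shortcut around it. Your alternative Markovian route (condition on the curve swallowing one slit first, use the martingale of Theorem~\ref{main} and the $T$--$T$ OPE) is also not the decomposition that works: there is no usable formula for the event that a prescribed slit is hit first, and the level-two/OPE structure of $T$ is not what closes the recursion.

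What the paper does instead is condition on \emph{avoiding} one slit: with $z=\ee^{i\theta}$ and $\psi_t$ the slit map, it writes $\P(\theta,\bfs\theta)=\P(\bfs\theta)-\P(\bfs\theta\mid\neg\theta)\,(1-\P(\theta))$, uses the restriction property twice --- once for $1-\P(\theta)=|\psi_t'(1)|^{\lambda}\,\psi_t'(0)^{\lambda/6}$ and once to identify the conditional law with $\SLE(8/3)$ in the slit disc, so that $\P(\bfs\theta\mid\neg\theta)\approx(-2t)^n\,T(\psi_t(1);\psi_t(z_1),\cdots)\prod_j z_j^2\psi_t'(z_j)^2$ --- and then differentiates at $t=0$. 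Since only one slit shrinks per induction step, this simultaneously proves existence of the limit $U$ and the recursion $T(1;z,\bfs z)=\tfrac1{2z^2}\,\LL(v_z,\D)\,T(1;\bfs z)$ (the factor $\psi_t'(0)=\ee^t$ converting $\LL(v_z,\D\sm\{0\})+\lambda/6$ into $\LL(v_z,\D)$), so your feared uniform multi-target decoupling estimates never arise. The identification with the CFT quantity then comes not from Theorem~\ref{main} but from Ward's equation for the one-leg insertion $R(\zeta;\bfs z)=\E[\leg^\eff(\zeta)\,T_{\check{\bfs\beta\,}}(z_1)\cdots T_{\check{\bfs\beta\,}}(z_n)]$, which satisfies the same recursion in each new variable with the same base case $R(1;\cdot)=T(1;\cdot)=1$ for $n=0$; equality of the two solutions gives $U=R(1;\cdot)$, which is the stated formula. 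If you want to salvage your write-up, replace the ``first-slit-swallowed'' and ``additivity'' steps by this avoid-one-slit decomposition and by the Ward-equation recursion for $R$.
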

\begin{proof}
Let $z_j = \ee^{i\theta_j}.$ 
We apply Ward's equations to the function
$$\E\,[\,T_{\bfs\beta}(z) T_{\bfs\beta}(z_1)\cdots T_{\bfs\beta}(z_n)\,\|\,\id_\D\,] = \E\,[\,T_{\check{\bfs\beta\,}}(z)\leg^\eff(1) T_{\check{\bfs\beta\,}}(z_1)\cdots T_{\check{\bfs\beta\,}}(z_n)\,\|\,\id_\D\,],$$
by replacing $T_{\check{\bfs\beta\,}}(z)$ on the right-hand side with the corresponding Ward's functional.
Denote $\bfs{z} = (z_1,\cdots,z_n),$ $\bfs{z}_j = (z_1,\cdots,\wh z_j\cdots,z_n),$ and
$$R(\zeta;\bfs{z}) = \E\,[\,\leg^\eff(\zeta)\,T_{\check{\bfs\beta\,}}(z_1)\cdots T_{\check{\bfs\beta\,}}(z_n)\,].$$
The non-random field $R(\zeta;\bfs{z})$ is a boundary differential of conformal dimension $\lambda = \frac12{a^2}-ab=5/8$ with respect to $\zeta$ and of conformal dimension $2$ with respect to $z_j.$
It is also a differential of conformal dimension $\mu=\frac14a^2-b^2 = 5/48$ with respect to $q=0.$ 
It follows from Ward's equation for $\leg^\eff$ that
\begin{equation} \label{eq: recursion4T}
R(\zeta;z,\bfs{z})=\frac{1}{2z^2}\,\LL({v_z},\D)\,R(\zeta;\bfs{z}),\quad (\textrm{in }\id_\D),
\end{equation}
where $v_z(\eta) = \eta(z+\eta)/(z-\eta).$ 

Let 
$$U(\theta_1,\cdots,\theta_n) = \lim_{t\to0}\frac{(-1)^n}{(2t)^n} \ee^{-2i\sum_{j=1}^n\theta_j}
\P(\SLE(8/3)\textrm{ hits all }[r_t\ee^{i\theta_j},\ee^{i\theta_j}])$$
(if the limit exists).
Define the non-random field $T(\zeta;z_1,\cdots,z_n)$ as follows:
{\setlength{\leftmargini}{1.7em}
\begin{itemize}
 \item $T$ is a boundary differential of conformal dimension $\lambda=5/8$ with respect to $\zeta,$ and of conformal dimension $2$ with respect to $z_j;$ 
 \item $T$ is a differential of conformal dimension $\mu=\frac14a^2-b^2 = 5/48$ with respect to $q=0;$ 
 \item $(T(\ee^{i\varphi};\ee^{i\theta_1},\cdots,\ee^{i\theta_n})\,\|\,\id_\D) = U(\theta_1-\varphi,\cdots,\theta_n-\varphi).$
\end{itemize}}

We now claim that if the limit $U(\theta_1,\cdots,\theta_n)$ exists then the limit $U(\theta,\theta_1,\cdots,\theta_n)$ exists and 
\begin{equation} \label{eq: FW0}
T(1;z,\bfs{z}) = \frac1{2z^2} \,\LL({v_z},\D)\, T(1;\bfs{z})\qquad z,z_j\in\pa\D.
\end{equation}
By \eqref{eq: recursion4T} and \eqref{eq: FW0}, $T(1;\cdot)$ and $R(1;\cdot)$ satisfy the same recursive equation (see the remark at the end of this subsection) and are therefore equal since $T(1;\cdot) = R(1;\cdot) = 1$ for $n=0.$
Thus $$U(\theta_1,\cdots,\theta_n) = R(1;\ee^{i\theta_1},\cdots,\ee^{i\theta_n}).$$

To verify the induction argument for existence of the limit $U(\theta_1,\cdots,\theta_n)$ and show \eqref{eq: FW0}, denote $\bfs{\theta} = (\theta_1,\cdots,\theta_n).$
We write $\P(\bfs{\theta})$ for the probability that radial $\SLE(8/3)$ path hits all segments $[r_t\ee^{i\theta_j},\ee^{i\theta_j}]\,(1\le j\le n)$ and $\P(\bfs{\theta}\,|\,\neg \theta)$ for the same probability conditioned on the event that the path avoids $[r_t\ee^{i\theta},\ee^{i\theta}].$
By the induction hypothesis,
\begin{equation} \label{eq: FW1}
\P(\bfs \theta)\approx (-2t)^{n} z_1^2\cdots z_n^2 \, T(1;\bfs z),
\end{equation}
where $z_j = \ee^{i\theta_j}.$
On the other hand, by the restriction property of radial $\SLE(8/3),$ we have
\begin{equation} \label{eq: FW2}
1-\P(\theta) =|\psi_t'(1)|^\lambda \, \psi_t'(0)^{\frac16\lambda}
\end{equation}
and
\begin{equation} \label{eq: FW3}
\P(\bfs{\theta}\,|\,\neg \theta)\approx (-2t)^{n} \, T(\psi_t(1);\psi_t(z_1),\cdots, \psi_t(z_n))\prod_{j=1}^nz_j^2\psi_t'(z_j)^2 , \end{equation}
where $\psi_t$ is a slit map from $(\D\sm [r_t \ee^{i\theta},\ee^{i\theta}],0)$ onto $(\D,0)$ with $\psi_t'(0)>0$.
It follows from \eqref{eq: FW1}~--~\eqref{eq: FW3} and $\P(\theta,\bfs \theta) = \P(\bfs \theta)- \P(\bfs \theta\,|\,\neg \theta)(1-\P(\theta))$\,$(z = \ee^{i\theta})$ that 
$$\frac{\P(\theta,\bfs\theta)}{z^2z_1^2\cdots z_n^2(-2t)^{n+1}}=\frac{\psi_t'(0)^{\lambda/6} |\psi_t'(1)|^\lambda \prod_{j=1}^n\psi_t'(z_j)^2T(\psi_t(1);\psi_t(z_1),\cdots, \psi_t(z_n))-T(1;\bfs z)}{2tz^2} $$ 
up to $o(t)$ terms.
Thus the limit $U(\theta,\theta_1,\cdots,\theta_n)$ exists. 
Since $\psi_t'(0) = \ee^t,$ we have \eqref{eq: FW0}:
$$T(1;z,\bfs z) = \frac1{2z^2}\Big(\LL(v_z,\D\sm\{0\}) +\frac{\lambda}6\Big)T(1;\bfs z) = \frac1{2z^2} \,\LL({v_z},\D)\, T(1;\bfs z).$$
\end{proof}

\begin{rmk*} 
The formula \eqref{eq: recursion4T} holds for all $\kappa.$
Setting $R(z,\bfs z)\equiv R(1;z,\bfs z),$ 
the formula \eqref{eq: recursion4T} at $\zeta = 1$ gives the following recursive formula for $R:$
\begin{align*}
R(z,\bfs{z}) &= \frac1{2z^2}\Big(2n\frac{1+z}{1-z} + 2\lambda \frac{z}{(1-z)^2} + (\frac {a^2}4-b^2) + \frac{1+z}{1-z}\sum_{j=1}^n z_j\pa_{z_j}\Big) R(\bfs{z}) \\
&+\frac1{2z^2}\sum_{j=1}^n\Big(z_j\frac{z+z_j}{z-z_j} \pa_{z_j} + 2\frac{z^2+2z z_j - z_j^2}{(z-z_j)^2}\Big) R(\bfs{z}) + \frac c2 \sum_{j=1}^n \frac1{(z-z_j)^4} R(\bfs{z}_j).
\end{align*}
\end{rmk*}

\subsection{1-point vertex observables} \label{ss: 1pt O}
In this subsection we discuss some basic examples of 1-point vertex observables, including Lawler-Schramm-Werner's derivative exponents of radial SLEs on the boundary. 
Let
$$M^{(\tau^+,\tau^-;\tau_q^+,\tau_q^-)}(z)=\wh\E\,\OO_{\check{\bfs\beta\,}}[\bfs\tau] = \E\,\OO_{\bfs\beta}[\bfs\tau] , \qquad \bfs\tau = \tau^+\cdot z + \tau^-\cdot z^* + \tau_q^+\cdot q +\tau_q^-\cdot q^*.$$
Then by \eqref{eq: O hat} we have 
$$
M^{(\tau^+,\tau^-;\tau_q^+,\tau_q^-)} = (w_q')^{h_q^+}(\overline{w_q'})^{h_q^-} (w')^{h^+}(\overline{w'})^{h^-}w^{\nu^+}(\bar w)^{\nu^-}
(1-w)^{a\tau^+}(1-\bar w)^{a\tau^-}(1-|w|^2)^{\tau^+\tau^-},
$$
where the exponents are $\nu^\pm = \tau^\pm(\tau_q^\pm + b-a/2)$ and the dimensions are 
$$h^\pm = \lambda_b(\tau^\pm), \qquad h_q^\pm = \frac{\tau_q^\pm(\tau_q^\pm-a)}2.$$
The last formulas come from $h_q^\pm=\lambda_b(\tau_q^\pm+ b-a/2) - \lambda_b(\tau_q^\pm).$

\subsubsec{Constant fields}
The simplest examples of 1-point vertex fields are constant fields, i.e., vertex fields with $\tau^\pm= 0.$ 
By the neutrality condition, $\tau_q^- = -\tau_q^+.$ 
Since $w_t'(0) = \ee^{t-i\sqrt\kappa B_t},$
$$M_t^{(0,0;\tau_q,-\tau_q)} = \ee^{\tau_q^2t+ia\sqrt\kappa\tau_q B_t} = \ee^{\tau_q^2t+i\sqrt2\tau_q B_t}.$$ 
This is a martingale.

\subsubsec{Real fields} The 1-point vertex fields are real if and only if $\tau^+ = \tau^-$ and $\tau_q^+ = \tau_q^-.$
By the neutrality condition, $\tau_q^+ = -\tau^+.$
Thus the only real fields are
$$M^{(\tau,\tau;-\tau,-\tau)} = |w_q'|^{\tau^2+a\tau}|w'|^{\tau^2-2b\tau}|w|^{\tau(2b-a-2\tau)}|1-w|^{2a\tau}(1-|w|^2)^{\tau^2}.$$
When $\tau = -a,$ there is no covariance at $q.$ 
In this special case,
$$M^{(-a,-a;a,a)} = \bigg|\frac{w'}{w}\bigg|^{1-2/\kappa}\bigg(\frac{1-|w|^2}{|1-w|^2}\bigg)^{2/\kappa}.$$

\begin{eg*} If $\kappa =2,$ then $M^{(-a,-a;a,a)}$ coincides with the Lawler-Schramm-Werner observable 
$$M =\frac{1-|w|^2}{|1-w|^2} = \frac{P_\D(1,w)}{P_\D(1,0)} = \frac{P_D(p,z)}{P_D(p,q)},$$
where $P_D$ is the Poisson kernel of a domain $D.$
As mentioned in Subsection~\ref{ss: intro MO}, this 1-point field is an important observable in the theory of LERW.
\end{eg*}

\begin{eg*} If $\kappa=4,$ then 
$$M = \bigg|\frac{w'}{w}\bigg|^{1/2}\bigg(\frac{1-|w|^2}{|1-w|^2}\bigg)^{1/2}$$
is Beffara's type observable for radial $\SLE(4),$ see \cite{AKL12}.
In the chordal case, Beffara's observables are real martingale-observables of conformal dimensions
$$h^+=h^-=\frac12-\frac\kappa{16}, \qquad h_q=0$$
with the estimate 
$$\P(z,\ve)\asymp \ve^{1-\frac18\kappa}~M(z),$$
where
$\P(z,\ve)$ is the probability that the $\SLE(\kappa)$ curve ($\kappa<8$) hits the disc at $z$ of size $\ve(\ll1)$ measured in a local chart $\phi.$ 
See \cite{Beffara08}.
In \cite{AKL12}, Feynman-Kac formula is used to construct radial SLE martingale-observables with the desired dimensions. 
\end{eg*}

\subsubsec{1-point vertex fields without covariance at $q$} 
A 1-point vertex field $M$ has no covariance at $q$ if and only if 
$$(\tau_q^+,\tau_q^-) = (0,0),\quad (a,0), \quad (0,a), \quad \textrm{or} \quad (a,a).$$
The first case is just the non-chiral vertex field
$M = M^{(\tau,-\tau;0,0)}.$

When $\tau^-=0,$ the 1-point vertex observable 
$$M^{(\tau,0;\tau_q^+,\tau_q^-)} = (w'_q)^{h_q^+}(\overline{w'_q})^{ h_q^-} (w')^h w^{\nu} (1-w)^{a\tau}$$
is holomorphic.
Recall the expression for the exponents and the dimensions
$$h = \frac12\tau^2-b\tau, \quad
\nu = \tau(b-\frac12a+\tau_q^+), \quad
h_q^\pm =\frac{\tau_q^\pm(\tau_q^\pm-a)}2.$$

\subsubsec{Holomorphic 1-point fields without spin at $q$} 
A holomorphic 1-point field $M$ has no spin at $q$ if and only if 
$h_q^+ = h_q^-.$
Equivalently,
$$\tau_q^+ =\tau_q^- \quad\textrm{or}\quad \tau_q^+ +\tau_q^-=a.$$

\noindent \textit{Case 1.} $\tau_q^+ =\tau_q^-.$
By the neutrality condition, we have 
$$\bfs\tau=\tau\cdot z -\frac\tau2\cdot q-\frac\tau2\cdot q^*.$$
The holomorphic 1-point field $\OO_{\bfs\beta}[\bfs\tau]$ is a generalization of the one-leg operator $\leg_z.$
In this case, its conformal dimensions are 
$$h = \frac{\tau^2}2-b\tau, \quad h_q^\pm = h_q =\frac{\tau^2}{8} + \frac{a\tau}4.$$
Thus
$$M_t= \ee^{2h_qt}\Big(\frac{w_t'}{w_t}\Big)^{\frac12\tau^2-b\tau} \Big(\frac{w_t}{(1-w_t)^2}\Big)^{-\frac12{a\tau}}.$$

\begin{eg*}[Derivative exponents on the boundary \cite{LSW01c}]
On the unit circle, (up to constant)
$$M_t(\ee^{i\theta}) = \ee^{2 h_qt} |w_t'(\ee^{i\theta})|^h\Big(\sin^2\frac{\theta_t}{2}\Big)^{\frac12a\tau},$$
where $w_t(\ee^{i\theta}) = \ee^{i\theta_t}.$
Given $h,$ the equation $h = \frac12\tau^2-b\tau $ is solved by 
$$\tau_\pm = \frac a 4\big(\kappa-4 \pm \sqrt{(\kappa-4)^2+16\kappa h } \big).$$
With the choice of $\tau = \tau_+,$ Lawler, Schramm, and Werner proved that 
$M_t(\ee^{i\theta})$ is a martingale. 
They applied the optional stopping theorem to $M_t(\ee^{i\theta})$ and used the estimate 
$$\E[M_t(\ee^{i\theta})] \asymp \ee^{2 h_qt}\, \E[|w'_t(\ee^{i\theta})|^h \mathbf{1}_{\{\tau_{\ee^{i\theta}} > t\}}]$$
to derive the derivative exponents:
$$\E[|w'_t(\ee^{i\theta})|^h \mathbf{1}_{\{\tau_{\ee^{i\theta}} > t\}}] \asymp \ee^{-2 h_q t} \Big(\sin^2 \frac{\theta}2\Big)^{\frac12a\tau}.$$
(Recall that $\tau_z$ is the first time when a point $z$ is swallowed by the hull of SLE, see Subsection~\ref{ss: intro MO}.) 
From the derivative exponent for $\kappa=6,$ they obtained the annulus crossing exponent for $\SLE_6$ and combined it with other exponents to prove Mandelbrot's conjecture that the Hausdorff dimension of the planar Brownian frontier is $4/3$. 
See \cite{LSW01a} and references therein.
\end{eg*}

\begin{eg*}
The field $M$ is a scalar if $\tau = 2b.$
In this case,
$$M_t= \ee^{t(\kappa-4)/{8}}\Big(w_t+\frac1{w_t}-2\Big)^{(\kappa-4)/(2\kappa)}.$$
Its derivative $\widetilde M=\pa M$ has the conformal dimensions $[1,0;h_q,h_q].$ 
It is not a vertex observable. 
If $\widetilde M = M^{(\tilde \tau^+,\tilde \tau^-; \tilde\tau_q^+,\tilde\tau_q^-)}$ with $\tilde h^+ = \lambda_b(\tilde\tau^+)= 1,$ then
$\tilde\tau^+ = -2a$ or $\tilde\tau^+ = 2a+2b.$ 
As we will see below, the holomorphic 1-differentials without spin at $q$ are not forms of $\pa M^{(2b,0,-b,-b)}.$
Unlike the chordal case (see \cite[Proposition~15.2]{KM13}), the holomorphic differential observables are not necessarily vertex observables. 
\end{eg*}

\begin{eg*}
If we take $\tau = 2b-a$ so that $h = h_{1,2}:=\frac12a^2-ab $ (there are only two values of $\tau$ such that $h = h_{1,2},$ namely $\tau=a,2b-a$), then 
$$\bfs\tau = (2b-a)\cdot z + (-b+\frac a2)\cdot q + (-b+\frac a2)\cdot q^*,$$ 
and 
$$h = \frac{6-\kappa}{2\kappa} = -\frac{a\tau}2,\quad h_q = \frac{(2-\kappa)(6-\kappa)}{16\kappa}=-h_{0,1/2}, \quad \nu =0.$$ 
In this case, we have 
$$M_t = \ee^{2 h_qt} \Big(\frac{w_t'}{(1-w_t)^2}\Big)^{(6-\kappa)/(2\kappa)}.$$
If $\kappa=2,$ then
$$M = \frac{w'}{(1-w)^2}$$
and its anti-derivative is $1/(1-w).$ 
See the next example.
\end{eg*}

\ss\noindent \textit{Case 2.} $\tau_q^+ +\tau_q^-=a.$ Let $\tau_q = \tau_q^+.$ 
It follows from the neutrality condition that 
$$\bfs\tau = -a\cdot z + \tau_q\cdot q + (a-\tau_q)\cdot q^*.$$
Thus we have
$$M = (w')^{(\kappa-2)/(2\kappa)}w^{a(-\tau_q+\frac12a-b)}(1-w)^{-2/\kappa}|w_q'|^{\tau_q^2-a\tau_q}.$$

\ss\begin{eg*} The fields $M$ have no covariance at $q$ if and only if $\tau_q = 0$ or $a.$ 
In these cases, we have
$$M^{(-a,0;a,0)} = \Big(\frac{w'}{w}\Big)^{(\kappa-2)/(2\kappa)}(1-w)^{-2/\kappa}$$
and
$$M^{(-a,0;0,a)} = (w')^{(\kappa-2)/(2\kappa)}w^{(6-\kappa)/(2\kappa)}(1-w)^{-2/\kappa}.$$
For example, if $\kappa =6,$ then 
$$M^{(-a,0;0,a)} = \Big(\dfrac{w'}{1-w}\Big)^{1/3},$$
and if $\kappa = 2,$ then both $M^{(-a,0;a,0)}$ and $M^{(-a,0;0,a)}$ produce the same observable 
$$M^{(-a,0;a,0)}=1+M^{(-a,0;0,a)}=\frac1{1-w}.$$
\end{eg*}

\subsubsec{Holomorphic differentials without spin at $q$} 
Let $\tau = \tau^+,$ $\tau^- = 0.$
A holomorphic 1-point field $M$ is a 1-differential with respect to $z$ if and only if 
$\tau = -2a$ or $\tau = 2(a+b).$ 
Furthermore, if $M$ has no spin at $q,$ then $\tau_q^+ = \tau_q^-.$
(If $\tau = -2a$ or $\tau = 2(a+b),$ then the other possibility $\tau_q^++\tau_q^-=a$ never happens because of the neutrality condition.)

\ms\noindent \textit{Case 1.} $\tau = -2a.$ 
By the neutrality condition, we have 
$$\bfs\tau= -2a\cdot z + a\cdot q + a\cdot q^*$$
and
$$M = w'w^{2/\kappa-1}(1-w)^{-4/\kappa}.$$

\begin{eg*}
If $\kappa=2,$ then
$$M^{(-2a,0;a,a)} = \frac{w'}{(1-w)^2}$$ 
and its anti-derivative is 
$1/(1-w).$ 
See the previous example.
\end{eg*}

\begin{eg*}
If $\kappa=4,$ then
$M^{(-2a,0;a,a)} = w'(1-w)^{-1}w^{-1/2}$ and its anti-derivative is 
$$\log(1+\sqrt w)-\log(1-\sqrt w).$$ 
Its imaginary part is a bosonic observable for a twisted conformal field theory. 
\end{eg*}

\noindent \textit{Case 2.} $\tau = 2(a+b).$ 
By the neutrality condition, we have 
$$\bfs\tau= (2a+2b)\cdot z -(a+b)\cdot q -(a+b)\cdot q^*$$ 
and
$$M_t = \ee^{t(\kappa+4)/8}w_t' w_t^{-3/2}(1-w_t).$$

\subsection{Restriction formulas} \label{ss: Restriction formulas}
Let $\kappa \le 4.$
We first consider the chordal $\SLE[\bfs\beta]$ with 
$$\bfs\beta = a\cdot p + \sum \beta_k \cdot q_k + (2b-a-\sum \beta_k) \cdot q, \qquad p,q, q_k \in \pa D.$$
Fix a hull $K.$ 
Let $g_t$ be the chordal $\SLE[\bfs\beta]$ map with the hull $K_t.$ 
On the event $K_\infty \cap K = \emptyset,$ a conformal map $h_t:\Omega_t = g_t(D_t\sm K)\to\H$ is defined by
$$h_t = \widetilde g_t \circ \Psi_K \circ g_t^{-1},$$
where $\widetilde g_t$ is a Loewner map from $\widetilde D_t = D\sm\widetilde K_t$ onto $\H,$
$\widetilde K_t = \Psi_K(K_t),$ and
$\Psi_K$ is the conformal map $(\H\sm K,0,\infty)\to (\H,0,\infty)$ satisfying $\Psi_K'(\infty)=1.$
Let 
$$M_t := \frac{(Z_{\bfs\beta_t} \|\,\id_{\Omega_t})}{(Z_{\bfs\beta_t} \|\,\id_{\H})} = \frac{\E\,(\leg^\eff_{\Omega_t}\|\,\id)(\xi_t,\bfs q(t))}{\E\,(\leg^\eff_\H\|\,\id)(\xi_t,\bfs q(t))},$$
where 
$$\bfs\beta_t = a\cdot \xi_t + \sum \beta_k \cdot q_k(t), \qquad \dd q_k(t) = \frac2{q_k(t)-\xi_t}\,\dd t ,\qquad q_k(0) = q_k$$
and
$\leg^\eff = \PP_{\bfs\beta}\,\leg$ is the effective one-leg operator. 
Then the process $M_t$ is expressed in terms of $h_t,$ $\xi_t,$ and $q_k(t)$ as 
\begin{equation} \label{eq: chordal M_t}
M_t = h_t'(\xi_t)^\lambda \prod_j h_t'(q_j(t))^{\lambda_j} \Big(\frac{h_t(\xi_t)-h_t(q_j(t))}{\xi_t-q_j(t)}\Big)^{a\beta_j} \prod_{j<k} \Big(\frac{h_t(q_j(t))-h_t(q_k(t))}{q_j(t)-q_k(t)}\Big)^{\beta_j\beta_k}.
\end{equation}
We now present the conformal field theoretic proof for the restriction formula (\cite{Dubedat05}) of chordal $\SLE[\bfs\beta]:$ 
\begin{equation} \label{eq: restriction formula for chordal SLE(kappa,rho)}
\dd M_t = \frac c6 S_{h_t}(\xi_t)M_t\,\dd t + \textrm{martingale terms.}
\end{equation}

Let 
$$F(z,\bfs q,t) := \frac{G(z,\bfs q,t)}{H(z,\bfs q)}, \quad 
\begin{cases}
G(z,\bfs q,t) := &\E\,(\leg^\eff_{\Omega_t}\|\id)(z,\bfs q)(=\E\,(\leg^\eff_\H\|h_t^{-1})(z,\bfs q)), \\
H(z,\bfs q) := &\E\,(\leg^\eff_\H\|\id)(z,\bfs q).
\end{cases}
$$
Recall that the driving process $\xi_t$ satisfies 
$$\dd \xi_t = \sqrt\kappa B_t + \kappa \frac{H_\xi(\xi_t,\bfs q(t))}{H(\xi_t,\bfs q(t))}\,\dd t.$$
It follows from It\^o's formula that the drift term of $\dd M_t/M_t$ equals 
$$
\Big(\frac{\dot F}{F} + \frac{\kappa}2 \frac{F_{\xi\xi}}{F} + \sum_j \frac{F_{q_j}}{F} \dot q_j(t) + \kappa \frac{F_\xi}F \frac{H_\xi}H\Big)\,\dd t$$
evaluated at $(\xi_t,\bfs q(t),t).$
We rewrite the drift term of $\dd M_t/M_t$ in terms of $G$ and $H:$
$$\textrm{the drift term of } \frac{\dd M_t}{M_t} =\Big(\frac{\dot G}{G} + \frac{\kappa}2 \Big(\frac{G_{\xi\xi}}{G}-\frac{H_{\xi\xi}}{H}\Big)+ \sum_j \Big(\frac{G_{q_j}}{G}-\frac{H_{q_j}}{H}\Big)\frac2{q_j(t)-\xi_t}\Big)\,\dd t.$$
Here we use 
$$\frac{\dot F}{F} = \frac{\dot G}{G} , \qquad \frac{F_{q_j}}{F} = \frac{G_{q_j}}{G}-\frac{H_{q_j}}{H}, \qquad \frac{F_{\xi\xi}}{F} = \frac{G_{\xi\xi}}{G} -\frac{H_{\xi\xi}}{H}-2 \frac{F_\xi}{F}\frac{H_\xi}{H}.$$

By Ward's equations we have 
\begin{align*}
\dot G(z,\bfs q(t),t) = &-2 h_t'(\xi_t)^2h_t'(z)^\lambda\prod_j h_t'(q_j(t))^{\lambda_j}\E \big(A_\H(h_t(\xi_t)) \leg^\eff_\H(h_t(z), h_t(q_1(t)),\cdots)\|\id\big)\\
&+2\E\,(\LL_{v_{\xi_t}}\leg^\eff_{\Omega_t}\|\id)(z,\bfs q(t)).
\end{align*}
It follows from conformal invariance that
$$\dot G(z,\bfs q(t),t) = -2 (\E\,A_{\Omega_t}(\xi_t)\leg^\eff_{\Omega_t}(z,\bfs q(t))\|\id) + 2\E\,(\LL_{k_{\xi_t}}\leg^\eff_{\Omega_t}\|\id)(z,\bfs q(t)).$$
Sending $z$ to $\xi_t,$
$$\dot G(\xi_t,\bfs q(t),t) = -2 (\E\,A_{\Omega_t}*_\xi\leg^\eff_{\Omega_t}(\xi_t,\bfs q(t))\|\id) + 2\E\,(\check{\LL}_{k_{\xi_t}}\leg^\eff_{\Omega_t}\|\id)(\xi_t,\bfs q(t)),$$
where $\check{\LL}_{k_{\xi_t}}$ does not apply to $\xi_t.$
By the level two degeneracy equation for $\leg^\eff$ with respect to $\xi$, 
$$\frac{\kappa}2 \Big(\frac{G_{\xi\xi}}{G}-\frac{H_{\xi\xi}}{H}\Big) = 
2\frac{(\E\,T_{\Omega_t}*_\xi\leg^\eff_{\Omega_t}(\xi_t,\bfs q(t))\|\id)}{(\E\,\leg^\eff_{\Omega_t}(\xi_t,\bfs q(t))\|\id)} 
-2\frac{(\E\,T_{\H}*_\xi\leg^\eff_{\H}(\xi_t,\bfs q(t))\|\id)}{(\E\,\leg^\eff_{\H}(\xi_t,\bfs q(t))\|\id)}.$$
It follows from Ward's equation that 
$$(\E\,T_{\H}*_\xi\leg^\eff_{\H}(\xi_t,\bfs q(t))\|\id) = \E\,(\check{\LL}_{k_{\xi_t}}\leg^\eff_{\H}\|\id)(\xi_t,\bfs q(t)).$$ 
The formula for Lie derivatives of differentials gives
$$
 \frac{\E\,(\check{\LL}_{k_{\xi_t}}\leg^\eff_{\Omega_t}\|\id)(\xi_t,\bfs q(t))}{(\E\,\leg^\eff_{\Omega_t}(\xi_t,\bfs q(t))\|\id)}
- \frac{\E\,(\check{\LL}_{k_{\xi_t}}\leg^\eff_{\H}\|\id)(\xi_t,\bfs q(t))}{(\E\,\leg^\eff_{\H}(\xi_t,\bfs q(t))\|\id)} = -\sum_j \Big(\frac{G_{q_j}}{G} - \frac{H_{q_j}}{H} \Big)\frac1{q_j(t)-\xi_t}. 
$$
Combining all of the above, we have 
\begin{align*}
\textrm{the drift term of } \frac{\dd M_t}{M_t} &= -2 \frac{(\E\,A_{\Omega_t}*_\xi\leg^\eff_{\Omega_t}(\xi_t,\bfs q(t))\|\id)}{(\E\,\leg^\eff_{\Omega_t}(\xi_t,\bfs q(t))\|\id)}\,\dd t +2 \frac{(\E\,T_{\Omega_t}*_\xi\leg^\eff_{\Omega_t}(\xi_t,\bfs q(t))\|\id)}{(\E\,\leg^\eff_{\Omega_t}(\xi_t,\bfs q(t))\|\id)}\,\dd t \\&= \frac c6 S_{h_t}(\xi_t)\,\dd t.
\end{align*}

We now present a conformal field theoretic proof for the restriction formula of radial $\SLE[\bfs\beta]$ $(\kappa=8/3).$
For this purpose, we let $\zeta, q_k\in \pa D, q \in D,$ 
$$\bfs\beta = \bfs\tau + \check{\bfs\beta\,}, \quad \bfs\tau = a\cdot\zeta - \frac a2\cdot q - \frac a2\cdot q^*\quad \check{\bfs\beta\,} = \sum \beta_k\cdot q_k + (b-\frac\beta2)\cdot q+ (b-\frac\beta2)\cdot q^*, \quad\beta=\sum\beta_k,$$ 
and consider the effective one-leg operator $\leg^\eff,$ 
$$\leg^\eff(\zeta):=\PP_{\check{\bfs\beta\,}}\,\leg(\zeta) = C_{(b)}[\bfs\beta]\,\ee^{\odot i \Phi[\bfs\tau]}.$$

Fix a hull $K$ such that $K$ does not intersect $\zeta = 1, q_1, \cdots, q_k, \cdots.$
Let $g_t$ be the radial $\SLE[\bfs\beta]$ map with the hull $K_t.$ 
On the event $K_\infty\cap K = \emptyset,$ a conformal map $h_t:\Omega_t = g_t(D_t\sm K)\to\D$ is defined by
$$h_t = \widetilde g_t \circ \Psi_K \circ g_t^{-1},$$
where $\widetilde g_t$ is a radial Loewner map from $\widetilde D_t = D\sm\widetilde K_t$ onto $\D,$
$\widetilde K_t = \Psi_K(K_t),$ and $\Psi_K$ is the conformal map $(\D\sm K,0)\to (\D,0)$ satisfying $\Psi_K'(0)>0.$
Let 
\begin{equation} \label{eq: radial M_t}
M_t := \frac{(Z_{\bfs\beta_t} \|\,\id_{\Omega_t})}{(Z_{\bfs\beta_t} \|\,\id_{\H})}, \quad
\bfs\beta_t = a\cdot\zeta_t + (b-\frac{a+\beta}2)\cdot q+ (b-\frac{a+\beta}2)\cdot q^*+ \sum \beta_k\cdot q_k(t),
\end{equation}
where $q_k(t)$ satisfies $q_k(0) = q_k$ and 
$$
\dd q_k(t) = q_k(t)\frac{\zeta_t + q_k(t)}{\zeta_t - q_k(t)}\,\dd t.
$$

We now represent $M_t$ in terms of the effective one-leg operator:
$$M_t = F(\zeta_t, \bfs q(t),t), \quad F(z,\bfs q,t)= \frac{G(z,\bfs q,t)}{H(z,\bfs q)},$$
where $G(z,\bfs q,t) = \E(\leg_\D^\eff\|h_t^{-1})(z),$ and $H(z,\bfs q) = \E(\leg_\D^\eff\|\id)(z).$

\begin{thm}
We have
\begin{equation} \label{eq: radial restriction formula}
\textrm{the drift term of } \frac{\dd M_t}{M_t} = -\frac c6\, \zeta_t^2\,S_{h_t}(\zeta_t)\,\dd t.
\end{equation}
\end{thm}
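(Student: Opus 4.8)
The plan is to mirror the conformal field theoretic derivation of the chordal restriction formula~\eqref{eq: restriction formula for chordal SLE(kappa,rho)} given just above, replacing the chordal Loewner flow by the radial one and the chordal Ward's/null-vector equations by their radial counterparts. Following the notation introduced before the statement, write $G(z,\bfs q,t) = \E\,(\leg^\eff_\D\,\|\,h_t^{-1})(z)$ and $H(z,\bfs q) = \E\,(\leg^\eff_\D\,\|\,\id)(z)$, so that $M_t = F(\zeta_t,\bfs q(t),t)$ with $F = G/H$ and $\E\,\leg^\eff_\D = C_{(b)}[\bfs\beta]$. First I would apply It\^o's formula to $t\mapsto F(\zeta_t,\bfs q(t),t)$, exactly as in the radial part of the proof of Theorem~\ref{main}, using $\zeta_t = \ee^{i\theta_t}$, $\dd\theta_t = \sqrt\kappa\,\dd B_t + \lambda(t)\,\dd t$ with $\lambda = \kappa\,\pa_\theta\log Z_{\bfs\beta_\zeta}$, and $\dd q_k(t) = q_k(t)(\zeta_t+q_k(t))/(\zeta_t-q_k(t))\,\dd t$. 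Collecting terms and using $\dot F/F = \dot G/G$, $F_{q_k}/F = G_{q_k}/G - H_{q_k}/H$ and the radial analogue of the second-derivative identity from the chordal computation, the drift of $\dd M_t/M_t$ reduces to an expression built from $\dot G/G$, a $\pa_\theta^2$-combination of $\log G$ and $\log H$, and the $q_k$-drift contributions, in complete parallel with the chordal bookkeeping.

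Next I would identify each piece. The time derivative $\dot G$ is computed from the radial Ward's equation for $\leg^\eff_\D$ (Corollary~\ref{Ward equationD}) together with $h_t = \wt g_t\circ\Psi_K\circ g_t^{-1}$ and conformal invariance of correlations, which expresses $\dot G$ at $\zeta_t$ as a multiple of $\zeta_t^2$ times an insertion of $A_{\bfs\beta,\Omega_t}(\zeta_t)$ into $\leg^\eff_{\Omega_t}$ plus a Lie-derivative term $\check\LL_{v_{\zeta_t}}$, where $\Omega_t = g_t(D_t\sm K)$. The $\pa_\theta^2$-combination is converted, via the radial level-two degeneracy equation~\eqref{eq: level2radial} for the effective one-leg operator and the operator $L(z) = 2z^2L_{-2}(z)+3zL_{-1}(z)+L_0(z)$ of~\eqref{eq: L in D}, into an insertion of the Virasoro field $T_{\bfs\beta,\Omega_t}$ minus the corresponding quantity in $\D$, the latter supplied by Ward's equation and the null-vector equation~\eqref{eq: radial NV} for $C_\zeta$. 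The $q_k$-drift terms then cancel against the first-order Lie-derivative contributions by Leibniz's rule for Lie derivatives and the differential transformation law~\eqref{eq: LvDiff}, just as in the chordal case. After all cancellations one is left with a fixed multiple of $\zeta_t^2$ times the insertion of $T_{\bfs\beta,\Omega_t}-A_{\bfs\beta,\Omega_t}$ into $\leg^\eff_{\Omega_t}$ at $\zeta_t$, normalized by $\E\,\leg^\eff_{\Omega_t}(\zeta_t)$; and since $T_{\bfs\beta}-A_{\bfs\beta}$ is a \emph{non-random} Schwarzian form of order $c/12$ (Theorem~\ref{SETinD}), comparing the $\Omega_t$- and $\D$-identity charts through $h_t$ turns this into $-\frac c6\zeta_t^2 S_{h_t}(\zeta_t)\,\dd t$, which is~\eqref{eq: radial restriction formula}.

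The main obstacle I expect is the bookkeeping peculiar to the radial setting: the interior marked point $q$ carries effective charges in both $\leg^\eff_{\Omega_t}$ and $\leg^\eff_\D$ and forces the partition function in~\eqref{eq: radial M_t} to use the shifted charge $b-\frac{a+\beta}{2}$; the operator $L(z)$ contains extra $3z\,\pa_z$ and zeroth-order pieces; and $\zeta_t = \ee^{i\theta_t}$ injects chain-rule factors $i\zeta_t$. All of these must conspire so that every term of $\pa_\theta$-order two, one and zero that is not accounted for by the Schwarzian cancels exactly. A secondary technical point, on the event $K_\infty\cap K = \emptyset$, is to justify the conformal-invariance manipulations -- that $\wt g_t$ and the slit maps intertwine so that $A_{\bfs\beta,\Omega_t}$ and $T_{\bfs\beta,\Omega_t}$ may legitimately be transported between $\Omega_t$ and $\D$ via $h_t$; this transport is precisely where the Schwarzian term $S_{h_t}$ is generated.
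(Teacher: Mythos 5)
Your proposal follows the paper's proof essentially step for step: It\^o's formula for $F=G/H$ with the radial driving SDE, computation of $\dot G$ via the flow $h_t=\wt g_t\circ\Psi_K\circ g_t^{-1}$ (whose vector field is a difference of two radial Loewner fields) and Ward's equation, conversion of the second-derivative term by the level-two degeneracy equation, cancellation of the $q_k$-drifts against the Lie-derivative differences, and the null-vector equation for $H$. The only cosmetic difference is where the Schwarzian anomaly is harvested --- the paper's radial proof extracts $-\frac c6\zeta_t^2 S_{h_t}(\zeta_t)$ at the moment $T_\D$ is transported to $T_{\Omega_t}$ by conformal invariance, whereas you defer it to the final $T_{\bfs\beta}-A_{\bfs\beta}$ insertion as in the paper's own chordal argument; both rest on the same order-$c/12$ Schwarzian transformation law.
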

\begin{proof}
Recall that
$$\dd\zeta_t = i\sqrt\kappa\,\zeta_t\,\dd B_t - \frac\kappa2\zeta_t\,\dd t - \kappa\zeta_t^2 \Big(\frac{H_\zeta}H + \frac{h}{\zeta_t}\Big)\,\dd t = i\sqrt\kappa\,\zeta_t\,\dd B_t - \kappa\zeta_t^2 \frac{H_\zeta}H\,\dd t-3\zeta_t\,\dd t .$$
It follows from It\^o's formula that 
$$\dd M_t = \dot F\,\dd t - \frac\kappa2\zeta_t^2\,F_{\zeta\zeta}\,\dd t + F_\zeta\,\dd\zeta_t + \sum F_{q_k}\,\dd q_k(t)
$$
evaluated at $(\zeta_t, \bfs q(t),t).$
The drift term of $\dd M_t/M_t$ is 
$$\Big(\frac{\dot F}{F} - \frac\kappa2\zeta_t^2\,\frac{F_{\zeta\zeta}}F -\Big(\kappa\zeta_t^2 \frac{H_\zeta}H+3\zeta_t\Big) \frac{F_{\zeta}}F+ \sum_k\Big(q_k(t)\frac{\zeta_t + q_k(t)}{\zeta_t-q_k(t)}\Big) \frac{F_{q_k}}F\Big)\Big|_{(\zeta_t, \bfs q(t),t)}.$$
We rewrite the drift term of $\dd M_t/M_t$ in terms of $G$ and $H:$ 
\begin{equation} \label{eq: dM/Mradial}
\Big(\frac{\dot G}{G} - \frac\kappa2\zeta_t^2\,\Big(\frac{G_{\zeta\zeta}}G-\frac{H_{\zeta\zeta}}H \Big) -3\zeta_t \Big(\frac{G_{\zeta}}G-\frac{H_{\zeta}}H \Big)
+ \sum_k\Big(q_k(t)\frac{\zeta_t + q_k(t)}{\zeta_t-q_k(t)}\Big)\Big( \frac{G_{q_k}}G - \frac{H_{q_k}}H \Big)\Big)
\end{equation}
evaluated at $(\xi_t,\bfs q(t),t).$

Using the similar method in \cite[Section~14.5]{KM13}, we represent $\dot G$ in terms of the Lie derivatives:
\begin{align} \label{eq: dotF1}
\dot G(z,\bfs q, t) &= \frac{\dd}{\dd s}\Big|_{s=0} (\E\, \leg^\eff_{\D}\,\|\, h_{t+s}^{-1}) (z)= \frac{\dd}{\dd s}\Big|_{s=0} (\E\, \leg^\eff_{\D}\,\|\, h_{t}^{-1}\circ f_{s,t}^{-1}) (z) \\
&= (\E\,\LL(v,\D) \,\leg^\eff_{\D}\,\|\,h_t^{-1})(z), \nonumber
\end{align}
where $f_{s,t} = h_{t+s}\circ h_{t}^{-1}$ and
$$(v\,\|\,\id_\D) = \frac{\dd}{\dd s}\Big|_{s=0} f_{s,t} = \dot h_t \circ h_t^{-1}.$$
We only need to compute the vector field $v.$
We represent $v$ as the difference of two Loewner vector fields associated with the flows in the domains $\D$ and $\Omega_t.$
Applying the chain rule to $h_t = \wt g_t \circ \Psi_K \circ g_t^{-1}$ and computing the capacity changes, we have 
$$\dot h_t(z) = |h_t'(\zeta_t)|^2v_{\wt \zeta_t}(h_t(z)) - h_t'(z)v_{\zeta_t}(z), \quad \Big(v_{\zeta}(z)=z\frac{\zeta+z}{\zeta-z}\Big),$$
where $\wt \zeta_t = h_t(\zeta_t).$
By the above equation and $(v\,\|\,\id_\D) = \dot h_t \circ h_t^{-1},$ 
\begin{equation} \label{eq: v}
(v\,\|\,\id_\D)(z)=|h_t'(\zeta_t)|^2v_{\wt\zeta_t}(z) - h_t'(h_t^{-1}(z))v_{\zeta_t}(h_t^{-1}(z)).
\end{equation}

It follows from \eqref{eq: dotF1} and \eqref{eq: v} that 
\begin{align*}
\dot G(z,\bfs q(t),t) &= |h_t'(\zeta_t)|^2 h_t'(z)^\lambda\prod \big(h_t'(q_k(t))\big)^{\lambda_k} \,\E\, \big(\LL(v_{\wt\zeta_t},\D) \,\leg^\eff_{\D} (h_t(z)\big) \,\|\, \id_\D) \\
&- \big(\E\,\LL(v_{\zeta_t}) \,\leg^\eff_{\Omega_t}\,\|\,\id_{\Omega_t}\big)(z).
\end{align*}
By Ward's equation, 
\begin{align*} 
\dot G(z,\bfs q(t),t) &= 2|h_t'(\zeta_t)|^2 \prod \big(h_t'(q_k(t))\big)^{\lambda_k}h_t'(z)^\lambda\wt\zeta_t^2(\E\, T_\D(\wt \zeta_t)\,\leg^\eff_\D (h_t(z))\,\|\,\id_\D)\\
& - (\E\,\LL(v_{\zeta_t})\,\leg^\eff_{\Omega_t}\,\|\,\id_{\Omega_t}\big)(z).
\end{align*}
It follows from conformal invariance that
\begin{align*}
\dot G(z,\bfs q(t),t) &=2\zeta_t^2(\E\, T_{\Omega_t}(\zeta_t)\,\leg^\eff_{\Omega_t}(z)\,\|\,\id_{\Omega_t}) -\frac c{6} \zeta_t^2 S_{h_t}(\zeta_t)\, (\E\, \leg^\eff_{\Omega_t}\,\|\,\id_{\Omega_t}\big)(z)\\
&-(\E\,\LL(v_{\zeta_t})\,\leg^\eff_{\Omega_t}\,\|\,\id_{\Omega_t}\big)(z).
\end{align*}
Sending $z$ to $\zeta_t$ and applying \eqref{eq: sing OPE4TY} (and $T*_{-1}\leg^\eff = \pa \leg^\eff, T*_{-2}\leg^\eff = h \leg^\eff$, see \eqref{eq: T primary}), 
\begin{equation*}
\frac{\dot G(\zeta_t,\bfs q(t),t)}{G(\zeta_t,\bfs q(t),t)} = -\frac c6 \zeta_t^2 S_{h_t}(\zeta_t)+\frac{L(\zeta_t)G(\zeta_t,\bfs q(t),t)}{G(\zeta_t,\bfs q(t),t)} - \frac{\check{\LL}_{v_{\zeta_t}}G(\zeta_t,\bfs q(t),t)}{G(\zeta_t,\bfs q(t),t)},
\end{equation*}
where $L(z) = 2z^2L_{-2}(z) + 3zL_{-1}(z) + L_0(z).$
By \eqref{eq: dM/Mradial} and the level two degeneracy equation, the drift of $\dd M_t/M_t$ simplifies to
\begin{align*}
-\frac c6 \zeta_t^2 S_{h_t}(\zeta_t) & - \frac{\check{\LL}_{v_{\zeta_t}}G(\zeta_t,\bfs q(t),t)}{G(\zeta_t,\bfs q(t),t)} +\Big(\frac\kappa2\zeta_t^2\,\frac{H_{\zeta\zeta}}H + 3\zeta_t\frac{H'}H + h\Big)\Big|_{(\zeta_t, \bfs q(t),t)}\\
&+ \sum_k\Big(q_k(t)\frac{\zeta_t + q_k(t)}{\zeta_t-q_k(t)}\Big)\Big( \frac{G_{q_k}}G - \frac{H_{q_k}}H \Big)\Big|_{(\zeta_t, \bfs q(t),t)}.
\end{align*}
It follows from the null vector equation that 
$$\frac\kappa2\zeta_t^2\,\frac{H_{\zeta\zeta}}H + 3\zeta_t\frac{H'}H + h = \frac{\check{\LL}_{v_{\zeta_t}}H}H.$$
Thus we find the drift term of $\dd M_t/M_t$ as 
$$-\zeta_t^2 \frac c6 S_{h_t}(\zeta_t)- \Big(\frac{\check{\LL}_{v_{\zeta_t}}G}G - \frac{\check{\LL}_{v_{\zeta_t}}H}H\Big)+ \sum_k\Big(q_k(t)\frac{\zeta_t + q_k(t)}{\zeta_t-q_k(t)}\Big)\Big( \frac{G_{q_k}}G - \frac{H_{q_k}}H \Big)$$
evaluated at $(\xi_t,\bfs q(t),t).$
Since 
$$\frac{\check{\LL}_{v_{\zeta_t}}G}G - \frac{\check{\LL}_{v_{\zeta_t}}H}H= \sum_k\Big(q_k(t)\frac{\zeta_t + q_k(t)}{\zeta_t-q_k(t)}\Big)\Big( \frac{G_{q_k}}G - \frac{H_{q_k}}H \Big),$$
we have \eqref{eq: radial restriction formula}.
\end{proof}

\section{Conformal field theory with Neumann boundary condition} \label{sec: Neumann CFT}

In this section we briefly implement a version of conformal field theory from background charge modifications of the Gaussian field with Neumann boundary condition. 
In the last subsection we present the connection of this theory to the theory of the backward SLEs. 

\subsection{Gaussian free field with Neumann boundary condition} \label{ss: GFF N}
The Gaussian free field $N$ in a planar domain $D$ with Neumann boundary condition is an isometry
$N:\EE_N(D) \to L^2(\Omega,\P)$
from the Neumann energy space $\EE_N(D)$ such that the image consists of centered Gaussian random variables.
Here $(\Omega,\P)$ is a probability space and $\EE_N(D)$ is the completion of smooth functions up to the boundary with mean zero (or the neutrality condition $(\NC_0)$), compact supports in $D\,\cup\,\pa D,$ and Neumann boundary condition $\nabla f \cdot n = 0$ (where $n$ is normal to $\pa D$)
with respect to the norm
$$\|f\|^2_{\EE_N}=\iint 2G_N(\zeta,z)\,f(\zeta)\,\overline{f(z)}~\dd A(\zeta)\,\dd A(z),$$
where $A$ is the normalized area measure and $G_N$ is the Neumann Green's function for $D.$
In the upper half-plane, we have 
$$G_N(\zeta,z) = \log\frac1{|(\zeta-z)(\zeta-\bar z)|}.$$
As in the Dirichlet case, $N$ can be constructed from the Gaussian free field $\Psi$ on its Schottky double $S = D^{\mathrm{double}}.$ 
The Neumann energy space $\EE_N(D)$ can be embedded isometrically into $\EE(S)$ in a natural way. 
For example, for $\mu$ in the energy space $\EE_N(\H),$ 
\begin{align*}
\|\mu\|_{\EE_N(\H)}^2 &= \int_{\H\times\H} \log\frac1{|z-w|^2} \, \mu(z) \overline{\mu(w)} + \int_{\H\times\H} \log\frac1{|z-\bar w|^2} \, \mu(z) \overline{\mu(w)}\\
&= \frac12 \int_{\C\times\C} \log\frac1{|z-w|^2} \, \nu(z) \overline{\nu(w)} = \frac12 \|\nu\|_{\EE(\wh\C)}^2,
\end{align*}
where $\nu = \mu$ in the upper half-plane $\H$ and $\nu = \mu^*$ in the lower half-plane $\H^*$ 
($\mu^*(E) = \mu(E^*),$ $E^* = \{\bar z, z\in E\}$). 
For example, a test function $f$ in $\H$ with mean zero and Neumann boundary condition extends smoothly to $\C$ such that $f(\bar z) = f(z).$

As a Fock space space, the Gaussian free field $N \equiv N_D$ in $D$ can be constructed from the Gaussian free field $\Psi \equiv \Psi_{S}:$
$$N(z) = \frac1{\sqrt 2} (\Psi(z) + \Psi(z^*)).$$
This field $N(z)$ is formal and thus we need to require the neutrality condition $(\NC_0)$ on $N.$ For example, $N(z,z_0):=N(z)-N(z_0)$ is a 2-variant well-defined Fock space field.
The formal field $N$ has the formal correlation:
$$\E\, N(z)N(z_0) = 2 G_N(z,z_0).$$

From the Schottky double construction, the current field $J_D^D = \pa \Phi_D$ in $D$ with Dirichlet boundary condition and the current field $J_S = \pa\Psi_S$ are related as follows. For example, 
$$J_\H^D(z) = \frac1{\sqrt2} \big(J_{\wh\C}(z) - \overline{J_{\wh\C}(\bar z)}\big).$$ 
On the other hand, the current field $J_\H^N$ with Neumann boundary condition is related to $J_\C$ as 
$$J_\H^N(z) = \frac1{\sqrt2} \big(J_{\wh\C}(z) + \overline{J_{\wh\C}(\bar z)}\big).$$ 
In a similar way, stress tensors are related as 
$$A_\H^D(z) = -\frac12 J_\H^D \odot J_\H^D(z) = \frac12\big(A_{\wh\C}(z)+ \overline{A_{\wh\C}(\bar z)} + J_{\wh\C}(z) \odot \overline{J_{\wh\C}(\bar z)} \big)$$ 
and
$$A_\H^N(z) = -\frac12 J_\H^N \odot J_\H^N(z)= \frac12\big(A_{\wh\C}(z)+ \overline{A_{\wh\C}(\bar z)} - J_{\wh\C}(z) \odot \overline{J_{\wh\C}(\bar z)} \big).$$ 
It is easy to check that $A^N$ is a stress tensor for $N.$ 
Indeed, as $\zeta\to z,$ we have the following Ward's OPE for $N:$
$$A^N(\zeta)N(z) = -\frac12 J^N(\zeta)\odot J^N(\zeta)N(z) = -\E[J^N(\zeta)N(z)]J^N(\zeta) \sim \frac{J^N(\zeta)}{\zeta-z}\sim \frac{\pa N(z)}{\zeta-z}$$
in the identity chart of $\H.$ 
Here we use
$$\E\,J^N(\zeta)N(z) = -\frac1{\zeta-z} -\frac1{\zeta-\bar z}$$
in $\H.$ 

Formal fields $N^\pm(z)$ with Neumann boundary condition are defined by 
$$N^\pm(z) = \frac1{\sqrt 2} (\Psi^\pm(z) + \Psi^\mp(z^*)).$$
For example, we have 
\begin{equation} \label{eq: NN+ NN-}
\E\, N(z)N^+(z_0) = \log\frac1{z-z_0} + \log\frac1{\bar z-z_0}, \qquad 
\E\, N(z)N^-(z_0) = \log\frac1{z-\bar z_0} + \log\frac1{\bar z-\bar z_0},
\end{equation}
and
$$\E\, N^+(z)N^+(z_0) = \log\frac1{z-z_0}, \qquad \E\, N^+(z)N^-(z_0) = \log\frac1{z-\bar z_0}$$
in $\H.$
As in the Dirichlet case, 2-variant fields $N^\pm(z,z_0):=N^\pm(z)-N^\pm(z_0)$ are well-defined multivalued Fock space fields. 

\subsection{Modifications and OPE exponentials} 
We first define background charge modifications of the chiral bosonic fields with Neumann boundary condition.
Given double background charges 
$\bfs\beta^\pm = \sum_k \beta_k^\pm\cdot q_k$ with the neutrality conditions $(\NC_{b^\pm})$
in $D,$ we define the formal fields $N_{\bfs\beta}^\pm \equiv N_{\bfs\beta^+,\bfs\beta^-}^\pm $ ($\bfs\beta = \bfs\beta^++\bfs\beta_*^-$) by 
$$N_{\bfs\beta}^\pm := N^\pm + n_{\bfs\beta}^\pm,$$
where $n_{\bfs\beta}^+\equiv n_{\bfs\beta^+,\bfs\beta^-}^+$ is a $\PPS(b,0)$ form ($b = b^++b^-$) and $n_{\bfs\beta}^-\equiv n_{\bfs\beta^+,\bfs\beta^-}^-$ is a $\PPS(0,b)$ form such that 
\begin{align} \label{eq: n_beta}
n_{\bfs\beta}^+(z) &= \sum \beta_k^+ \log\frac1{z-q_k} +\sum \beta_k^- \log\frac1{ z-\bar q_k}\\
n_{\bfs\beta}^-(z) &= \sum \beta_k^+ \log\frac1{\bar z-q_k} +\sum \beta_k^- \log\frac1{\bar z-\bar q_k} \nonumber
\end{align}
in the $\H$-uniformization. 
Furthermore, $n_{\bfs\beta}:=n_{\bfs\beta}^++n_{\bfs\beta}^-$ is a $\PPS(b,b)$ form. 
For two divisors $\bfs\tau^\pm$ satisfying the neutrality condition $(\NC_0),$ we define 
$$N_{\bfs\beta}[\bfs\tau]\equiv N_{\bfs\beta}[\bfs\tau^+,\bfs\tau^-]:= \sum_j \big(\tau_j^+N_{\bfs\beta}^+(z_j)+ \tau_j^-N_{\bfs\beta}^-(z_j) \big),$$
where $\bfs\tau = \bfs\tau^+ + \bfs\tau_*^-.$ 
Compare this definition to the definition of $\Phi_{\bfs\beta}[\bfs\tau]$ in the Dirichlet case: 
$$\Phi_{\bfs\beta}[\bfs\tau]:= \sum_j \big(\tau_j^+\Phi_{\bfs\beta}^+(z_j)- \tau_j^-\Phi_{\bfs\beta}^-(z_j) \big),$$
where $\bfs\tau = \bfs\tau^+ + \bfs\tau_*^-$ satisfies the neutrality condition $(\NC_0).$ 
We now explain why the signs of $\tau_j^-$ appear differently. 
The choices of signs are consistent with the representation of $\bfs\tau = \bfs\tau^+ + \bfs\tau_*^-$ when the nodes are on the boundary. 
For $z\in\pa D,$ we have $\Phi(z)=0$ and thus the formal field $\Phi[1\cdot z]$ can be represented either by $\Phi^+(z)$ or by $-\Phi^-(z).$ 
On the other hand, the formal field $N[1\cdot z]$ is represented by $N^+(z) = N^-(z).$ 
The relation $N^+(z) = N^-(z)$ for $z\in\pa D$ is obvious: 
$$\E\,N^+(z)N(z_0) = \log \frac1{(z-z_0)(z-\bar z_0)} = \log \frac1{(\bar z-z_0)(\bar z-\bar z_0)} = \E\,N^-(z)N(z_0)$$
in the $\H$-uniformization. 

A stress tensor $A_{\bfs\beta}\equiv A_{\bfs\beta}^N$ for $N_{\bfs\beta}$ is given by 
$$A_{\bfs\beta} = A + (b \pa - j_{\bfs\beta})J,\qquad j_{\bfs\beta}:= \E\,J_{\bfs\beta},\qquad J_{\bfs\beta} \equiv J_{\bfs\beta}^N = \pa N_{\bfs\beta}.$$
As $\zeta\to z,$ in $\H,$ Ward's OPE for $N_{\bfs\beta}$ holds: 
$$A_{\bfs\beta}(\zeta)N_{\bfs\beta}(z,z_0) = \Big(A(\zeta) + \big(b \pa_\zeta - j_{\bfs\beta}(\zeta)\big)J(\zeta)\Big)N(z,z_0)\\
\sim \frac{J_{\bfs\beta}(z)}{\zeta-z} + \frac{b}{(\zeta-z)^2}.$$
In the symmetric case $\bfs\beta = \overbar{\bfs\beta_*},$ the Virasoro field 
$$T_{\bfs\beta} = -\frac12 J_{\bfs\beta}*J_{\bfs\beta} + b \pa J_{\bfs\beta}$$
gives rise to the Virasoro pair $(T_{\bfs\beta}, \overbar{T_{\bfs\beta}}).$ 
The central charge $c$ in this theory is given by 
$$c = 1 + 12b^2.$$
If the parameter $b$ is related to the SLE parameter $\kappa$ as 
\begin{equation} \label{eq: ab backward SLE kappa} 
b = -a (\kappa/4+1), \qquad a =\pm\sqrt{2/\kappa},
\end{equation}
then 
$$c = 13+12(\kappa/8+2/\kappa) \ge 25.$$

Modified multi-vertex fields or OPE exponentials $\OO_{\bfs\beta}[\bfs\tau]$ are defined by 
$$\OO_{\bfs\beta}[\bfs\tau] = \frac{C_{(-ib)}[-i(\bfs\tau+\bfs\beta)]} {C_{(-ib)}[-i\bfs\beta]} \, \ee^{\odot N[\bfs\tau]}.$$
We denote by $\FF^N_{\bfs\beta^+,\bfs\beta^-}$ the OPE family of $N_{\bfs\beta^+,\bfs\beta^-}.$ 

\begin{thm} \label{Neumann change of beta}
Given two double background charges $\bfs\beta^\pm, \bfs\beta_0^\pm$ with the neutrality conditions $(\NC_{b^\pm}),$ $\FF^N_{\bfs\beta^+,\bfs\beta^-}$ is the image of $\FF^N_{\bfs\beta_0^+,\bfs\beta_0^-}$ under the insertion of $\ee^{\odot N[\bfs\beta^+-\bfs\beta_0^+,\bfs\beta^--\bfs\beta_0^-]}.$
\end{thm}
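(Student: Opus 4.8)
The plan is to mimic exactly the proof of Theorem~\ref{change of beta's} (the Dirichlet spherical version), since the Neumann theory is built from $\Psi_S$ on the Schottky double in an entirely parallel way. First I would establish the base case: the image of $N_{\bfs\beta_0}\equiv N_{\bfs\beta_0^+,\bfs\beta_0^-}$ under the insertion of $\ee^{\odot N[\bfs\beta^+-\bfs\beta_0^+,\bfs\beta^--\bfs\beta_0^-]}$ is $N_{\bfs\beta}$ within correlations, i.e.
$$\E\, N_{\bfs\beta_0}(\zeta,z)\, \ee^{\odot N[\bfs\beta^+-\bfs\beta_0^+,\bfs\beta^--\bfs\beta_0^-]} = \E\, N_{\bfs\beta}(\zeta,z).$$
By Wick's calculus the left side is $n_{\bfs\beta_0}(\zeta)-n_{\bfs\beta_0}(z)+\E\,N(\zeta,z)N[\bfs\beta^+-\bfs\beta_0^+,\bfs\beta^--\bfs\beta_0^-]$, and the key computation is that
$$n_{\bfs\beta_0}(\zeta)-n_{\bfs\beta_0}(z)+\E\,N(\zeta,z)N[\bfs\beta^+-\bfs\beta_0^+,\bfs\beta^--\bfs\beta_0^-] = n_{\bfs\beta}(\zeta)-n_{\bfs\beta}(z);$$
this follows from the explicit formulas \eqref{eq: n_beta} for $n_{\bfs\beta}^\pm$ together with the Neumann two-point functions \eqref{eq: NN+ NN-}, exactly as in \eqref{eq: Psi under insertion} but with the Neumann sign conventions. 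Here the neutrality conditions $(\NC_{b^\pm})$ on $\bfs\beta^\pm$ and $\bfs\beta_0^\pm$ guarantee that $\bfs\beta^+-\bfs\beta_0^+$ and $\bfs\beta^--\bfs\beta_0^-$ each satisfy $(\NC_0)$, so $N[\bfs\beta^+-\bfs\beta_0^+,\bfs\beta^--\bfs\beta_0^-]$ is a well-defined Fock space field and the Wick exponential makes sense.

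Next I would define the correspondence $\XX\mapsto\wh\XX$ by $N_{\bfs\beta_0}\mapsto N_{\bfs\beta}$ and the rules $\pa\XX\mapsto\pa\wh\XX$, $\bp\XX\mapsto\bp\wh\XX$, $\alpha\XX+\beta\YY\mapsto\alpha\wh\XX+\beta\wh\YY$, $\XX\odot\YY\mapsto\wh\XX\odot\wh\YY$, and set $\wh\E\,\XX=\E\,\XX\,\ee^{\odot N[\bfs\beta^+-\bfs\beta_0^+,\bfs\beta^--\bfs\beta_0^-]}$. It then suffices to show $\wh\E\,\XX=\E\,\wh\XX$ for $\XX=X_1\odot\cdots\odot X_n$ with each $X_j$ a suitable derivative of the two-variant field $N(\zeta_j,z_j)$. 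Differentiating the base-case identity and using Wick's rule for products of Wick powers of jointly Gaussian fields, one gets that $\E\,X_1\odot\cdots\odot X_n\, N^{\odot k}[\bfs\beta^+-\bfs\beta_0^+,\bfs\beta^--\bfs\beta_0^-]$ vanishes unless $k=n$ and equals $n!\prod_j\E\,X_j N[\bfs\beta^+-\bfs\beta_0^+,\bfs\beta^--\bfs\beta_0^-]$ when $k=n$; summing the exponential series then yields $\wh\E\,\XX=\prod_j\E\,\wh X_j=\E\,\wh\XX$. This is verbatim the second half of the proof of Theorem~\ref{change of beta's}. Finally, the OPE exponentials $\OO_{\bfs\beta}[\bfs\tau]$ are handled by the algebra of OPE exponentials (the Neumann analogue of \eqref{eq: O hat}): the correlation factor $C_{(-ib)}[-i(\bfs\tau+\bfs\beta)]/C_{(-ib)}[-i\bfs\beta]$ is the $\bfs\tau$-correlation of $N_{\bfs\beta}$, which transforms correctly under the insertion by the computation above applied to Wick exponentials, while the Wick part $\ee^{\odot N[\bfs\tau]}$ is unchanged; hence $\wh{\OO_{\bfs\beta_0}[\bfs\tau]}=\OO_{\bfs\beta}[\bfs\tau]$, and the extension from generators to the whole OPE family under $*$ follows since the correspondence respects OPE products (as it respects $\odot$, $\pa$, $\bp$ and linear combinations, and OPE products are extracted from these).

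The one step I expect to require genuine care rather than a pure transcription is the bookkeeping of signs and conjugates in the Neumann setting: the fields $N^\pm$ enter $N[\bfs\tau^+,\bfs\tau^-]$ with both signs $+$ (unlike $\Phi[\bfs\tau^+,\bfs\tau^-]$, where $\tau_j^-$ enters with a minus), the modification $n_{\bfs\beta}^-$ has a different logarithmic structure than $\psi_{\bfs\beta}^-$, and one must check that the Neumann two-point functions \eqref{eq: NN+ NN-} conspire with these conventions so that the base-case identity closes exactly. Once that identity is verified the rest is the same Wick-calculus machine as on the sphere, so I would spell out the sign computation carefully and then simply cite the structure of the proof of Theorem~\ref{change of beta's} for the inductive step.
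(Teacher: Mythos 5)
Your proposal follows essentially the same route as the paper's proof: establish the one-field identity $\E\,N_{\bfs\beta_0}(\zeta,z)\,\ee^{\odot N[\bfs\beta^+-\bfs\beta_0^+,\bfs\beta^--\bfs\beta_0^-]}=\E\,N_{\bfs\beta}(\zeta,z)$ by Wick's calculus together with the explicit formulas \eqref{eq: NN+ NN-} and \eqref{eq: n_beta} in the $\H$-uniformization, and then run the same Wick-calculus induction on tensor products and derivatives as in Theorem~\ref{change of beta's}. The sign bookkeeping you flag is exactly the content of the paper's verification, so your argument is correct and matches the paper's.
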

\begin{proof} By Wick's calculus, we have 
$$\E\,N_{\bfs\beta_0^+,\bfs\beta_0^-}(z,z') \, \ee^{\odot N[\bfs\beta^+-\bfs\beta_0^+,\bfs\beta^--\bfs\beta_0^-]} = n_{\bfs\beta_0^+,\bfs\beta_0^-}(z,z') + \E\,N(z,z')N[\bfs\beta^+-\bfs\beta_0^+,\bfs\beta^--\bfs\beta_0^-].$$
In the $\H$-uniformization, it follows from \eqref{eq: NN+ NN-} and \eqref{eq: n_beta} that 
$$\E\,N(z,z')N[\bfs\beta^+-\bfs\beta_0^+,\bfs\beta^--\bfs\beta_0^-] = n_{\bfs\beta^+,\bfs\beta^-}(z,z')- n_{\bfs\beta_0^+,\bfs\beta_0^-}(z,z').$$
Since both sides are scalars, the above identity holds for any chart. 
Thus we have 
$$\E\,N_{\bfs\beta_0^+,\bfs\beta_0^-}(z,z') \, \ee^{\odot N[\bfs\beta^+-\bfs\beta_0^+,\bfs\beta^--\bfs\beta_0^-]} = \E\,N_{\bfs\beta^+,\bfs\beta^-}(z,z').$$
As in the Dirichlet case, we use Wick's calculus again to derive 
 $$\E\,N_{\bfs\beta_0^+,\bfs\beta_0^-}(z_1,z_1') \odot \cdots \odot N_{\bfs\beta_0^+,\bfs\beta_0^-}(z_n,z_n')\, \ee^{\odot N[\bfs\beta^+-\bfs\beta_0^+,\bfs\beta^--\bfs\beta_0^-]} = \E\,N_{\bfs\beta^+,\bfs\beta^-}(z_1,z_1')\odot \cdots \odot N_{\bfs\beta^+,\bfs\beta^-}(z_n,z_n').$$ 
Differentiating with respect to $z_j,z_j'$, we complete the proof. 
\end{proof}

\begin{eg*}
Given a marked boundary point $q\in\pa D,$ we consider a conformal map $w: (D,q) \to (\H,\infty).$
For $\bfs\beta^+ = 2b\cdot q$ and $\bfs\beta^- = \bfs0,$
$$N_{\bfs\beta}(z,z_0) = N(z,z_0) + 2b (\log |w'(z)| - \log |w'(z_0)| ).$$
The function $(z,z_0)\mapsto \log |w'(z)| - \log |w'(z_0)|$ does not depend on the choice of $w.$
\end{eg*}

\subsection{One-leg operators}
We consider a simply-connected domain $D$ with two marked points $p\in \pa D,$ $q\in \pa D$ in the chordal case, and $q\in D$ in the radial case. 
For a symmetric background charge $\bfs\beta$ on $S = D^{\mathrm{double}}$ with $\bfs\beta(p) = a$ and the neutrality condition $(\NC_b),$ the one-leg operator $\leg_p$ in the chordal case (the radial case, respectively) is defined by 
$$\leg_p \equiv \leg(p) = \OO_{\check{\bfs\beta\,}}[\bfs\tau], \qquad \check{\bfs\beta\,} = \bfs\beta - \bfs\tau$$
where $\bfs\tau= a\cdot p - a\cdot q$ in the chordal case (and $\bfs\tau= a\cdot p - \frac12a\cdot q- \frac12a\cdot q^*$ in the radial case, respectively). 
The insertion of $\leg_p /\E\,\leg_p$ is an operator 
$$\XX \mapsto \wh\XX$$
on Fock space functionals/fields by the rules \eqref{eq: BCrules}
and the formula in the chordal case
$$\wh N_{\check{\bfs\beta\,}}(z,z_0)  = N_{\check{\bfs\beta\,}}(z,z_0) -2a \big(\log|w(z)|-\log|w(z_0)| \big)\big)$$
where $w:(D,p,q)\to(\H,0,\infty)$ is a conformal map.
In the radial case, the corresponding formula is given by
 $$\wh N_{\check{\bfs\beta\,}}(z,z_0) = N_{\check{\bfs\beta\,}}(z,z_0) -2a \big(\log|1-w(z)|-\log|1-w(z_0)| \big)\big)+a \big(\log|w(z)|-\log|w(z_0)| \big)\big)$$
where $w:(D,p,q)\to(\D,1,0)$ is a conformal map.

We denote 
$$\wh\E\, \XX := \frac{\E\,\leg_p \XX}{\E\,\leg_p}.$$
By Theorem~\ref{Neumann change of beta}, we have 
$$\wh\E\,\XX = \E\,\wh\XX $$
for any string $\XX$ of fields in $\FF^N_{\check{\bfs\beta\,}}.$

We now show the level two degeneracy equations for $\leg$ if the parameters $a$ and $b$ satisfy $2a(a+b)= -1.$ 
We remark that $a$ and $b$ in \eqref{eq: ab backward SLE kappa} have such a relation.

\begin{prop} \label{level2Neumann}
Provided that $2a(a+b)=-1,$ we have 
\begin{equation} \label{eq: level2Neumann}
T_{\check{\bfs\beta\,}}*\leg = -\frac1{2a^2}\pa^2 \leg.
\end{equation}
\end{prop}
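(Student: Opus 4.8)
The plan is to mirror the proof of the level-two degeneracy in the Dirichlet theory (Proposition~\ref{degeneracy2}, i.e.\ Proposition~11.2 in \cite{KM13}, together with its formal counterpart \eqref{eq: level2}), keeping careful track of the sign changes produced by the Neumann boundary condition. First I would record the mode algebra of $\FF^N_{\check{\bfs\beta\,}}$. Writing $J_n,L_n$ for the current and Virasoro generators of $J_{\check{\bfs\beta\,}}=\pa N_{\check{\bfs\beta\,}}$ and $T_{\check{\bfs\beta\,}}=-\tfrac12 J_{\check{\bfs\beta\,}}*J_{\check{\bfs\beta\,}}+b\pa J_{\check{\bfs\beta\,}}$, the two-point functions \eqref{eq: NN+ NN-} give
$$[J_m,J_n]=-n\,\delta_{m+n,0},\qquad [L_m,J_n]=-nJ_{m+n}+bm(m+1)\delta_{m+n,0},\qquad [L_m,L_n]=(m-n)L_{m+n}+\tfrac{c}{12}m(m^2-1)\delta_{m+n,0},$$
with $c=1+12b^2$; these are the relations of the Null vectors subsection with the central term of $[J,J]$ flipped in sign and $ib$ replaced by $b$ (which is also the source of the $+b\pa J$ sign in $T_{\check{\bfs\beta\,}}$). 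Next I would verify, exactly as in the Dirichlet case, that $\leg=\OO_{\check{\bfs\beta\,}}[\bfs\tau]$ is a Virasoro primary field and is \emph{current primary}: $J_{\ge1}\leg=J_{\ge1}\overbar{\leg}=0$ and $J_0\leg=-a\,\leg$, the charge being read off from the singular part of $J_{\check{\bfs\beta\,}}(\zeta)\leg(z)$ near the node of $\leg$ (the factor $a$ coming from the exponent $\ee^{\odot N[\bfs\tau]}$ and the residue $-1$ from $\E\,J^N(\zeta)N^+(z)$, with $j_{\check{\bfs\beta\,}}$ regular there since the growth point lies outside $\supp\,\check{\bfs\beta\,}$).

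I would then run the Sugawara-type computation. Extracting the coefficient of $(\zeta-z)^0$ in $T_{\check{\bfs\beta\,}}(\zeta)\leg(z)$ and using $J_{\ge1}\leg=0$ together with $J_0\leg=-a\,\leg$, only finitely many normal-ordered terms survive, producing identities of the shape
$$L_{-1}\leg=c_1\,J_{-1}\leg,\qquad L_{-2}\leg=c_2\,J_{-1}^2\leg+c_3\,J_{-2}\leg,$$
where $c_1,c_2,c_3$ are explicit polynomials in $a,b$ coming from the mode expansion of $-\tfrac12 J*J+b\pa J$ (the same computation as for Proposition~11.2 in \cite{KM13}, but with the flipped $[J,J]$). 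Since $L_{-1}$ acts as differentiation, the first identity gives $J_{-1}\leg=c_1^{-1}\pa\leg$; applying $L_{-1}$ to this and using $[L_{-1},J_{-1}]=J_{-2}$ then expresses both $J_{-2}\leg$ and $J_{-1}^2\leg$ in terms of $\pa^2\leg$ and $\pa\leg$. Substituting into the second identity collapses $L_{-2}\leg$ to a multiple of $\pa^2\leg$, and the hypothesis $2a(a+b)=-1$ is precisely what makes this multiple equal to $-\tfrac1{2a^2}$ — equivalently, what forces the $J_{-2}\leg$-component of $(L_{-2}+\tfrac1{2a^2}L_{-1}^2)\leg$ to vanish, i.e.\ this vector to be singular. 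Rewriting $\pa^2\leg=L_{-1}^2\leg$ and $T_{\check{\bfs\beta\,}}*\leg=L_{-2}\leg$ then yields \eqref{eq: level2Neumann}.

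The main obstacle I anticipate is bookkeeping rather than conceptual: propagating the Neumann sign conventions of Section~\ref{sec: Neumann CFT} — the $+$ in $N^\pm=\tfrac1{\sqrt2}(\Psi^\pm+\Psi^\mp(\cdot^*))$, the real (rather than imaginary) order of $n^\pm_{\check{\bfs\beta\,}}$, the sign $[J_m,J_n]=-n\delta_{m+n,0}$, and $c=1+12b^2$ — consistently through the mode computation, since these are exactly the places where the final coefficient acquires its minus sign. As a cross-check I would also verify \eqref{eq: level2Neumann} directly from the explicit formal representation $\OO^{(a)}=(\text{non-random }w\text{-factor})\,\ee^{\odot aN^+}$: compute the operator product expansion $T_{\check{\bfs\beta\,}}(\zeta)\OO^{(a)}(z)$ by Wick's calculus, read off its regular part as $\zeta\to z$, and confirm that it equals $-\tfrac1{2a^2}\pa^2\OO^{(a)}(z)$ precisely when $2a(a+b)=-1$; the same computation simultaneously gives the parallel degeneracy for $\OO^{(-a-b)}$, in analogy with \eqref{eq: level2}.
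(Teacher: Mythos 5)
Your overall strategy coincides with the paper's proof, which likewise has two layers: a mode-algebra argument reducing to the Dirichlet degeneracy (Proposition~\ref{degeneracy2}) and a direct Wick-calculus verification. Your ``cross-check'' is in fact the paper's alternate proof, carried out there for $\check{\bfs\beta\,}=2b\cdot q$: one computes $T_{\check{\bfs\beta\,}}*\leg=T\odot\leg+(a+b)\,\pa J\odot\leg$ and $-\tfrac1{2a^2}\pa^2\leg=T\odot\leg-\tfrac1{2a}\,\pa J\odot\leg$, so the two agree exactly when $2a(a+b)=-1$.

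However, your primary route as stated contains a sign error that would derail the computation. The central term of $[J_m,J_n]$ does \emph{not} flip in the Neumann theory: from $\E\,N^+(\zeta)N^+(z)=\log\frac1{\zeta-z}$ one gets $\E\,\pa N^+(\zeta)\,\pa N^+(z)=-(\zeta-z)^{-2}$, literally the same two-point function as $\E\,\pa\Phi^+(\zeta)\,\pa\Phi^+(z)$ in the Dirichlet theory, so the modes of $J_{\check{\bfs\beta\,}}=\pa N_{\check{\bfs\beta\,}}$ satisfy the unchanged relation $[J_m,J_n]=n\,\delta_{m+n,0}$. What changes is only the $J_0$-eigenvalue of the vertex operator (from $-ia$ to $-a$, because the exponent is $aN^+$ rather than $ia\Phi^+$) and the coefficient of $\pa J$ in $T_{\check{\bfs\beta\,}}$ (from $ib$ to $b$); equivalently, the Neumann theory is the Dirichlet theory with $a\mapsto\tilde a=-ia$, $b\mapsto\tilde b=-ib$, under which $[J_m,J_n]=n\delta_{m+n,0}$ is invariant and Proposition~\ref{degeneracy2} applies verbatim, with $2\tilde a(\tilde a+\tilde b)=-2a(a+b)=1$ and $\eta=-1/(2\tilde a^2)=1/(2a^2)$ --- which is exactly how the paper phrases its first argument. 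If instead you carry your flipped relation consistently through the Sugawara computation, you get $[L_{-1},J_{-1}]=-J_{-2}$, hence $L_{-1}^2\leg=a^2J_{-1}^2\leg-aJ_{-2}\leg$, and the $J_{-2}\leg$-component of $(L_{-2}+\tfrac1{2a^2}L_{-1}^2)\leg$ vanishes only when $2a(a+b)=+1$, the wrong condition. (Your three displayed commutators are also mutually inconsistent: with $[J_m,J_n]=-n\delta_{m+n,0}$, the Sugawara form of $L_m$ forces $[L_m,J_n]=+nJ_{m+n}+\cdots$ rather than $-nJ_{m+n}+\cdots$.) With this sign corrected, your argument goes through and reproduces the paper's proof.
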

\begin{proof}
Let $\tilde a = -ia, \tilde b= -ib.$ 
In terms of the action of Virasoro generators and current generators, the one-leg operators are Virasoro primary holomorphic fields of conformal dimension $\lambda = \frac12 \tilde a^2 - \tilde a \tilde b$ and current primary with charge $q = \tilde a.$ 
This implies the level two degeneracy equation \eqref{eq: level2Neumann} for the one-leg operator provided that $2\tilde a(\tilde a+\tilde b)= 1.$ 

We here present an alternate but direct proof using Wick's calculus. 
For simplicity, we consider the standard chordal case $\check{\bfs\beta\,} =2b\cdot q$ only. 
It is left to the reader as an exercise to prove \eqref{eq: level2Neumann} for general background charges. 
Since the difference is a differential, it suffices to show \eqref{eq: level2Neumann} in $\H.$
In the $\H$-uniformization we have 
$$N_{\check{\bfs\beta\,}} = N, \quad \pa N_{\check{\bfs\beta\,}} = J_{\check{\bfs\beta\,}} = J, \quad T_{\check{\bfs\beta\,}} = T + b\pa J, \quad T = -\frac12 J\odot J, \quad \leg(z) = \ee^{\odot aN^+(z,q)}.$$
Let us first compute $T*\leg:$ 
$$T*\leg = T\odot \leg +a \pa J\odot \leg.$$
Indeed, as $\zeta\to z,$ we have 
$$T(\zeta)\leg(z) = -\frac12 \sum_{n=0}^\infty \frac {a^n}{n!} J^{\odot2}(\zeta) (N^+)^{\odot n}(z,q) = T\odot \leg(z) + \mathrm{I} + \mathrm{II} + o(1),$$
where terms $\mathrm{I}$ and $\mathrm{II}$ come from 1 and 2 contractions, respectively. 
It follows from 
$\E\, J(\zeta)N^+(z,q) = -1/(\zeta-z)$ that 
$$\mathrm{I} = \frac{a}{\zeta-z}J(\zeta)\odot N^+(z,q).$$
Thus its contribution to $T*\leg$ is $a \pa J\odot \leg.$ 
On the other hand, we have
$$\mathrm{II} = -\frac12 a^2 \big(\E\,J(\zeta)N^+(z,q)\big)^2\leg(z) = -\frac12 \frac{a^2}{(\zeta-z)^2}\leg(z).$$
It has no contribution to $T*\leg.$
To compute $T_{\check{\bfs\beta\,}}*\leg,$ we need to compute $\pa J*\leg.$ 
We have 
$$\pa J(\zeta)\leg(z) = \pa J\odot \leg(z) + o(1).$$
This implies that $\pa J*\leg = \pa J\odot \leg$ and 
$$T_{\check{\bfs\beta\,}}*\leg = T\odot \leg +(a+b) \pa J\odot \leg.$$
Differentiating $\leg(z) = \ee^{\odot aN^+(z,q)},$ we have
$$-\frac1{2a^2} \pa^2 \leg = -\frac1{2a^2}\pa (a J\odot \leg) = -\frac1{2a^2} (a \pa J\odot \leg + a^2 J\odot J\odot \leg) = -\frac1{2a} \pa J \odot \leg + T\odot\leg.$$
Now \eqref{eq: level2Neumann} follows provided that $2a(a+b)=-1.$
\end{proof}

\subsection{Ward's equations and BPZ-Cardy equations}
We define the puncture operators by $\PP_{\bfs\beta} := C_{(-ib)}[-i\bfs\beta].$ 
As in the Dirichlet case, Ward's equations hold for the extended OPE family $\FF^N_{\bfs\beta}$ of $N_{\bfs\beta}.$
Since its proof is similar to the Dirichlet case, we leave it to the reader as an exercise. 

\begin{thm}[Ward's equations] \label{Ward for Neumann}
Let $Y, X_1,\cdots, X_n\in \FF^N_{\bfs\beta}$ and let $X$ be the tensor product of $X_j$'s.
Then
$$\E\, T_{\bfs\beta}(z)\,X=\E\,\PP_{\bfs\beta}^{-1} \LL^+_{k_z} \PP_{\bfs\beta} X+\E\,\PP_{\bfs\beta}^{-1} \LL^-_{k_{\bar z}}\PP_{\bfs\beta} X,$$
where all fields are evaluated in the identity chart of $\H$ and 
$$2z^2\E\, T_{\bfs\beta}(z)\,X=\E\,\PP_{\bfs\beta}^{-1} \LL^+_{v_z} \PP_{\bfs\beta} X+\E\,\PP_{\bfs\beta}^{-1} \LL^-_{v_{z^*}}\PP_{\bfs\beta} X,$$
where all fields are evaluated in the identity chart of $\D.$ 
\end{thm}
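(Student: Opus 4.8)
The plan is to reduce the Neumann Ward's equations (Theorem~\ref{Ward for Neumann}) to the Dirichlet case (Theorem~\ref{main: Ward's equations}), since the two theories are related via the Schottky double in essentially parallel ways. First I would observe that in the Neumann case the basic data change only in the identifications $J^N_\H(z) = \tfrac1{\sqrt2}(J_{\wh\C}(z) + \overline{J_{\wh\C}(\bar z)})$ and $A^N_{\bfs\beta} = A + (b\pa - j_{\bfs\beta})J$, where now $b = b^+ + b^-$ (with an overall sign convention absorbed into the $C_{(-ib)}[-i\,\cdot\,]$ definition of $\PP_{\bfs\beta}$). The key point is that $N_{\bfs\beta}$ still admits the Virasoro pair $(T_{\bfs\beta},\overline{T_{\bfs\beta}})$ with $T_{\bfs\beta} = -\tfrac12 J_{\bfs\beta}*J_{\bfs\beta} + b\pa J_{\bfs\beta}$, and Ward's OPE $A^N_{\bfs\beta}(\zeta)N_{\bfs\beta}(z,z_0) \sim \frac{J_{\bfs\beta}(z)}{\zeta - z} + \frac{b}{(\zeta - z)^2}$ holds exactly as stated in the excerpt. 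So the residue-form Ward identity $\LL_v^+ X = (A^N_{\bfs\beta})^+_v X$ at the nodes follows verbatim from the argument in Lemma~\ref{*F} (it is purely OPE-algebraic and never used the Dirichlet boundary behavior), and the analogue of Lemma~\ref{Ward@q} at the punctures $q_k$ follows from $\Res_{q_k} j_{\bfs\beta} = b_k$-type residue computation against \eqref{eq: n_beta}, using Theorem~\ref{Neumann change of beta} in place of Theorem~\ref{change of beta} to justify the identity $b_k\,\E\,J(q_k)\XX_{\bfs\beta} = \pa_{q_k}\E\,\XX_{\bfs\beta}$.

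The second step is to establish the simply-connected Ward identity $\E\,\LL_v X_{\bfs\beta} = \E\,W^N_{\bfs\beta}(v) X_{\bfs\beta}$ in $\H$ and $\D$, exactly as in the proof of Theorem~\ref{main: Ward identities in D}: apply Stokes' theorem to $\dd(v\,\E\,A^N_{\bfs\beta}\XX_{\bfs\beta}\,\dd z)$ over $D$ with small disks removed around the nodes and poles, and use that, in the symmetric case $\bfs\beta = \overline{\bfs\beta_*}$, $A^N_{\bfs\beta}$ is real on the boundary in standard boundary charts (this is where the Neumann, rather than Dirichlet, reflection enters — but the reality statement is the same). Then the representation $A^N_{\bfs\beta}(\xi) = W^{N,+}_{\bfs\beta}(k_\xi) + \overline{W^{N,+}_{\bfs\beta}(k_{\bar\xi})}$ is an instance of Proposition~\ref{represent A} (a holomorphic quadratic differential continuous and real on the boundary), so combining it with the simply-connected Ward identity gives Ward's equation in $\H$: $\E\,T_{\bfs\beta}(\xi)\XX_{\bfs\beta} = \E\,T_{\bfs\beta}(\xi)\,\E\,\XX_{\bfs\beta} + \E\,(\LL^+_{k_\xi} + \LL^-_{k_{\bar\xi}})\XX_{\bfs\beta}$, and the radial version follows by the reflected-vector-field trick $v^*(z) = -z^2\overline{v(1/\bar z)}$ as in Corollary~\ref{Ward equationD}.

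The final step is the analogue of Lemma~\ref{ETZ}: compute $\E\,T_{\bfs\beta}(z)$ explicitly from $j_{\bfs\beta}$ via $\E\,T_{\bfs\beta} = -\tfrac12 j_{\bfs\beta}^2 + b\,j'_{\bfs\beta}$ (note the $+b$ instead of $+ib$, matching the $\PPS(b,0)$-order of $n^+_{\bfs\beta}$), and match it term-by-term against $\PP_{\bfs\beta}^{-1}\LL^+_{k_z}\PP_{\bfs\beta} + \PP_{\bfs\beta}^{-1}\LL^-_{k_{\bar z}}\PP_{\bfs\beta}$ in $(\H,\infty)$ and against $2z^2(\cdots)$ in $(\D,0)$, where now $\PP_{\bfs\beta} = C_{(-ib)}[-i\bfs\beta]$ has conformal dimension $\lambda_{-ib}(-i\beta_k) = -\tfrac{\beta_k^2}2 - \beta_k b$ at $q_k$. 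The arithmetic is the same rational-function identity as in the Dirichlet proof, with $b \mapsto ib$ substitutions; this is routine. The main obstacle, as the excerpt itself flags by leaving it "as an exercise," is bookkeeping: one must be scrupulous about (i) the sign and factor-of-$i$ conventions in passing from $\lambda_b$ to $\lambda_{-ib}$ and from $ib\pa J$ to $b\pa J$, and (ii) the fact that the Neumann chiral fields satisfy $N^+(z) = N^-(z)$ on $\pa D$ (whereas $\Phi^+ = -\Phi^-$ there), which changes how divisors $\bfs\tau = \bfs\tau^+ + \bfs\tau^-_*$ and background charges interact in the puncture-operator computations; getting \eqref{eq: n_beta} and the residues of $j_{\bfs\beta}$ right is the one place where a genuine (if small) adaptation — not a pure transcription of the Dirichlet proof — is needed. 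Everything else is, as claimed, parallel to the Dirichlet development, so the proof reduces to citing Theorems~\ref{Ward's identity}, \ref{main: Ward identities in D}, \ref{main: Ward's equations}, Proposition~\ref{represent A}, and Lemma~\ref{ETZ} with the noted substitutions.
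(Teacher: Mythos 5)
Your plan is correct and is exactly the adaptation the paper intends: the authors leave this theorem as an exercise "similar to the Dirichlet case," and your reduction via the Neumann analogues of Lemma~\ref{*F}, Lemma~\ref{Ward@q} (using Theorem~\ref{Neumann change of beta} and $\Res_{q_k} j_{\bfs\beta}=-\beta_k$, so the puncture identity reads $\beta_k\,\E\,J(q_k)\XX_{\bfs\beta}=\pa_{q_k}\E\,\XX_{\bfs\beta}$), Theorem~\ref{main: Ward identities in D}, Proposition~\ref{represent A}, and Lemma~\ref{ETZ} is the intended argument. One small correction to your final bookkeeping: $\lambda_{-ib}(-i\beta_k)=-\tfrac{\beta_k^2}{2}+\beta_k b$ (not $-\tfrac{\beta_k^2}{2}-\beta_k b$), which is precisely what matches the double-pole coefficient of $\E\,T_{\bfs\beta}=-\tfrac12 j_{\bfs\beta}^2+b\,j_{\bfs\beta}'$ in the term-by-term comparison.
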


For a symmetric background charge $\bfs\beta$ with the neutrality condition $(\NC_b)$ and a specific charge $a$ at a marked boundary point $p\in\pa D,$ we define the backward SLE partition function $Z_{\bfs\beta} $ by 
$$Z_{\bfs\beta} := |\PP_{\bfs\beta}|=|C_{(-ib)}[-i\bfs\beta]|$$ 
and the effective one-leg operators by $\leg_p^\eff:= \PP_{\bfs\beta}\,\leg_p.$ 
Denote 
\begin{equation} \label{eq: R_xi Neumann}
R_\xi \equiv \wh\E_\xi\,\XX : = \frac{\E\,\leg_\xi \XX}{\E\,\leg_\xi}= \frac{\E\,\leg_\xi^\eff \XX}{\E\,\leg_\xi^\eff}.
\end{equation}

Let $\xi\in\pa D$ and $\bfs\beta_\xi = \bfs\beta - a\cdot p + a\cdot \xi.$
Let $X$ be the tensor product $X= X_1(z_1)\cdots X_n(z_n)$ of fields $X_j$ in $\FF_{\check{\bfs\beta\,}}$ $(z_j\in D).$
Combining Proposition~\ref{level2Neumann} with Theorem~\ref{Ward for Neumann}, we obtain the BPZ equations: 
\begin{equation} \label{eq: BPZ Neumann} 
-\frac1{2a^2} \pa_\xi^2 \E\leg_\xi^\eff X = \E\, \check{\LL}_{k_\xi} \leg_\xi^\eff X, \qquad k_\xi(z):=\frac1{\xi-z}
\end{equation}
in the $\H$-uniformization.
In particular, the partition function $Z_\xi\equiv Z_{\bfs\beta_\xi} := \big|\E\,\leg^\eff_\xi \big|= \big|C_{(-ib)}[-i\bfs\beta_\xi]\big|$ satisfies the null vector equation
\begin{equation} \label{eq: Neumann NV}
-\frac1{2a^2} \pa_\xi^2 Z_\xi = \check{\LL}_{k_\xi} Z_\xi
\end{equation}
in the $\H$-uniformization.
Here $\pa_\xi = \pa + \bar\pa$ is the operator of differentiation with respect to the real variable $\xi$ and and $\check{\LL}_{k_\xi}$ is taken over the finite notes of $X$ and $\supp\,\bfs\beta_\xi\setminus\{\xi\}.$
We obtain the following form of BPZ-Cardy equations in the chordal case. 
The radial case can be derived in a similar way. 

\begin{thm}[BPZ-Cardy equations] \label{BPZ Neumann}
Suppose that the parameters $a$ and $b$ satisfy 
$$2a(a+b) = -1.$$
If $X = X_1(z_1)\cdots X_n(z_n)$ with $X_j\in\FF^N_{\check{\bfs\beta\,}},$ then we have 
\begin{equation} \label{eq: BPZ-Cardy Neumann chordal}
-\frac1{2a^2}\big(\pa_\xi^2\wh\E_{\xi}\,X + 2(\pa_\xi \log Z_\xi) \pa_\xi\wh\E_{\xi}\,X \big) = \check \LL_{k_\xi}\wh\E_{\xi}\,X
\end{equation}
in the identity chart of $\H.$ 
In the $(\D,0)$-uniformization we have
\begin{equation} \label{eq: BPZ-Cardy Neumann radial}
\kappa\Big(\frac12\pa_\theta^2 + (\pa_\theta \log Z_\zeta\big)\pa_\theta\Big)\wh\E_\zeta X = \check{\LL}_{v_\zeta}\wh\E_\zeta X, \qquad \zeta = \ee^{i\theta}.
\end{equation}
\end{thm}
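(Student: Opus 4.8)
The plan is to run, mutatis mutandis, the argument that produced the Dirichlet BPZ–Cardy equations (Propositions~\ref{prop: BPZ-Cardy4chordalSLE(kappa,rho)} and~\ref{Cardy4spin}), with the Dirichlet null vector equation replaced by its Neumann counterpart \eqref{eq: Neumann NV} and the relevant BPZ equation replaced by \eqref{eq: BPZ Neumann}. In the chordal case I would start from $\E\,\leg_\xi^\eff X = Z_\xi\,\wh\E_\xi X$, valid up to a constant phase in the $\H$-uniformization (harmless, since we only differentiate in the real variable $\xi$), and apply $\pa_\xi^2 = (\pa+\bp)^2$ by the product rule:
$$\pa_\xi^2\big(Z_\xi\,\wh\E_\xi X\big) = (\pa_\xi^2 Z_\xi)\,\wh\E_\xi X + 2(\pa_\xi Z_\xi)\,\pa_\xi\wh\E_\xi X + Z_\xi\,\pa_\xi^2\wh\E_\xi X.$$
Dividing by $Z_\xi$ and using \eqref{eq: BPZ Neumann}, the left side becomes $-2a^2\,\E\,\check\LL_{k_\xi}\leg_\xi^\eff X / Z_\xi$, which by Leibniz's rule for Lie derivatives equals $-2a^2\big(\check\LL_{k_\xi}\wh\E_\xi X + (\check\LL_{k_\xi}Z_\xi/Z_\xi)\,\wh\E_\xi X\big)$. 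Equating the two expressions and collecting the coefficient of $\wh\E_\xi X$ gives exactly \eqref{eq: BPZ-Cardy Neumann chordal}, \emph{provided} that coefficient vanishes, i.e. $-\frac1{2a^2}\pa_\xi^2 Z_\xi = \check\LL_{k_\xi}Z_\xi$ — and that is precisely the null vector equation \eqref{eq: Neumann NV}, obtained from \eqref{eq: BPZ Neumann} by taking the empty string (so $\E\,\leg_\xi^\eff$ is, up to phase, $Z_\xi$).

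For the radial case I would first convert Ward's equations in $\D$ (Theorem~\ref{Ward for Neumann}) together with the level two degeneracy \eqref{eq: level2Neumann} for $\leg$ into a BPZ equation in the $(\D,0)$-uniformization, as in Subsection~\ref{ss: BPZ in D}: applying $L(z) = 2z^2 L_{-2}(z) + 3zL_{-1}(z) + L_0(z)$ to $(\leg_\zeta^s)^\eff X$ and using that $\leg$ is Virasoro primary of dimension $h = \lambda_b(a)$ with $L_{-2}\leg = -\frac1{2a^2}\pa^2\leg$ in all charts, one obtains a second-order differential operator in $\theta$ (with $\zeta = \ee^{i\theta}$) acting on $\E\,(\leg_\zeta^s)^\eff X$. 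Then the same differentiation-plus-Leibniz manipulation as in the proof of Proposition~\ref{Cardy4spin}, using the radial null vector equation for $C_\zeta = C_{(-ib)}[-i\bfs\beta_\zeta]$ to kill the zeroth-order term and rewriting $\zeta\,\pa_\zeta C_\zeta/C_\zeta$ via a reference charge $\bfs\beta_\zeta^0$ as $\pa_\theta\log Z_\zeta$ (up to the $-h$ contributed by $C_\zeta^0$), yields \eqref{eq: BPZ-Cardy Neumann radial}.

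The main obstacle I anticipate is sign bookkeeping: the relation $2a(a+b) = -1$ (versus $=1$ in the Dirichlet theory) flips the sign in front of $\frac1{2a^2}\pa^2$ in the degeneracy equation, and the central charge $c = 1 + 12 b^2$ changes sign in several intermediate coefficients, so one must check that the coefficient one wants to vanish is exactly the one the null vector equation annihilates — in particular that $Z_\xi = |C_{(-ib)}[-i\bfs\beta_\xi]|$, being an absolute value, behaves correctly under $\pa_\xi = \pa+\bp$ on the real locus and that the phase ambiguity in $\E\,\leg_\xi^\eff$ genuinely drops out. A secondary point is that the radial reduction needs $L_{-2}\leg = -\frac1{2a^2}\pa^2\leg$ in \emph{every} chart; this is already contained in Proposition~\ref{level2Neumann} (the difference of the two sides being a differential that vanishes in $\H$), so I would simply cite it.
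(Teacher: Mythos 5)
Your proposal is correct and follows essentially the route the paper intends: the Neumann BPZ--Cardy equations are obtained exactly as in the Dirichlet proofs of Proposition~\ref{prop: BPZ-Cardy4chordalSLE(kappa,rho)} and Proposition~\ref{Cardy4spin}, with the BPZ equation \eqref{eq: BPZ Neumann} and the null vector equation \eqref{eq: Neumann NV} replacing their Dirichlet counterparts, and with the sign flip traced to $2a(a+b)=-1$ (equivalently $L_{-2}\leg=-\frac1{2a^2}\pa^2\leg$), while the phase ambiguity in $\E\,\leg_\xi^\eff$ is indeed harmless under $\pa_\xi=\pa+\bp$ on the real locus. One minor slip: in the Neumann theory the one-leg operator has dimension $\lambda_{-ib}(-ia)=-\frac{a^2}2+ab$ rather than $\lambda_b(a)$, but since this constant is killed by the null vector equation and the first-order term is absorbed into $\pa_\theta\log Z_\zeta$ via the reference charge, it does not affect the final equations.
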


\subsection{Connection to backward SLE theory} \label{ss: Neumann and backward} 
We now prove Theorem~\ref{main: Neumann SLE} and present Sheffield's observables.
By definition, the backward radial (see \eqref{eq: f chordal} for the chordal case) $\SLE[\bfs\beta]$ map $f_t$ from $\D$ satisfies the equation 
\begin{equation} \label{eq: f radial}
\partial_t f_t(z) = -f_t(z)\frac{\zeta_t+f_t(z)}{\zeta_t-f_t(z)}, \qquad \zeta_t = \ee^{i\theta_t}
\end{equation}
driven by the real process $\theta_t:$
$$\dd\theta_t = \sqrt\kappa\, \dd B_t + \lambda(t)\,\dd t, \quad \lambda(t) = (\lambda\,\|\,f_t^{-1}), \quad \lambda = \kappa \, \pa_\theta \log Z_{\bfs\beta_\zeta},$$
where the partition function $Z_{\bfs\beta_\zeta}$ is given by $Z_{\bfs\beta_\zeta} = \big|C_{(-ib)}[-i\bfs\beta_\zeta]\big|,$ $\bfs\beta_\zeta= \bfs\beta + a\cdot\zeta-a\cdot p.$ 

\begin{proof}[Proof of Theorem~\ref{main: Neumann SLE}]
We first consider the chordal case.
For $\XX = X_1(z_1)\cdots X_n(z_n), X_j\in\FF_{\check{\bfs\beta\,}}(D),$ denote 
$$R_\xi (z_1,\cdots, z_n)\equiv\wh\E_\xi [X_1(z_1)\cdots X_n(z_n)].$$
We want to show that 
$$M_t = m(\xi_t,t), \quad m(\xi,t) = (R_\xi \,\|\, f_t^{-1})$$
is a local martingale on backward chordal SLE probability space. 
By It\^o's formula, we compute $\dd M_t$ as follows: 
\begin{align*}
\dd M_t & = \sqrt\kappa\, \pa_\xi\big|_{\xi =\xi_t} m(\xi,t)\,\dd B_t + \kappa\, \pa_\xi\big|_{\xi =\xi_t} m(\xi,t) \,\pa_\xi\big|_{\xi =\xi_t} (\log Z_\xi \,\|\, f_t^{-1}) \\
&+\frac\kappa2 \, \pa_\xi^2\big|_{\xi =\xi_t} m(\xi,t)\,\dd t + \frac {\dd}{\dd s}\Big|_{s=0} \big(R_{\xi_t}\,\|\,f_{t+s}^{-1}\big)\,\dd t.
\end{align*}
Using the same argument in the proof of Theorem~\ref{main}, the last term can be rewritten as 
$$\frac {\dd}{\dd s}\Big|_{s=0} \big(R_{\xi_t}\,\|\,f_{t+s}^{-1}\big)\,\dd t= 2\big(\check\LL_{k_{\xi_t}}R_{\xi_t}\,\|\,f_{t}^{-1}\big).$$
Thus we find the drift term of $\dd M_t$ as 
$$\Big(\frac\kappa2 \, \pa_\xi^2\big|_{\xi =\xi_t} m(\xi,t) +
\kappa\, \pa_\xi\big|_{\xi =\xi_t} m(\xi,t) \frac{\pa_\xi|_{\xi=\xi_t} (Z_\xi\,\|\,f_t^{-1})}{(Z_\xi\,\|\,f_t^{-1})} +2\big(\check{\LL}_{k_{\xi_t}}R_{\xi_t}\,\|\,f_{t}^{-1}\big)\Big)\,\dd t = 0
$$
employing the BPZ-Cardy equations~\eqref{eq: BPZ-Cardy Neumann chordal}. 

Next, we consider the radial case. 
For $\XX = X_1(z_1)\cdots X_n(z_n), X_j\in\FF_{\check{\bfs\beta\,}}(D),$ denote 
$$R_\zeta (z_1,\cdots, z_n)\equiv\wh\E_\zeta [X_1(z_1)\cdots X_n(z_n)].$$
As in the chordal case, the process 
$$M_t = m(\zeta_t,t), \qquad m(\zeta,t) =\big(R_\zeta \,\|\,f_t^{-1}\big)$$
is a local martingale by It\^o's formula and the BPZ-Cardy equations~\eqref{eq: BPZ-Cardy Neumann radial}. 
Indeed, we find the drift term of $\dd M_t$ as 
$$\Big(\frac\kappa2\pa_\theta^2+\kappa\frac{\pa_\theta (Z_\zeta\,\|\,f_t^{-1})}{(Z_\zeta\,\|\,f_t^{-1})}\pa_\theta\Big)\Big|_{\zeta=\zeta_t}~m(\zeta,t)\,\dd t\\
-\ \big(\check\LL_{v_{\zeta_t}}R_{\zeta_t}\,\|\,f_{t}^{-1}\big) \dd t =0.$$
\end{proof}

\begin{eg*}[Sheffield's observables] In the standard chordal case with 
$\bfs\beta =a \cdot p + (2b-a)\cdot q,$
the 1-point functions of the bosonic fields
\begin{align*}
n_t(z) &= \E[N_{\bfs\beta}(z)\,\|\,f_t^{-1}]= -2a \log|f_t(z)-\xi_t| +2b\log|f_t'(z)|\\
&= -2a \log|z| + 2 \sqrt2\int_0^t\Re\, \frac1{f_s(z)-\xi_s}\,\dd B_s
\end{align*}
were introduced as backward SLE martingale-observables in \cite{Sheffield16}.
Due to the following special case of Hadamard's variation formula
\begin{equation} \label{eq: Hadamard Neumann}
\dd G^N(f_t(z_1),f_t(z_2))=-4\,\Re\,\frac1{f_t(z_1)-\xi_t}\,\Re\,\frac1{f_t(z_2)-\xi_t}\,\dd t=-\frac12\,\dd \langle n_{\bfs\beta}(z_1),n_{\bfs\beta}(z_2)\rangle_t,
\end{equation}
the formal $2$-point functions 
$$\E[N_{\bfs\beta}(z_1)N_{\bfs\beta}(z_2)]=2G^N(z_1,z_2)+n_{\bfs\beta}(z_1)n_{\bfs\beta}(z_2)$$
are martingale-observables for backward SLE. 
Sheffield used \eqref{eq: Hadamard Neumann} to construct a coupling of backward SLE and the Gaussian free field with Neumann boundary condition, see \cite{Sheffield16}.
\end{eg*}

\subsection*{Acknowledgements}
We would like to thank Tom Alberts, and Sung-Soo Byun for careful reading and much-appreciated help improving this manuscript. 

\bibliographystyle{plain}

\end{document}